\newcommand{\yihao}    {\fontsize{26pt}{36pt}    \selectfont}% “ª∫≈,  1.4±∂––æ‡
\newcommand{\erhao}    {\fontsize{22pt}{28pt}    \selectfont}% ∂˛∫≈,  1.25±∂––æ‡
\newcommand{\sanhao}   {\fontsize{16pt}{24pt}    \selectfont}% »˝∫≈,  1.5±∂––æ‡
\newcommand{\xiaosan}  {\fontsize{15pt}{22pt}    \selectfont}% –°»˝,  1.5±∂––æ‡
\newcommand{\double}{\baselineskip 1.5 \baselineskip}
\def\nb{\nonumber}
\def\H3p{H_3^+}
\def\H{{\cal H}}
\def\IP{\relax{\rm I\kern-.18em P}}
\newcommand{\la}{\lambda}
\newcommand{\beq}{\begin{equation}}
\newcommand{\be}{\begin{equation}}
\newcommand{\eeq}{\end{equation}}
\newcommand{\ee}{\end{equation}}
\newenvironment{proof}{Proof:}
\def\endofproof {\hfill{$\Box$}\\}
\newcommand{\void}[1]{}
\def\beq{\begin{equation}}
\def\eeq{\end{equation}}
\def\bea{\begin{eqnarray}}
\def\eea{\end{eqnarray}}
\def\beqs{\begin{equation*}}
\def\eeqs{\end{equation*}}
\def\beas{\begin{eqnarray*}}
\def\eeas{\end{eqnarray*}}
\def\lmd{\lambda}
\def\la{\langle}
\def\ra{\rangle}
\def\bt{\beta}
\newcommand{\ppt}[1]{{\partial\over \partial #1}}
\newcommand{\td}[1]{\tilde{#1}}
\def\ra{\rightarrow}
\def\fig#1#2#3{
\par\begingroup\parindent=0pt\leftskip=1cm\rightskip=1cm\parindent=0pt
\baselineskip=15pt
\global\advance\figno by 1
\epsfxsize=#3
\centerline{\epsfbox{#2}}
\vskip 12pt
{\bf \small 图 \the\figno:} {\small #1}\par
\endgroup\par
}
\def\figlabel#1{\xdef#1{\the\figno
\mbox{ }}}
\def\encadremath#1{\vbox{\hrule\hbox{\vrule\kern8pt\vbox{\kern8pt
\hbox{$\displaystyle #1$}\kern8pt}
\kern8pt\vrule}\hrule}}
\def\endofproof {\hfill{$\Box$}\\}
\renewcommand{\fnum@figure}[1]{\textbf{\figurename~\thefigure}\hspace{1.0ex} }%\sffamily}
\renewcommand{\fnum@table}[1]{\textbf{\tablename~\thetable}\hspace{1.0ex} \sffamily}
\def\@evenhead{%%\pushziti
\vbox{\hbox
to\textwidth{\rlap{\textrm{\thepage}}\hfil{\leftmark}\llap{ }\hfil\mbox{}}
\protect\vspace{2truemm}\relax \hrule depth0pt height0.15truemm
width\textwidth }%%\popziti
}
\def\@oddhead{%%\pushziti
\vbox{\hbox to\textwidth{\mbox{}\hfil \rlap{ }{\rightmark}
\hfil\llap{\textrm{\thepage}}} \protect\vspace{2truemm}\relax
\hrule depth0pt height0.15truemm width\textwidth }%%%\popziti
}
\def\@evenfoot{}
\def\@oddfoot{}
\titleformat{\chapter}[hang]{\normalfont\huge\filcenter\CJKfamily{hei}}
{\huge{\chaptertitlename}}{20pt}{\huge}
\titlespacing{\chapter}{0pt}{-3ex plus .1ex minus .2ex}{2.5ex plus .1ex minus .2ex}
\renewcommand{\chaptername}{Chapter \@arabic\c@chapter}
\renewcommand{\@makechapterhead}[1]{%
  \vspace*{-\baselineskip}%
  {\normalfont \centering\Huge\bfseries%
  \chaptername \quad #1\par\nobreak%
  \vspace{1.5\baselineskip}
 }}
\renewcommand{\@makeschapterhead}[1]{%
  \vspace*{-\baselineskip}%
  {\normalfont\centering\Huge\bfseries #1\par\nobreak%
  \vspace{1.5\baselineskip}
}}
\newcommand{\gae}{\begin{array}{c}\,\vspace{-0.5em}\hspace{-0.2em}\sim\vspace{-1.7em}\\>
\end{array}}
\newcommand{\lae}{\begin{array}{c}\,\vspace{-0.5em}\hspace{-0.2em}\sim\vspace{-1.7em}\\<
\end{array}}
\def\to{\rightarrow}
\newcommand{\ba}{\begin{array}}
\newcommand{\ea}{\end{array}}
\newcommand{\eref}[1]{{(\ref{#1})}}
\newtheorem{lemma}{Lemma}
\newtheorem{theorem}{Theorem}
\newtheorem{corollary}{Corollary}
\newcommand{\cref}[1]{{{\bf{Corollary~}\ref{#1}}}}
\newcommand{\tref}[1]{{{\bf{Theorem~}\ref{#1}}}}
\newcommand{\sref}[1]{{{Section~\ref{#1}}}}
\newcommand{\lref}[1]{{{\bf{Lemma~}\ref{#1}}}}
\newcommand{\ke}{k_{\mbox{\tiny off}}}
\newcommand{\Le}{L_{\mbox{\tiny off}}}
\newcommand{\Ve}{V_{\mbox{\tiny off}}}
\begin{document}

%\newtheorem{lemma}{\textbf{引} \textbf{理}}
%\newtheorem{theorem}{\textbf{定理}}
%\newtheorem{remark}{\textbf{注}}
%\newenvironment{proof}{\textbf{证明}}

%\renewcommand\contentsname{目录}
%\renewcommand\listfigurename{插图目录}
%\renewcommand\listtablename{表格目录}
%\renewcommand\indexname{索引}
%\renewcommand\figurename{\rm 图}
%\renewcommand\abstractname{摘要}
%\renewcommand\bibname{参考文献}
%\renewcommand\figurename{\rm 图}
%\renewcommand\tablename{\bf 表}

%\renewcommand{\chaptername}{µ⁄~{\thechapter}~’¬}
%\renewcommand{\chaptername}{µ⁄~\CJKnumber{\thechapter}~ ’¬}
%\markboth{}{}
\double
%\footnote[ ]{2002-11-13 ’∏Â}
%\large
\date{}

%%%%%%%%%%%%%
\begin{titlepage}
\newpage
%\thispagestyle{empty}
% ÷–Œƒ∑‚√Ê

% ”¢Œƒ∑‚√Ê
\newpage
\thispagestyle{empty}
\includegraphics[height=4.5cm,origin=c]{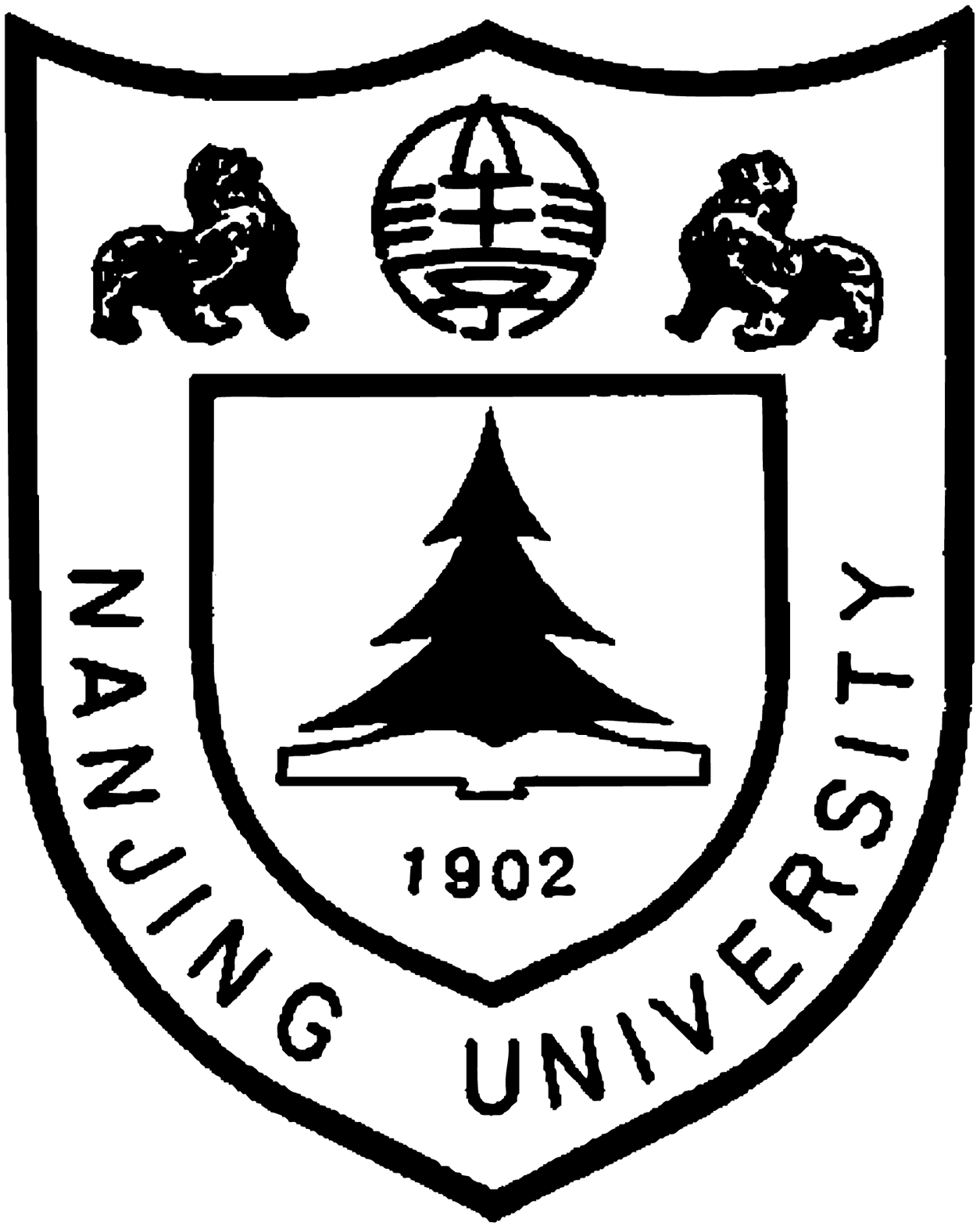}\hfill
      \begin{tabular}[b]{r}
        \yihao Nanjing University\\[0.3cm]
        \sanhao A Dissertation for Ph. D. Degree\\[1.8cm]\\
      \end{tabular}

      \begin{center}
          \parbox[t][5cm][c]{\textwidth}{\erhao
           \begin{center} {\textbf{\textsf{}}}\end{center} }

         \parbox[t][3cm][c]{\textwidth}{ {\xiaosan
            \begin{center}
              \begin{tabular}{rl}
              Title: & A Tale Of Two Amplitudes In High Energy Physics\\
                Candidate: & Yun Zhang\\
                Supervisor: & Yeuk-Kwan Edna Cheung, Ph. D.\\
               & Konstantin G. Savvidy, Ph. D.\\
               & Gang Chen, Ph. D.\\
                Speciality: & Theoretical Physics
                \end{tabular}
            \end{center} } }
          \vfill
          \parbox[t][4cm][t]{\textwidth}{
            \begin{center} {\sanhao \textbf{\textsf{~School of Physics, Nanjing University~\\Nanjing, P. R.
China \\ \vspace{1.cm} May 18, 2010}}} \end{center} }
      \end{center}

\end{titlepage}

%\maketitle

%\ \ \ \ \ \ \ \ \ \ \

\pagenumbering{roman}

%\addcontentsline{toc}{section}{\large{\textbf{Abstract}}}
\vspace{-0.2cm}

%{\flushleft \mbox{THESIS: \underline{A Tale of Two Amplitudes in High Energy Physics}}\\
%SPECIALIZATION: \underline{Theoretical Physics}\\
%PH. D. CANDIDATE: \underline{ZHANG Yun}\\
%ADVISORS: \underline{Yeuk-Kwan Edna Cheung, Ph. D.}\\ \hspace{2.2cm}\underline{Konstantin G. Savvidy, Ph. D.}\\ \hspace{2.2cm}\underline{Gang Chen, Ph. D.}}

%\vspace*{1.0cm}
\centerline{\textbf{\Large{Abstract}}}
\vspace*{0.5cm}

I will describe my work on two classes of scattering amplitudes in high energy physics. The thesis consists of two parts: one is on proton Compton scattering in a Unified Proton-Delta theory, and the other is on the computation of scattering amplitudes in Yang-Mills theory.

We study proton Compton scattering in the first resonance region in an effective field theory approach, in an effort to better understand the proton's electromagnetic properties--its electric and magnetic polarizabilities--done in collaboration with Dr. Konstantin G. Savvidy. The consistent electrodynamic interaction of spin 3/2 field, which respects current conservation, has recently been developed by Savvidy in his generalized Rarita-Schwinger theory for spin 3/2 particle, resolving the long standing superluminal propagation problem of the old Rarita-Schwinger theory. Proton and $\Delta^+$ are naturally unified in this generalized Rarita-Schwinger theory with proton being the spin 1/2 component and $\Delta^+$ being the spin 3/2 component. To describe proton Compton scattering, we introduce six non-minimal electromagnetic interactions--with their coefficients being called "form factors"--and bare polarizabilities in an effective Lagrangian, consistent with the requirement of gauge invariance. We express proton and Delta magnetic moments in terms of the form factors. We then compute the proton Compton scattering amplitudes, and obtain the total electric and magnetic polarizabilities in terms of the bare ones and the form factors. We also study the approximation of the amplitudes around the Delta pole. Using experimental data, we obtain the best fit values for the form factors and bare polarizabilities. As a prediction, we derive Delta magnetic moment from the best fit values of the parameters.

%pin down

After some background preparation, we present our joint work with Dr. Gang Chen on the study of boundary behavior of off-shell Yang-Mills amplitudes with a pair of external momenta complexified. In Feynman gauge, we introduce a set of ''reduced vertices'' which can effectively capture the boundary behavior up to the first two leading orders and can, in turn, greatly simplify subsequent analysis. Boundary behavior of amplitudes with two adjacent legs complexified can be read off from the reduced vertices. We then prove a theorem on permutation sum for a given color ordering, and use it to analyze the boundary behavior of amplitudes with two non-adjacent legs complexified. Based on the boundary behaviors, we construct off-shell Britto-Cachazo-Feng-Witten (BCFW) recursion relations for general tree level amplitudes. As applications, we calculate off-shell amplitudes and study relations between off-shell amplitudes.

Finally, we study the recursion relations for off-shell Yang-Mills amplitudes at tree and one loop levels as deduced from imposing complexified Ward identity, also in collaboration with Dr. Gang Chen. It is based on a previous work by Gang Chen in which Ward identity is used to derive a recursion relation for calculating tree level boundary terms. We extend his work to derive recursion relations of the full scattering amplitudes at both tree and one loop levels. Using Feynman rules, we explicitly prove Ward identity at tree and one loop levels. We then give recursion relations for general N-point off-shell amplitudes. We calculate three and four point one loop off-shell amplitudes as applications of our method.

\vspace*{0.5cm}

\textbf{Key words: proton; Delta; Rarita-Schwinger; Compton scattering; form factors; polarizabilities; magnetic moment; amplitudes; Yang-Mills; boundary; recursion relation; Ward identity; loop level; tree level; off shell; amplitude relation; complexify; permutation; BCFW;}
\newpage

\tableofcontents

\setcounter{page}{1}

\chapter{Backgrounds for Proton Compton Scattering}

\pagenumbering{arabic}

Proton is the particle which makes up the greatest fraction of the matter in the visible universe. Its properties have been extensively studied, but nevertheless it still has some important parameters not precisely measured and not well predicted from theory. Among them, the electric and magnetic polarizabilities, right behind the more fundamental electromagnetic properties of the proton as are the electric charge and magnetic moment, continue to attract much investigation in recent years. The electric and magnetic polarizabilities measure the induced electric and magnetic multipole moments of the proton in electric and magnetic fields. 

Proton's electromagnetic properties can be studied by scattering proton with the electromagnetic gauge field, ie. photon. When the incident photon is off shell, the process is called virtual Compton scattering \cite{Guichon:1995pu,Pasquini:2001yy,Roche:2000ng}. In this thesis, we only concern the real Compton scattering with on shell photon scattering with proton. As the energy of the incident photon increases, there are various peaks emerging in the Compton scattering cross section, as the intermediate particles approaching their mass shells. The first and highest peak encountered is from $\Delta^+(1232{\mbox MeV}, J^P=\frac{3}{2}^+)$), which is the lightest excited baryon resonance. The cross section arrives at its first peak at about 300 MeV of incident photon energy, drops to the valley at about 450 MeV, and then rises to the later resonances. A configuration of the proton Compton scattering cross section data points is shown in Figure \ref{dataconfig}. We will study the Compton scattering with incident photon energy less than about 455 MeV, which is called the first resonance region.

\begin{figure}[h]
\centerline{\includegraphics[height=10cm,width=8cm]{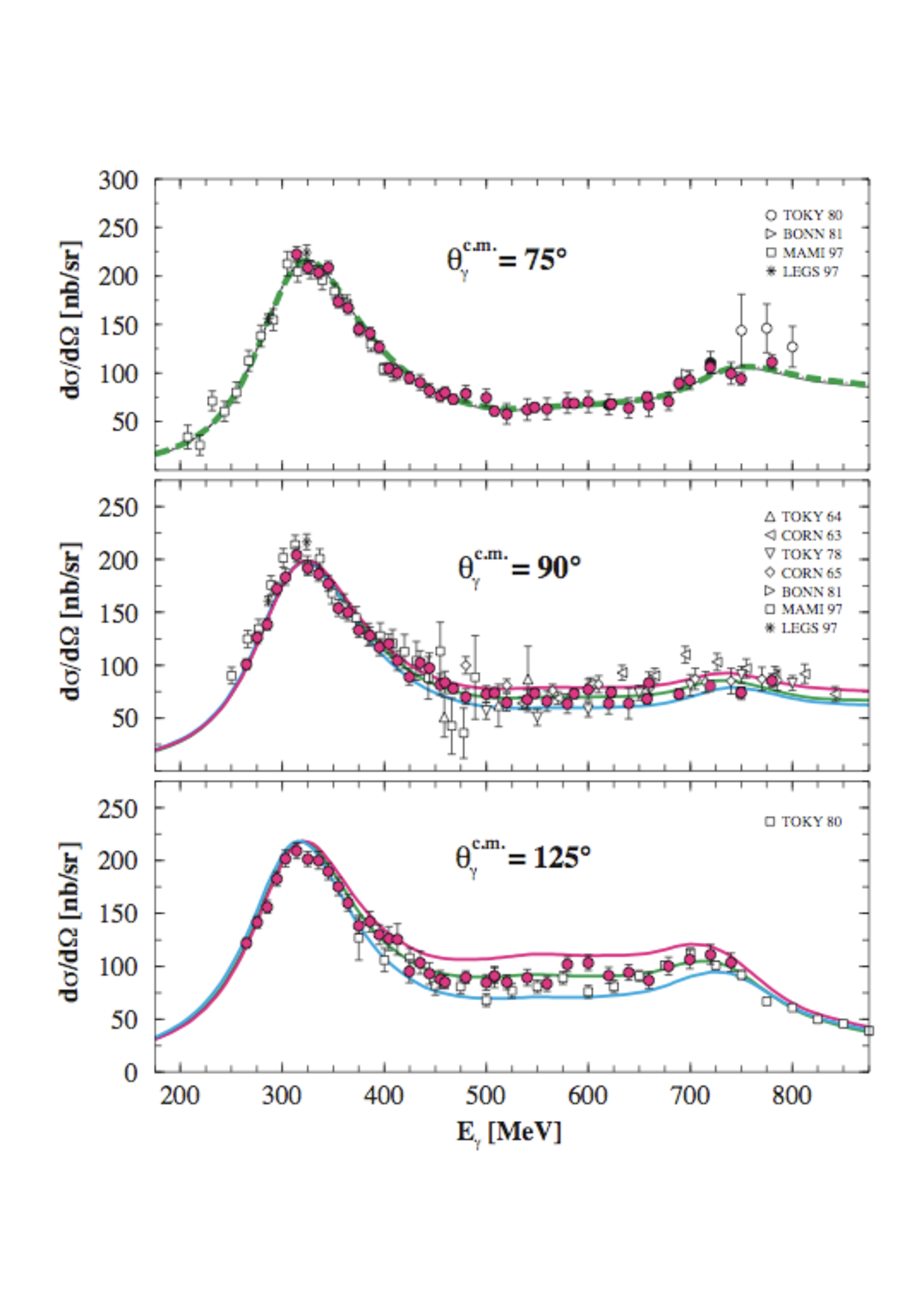}}
\caption{A configuration of the proton Compton scattering cross section data points. Picture is taken from \cite{Schumacher:2013hu}.}
\label{dataconfig}
\end{figure}

$\Delta^+$ has the same quark constitution (uud) as the proton and is only slightly, less than 300 MeV, heavier than the proton. The quarks are aligned in the same direction in $\Delta^+$, and the d quark is anti aligned with u quarks in proton, which causes the differences of spins and masses of proton and $\Delta^+$. In proton Compton scattering, the d quark alignment is easily flipped by the photon, exciting the $\Delta^+$ degree of freedom from proton. This process is non-negligible even at low energy ($\lae 100$ MeV) due to the small mass gap between proton and $\Delta^+$. Thus, $\Delta^+$ is the most important degree of freedom to consider in proton Compton scattering, besides the proton itself, in the first resonance region. Some minor contributions are from the meson exchange channels, like $\pi^0$ or $\sigma$ meson, which are easy to deal with and are seldom disputed about.

%Thus it is often said that for low energy proton Compton scattering, the process is through exchange of proton, $\Delta^+$ and pion.

High quality proton Compton scattering data in the first resonance range have been available from various laboratories \cite{Baranov1974,Hallin1993,Zieger92,Hunger97,MacGibbon1995,Blanpied01,Olmos2001,MAMI2001}, especially in the recent two decades. Also theoretically there have been many progresses in proton Compton scattering. The most phenomenologically successful description until now has been that based on dispersion theory \cite{Lvov1997, Hearn:1962zz, Pfeil:1974ib, Lvov:1980wp, Drechsel:1999rf,Lvov:1979zd, Guiasu:1978ak,GellMann:1954db,Baldin1960,Pasquini2007}, in which scattering amplitudes are constructed as analytic functions of momenta and scattering angle. Then these functions are calculated based on elaborate understanding of the low energy scattering processes.
Another alternative is to use an effective theory \cite{Feuster:1998cj, Pascalutsa1995,Scholten1996, Kondratyuk2001}, like the chiral perturbation theory 
\cite{Pagels1974, Weinberg1978, Gasser1983, Gasser1987,McGovern:2012ew}. In this approach, people use an effective Lagrangian, do field theory calculations of proton Compton scattering cross section based on the Lagrangian, and fit parameters of the theory.

In the following, I will give an overview of some backgrounds for proton Compton scattering, including the notion of polarizabilities, and dispersion theory and effective field theory approaches to proton Compton scattering. The Rarita-Schwinger theory of spin 3/2 fields and its generalization by Konstantin G. Savvidy are included in the introduction of the latter approach.

\section{Polarizabilities}\label{sec:polarizabilities}

Proton has one positive unit of electric charge. Proton's magnetic moment in the nuclear unit $\mu_N$ is about 2.79, which has been measured to the nine-th digit \cite{Mohr:2000ie,Mohr:2005zz,Mohr:2008fa,Mohr:2012tt}. Right behind these two more fundamental electromagnetic parameters are the polarizabilities, which are measured relatively poorly and are among the motivations of many experiments and theoretical works \cite{Federspiel:1991yd,Mukhopadhyay:1993zx,MacGibbon1995,Guichon:1995pu,Tonnison:1998mi,Holstein:1999uu,Roche:2000ng,Pasquini:2000ue,Blanpied01,Olmos2001,Laveissiere:2004nf,Pasquini2007,Gorchtein:2009qq,Fonvieille:2012cd,McGovern:2012ew}. 

Polarizabilities measure the induced electric and magnetic multipole moments of proton in electromagnetic field. Polarizabilities affect the Compton scattering since the proton can be polarized by the photon and its inner structure responses to the photon. In static electric field $\vec E$ and magnetic field $\vec H$, with the induced electric dipole moment $\vec p$ and magnetic dipole moment $\vec \mu$ for proton, one can define the static electric polarizability $\alpha_E$ and magnetic polarizability $\bt_M$ by:
\beq
\vec p=4\pi \alpha_E \vec E, ~~\vec \mu=4\pi \beta_M \vec H.
\eeq
The effective Hamiltonian is \cite{Lvov93}:
\beq
{\cal H}_{\rm Int}^{(2)}=-2\pi(\alpha_E \vec{E}^2+\beta_M \vec{H}^2),
\label{stapolHam}
\eeq
such that:
\beqs
\vec p=-\frac{\delta {\cal H}_{\rm Int}^{(2)}}{\delta \vec E}=4\pi \alpha_E \vec E, ~~\vec \mu=-\frac{\delta {\cal H}_{\rm Int}^{(2)}}{\delta \vec H}=4\pi \beta_M \vec H.
\eeqs
The "(2)" in ${\cal H}_{\mbox{\tiny Int}}^{(2)}$ means that this is the Hamiltonian with two derivatives of the photon field, ie. second order of photon energy, remembering that $\vec E$ and $\vec H$ both contain one derivative of the photon field.

At very low energy, proton interacts with photon as a point particle, and only its charge and magnetic moment affect the scattering amplitudes \cite{Low1954}. As photon energy increases, the static polarizabilities first enter the amplitudes at second order of incident photon energy $\omega$. In lab frame where initial proton is at rest, the contribution from static polarizabilities to the amplitudes is:
\beq
{\cal A}=4\pi \, \alpha_E  \, \omega \omega'  \, {\vec \epsilon\,}'\cdot {\vec\epsilon}+4\pi \, \beta_M  \, {\vec\epsilon\,}'\times{\vec k}'\cdot {\vec\epsilon}\times{\vec k},
\eeq
where $\omega$, $\vec \epsilon$ and $\vec k$ are the energy, polarization vector and momentum of incoming photon respectively, and those with prime for the outgoing photon. At scattering angle $\theta$, $\omega'=\omega[1+\frac{\omega}{M}(1-\cos\theta)]^{-1}$. Converted to cross section in the lab frame, the static polarizabilities enter as:
\beq
\frac{d \sigma}{d \Omega_{\mbox{\tiny lab}}}=\left(\frac{d \sigma}{d \Omega}\right)_{\mbox{\tiny Powell}}-\frac{e^2 \omega^2}{4 \pi M}(\frac{\alpha_E+\beta_M}{2}(1+\cos\theta)^2+\frac{\alpha_E-\beta_M}{2}(1-\cos\theta)^2)+\mathcal O(\omega^3),
\eeq
where $\left(\frac{d \sigma}{d \Omega}\right)_{\mbox{\tiny Powell}}$ is the Compton cross section of a Dirac point particle with anomalous magnetic moment included \cite{Powell1949, Low1954, GellMann1954}.

%polarized proton Compton scattering
%what about proton electric dipole moment, what is its status?

In \cite{0611327}, they proposed an effective Lagrangian to model polarizabilities for a Dirac particle:
\beq
{\cal L}_{\mbox{\tiny pol}}=-\frac{i\pi}{M}~(\bar\psi \, \gamma^\mu  \,\partial_\nu  \,\psi-\partial_\nu  \,\bar\psi  \,\gamma^\mu \psi)(\alpha_E ~ F_{\mu\rho} \, F^{\rho\nu}-\beta_M ~  \tilde{F}_{\mu\rho} \, \tilde{F}^{\rho\nu}).
\eeq

In non static electric and magnetic fields, there are richer structures described by more polarizabilities. The most general effective interactions between the electromagnetic field and the proton up to $\mathcal O(\omega^4)$ are given by
\cite{Babusci:1998ww}:
\begin{eqnarray}
{\mathcal H}_{\rm Int} &=& {\mathcal H}_{\rm Int}^{(2)}+ 
{\mathcal H}_{\rm Int}^{(3)} + {\mathcal H}_{\rm Int}^{(4)}+\cdots.\nb\\
{\mathcal H}_{\rm Int}^{(3)} &=& -{4 \pi} 
\left[{\gamma_{E1}\over 2}  {\vec{\sigma}}\cdot{\vec E}\times{\dot{\vec E}} +
      {\gamma_{M1}\over 2}  {\vec{\sigma}}\cdot{\vec H}\times{\dot{\vec H}} -
       \gamma_{E2} {{\sigma_i}E_{ij}}H_j +
       \gamma_{M2} {\sigma_i}{H_{ij}}E_j
\right], \label{spinpol}\\ 
{\mathcal H}_{\rm Int}^{(4)} &=& -{4 \pi\over 2} 
\left[\alpha_{E\nu} {\dot{\vec E}}^2 +
       \beta_{M\nu} {\dot{\vec H}}^2 +
     {1\over 6}( \alpha_{E2} E_{ij}^2+
                 \beta_{M2} {H_{ij}}^2 )
\right]\label{fourthpol}, 
\end{eqnarray}
where $E_{ij} = (\nabla_iE_j + \nabla_jE_i)/2$, and similarly for
$H_{ij}$. ${\mathcal H}_{\rm Int}^{(2)}$ is given in (\ref{stapolHam}). The parameters $\alpha_{E\nu}$ and $\beta_{M\nu}$ in (\ref{fourthpol}) are dispersive corrections to the static polarizabilities $\alpha_E$ and $\beta_M$ when the electric and magnetic fields are time dependent:
\bea
&&\vec p(\omega)=4\pi(\alpha_E+\alpha_{E\nu}\omega^2+\cdots)\vec E(\omega),\nb\\
&&\vec \mu(\omega)=4\pi(\bt_M+\bt_{M\nu}\omega^2+\cdots)\vec H(\omega).
\eea
The parameters $\alpha_{E2}$ and $\beta_{M2}$ in (\ref{fourthpol}) are quadrupole polarizabilities that measure the induced electric quadrupole moment $Q_{ij}$ and magnetic quadrupole moment $M_{ij}$ in the non static electric and magnetic fields:
\beq
Q_{ij}=\frac{\delta {\cal H}_{\rm Int}^{(4)}}{\delta E_{ij}}=\frac{1}{6}4\pi \alpha_{E2}E_{ij},~~M_{ij}=\frac{\delta {\cal H}_{\rm Int}^{(4)}}{\delta H_{ij}}=\frac{1}{6}4\pi\bt_{M2}H_{ij}.
\eeq

The polarizabilities $\gamma_{E1}$, $\gamma_{M1}$, $\gamma_{E2}$ and $\gamma_{M2}$ in (\ref{spinpol}) are called spin polarizabilities, since the interactions in (\ref{spinpol}) depend on the spin of the proton through the Pauli matrices $\vec \sigma$. To directly measure the spin polarizabilities, one needs Compton scattering with polarized photon and proton, which is hard to operate. More often people have been using sum rules, which shall be introduced below, to extract the spin polarizabilities as well as other polarizabilities. Of particular experimental 
interests are the forward $\gamma_0$ and backward $\gamma_\pi$
spin polarizabilities:
\begin{eqnarray}
\gamma_0 &=&  - \gamma_{E1} - \gamma_{M1} - \gamma_{E2}-\gamma_{M2} \\
\gamma_\pi &=&  - \gamma_{E1} + \gamma_{M1} + \gamma_{E2}-\gamma_{M2}.
\end{eqnarray}
The forward spin polarizability $\gamma_0$ appears as the coefficient of the spin dependent term in ${\mathcal O}(\omega^3)$ of the forward scattering amplitudes (final photon and initial photon are moving in the same direction):
\beq
i 4\pi \gamma_0 \omega^3\vec\sigma\cdot (\vec \epsilon\,'^*\times\vec\epsilon\,).
\eeq
Similarly, $\gamma_\pi$ appears as the coefficient of the spin dependent term in ${\mathcal O}(\omega^3)$ of the backward Compton scattering amplitudes (final photon and initial photon are moving in opposite directions).

By definition, one can extract polarizabilities from Compton scattering by a low energy expansion of the Compton cross section. This approach is convincing only for photon energies well below 100 MeV, which places harsh requirements on the precision of the experiments. One can also theoretically predict polarizabilities, for example in Heavy Baryon Chiral Perturbation Theory \cite{Holstein:1999uu,Bernard:1992qa}, or in Skyrme model \cite{Nyman:1984ys,Kim:1997hq}, etc., and some results obtained are consistent with experimental values. Another successful approach of extracting polarizabilities has been through sum rules, introduced below, in which ideas of dispersion theory are applied.

%0705.0282v1 for the above paragraph (modified)

%The physics behind polarizabilities is not clear yet. There have been many theoretical calculation ...(write a paragraph for current theoretical approaches to polarizabilities)

\subsection{Sum Rules}
At forward angle the Compton scattering amplitudes take the general form \cite{Drechsel:2009ch}:
\beq
\mathcal A=4\pi \vec\epsilon\,'^*\cdot \vec\epsilon f(\omega)+4 i \pi \vec\sigma\cdot(\vec\epsilon\,'^*\times\vec\epsilon\,)g(\omega),
\eeq
where $f$ and $g$ are respectively even and odd functions of the lab frame incident photon energy $\omega$, due to crossing symmetry. Denoting $\mathcal A_{3/2}$ as the amplitudes when the initial photon and proton have paralell helicity, and $\mathcal A_{1/2}$ for opposite helicity, then $f(\omega)=(\mathcal A_{1/2}+\mathcal A_{3/2})/2$ and $g(\omega)=(\mathcal A_{1/2}-\mathcal A_{3/2})/2$. Corresponding to $\mathcal A_{1/2}$ and $\mathcal A_{3/2}$ one can define the helicity dependent cross sections $\sigma_{1/2}$ and $\sigma_{3/2}$. Total cross section $\sigma_{\rm tot}=(\sigma_{1/2}+\sigma_{3/2})/2$, and transverse cross section $\sigma_{\rm trans}=(\sigma_{1/2}-\sigma_{3/2})/2$.

%By optical theorem:
%\beq
%{\rm Im}\, f(\omega)=\frac{\omega}{4\pi}\sigma_{\rm tot}(\omega), ~ {\rm Im}\, g(\omega)=\frac{\omega}{4\pi}\sigma_{\rm trans}(\omega).
%\eeq
The real parts of $f$ and $g$ can be expressed by dispersion integrals \cite{Drechsel:2009ch}:
%why?
\bea
&&{\rm Re}\, f(\omega)=f(0)+\frac{\omega^2}{2\pi^2} P\int_{\omega_{\rm thr}}^\infty \frac{\sigma_{\rm tot}(\omega')d\omega'}{\omega'^2-\omega^2},\nb\\
&&{\rm Re}\, g(\omega)=\frac{\omega}{2\pi^2} P\int_{\omega_{\rm thr}}^\infty \frac{\omega'\sigma_{\rm trans}(\omega')d\omega'}{\omega'^2-\omega^2},
\label{forwardDR}
\eea
where $\omega_{\rm thr}\simeq 150$ MeV is the threshold for pion photoproduction.

On the other hand, to $\mathcal O(\omega^3)$, $f$ and $g$ are determined by proton's charge, anomalous magnetic moment $\kappa$, static polarizabilities and the forward spin polarizability:
\beq
f(\omega)=-\frac{e^2}{4\pi M}+(\alpha_E+\beta_M)\omega^2+\mathcal O(\omega^4),~g(\omega)=-\frac{e^2\kappa^2}{8\pi M^2}\omega+\gamma_0\omega^3+\mathcal O(\omega^5).
\label{fgpolarizability}
%the coefficients?
\eeq
Comparing (\ref{forwardDR}) and (\ref{fgpolarizability}), one obtains the following sum rules:
\bea
\alpha_E+\bt_M&=&\frac{1}{2\pi^2}\int_{\omega_{\rm thr}}^\infty \frac{\sigma_{\rm tot}(\omega')d\omega'}{\omega'^2},\\
\frac{\pi e^2 \kappa^2}{2M^2}&=&-\int_{\omega_{\rm thr}}^\infty \frac{\sigma_{\rm trans}(\omega')d\omega'}{\omega'},\\
\gamma_0&=&\frac{1}{4\pi^2}\int_{\omega_{\rm thr}}^\infty \frac{\sigma_{\rm trans}(\omega')d\omega'}{\omega'^3},
\eea
which are the sum rules of Baldin \cite{Baldin1960}, Gerasimov-Drell-Hearn (GDH) \cite{Gerasimov1966,Drell1966} and GellMann-Goldberger-Thirring (GGT) \cite{GellMann1954,GellMann:1954db}, respectively.

Other sum rules for $\alpha_E-\beta_M$ and $\gamma_\pi$ etc. are discussed in eg. \cite{Holstein:1994tw,Schumacher:2013hu,L'vov:1998ez,L'vov:1998vg,Bernabeu:1998wf}.

%other sum rules for more polarizabilities?

%By definition, one can extract polarizabilities from Compton scattering by a low energy expansion of the Compton cross sections. This approach is convincing only for photon energies well below 100 MeV, which places harsh requirements on the precision of the experiments. makes a precise extraction a rather challenging task. A second formalism which has been successfully applied to Compton data in this energy region makes use of dispersion relations. Recently, a third approach has been developed within the framework of a chiral eﬀective ﬁeld theory [23, 24, 25, 26]. 

%add a diagram for the data points of proton Compton scattering; add the Feynman diagrams for proton Compton scattering

%\section{Delta Magnetic Moment}

\section{Dispersion Theory Approach to Proton Compton Scattering}
A major and most phenomenologically successful approach to proton Compton scattering hitherto has been that based on dispersion theory \cite{Lvov1997, Hearn:1962zz, Pfeil:1974ib,Lvov:1980wp, Drechsel:1999rf,Lvov:1979zd, Guiasu:1978ak,GellMann:1954db,Baldin1960,Pasquini2007,Schumacher:2013hu}, in which the Compton scattering amplitudes are studied based on elaborate understanding of the low energy scattering processes.

%In this approach, the polarizabilities can be extracted from experiments in a model independent way.

Using the orthogonal basis suggested by Prange \cite{Prange1958}, one can decompose the proton Compton scattering amplitudes as:
\bea
\mathcal A=&&\bar u'(p) \epsilon'^{*\mu}\left[-\frac{P_\mu' P_\nu'}{P'^2}(T_1+\gamma\cdot K T_2)-\frac{N_\mu N_\nu}{N^2}(T_3+\gamma\cdot K T_4)\right.\nb\\
&&\left.+i \frac{P_\mu' N_\nu-P_\nu' N_\mu}{P'^2K^2}\gamma_5 T_5+i \frac{P_\mu' N_\nu+P_\nu' N_\mu}{P'^2K^2}\gamma_5 \gamma\cdot K T_6\right]\epsilon^\nu u(p),
\label{Tiexpansion}
\eea
with the definitions for $P_\mu'$, $P$, $K$ and $N_\mu$:
\bea
&&P_\mu'=P_\mu-K_\mu \frac{P\cdot K}{K^2},~~ P=\frac{1}{2}(p+p'),~~K=\frac{1}{2}(k'+k),\nb\\
&&N_\mu=\epsilon_{\mu\alpha\beta\gamma}P'^\alpha Q^\beta K^\gamma,~~Q=\frac{1}{2}(p-p')=\frac{1}{2}(k'-k).
\eea
In the expressions, $p$ and $p'$ are the momenta of initial and final protons; $k$ and $k'$ are the momenta of initial and final photons.
%$\gamma_5=\left(\begin{array}{cc} 0&1\\1&0\end{array}\right)$.

The six $T_i$'s are functions of photon energy and scattering angle, or equivalently functions of $\nu=\frac{s-u}{4M}=\omega+\frac{t}{4M}$ and $t=(k-k')^2$. Due to crossing symmetry, $T_2(\nu,t)$ and $T_4(\nu,t)$ are odd functions of $\nu$ and the other four are even functions. $T_i$ have singularities related to poles from one particle exchanges, and those related to inelastic thresholds in the s, u and t channels \cite{Logunov1958, Hearn1961}, which are due to the fact that some denominators in (\ref{Tiexpansion}) vanish at forward or backward angles:
\bea
K^2&=&-\frac{t}{4}=\frac{1}{8s}(s-M^2)^2(1-\cos\theta),\nb\\
P'^2 K^2&=&\frac{1}{4}(su-M^4)=-\frac{1}{8s}(s-M^2)^2(1+\cos\theta),\nb\\
N^2&=&P'^2(K^2)^2\sim\sin^2\theta.
\eea
A set of linear combinations \cite{Lvov:1980wp} of $T_i$ are free from the constraints from the inelastic thresholds:
%what does invariant mean?
\beq
\begin{array}{ll}
A_1=\frac{1}{t}[T_1+T_3+\nu(T_2+T_4)],&A_2=\frac{1}{t}[2T_5+\nu(T_2+T_4)],\\
A_3=\frac{M^2}{M^4-su}[T_1-T_3-\frac{t}{4\nu}(T_2-T_4)],&A_4=\frac{M^2}{M^4-su}[2MT_6-\frac{t}{4\nu}(T_2-T_4)],\\
A_5=\frac{1}{4\nu}[T_2+T_4],&A_6=\frac{1}{4\nu}[T_2-T_4].\end{array}
\eeq
$A_i(\nu,t)$, called invariant amplitudes,\footnote{There are other conventions for defining the invariant amplitudes $A_i$, see eg. \cite{Bardeen1968,Hearn:1962zz,Pfeil:1974ib}. The relations between their conventions and the convention here for $A_i$ are given in eg. \cite{Babusci:1998ww,Lvov1997}.} are all even functions of $\nu$. In terms of $A_i$, the lab frame Compton amplitudes take the following form:
{\allowdisplaybreaks{}
\bea
\mathcal A=&&\frac{1}{\sqrt{1-t/4M^2}}\left\{2M\,\vec\epsilon\,'^*\cdot\vec\epsilon \,\omega\omega'\left[(1-\frac{t}{4M^2})(-A_1-A_3)-\frac{\nu^2}{M^2}A_5-A_6\right]\right.\nb\\
&&+2M (\hat k'\times\vec\epsilon\,')^*\cdot (\hat k\times\vec\epsilon) \,\omega\omega'\left[(1-\frac{t}{4M^2})(A_1-A_3)+\frac{\nu^2}{M^2}A_5-A_6\right]\nb\\
&&-2i\vec\sigma\cdot\vec\epsilon\,'^*\times\vec\epsilon\, \nu\omega\omega'(A_5+A_6)+2i\vec\sigma\cdot (\hat k'\times\vec\epsilon\,')^*\times (\hat k\times\vec\epsilon) \,\nu\omega\omega'(A_5-A_6)\nb\\
&&+i\vec\sigma\cdot \hat k (\hat k'\times\vec\epsilon\,')^*\cdot\vec\epsilon \,\omega^2\omega'\left[A_2+(1-\frac{\omega'}{M})A_4+\frac{\nu}{M}A_5+A_6\right]\nb\\
&&-i\vec\sigma\cdot \hat k' \,\vec\epsilon\,'^*\cdot (\hat k\times\vec\epsilon) \,\omega\omega'^2\left[A_2+(1+\frac{\omega'}{M})A_4-\frac{\nu}{M}A_5+A_6\right]\nb\\
&&-i\vec\sigma\cdot \hat k' \,\vec\epsilon\,'^*\cdot (\hat k\times\vec\epsilon) \,\omega\omega'^2\left[-A_2+(1-\frac{\omega'}{M})A_4-\frac{\nu}{M}A_5+A_6\right]\nb\\
&&\left.+i\vec\sigma\cdot \hat k (\hat k'\times\vec\epsilon\,')^*\cdot\vec\epsilon \,\omega^2\omega'\left[-A_2+(1+\frac{\omega'}{M})A_4+\frac{\nu}{M}A_5+A_6\right]\right\},
\eea}
where $\hat k$ and $\hat k'$ are the unit vectors in directions of $\vec k$ and $\vec k'$.

Fixed-t dispersion relations for $A_i$ can be formulated as \cite{Lvov1997,Petrunkin1981:278,Lvov:1980wp}:
\bea
&&{\rm Re}\ A_i(\nu,t)=A_i^{\rm pole}(\nu,t)+A_i^{\rm int}(\nu,t)+A_i^{\rm as}(\nu,t).
\label{fixtDR}\\
&&A_i^{\rm int}(\nu,t)=\frac{2}{\pi}P\int_{\nu_{\rm thr}(t)}^{\nu_{\rm max}(t)} {\rm Im} A_i(\nu',t)\frac{\nu'd\nu'}{\nu'^2-\nu^2}.\label{Aiint}\\
&&A_i^{\rm as}(\nu,t)=\frac{1}{\pi}{\rm Im}\int_{\nu_{\nu'=\rm max}(t)e^{i\phi},0<\phi<\pi} A_i(\nu',t)\frac{\nu'd\nu'}{\nu'^2-\nu^2}.\label{Aias}
\eea
The first term in (\ref{fixtDR}), also called Born term $A_i^B$ in some literatures, is completely determined by the electric charge and magnetic moment of proton. It receives singular contribution from the proton in the intermediate state. 
\bea
&&A_i^{\rm pole}(\nu,t)=\frac{M e^2 r_i(t)}{(s-M^2)(u-M^2)},\nb\\
&&r_1=-2+(\kappa^2+2 \kappa)\frac{t}{4M^2},~r_2=2\kappa+2+(\kappa^2+2\kappa)\frac{t}{4M^2},\nb\\
&&r_3=r_5=\kappa^2+2\kappa,~r_4=\kappa^2,~r_6=-\kappa^2-2\kappa-2.
\eea
%with $\kappa\sim 1.79$ being proton's anomalous magnetic moment.

The second term in (\ref{fixtDR}), ie. (\ref{Aiint}) is labeled integral contribution. It is the usual dispersion integral taken between the pion photoproduction threshold $\nu_{\rm thr}(t)=\omega_{\rm thr}+t/4M$ with $\omega_{\rm thr}\simeq 150$ MeV, and a maximum energy $\nu_{\rm max}$. Below $\nu_{\rm max}$, ${\rm Im} A_i$ can be evaluated using optical theorem and known amplitudes of meson photoproduction \cite{Lvov1997}. In \cite{Lvov1997} they use the maximum energy: $\omega_{\rm max}=\nu_{\rm max}(t)-t/4M=1.5$ GeV.

The last term in (\ref{fixtDR}) is the contribution from Compton scattering with energy above $\nu_{\rm max}$. From (\ref{Aias}) it is seen that the dependence on energy for all the $A_i$'s are neglect-able when $\nu^2\ll \nu_{\rm \max}^2$. As found in \cite{Lvov1997}:
\beq
A_{1,2}\sim \nu^{\alpha(t)},~A_{3,5,6}\sim \nu^{\alpha(t)-2},~A_4\sim  \nu^{\alpha(t)-3}\ {\mbox as}\ \nu\to\infty,
\eeq
with $\alpha(t)\le 1$ being the Regge pole trajectory. Thus $A_{3,4,5,6}$ vanish when $\nu\to\infty$ and satisfy unsubtracted dispersion relations:
\beq
A_i^{\rm as}(\nu,t)=\frac{2}{\pi}P\int^\infty_{\nu_{\rm max}(t)} {\rm Im} A_i(\nu',t)\frac{\nu'd\nu'}{\nu'^2-\nu^2}.
\eeq
It is expected that $A_{3,4,5,6}^{\rm as}$ are small, and $A_{3,4,5,6}$ are saturated at low energy $\nu'\ll \nu_{\rm max}$.

$A_{1,2}^{\rm as}$ are assumed to receive contributions from exchanges of $\sigma$ and $\pi$ mesons \cite{Lvov1997}:
\bea
&&A_1^{\rm as}(t)\simeq A_1^{\sigma(t)}=\frac{g_{\sigma NN}F_{\sigma\gamma\gamma}}{t-m_\sigma^2},\nb\\
&&A_2^{\rm as}(t)\simeq A_2^{\pi^0(t)}=\frac{g_{\pi NN}F_{\pi^0\gamma\gamma}}{t-m_{\pi^0}^2} F_\pi(t),
\eea
where $F_\pi(t)=(\Lambda_\pi^2-m_\pi^2)/(\Lambda_\pi^2-t)$ with the cutoff parameter $\Lambda_\pi\simeq 0.7$ GeV.

%\subsection{Sum Rules}

%understand "(not )subtracted" and "saturated", what do they mean in dispersion theory.
%(+,-,-,-) metric for both parts of the thesis?
%check whether my choice of pion sign is really -1(to check with PRC 58 1013 that the relative sign between gpi and Fpigamma is negative). Expanding my result at low energy, and expand a dirac particle with pion channel amplitude, and compare(I did it and it seems to agree, if \epsilon^{0123}=1 which is not the same as written in this introduction based on some other paper, and if my understanding of F^{\mu\nu} for an outgoing photon is correct: change the sign of k and complex conjugate the wave function for an outgoing photon compared to that for an incoming photon). And did I include the scale dependence in the pion channel(the answer is no!). what will be the polarizabilities if the pion channel sign is reversed.

\section{Effective Field Theory Approach to Proton Compton Scattering}
Besides dispersion theory, another major approach to proton Compton scattering is effective field theory approach \cite{Feuster:1998cj, Pascalutsa1995,Scholten1996, Kondratyuk2001,Pagels1974, Weinberg1978, Gasser1983, Gasser1987,McGovern:2012ew}, in which people use an effective Lagrangian to describe the involved degrees of freedom. Since it is necessary to include the spin 3/2 particle $\Delta^+$ in low energy proton Compton scattering, a spin 3/2 field theory is required. The most common theory is Rarita-Schwinger theory \cite{Rarita:1941mf}.

\subsection{Rarita-Schwinger Spin 3/2 Theory}\label{oldRarSch}
The spin 3/2 field can be represented by a spinor with Lorentz index: $\psi_\mu^a$ (the spinor index a is usually suppressed). The free Lagrangian can be written as:
\bea
&&{\mathcal L}= - {\bar \psi}_\lambda \, [p_\mu \, \Gamma^{\mu\lambda}{}_\rho-m \, \Theta^\lambda{}_\rho] \, \psi^\rho,\nonumber\\
&&\Gamma^{\mu\lambda}{}_\rho=\gamma^\mu \, \eta^\lambda{}_\rho+\xi \, (\gamma^\lambda \,  \eta^\mu{}_\rho+\gamma_\rho \,  \eta^{\lambda\mu})+\zeta \, \gamma^\lambda \, \gamma^\mu \, \gamma_\rho,\nonumber\\
&&\Theta^\lambda{}_\rho=\eta^\lambda{}_\rho- z  \, \gamma^\lambda \, \gamma_\rho.
\label{oldRS}
\eea
Since $\psi_\mu$ has 16 degrees of freedom, in general it can be decomposed into a spin 3/2 component and two spin 1/2 components. Historically, the spin 1/2 components were not welcome and people got rid of them by adjusting the parameters $\zeta$, $\xi$ and $z$ such that the masses of the spin 1/2 components are infinite. In this way, only the spin 3/2 component can become on shell and be a physical particle. The constraints on the parameters are:
\bea
&&z=3\xi^2+3\xi+1,\label{zxirel}\\
&&\zeta=\frac{3\xi^2+2\xi+1}{2}\label{zetaxirel},
\eea
which reduces the parameters to a single one $\xi$. The spin 3/2 component satisfies:
\beq
p^\mu \psi(p)_\mu=0,~~\gamma^\mu \psi(p)_\mu=0.
\label{spin32property}
\eeq
One can further make a point transformation:
\beq
\psi_\mu\rightarrow \psi_\mu+\lambda\gamma_\mu\gamma_\nu\psi^\nu,
\label{pointtrans}
\eeq
which does not affect the spin 3/2 component due to (\ref{spin32property}). Under (\ref{pointtrans}) the form of the Lagrangian (\ref{oldRS}) is not changed, but just with $\xi$ transformed to $ \xi(1-4\lmd)-2\lmd$. Thus $\xi$ can be set to a preferred value and the theory has no free parameters. There are two common choices for the parameters:
%check the transformation
\bea
&&\xi=-1,~~\zeta=1,~~z=1,\nb\\
\mbox{or}~~&&\xi=-\frac{1}{3},~~\zeta=\frac{1}{3},~~z=\frac{1}{3}.
\eea

This theory has a fatal problem that when it is minimally coupled to the electromagnetic field, there exist wave function solutions that propagate faster than light \cite{Velo:1969bt,Velo:1970ur}. This superluminal propagation problem is solved recently by Konstantin G. Savvidy \cite{Kostas10}.

\subsection{A Modified Rarita-Schwinger Theory for Spin 3/2}\label{newRarSch}
In \cite{Kostas10}, Konstantin G. Savvidy found that by a modification of the Rarita-Schwinger theory (\ref{oldRS}), the superluminal propagation problem can be solved. He abandoned the idea that there should not be on shell spin 1/2 components besides the spin 3/2 one. Instead, he kept one of the spin 1/2 components by relaxing the condition (\ref{zxirel}), and $z$ became a free parameter in his modified Rarita-Schwinger theory. Using the transformation (\ref{pointtrans}), one can fix $\xi$ for convenience. By choosing:
\beq
\xi=2z-1,~~\zeta=6z^2-4z+1,
\eeq
the wave function $u_2^\mu$ of the spin 1/2 component is transverse, ie. satisfying $p_\mu u_2(p)^\mu=0$. With this choice of parameters, the mass $M$ of the spin 1/2 component is related to the spin 3/2 component mass m as:
\beq
M=\frac{m}{6z-2}.
\eeq

%the theory need $z>1/3$. 

By minimal substitution $p_\mu \to p_\mu + e \, A_\mu$ in (\ref{oldRS}), Konstantin G. Savvidy obtained the electromagnetic interaction in the modified theory:
\beq
{\mathcal L}_I=e \, {\bar \psi}_\lambda  \, \Gamma^{\mu\lambda}{}_\rho  \, \psi^\rho \, A_\mu,
\eeq
with $\Gamma^{\mu\lambda}{}_\rho$ given in (\ref{oldRS}). Ward identity, the requirement from gauge invariance, is satisfied by this interaction vertex:
\beq
-i  \, k_\mu  \, \Gamma^{\mu\lambda}{}_\rho=S^\lambda{}_\rho(p+k)^{-1}-S^\lambda{}_\rho(p)^{-1}.
\label{wardid}
\eeq

In \cite{Kostas10}, it is found that wave function solutions with superluminal propagation are only turned on when $z=\frac{1}{3}$, which corresponds to the old Rarita-Schwinger theory.

%Due to the small mass gap between proton and $\Delta^+$, this process is not negligible even for low energy photon.
%or first resonance range? others also call it this way?
%To study the dependence of form factors on energy, virtual Compton scattering is needed I think.
%Although QCD should be the correct theory to describe the proton and $\Delta^+$ as bound states, calculations are not available except on the lattice because of the strong coupling at low energies. 

%thanks to our world to be complicate, delicate and yet ruled.
\subsection{Examples of Effective Field Theory Approach}
Peccei \cite{Peccei1968, Peccei1969} introduced a workable effective theory of the $\Delta^+$, ie. the old Rarita-Schwinger theory reviewed in Section \ref{oldRarSch}, including the lowest order effective coupling to the photon:
\beq
{\cal L}\propto \bar\psi_{\Delta^+}^\lambda(i \gamma_\alpha g_{\lmd\bt}-i \gamma_\bt g_{\lmd\alpha}-\frac{1}{2}\gamma_\lmd \sigma_{\alpha\bt})\gamma_5 \psi_p F^{\alpha\bt}+h.c.\ \ .
\label{Pecceiint}
\eeq
Pascalutsa and Scholten \cite{Pascalutsa1995} extended the Peccei approach with additional allowed forms of the $\gamma \Delta^+ p$ interactions:
\bea
&&{\cal L}_{\gamma \Delta^+ p}={\cal L}_{\gamma \Delta^+ p}^1+{\cal L}_{\gamma \Delta^+ p}^2+{\cal L}_{\gamma \Delta^+ p}^3,\nb\\
&&{\cal L}_{\gamma \Delta^+ p}^1=\frac{iG_1}{2M_p}\bar\psi_{\Delta^+}^\alpha \Theta_{\alpha\mu}(z_1)\gamma_\nu\gamma_5 T_3 \psi_p F^{\nu\mu}+h.c.\ ,\nb\\
&&{\cal L}_{\gamma \Delta^+ p}^2=\frac{-G_2}{(2M_p)^2}\bar\psi_{\Delta^+}^\alpha \Theta_{\alpha\mu}(z_2)\gamma_5 T_3\partial_\nu \psi_p F^{\nu\mu}+h.c.\ ,\nb\\
&&{\cal L}_{\gamma \Delta^+ p}^3=\frac{-G_3}{(2M_p)^2}\bar\psi_{\Delta^+}^\alpha \Theta_{\alpha\mu}(z_3)\gamma_5 T_3 \psi_p \partial_\nu F^{\nu\mu}+h.c.\ ,
\label{Pascalutsaint}
\eea
where $\Theta_{\alpha\bt}(z)=g_{\alpha\bt}-(z+\frac{1}{2})\gamma_\alpha\gamma_\bt$, and $T_3$ is the third component of $\frac{1}{2}\to \frac{3}{2}$ isospin transition operator. Then they fit to experimental data with four parameters $G_1, G_2, z_1$ and $z_2$ ($G_3$ and $z_3$ do not contribute in real Compton scattering).

Pascalutsa and Scholten also discussed in detail the spin 3/2 propagator which is necessary for calculating the contributions of the $\Delta^+$ exchange diagrams. They found that better agreement is obtained if spin 1/2 contributions are kept in the propagator, which led them to conclude that the spin 1/2 contributions may be remnants of some high-mass excited states of the nucleon.

In (\ref{Pecceiint}) and (\ref{Pascalutsaint}), the electromagnetic interactions of $\Delta^+$ and proton are all introduced through couplings to the strength tensor $F_{\mu\nu}$, which is crucial for them to maintain gauge invariance.

\section{Motivations of Our Work}

We intend to study proton Compton scattering in the effective field theory approach. In previous works like \cite{Peccei1968, Peccei1969, Pascalutsa1995} they used the old Rarita-Schwinger theory for $\Delta^+$ in the intermediate state. However, this approach is undermined by the fact that the old Rarita-Schwinger theory is pathological when coupled to electromagnetic fields, as mentioned in Section \ref{oldRarSch}. We looked for a solution to this problem.

Fortunately, as we have mentioned, in \cite{Kostas10} Konstantin G. Savvidy proposed a modified Rarita-Schwinger theory, which is a well grounded theory for charged spin 3/2 particles. He also derived the minimal electromagnetic interaction which satisfies gauge invariance. At the same time, since in his modified Rarita-Schwinger theory there is an additional spin 1/2 component besides the spin 3/2 one, we can use the two components to model proton and $\Delta^+$ respectively. That is to say, we unify proton and $\Delta^+$ in an effective spin 3/2 theory. This unification is appealing since proton and $\Delta^+$ have the same constituent quarks and can transform from one into the other by absorbing or emitting a photon or even a neutral pion (which itself carries no spin and charge).

In the next chapter, I will describe our work \cite{Zhang:2013uaa} on proton Compton scattering based on the modified Rarita-Schwinger theory \cite{Kostas10}, with Dr. Konstantin G. Savvidy as my advisor. We introduce several non minimal electromagnetic interactions to describe proton and $\Delta^+$ magnetic moments and proton-$\Delta^+$ M1 transition amplitudes etc.. We study the approximation of the amplitudes around the $\Delta^+$ peak. We also introduce an effective Lagrangian for the polarzabilities. Then we calculate proton Compton scattering cross section based on the effective Lagrangian, and fit to current experimental data. Main advantages of our work over previous works are as explained in the last paragraph: first, our approach is based on a well grounded spin 3/2 theory for $\Delta^+$; second, we naturally unify proton and $\Delta^+$ in one theory. Additionally, it turns out that we can make a prediction of $\Delta^+$ magnetic moment.

\chapter{Proton Compton Scattering In Unified Proton-$\Delta^+$ Model}

\section{Introduction}
Throughout the decades, there have been many experimental and theoretical works \cite{Mohr:2000ie,Mohr:2005zz,Mohr:2008fa,Mohr:2012tt, Baranov1974, Hallin1993,Zieger92,Hunger97, MacGibbon1995,Blanpied01,Olmos2001,MAMI2001,Lvov1997, Hearn:1962zz, Pfeil:1974ib, Lvov:1980wp, Drechsel:1999rf,Lvov:1979zd, Guiasu:1978ak,GellMann:1954db,Baldin1960, Pasquini2007,Feuster:1998cj, Pascalutsa1995,Scholten1996, Kondratyuk2001,Pagels1974, Weinberg1978, Gasser1983, Gasser1987,McGovern:2012ew,Peccei1968, Peccei1969} aimed to study proton's electromagnetic properties. Among them, the electric and magnetic polarizabilities, right behind the electric charge and magnetic moment, are neither well measured or profoundly understood theoretically. The electric and magnetic polarizabilities measure the induced electric and magnetic multipole moments of the proton in electric and magnetic fields. They can be measured in proton Compton scattering. The lightest baryon resonance $\Delta^+(1232{\mbox MeV}, J^P=\frac{3}{2}^+)$) plays a very important role in low energy proton Compton scattering.

In an effective field theory approach, people have been using the old Rarita-Schwinger theory reviewed in Section \ref{oldRarSch} to describe $\Delta^+$ \cite{Peccei1968, Peccei1969,Pascalutsa1995}. However the old Rarita-Schwinger theory suffers from superluminal propagation problem. In \cite{Kostas10}, Konstantin G. Savvidy, my advisor of this work, proposed a modified Rarita-Schwinger theory that is free of the superluminal propagation problem, reviewed in Section \ref{newRarSch}:
\bea
&&{\mathcal L}= - {\bar \psi}_\lambda \, [p_\mu \, \Gamma^{\mu\lambda}{}_\rho-m \, \Theta^\lambda{}_\rho] \, \psi^\rho,\nonumber\\
&&\Gamma^{\mu\lambda}{}_\rho=\gamma^\mu \, \eta^\lambda{}_\rho+\xi \, (\gamma^\lambda \,  \eta^\mu{}_\rho+\gamma_\rho \,  \eta^{\lambda\mu})+\zeta \, \gamma^\lambda \, \gamma^\mu \, \gamma_\rho,\nonumber\\
&&\Theta^\lambda{}_\rho=\eta^\lambda{}_\rho- z  \, \gamma^\lambda \, \gamma_\rho,\nonumber\\
&&\xi=2\, z-1,\ \ \ \ \zeta= 6\, z^2-4\,z+1.
\eea

In this chapter, we will study proton Compton scattering, using this modified Rarita-Schwinger theory to describe $\Delta^+$. At the same time, we can unify proton and $\Delta^+$ in this modified Rarita-Schwinger theory. The relation between proton mass M and $\Delta^+$ mass m is $M=\frac{m}{6z-2}$. This unification is very natural since proton and $\Delta^+$ have the same quark constituents. Unification of proton and $\Delta^+$ is a distinct feature of our work, compared to previous works in effective field theory approach.

For phenomenological applications, it is necessary to be able to tune the electromagnetic properties of the system to the experimentally observed form. First, the theory developed in \cite{Kostas10} 
is sufficient to describe the minimal interaction of this physical system with the photon, but in the real world it is expected that in the photon intermediate energy range of 100-300 MeV the Compton scattering process is dominated by contributions from the anomalous magnetic moment as well as the polarizabilities of the nucleon. Of these additional contributions, anomalous magnetic moment contributes to the amplitudes already at the linear order while polarizabilities start out at the second order. Thus cross-section can grow at first quadratically and then quartically with energy. This is in contrast to the minimal coupling in QED, where the Klein-Nishina cross-section is essentially constant in the relevant energy range. Thus, the minimal electromagnetic interaction of this model is not enough.

We investigate the non-minimal electromagnetic interactions, with their coefficients called form factors, in Section \ref{elemagint}.  Whereas in the Dirac theory there is only the anomalous magnetic moment in addition to the charge, in the spin 3/2 case there are five form-factors for coupling the momentum-independent fermion bilinears directly to the EM field strength.  The form factors are in principle arbitrary functions of transferred momentum-squared. In the present case of real Compton scattering, transferred momentum-squared is fortunately zero such that the form factors are constants and are jointly determined from experimental data. In this section we also introduce an effective Lagrangian to model the static electric and magnetic polarizabilities. In Section \ref{transmatrix}, we discuss the p-p, $\Delta^+$-$\Delta^+$, p-$\Delta^+$ transition matrices and derive the formulae of the proton and $\Delta^+$ magnetic moments. In Section \ref{comscaandpol}, we calculate proton Compton scattering cross section, express electric and magnetic polarizabilites in terms of the form factors, and also analyze the behavior of the amplitudes around the $\Delta^+$ pole. The interactions in Section \ref{elemagint} can also be used for $\Delta^+\to p\,\gamma$ decay process, described in Section \ref{sec:decayamp}. Then we fit the Compton scattering data in Section \ref{datafit}, and extract polarizabilities using expressions derived in Section \ref{comscaandpol}. Since one combination of the form factors gives $\Delta^+$ magnetic moment, our fitting result also provides a prediction of $\Delta^+$ magnetic moment, the experiments on which suffer from large uncertainties. Finally in Section \ref{conclu} we conclude.

\section{Electromagnetic Interaction}\label{elemagint}
The minimal electromagnetic interaction has been derived in \cite{Kostas10}:
\beq
{\mathcal L}_I=e \, {\bar \psi}_\lambda  \, \Gamma^{\mu\lambda}{}_\rho  \, \psi^\rho \, A_\mu,
\eeq
where $\Gamma^{\mu\lambda}{}_\rho=\gamma^\mu  \, \eta^\lambda{}_\rho+\xi \, (\gamma^\lambda  \, \eta^\mu{}_\rho+\gamma_\rho  \, \eta^{\lambda\mu})+\zeta \, \gamma^\lambda \, \gamma^\mu \, \gamma_\rho$.

The propagator is \cite{Kostas10}:
% \footnote{\textcolor{red}{ZY: why -i?   K: do you disagree?}} 
\bea
\label{ModRSpropagator}
-i S(p)&=
% &\frac{2m \Pi^+(p,m)\Pi_3(p)}{p^2-m^2+i\epsilon}-\frac{M}{(3z-1)^2}\frac{\Pi^+(p,M)\Pi_{11}(p)}{p^2-M^2- i\epsilon}\\
% &+&\frac{1}{m-4zm}(\frac{3z^2}{3z-1}\Pi_{11}(p)-z\Pi_{12}(p)-z\Pi_{21}(p)+(3z-1)\Pi_{22}(p)),\nonumber\\
 & \frac{(\slashed{p}+m)\, \Pi_3  }{p^2-m^2+i\,\epsilon}  
- \frac{(\slashed{p}+M) \, \Pi_{11} }{p^2-M^2 - i\,\epsilon} \, \frac{2M^2}{m^2} \nonumber\\ % \frac {1} {2\,(3\,z-1)^2}
&+&  %\frac{(1-3z)}{(1-4z)\,m} \, 
\left[ \vspace{30pt} \Pi_{22} - (\Pi_{21}+\Pi_{12})\,/B +  \Pi_{11}\,3/B^2 \right] \, \frac{3}{2 \,(M +2 \, m)}~~,\nonumber\\
B &=& \frac{3\,m}{2 \, M +m} ~~.%(3z-1)/z
\eea
where the standard spin projection operators $\Pi$ can be found for example in \cite{Benmerrouche:1989uc}. 

Ward identity is satisfied:
\beq
-i  \, k_\mu  \, \Gamma^{\mu\lambda}{}_\rho=S^\lambda{}_\rho(p+k)^{-1}-S^\lambda{}_\rho(p)^{-1}.
\label{wardid}
\eeq
%(\textcolor{red}{introduce the propagator here?})

\subsection{Non-Minimal Electromagnetic Interactions}
For phenomenological application to proton Compton scattering, minimal interaction alone does not suffice. A well known fact for Dirac theory is that it allows for two electromagnetic form factors, one of which is charge and the other describes the anomalous magnetic moment. %minimal electromagnetic interaction $e{\bar u}\gamma^\mu u A_\mu$ cannot describe anomalous magnetic moment, and one should add non-minimal interaction $\frac{i e}{2m} F_2 {\bar u}\sigma^{\mu\nu} u q_\nu A_\mu$ to describe anomalous magnetic moment, the coefficients known as form factors. 
%Spin 3/2 fields have more complicated electromagnetic properties, so 
We add the following as yet undetermined non-minimal interactions to the vertex:
\beq
{\tilde \Gamma}^{\mu\lambda}{}_\rho= \Gamma^{\mu\lambda}{}_\rho+ \frac{i}{2M} \, \sum\limits_{n} F_n(k^2) \, (\Gamma_n{})^{\mu\lambda}{}_\rho,
\label{genvertex}
\eeq
where the $F_n(k^2)$ are form factors. If amplitudes are to be gauge invariant, the
Ward identity (\ref{wardid}) should still hold for ${\tilde \Gamma}^{\mu\lambda}{}_\rho$. For that, it is sufficient to set $k_\mu (\Gamma_n){}^{\mu\lambda}{}_\rho=0$ and thus $\Gamma_n$ can be of the form:
\beq
(\Gamma_n)^{\mu\lambda}{}_\rho=(\Sigma_n)^{\mu\nu\lambda}{}_\rho \, k_\nu,
\label{nonminvertex}
\eeq
where $(\Sigma_n)^{\mu\nu\lambda}{}_\rho$ is antisymmetric in $\mu$ and $\nu$.

Antisymmetric tensors live in the $(1,0)\oplus(0,1)$ representation of Lorentz group, and we can count the number of these representations in the product representation of the two matter fields. Representation for $\psi_\lambda$(or ${\bar \psi}_\lambda$) is a product of that for a vector field and that for a spinor field:
\beq
(\frac{1}{2},\frac{1}{2})\otimes[(\frac{1}{2},0)\oplus(0,\frac{1}{2})]=(1,\frac{1}{2})\oplus(0,\frac{1}{2})\oplus(\frac{1}{2},0)\oplus(\frac{1}{2},1).
\eeq
The vertexes live in the tensor product of the above reducible representations, and
\bea
&&[(1,\frac{1}{2})\oplus(0,\frac{1}{2})\oplus(\frac{1}{2},0)\oplus(\frac{1}{2},1)]\otimes[(1,\frac{1}{2})\oplus(0,\frac{1}{2})\oplus(\frac{1}{2},0)\oplus(\frac{1}{2},1)]\nonumber\\
&& \supset 5\ [(1,0)\oplus(0,1)].
\eea
This tells us that there are five antisymmetric tensors and we have been able to explicitly construct them as:
\bea
&&(\Sigma_1)^{\mu\nu\lambda}{}_\rho=-\frac{1}{2}\tau^{\mu\nu\lambda}{}_\rho\ ,\nonumber\\
&&(\Sigma_2)^{\mu\nu\lambda}{}_\rho=\sigma^{\mu\nu}\eta^\lambda{}_\rho\ ,\nonumber\\
&&(\Sigma_3)^{\mu\nu\lambda}{}_\rho=-\frac{1}{9}\gamma^\lambda\sigma^{\mu\nu}\gamma_\rho\ ,\nonumber\\
&&(\Sigma_4)^{\mu\nu\lambda}{}_\rho=\frac{1}{12}(\gamma^\lambda\gamma^\mu \eta^\nu{}_\rho-\gamma^\lambda\gamma^\nu \eta^\mu{}_\rho+\gamma^\mu\gamma_\rho \eta^{\nu\lambda}-\gamma^\nu\gamma_\rho \eta^{\mu\lambda})\ ,\nonumber\\
&&(\Sigma_5)^{\mu\nu\lambda}{}_\rho=\frac{-i}{12}(\gamma^\lambda\gamma^\mu \eta^\nu{}_\rho-\gamma^\lambda\gamma^\nu \eta^\mu{}_\rho-\gamma^\mu\gamma_\rho \eta^{\nu\lambda}+\gamma^\nu\gamma_\rho \eta^{\mu\lambda}),
\label{antisymtensor}
\eea
%a11=1/4F1,c11=-1/9F3,d2=F5/12,d1=F4/12, pay attention to the change of the sign in the definition of $\tau$ to respect commutation relations as spin 1 angular momentum operator, thus replace my all to 1/4F1 instead of minus it. my definition of \sigma^{\mu\nu} is not the accepted one with an additional factor of 1/2: it should be sigma=i/2(\gamma\gamma-\gamma\gamma). So together I should change and give the definition of tau and sigma and add a minus sign to \Sigma_1, add a factor of 1/2 to \Sigma_2, \Sigma_3 and \Sigma_6, as done above without rescale F1,F2,F3, F6.
%all F1 to F6 can be rescaled by 1/2 and the coefficients in \Sigma_1 to \Sigma_6 rescaled to -i/2/M,i/M,-i/9/M,i/12M,1/12M,i/M^3
where $\tau^{\mu\nu\lambda}{}_\rho$ and $\sigma^{\mu\nu}$ are generators of the Lorentz transformations for spin 1 and 1/2 respectively, $\tau^{\mu\nu\lambda}{}_\rho=i(\eta^{\mu\lambda}\eta^\nu{}_\rho-\eta^{\nu\lambda}\eta^\mu{}_\rho)$ and $\sigma^{\mu\nu}=\frac{i}{4}(\gamma^\mu\gamma^\nu-\gamma^\nu\gamma^\mu)$. % The $\gamma$ matrixes are in Weyl representation. 
The coefficients in (\ref{antisymtensor}) are chosen such that the form factors $F_n$ enter with equal weights in proton magnetic moment in (\ref{magmoment}).

These tensors satisfy the requirement of Hermiticity:%\footnote{\textcolor{red}{when m is complexified, the hermiticity is violated.}}
\beq
[{\bar \psi}(p_1)_\lambda ~ (\Sigma_n)^{\mu\nu\lambda}{}_\rho  ~ \psi^\rho(p_2) ~(p_1-p_2)_\nu]^\dagger = - {\bar \psi}(p_2)_\lambda  ~ (\Sigma_n)^{\mu\nu\lambda}{}_\rho  ~  \psi^\rho(p_1) ~ (p_2-p_1)_\nu,
\eeq
which implies that the form factors  $F_n(k^2)$ are real.

%However, the above tensors are not enough for description of spin 3/2 form factors. 
At higher order in momenta there are a small number of additional possibilities for non minimal interactions \cite{Chen:2011ve}. The pure spin 3/2 field has three form factors other than charge \cite{chensav}, while only the first and second ones in (\ref{antisymtensor}) contribute to pure spin 3/2, because $\gamma_\rho \, u_4^\rho=0$ for spin 3/2 wave function $u_4^\rho$.%\footnote{The form of the interaction terms are different here and in the reference, but for on-shell spin 3/2 states, we can derive a relation between the form factors here and in the reference, and express the several electromagnetic quantities using the form factors here.} 
 Thus we add one more tensor:
%\footnote{give the relation between these form factors and the electric dipole, magnetic quadrupoles etc.?}
\beq
(\Sigma_6)^{\mu\nu\lambda}{}_\rho=\frac{1}{M^2} \,  k^\lambda \, \sigma^{\mu\nu} \, k_\rho.
\label{sixthtensor}
\eeq

The form factors $F_i(k^2)$ which appear as coefficients in (\ref{genvertex}) are scalar functions of transferred-momentum squared. For real Compton scattering, the photon is on-shell, ie. $k^2$=0, so these form factors are taken to be constants in what follows. %\footnote{The coefficient for the minimal interaction is also a function of $k^2$, which equal 1 when $k^2=0$, because electric charge is 1(e).}

% At higher order in momenta there is a small number of additional possible form factors \cite{chensav}, which we do not incorporate in this work because we are interested primarily in low-energy phenomenology in the $\Delta$ resonance region.

\subsection{Bare Polarizabilities Effective Lagrangian}

Expanding Compton scattering cross section in lab frame at low energy, static polarizabilities $\alpha_E$ and $\beta_M$ first enter at second order:
\beq
\frac{d \sigma}{d \Omega_{\mbox{\tiny lab}}}=\left(\frac{d \sigma}{d \Omega}\right)_{\mbox{\tiny Powell}}-\frac{e^2 \omega^2}{4 \pi M}(\frac{\alpha_E+\beta_M}{2}(1+\cos\theta)^2+\frac{\alpha_E-\beta_M}{2}(1-\cos\theta)^2)+\mathcal O(\omega^3).
\label{polarizability}
\eeq
Thus, polarizabilities are here defined in the standard way as in the literature, by comparing the theoretical predictions and experimental data to the Powell cross section $\left(\frac{d \sigma}{d \Omega}\right)_{\mbox{\tiny Powell}}$, which is the differential cross section of a Dirac point particle with anomalous magnetic moment included \cite{Powell1949, Low1954, GellMann1954}. A different definition would result if the Klein-Nishina result for the Dirac point particle without anomalous magnetic moment is taken as the basis for comparison. Such difference has sometimes led to confusion in the literature, but has been satisfactorily resolved by separating the contributions due to the anomalous magnetic moment. %The expansion of the Powell cross section to second order can be found in \cite{Bornpart}. 
The non-minimal interaction vertices presented in this section contribute to the effective polarizabilities $\alpha_E$ and $\beta_M$ as we shall see in Section \ref{comscaandpol}. % through propagation of an internal particle, which formular will be given in the next section. 
In addition to these vertices, we may include also an effective four-point contact interaction that can contribute directly to the polarizabilities. Inspired by the effective Lagrangian proposed in \cite{0611327}, we include the following interaction Lagrangian to model ''bare" polarizabilities:
\beq
{\cal L}_{\mbox{\tiny pol}}=\frac{i\pi}{M}~(\bar\psi_\lambda \, \Gamma^{\mu\lambda}{}_\rho  \,\partial_\nu  \,\psi^\rho-\partial_\nu  \,\bar\psi_\lambda  \,\Gamma^{\mu\lambda}{}_\rho \psi^\rho)(\alpha_{EB} ~ F_{\mu\rho} \, F^{\rho\nu}+\beta_{MB} ~  \tilde{F}_{\mu\rho} \, \tilde{F}^{\rho\nu}).
\label{poleffLag}
\eeq
This Lagrangian is not unique, but other candidates contribute identically to the cross section up to the second order of the incident photon energy. %, since they should obey (\ref{polarizability}).

The two coefficients $\alpha_{EB}$ and $\beta_{MB}$ we call bare polarizabilities.
%The contribution of this effective Lagrangian to the lab frame Compton scattering amplitude at second order of photon energy is:
%\beq
%{\cal A}=4\pi \, \alpha_{EB}  \, \omega \omega'  \, {\vec \epsilon\,}'^*\cdot {\vec\epsilon}+4\pi \, \beta_{MB}  \, {\vec\epsilon\,}'^*\times{\vec k}'\cdot {\vec\epsilon}\times{\vec k}.
%\eeq
The contribution of $\alpha_{EB}$ and $\beta_{MB}$ to the cross section is of the form in (\ref{polarizability}). In the 
low energy limit where the proton is at rest before and after scattering, and the photon frequency tends to zero, the Hamiltonian corresponding to (\ref{poleffLag}) is: 
\beq
{\cal H}_{\mbox{\tiny pol}}=-2\pi(\alpha_{EB} |\vec{E}|^2+\beta_{MB} |\vec{H}|^2),
\label{barepol}
\eeq
in agreement with (\ref{stapolHam}).

At higher orders in photon energy it is possible to define and extract more general polarizabilities, refer to Section \ref{sec:polarizabilities}.

\section{Magnetic Moments and the $\gamma\,N\,\Delta^+$ Transition Matrix}\label{transmatrix}
To calculate the magnetic moments, let $A^\mu=(0,\vec{A})$, say $A^\mu=(0,A_x,0,0)$, and $\vec{p}_1-\vec{p}_2=q\hat{z}$, with $\vec{p}_1\to 0$ and $\vec{p}_2\to 0$. The definitions of proton and $\Delta^+$ magnetic moments $\mu_p$ and $\mu_{\Delta^+}$ are as following:
\bea
&&e \, \bar{u}_2(p_1,\sigma_1)  \, \tilde{\Gamma}^\mu   \, u_2(p_2,\sigma_2)   \, A_\mu = 
                  2  \,M   \, \mu_p  \left(J_y^{\left(\frac{1}{2}\right)}\right)_{\sigma_1\sigma_2}(-i q A_x)+\mathcal O(q^2),\nonumber\\
&&e \, \bar{u}_4(p_1,\sigma_1)  \, \tilde{\Gamma}^\mu   \, u_4(p_2,\sigma_2)   \, A_\mu=
                          -2  \,m   \, \mu_{\Delta^+}  \left(J_y^{\left(\frac{3}{2}\right)}\right)_{\sigma_1\sigma_2}(-i \,q \, A_x)+\mathcal O(q^2).
\label{magmom}
\eea
In this equation, $u_2$ and $u_4$ are respectively the spin 1/2 and 3/2 positive frequency wave functions; $\vec{J}^{\left(\frac{1}{2}\right)}$ and $\vec{J}^{\left(\frac{3}{2}\right)}$ are standard quantum mechanical spin operators for spin 1/2 and 3/2. 
%$\vec{J}^{\left(\frac{1}{2}\right)}=\frac{\vec{\sigma}}{2}$, where $\vec{\sigma}$ are Pauli matrixes. For spin 3/2,
%\begin{eqnarray*}
%&&J_x=\frac{1}{2}\left (\begin{array}{cccc} 0 & \sqrt{3} & 0 & 0 \\ \sqrt{3} & 0 & 2 & 0 \\ 0 & 2 & 0 & \sqrt{3} \\ 0 & 0 & \sqrt{3} & 0 \end{array} \right ),J_y=\frac{i}{2}\left (\begin{array}{cccc} 0 & -\sqrt{3} & 0 & 0 \\ \sqrt{3} & 0 & -2 & 0 \\ 0 & 2 & 0 & -\sqrt{3} \\ 0 & 0 & \sqrt{3} & 0 \end{array} \right ),\nonumber\\
%&&J_z=\textrm{diag}\left(\frac{3}{2},\frac{1}{2},-\frac{1}{2},-\frac{3}{2} \right).
%\end{eqnarray*}

In units of $\mu_N=\frac{e}{2M}$ (M is proton mass), the magnetic moments of the proton and $\Delta^+$ as defined by (\ref{magmom}) are:
\bea
&&\frac{\mu_p}{\mu_N}=1+\lambda_p=1+\frac{4M(m+M)}{3m^2}+\frac{2M^2}{3m^2}(F_1+F_2+F_3+F_5),\nonumber\\
&&\frac{\mu_{\Delta^+}}{\mu_N}=\frac{M}{m}+(-\frac{1}{2}F_1+F_2).
\label{magmoment}
\eea

%\footnote{\textcolor{red}{when m is complex there seem to be no good solution to the definition of the magnetic moments.}}
%In the limit of m\to \infinty, proton anomalous magnetic moment \to 0(even with form factors?). Does it mean that the proton anomalous magnetic moment stems from the Delta baryon, and should I calculate loop correction to proton outline when calculating proton magnetic moment. The proton polarizability of the fundamental theory also comes from Delta baryon? Look at nucl-th/9610028 about polarizability and anomalous magnetic moment.
%Pion is the approximate goldstone for SU(2) (global) isospin sym breaking and W,Z are for SU(2) local weak sym breaking. Pion and W can both transit proton to neutron.
% The sign difference on the right hand side in (\ref{magmom}) can be traced to the sign difference in the electric charge matrix for spin 1/2 and 3/2 components in (\ref{elecharge}). 
When all the form factors are set to zero, $\mu_{\Delta^+}=\frac{e}{2m}$, so the g-factor of $\Delta^+$ is $\frac{2}{3}$, which agrees with expectations for that of an elementary spin 3/2 particle \cite{Belinfante:1953zz}. However, even in the minimally coupled theory, the  spin 1/2 particle still has an anomalous magnetic moment. Proton magnetic moment $\mu_p=2.79$ is well measured and acts as a constraint on the form factors through (\ref{magmoment}).  Intriguingly, the actual value is close to that of the minimally coupled theory, so that $F_1+F_2+F_3+F_5$ is approximately zero.

% the spin 1/2 particle in the fundament theory cannot be a point particle.

%use Compton scattering with photon lab energy around 350MeV, when the internal $\Delta^+$ approaches on-shell, to fit $\mu_{\Delta^+}$.
%fit pionphotonproduction processes: $p\gamma\to p\gamma\pi^0$
%For minimal electromagnetic interaction $\mu_{\Delta^+}=\frac{e}{2m}$ implies its g-factor to be $\frac{2}{3}$. 

$F_6$ does not contribute to the magnetic moments because it is higher order in the soft photon momentum. $F_4$ does not enter $\mu_{\Delta^+}$ due to $\gamma_\rho u_4^\rho=0$ for spin 3/2 solution $u_4^\rho$. $F_4$ does not appear in the proton magnetic moment, either, as can be shown from the equation of motion. %, although not so manifest as the spin 3/2 case. 
%The e.o.m. $[p_\mu \Gamma^{\mu\lambda}{}_\rho-m\Theta^\lambda{}_\rho]\psi^\rho=0$ for spin 1/2 solution implies( is equivalent to) two simpler equations: $p^\nu \gamma_\rho \psi^\rho(p)=3M\psi^\nu(p)-M\gamma^\nu\gamma_\rho \psi^\rho(p)$, and $(\slash\!\!\!p-M)\psi^\rho=0$. Using these two simpler equations, then it is easy to prove that $F_4$ does not contribute to proton magnetic moment.

%to prove a complete list of vertexes using (simplified) e.o.m. still seems hard.

In the limit of degenerate masses for proton and $\Delta^+$ (where in reality the mass gap is indeed small: $\frac{|m-M|}{M}\sim 0.3$), we can calculate the transition amplitudes between slowly moving proton and $\Delta^+$. We take $\Delta^+$ at rest: $p_1=(m,0,0,0)$ and proton momentum $p_2=(\sqrt{M^2+k^2},0,0,k)$ and work in the degenerate limit $M \to m$. We take the photon to be left polarized $A^{\tiny{L}} = \frac{1}{\sqrt{2}}(0,1,i,0)$ or right polarized 
$A^{\tiny{R}}=\frac{1}{\sqrt{2}}(0,1,-i,0)$ and we calculate $e \, \bar{u}_4 \,(p_1,\sigma_1) \,\tilde{\Gamma}^\mu  \,u_2(p_2,\sigma_2)  \, A^{L/R\mu}$. For small k, the transitions are $\mathcal O(k)$ and the results are: %\footnote{\textcolor{red}{G is replaced with $G/4/Sqrt[6]$, revise other places}}
\bea
&&e ~\bar{u}_4(p_1,\sigma_1)~\tilde{\Gamma}^\mu ~u_2(p_2,\sigma_2) ~A_{L\mu}=e \,G \,k~ \left(\begin{array}{cc} 1& 0\\ 0 & \frac{1}{\sqrt{3}} \\ 0& 0        \\0&0\end{array}\right)_{\sigma_1,\sigma_2}+\mathcal O(k^2),\nonumber\\
&&e ~\bar{u}_4(p_1,\sigma_1)~\tilde{\Gamma}^\mu ~u_2(p_2,\sigma_2) ~A_{R\mu}=e \,G \,k~\left(\begin{array}{cc} 0&0    \\ 0&0 \\ -\frac{1}{\sqrt{3}} & 0\\ 0 & -1\end{array}\right)_{\sigma_1,\sigma_2}+\mathcal O(k^2),\label{Gcoupling}
\eea
where\footnote{for $e \bar{u}_2\tilde{\Gamma}^\mu u_4 e_{\mu}$, the combination appearing is $G^\dagger$.} $G=\frac{1}{4\sqrt{6}}(2F_1+8F_2+F_5+8-i F_4)$ determines the magnetic transition amplitudes between the proton and the $\Delta^+$. $G$ is an important parameter and will also show up in the next section.

\section{Compton Scattering Cross Section}\label{comscaandpol}

\begin{figure}[htdp]
\center{
\includegraphics[height=4.5cm,width=13cm]{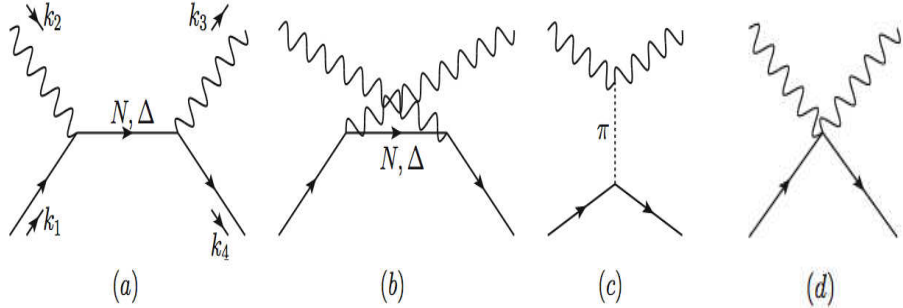}}
\caption{Tree-level Feynman diagrams for proton Compton scattering. In (a) and (b) the intermediate particle is proton or $\Delta^+$, and in (c) the $\pi^0$ meson is exchanged. (d) is the diagram for the contact interaction in (\ref{poleffLag}). For appropriate photon incident energy, the intermediate $\Delta^+$ is approximately on-shell, and around this energy, there is the characteristic peak in the cross-section which is dominated by the $\Delta^+$ contribution.}
\label{picture1}
\end{figure}

At tree level, the Feynman diagrams for proton Compton scattering are shown in Figure.\ref{picture1}, through exchanges of proton, $\Delta^+$ and $\pi^0$, and a contact interaction from (\ref{poleffLag}). For s and u channels (diagrams (a) and (b) in Figure. \ref{picture1}), the vertices and propagator were given in the previous section in (\ref{genvertex}, \ref{nonminvertex}, \ref{antisymtensor}, \ref{sixthtensor}, \ref{ModRSpropagator}). 

%The proton wave function is:
%\beq
%(u_2){}^\rho(k,\sigma)=(LV^\rho{}_\lambda(k,M)\otimes LS(k,M))(u_2){}^\lambda(0,\sigma),
%\eeq
%where LV and LS are boost matrix for vector and dirac spinor fields, and $u_2(0,\sigma)$ are rest frame spin 1/2 solution:
%\bea
%&&u_2(0,+\frac{1}{2})=\frac{\sqrt{2M}}{2\sqrt{3}(3z-1)}(0, 0, 0, 0, 0,1, 0, -1, 0, i, 0, -i, 1, 0, -1, 0)^T,\nonumber \\
%&&u_2(0, -\frac{1}{2})=\frac{\sqrt{2M}}{2\sqrt{3}(3z-1)}(0,0, 0, 0, 1, 0, -1, 0, -i, 0, i, 0, 0, -1, 0, 1)^T.
%\eea
%The wave functions are written in 16 component form, where the first four components correspond to $\rho=0$ in $u_2{}^\rho$, the second four components correspond to $\rho=1$ in $u_2{}^\rho$, etc.. The vertex and propagator are dealt with as $16\times16$ matrixes similarly.\par

For pion exchange t-channel diagram (c),
%\footnote{the $\sigma$-exchange we do not consider here since it does not help the fitting very much.}
 there is no contribution from $\Delta^+$, and we use the familiar Dirac spinor for proton wave function. 
%To be strict, we should have proposed the coupling of pion to proton and $\Delta^+$ in this new theory for proton-$\Delta^+$ system, but that would introduce more model dependence. Also we believe the new interaction for pion and proton will not make much difference to this work since in the low energy limit the new and old interactions should give the same prediction, not to mention that pion channel contribution to the amplitude is small itself. 
The relevant interaction Lagrangian is:
\beq
{\cal L}_{\mbox{\tiny int}}=i \, g_\pi \, \bar u\,  \gamma^5 \, u \, \pi^0 - \frac{1}{8}\, F_{\pi\gamma\gamma} \, \epsilon_{\mu\nu\rho\lambda}\, F^{\mu\nu}\, F^{\rho\lambda} \, \pi^0.
\eeq

The fourth diagram (d) is according to (\ref{poleffLag}).

In addition to the diagrams in Figure. \ref{picture1}, strong interactions contribute through the pion one-loop diagrams as in Figure. \ref{picture2}. Above the pion photoproduction threshold, the diagram (a) in Figure. \ref{picture2} contributes to the  imaginary part of the self-energy of $\Delta^+$ and determines the line-shape of the $\Delta^+$ resonance.  In principle, the imaginary part depends on c.m. momentum squared s, and all the diagrams in Figure. \ref{picture2} should be taken into account at one-loop order \cite{McGovern:2012ew}. 
For our purposes, we make an estimate of this effect by setting the imaginary part of $\Delta^+$ mass m to be the observed width at the resonance, ie. we analytically extend the amplitudes by substituting m with $(m_0-i\frac{\Gamma}{2})\sim (1210-50i)$ MeV everywhere it appears in the amplitudes. % \footnote{Notice that the proton wave function is also affected through the dependence of z on m: $z=\frac{m+2M}{6M}$, although proton is a stable particle, while $\Delta^+$ wave function not altered.}. 
In the spirit of the conserved vector current, despite modifications of both the vertex and the propagator, with this procedure Ward identitiy is preserved due to the analyticity of   (\ref{wardid}) in $m$. %, so that $k_{2\mu} \, A^{\mu\nu}=0$, $k_{3\nu} \, A^{\mu\nu}=0$.% ????? Also, the electric charge in (\ref{elecharge}) does not change under this substitution. 

\begin{figure}[htdp]
\setlength{\unitlength}{1cm}
\center{\includegraphics[height=7cm,width=11cm]{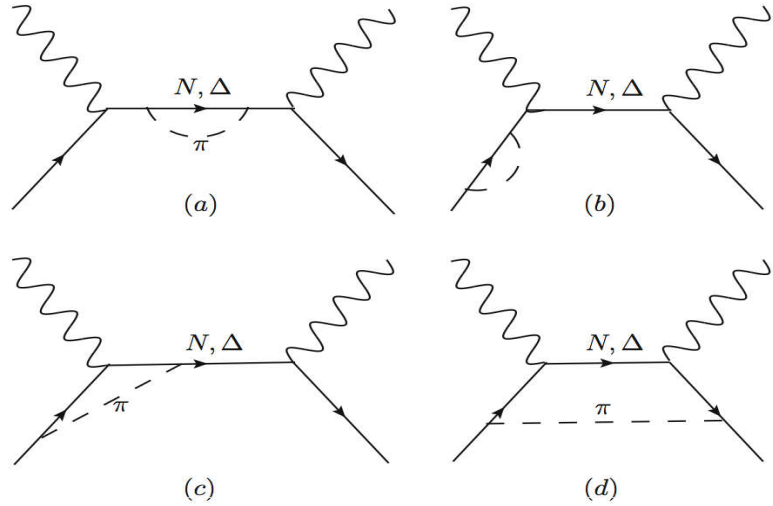}}
\caption{The one-loop level  pion corrections: (a) and (b) are the self-energy diagrams; (c) and (d) are pion vertex corrections. More diagrams emerge at higher loop levels.}
\label{picture2}
\end{figure}

It is verified that our results of amplitudes satisfy Low's theorem \cite{Low1954}, namely that for the low energy Compton scattering on spin 1/2 particles, the amplitudes expanded to first order of photon energy are completely determined by the mass, electric charge and magnetic moment of the spin 1/2 particle. 
%Although the theorem was proved in case of Dirac spinor, it nevertheless applies to the current case. 
According to the theorem, in lab frame with photon incident energy $\omega$, the amplitudes are:
\bea
&&{\mathcal A}_{\sigma_1,\sigma_4,\lambda_2,\lambda_3}\nonumber\\
=&&\frac{e^2}{M} \vec{\epsilon}_{\lambda_2}\cdot \vec{\epsilon}^*_{\lambda_3}\delta_{\sigma_1 \sigma_2}-\frac{i e \omega}{M}(2 \mu_p-\frac{e}{2M})( \vec{\epsilon}^*_{\lambda_3}\times \vec{\epsilon}_{\lambda_2})\cdot \vec{\sigma}_{\sigma_1\sigma_2}\nonumber\\
&&+\frac{ie\mu_p}{\omega M}(\vec{\epsilon}_{\lambda_2}\cdot \vec{k}_3(\vec{\epsilon}^*_{\lambda_3}\times \vec{k}_3)-\vec{\epsilon}^*_{\lambda_3}\cdot \vec{k}_2(\vec{\epsilon}_{\lambda_2}\times \vec{k}_2))\cdot \vec{\sigma}_{\sigma_1\sigma_2}\nonumber\\
&&+\frac{2i\mu_p^2}{\omega}((\vec{\epsilon}^*_{\lambda_3}\times \vec{k}_3)\times(\vec{\epsilon}_{\lambda_2}\times \vec{k}_2))\cdot \vec{\sigma}_{\sigma_1\sigma_2}+O(\omega^2).
\label{lowtheorem}
\eea

%research the Lagrangian formulation of general polarizabilities, including spin polarizability gammas. 
%scalar polarizability may mean that the induced E/M dipoles are in the direction of E/B field, then the Hamiltonian is prop. to E.E or B.B

Expansion of the cross section to second order in photon energy $\omega$ is exactly in the form of (\ref{polarizability}) with $\alpha_E \pm \beta_M$: 
\allowdisplaybreaks{
\bea
&&\alpha_E+\beta_M \nonumber\\
=&&\frac{e^2}{288 \pi m^4 (m-M) (m+M)(2m+M)}\nonumber\\
&&\left[\right.\left(-32 M^4-64 m M^3-96 m^2 M^2-192 m^3 M\right)+\left(-64 m^4-96 M m^3-32 M^2 m^2\right) F_1\nonumber\\
&&+\left(64 m^4-480 M m^3-320 M^2 m^2-32 M^3 m\right) F_2+\left(-64 M m^3-32 M^2 m^2\right) F_5\nonumber\\
&&+\left(-3 m^4-12 M m^3-17 M^2 m^2+8 M^4\right) F_1^2+\left(-256 M m^3-128 M^2 m^2\right) F_2^2\nonumber\\
&&+\left(-4 M m^3-2 M^2 m^2\right) F_4^2+\left(6 M^4-8 m^2 M^2-4 m^3 M\right) F_5^2\nonumber\\
&&+\left(-60 m^4-72 M m^3-52 M^2 m^2-24 M^3 m+16 M^4\right) F_1 F_2\nonumber\\
&&+\left(16 M^4+8 m M^3-16 m^2 M^2-8 m^3 M\right) F_1 F_3\nonumber\\
&&+\left(-6 m^4-16 M m^3-22 M^2 m^2+4 M^3 m+16 M^4\right) F_1 F_5\nonumber\\
&&+\left(32 M^4-32 m M^3-32 m^2 M^2+32 m^3 M\right) F_2 F_3\nonumber\\
&&+\left(24 M^4-24 m M^3-56 m^2 M^2-40 m^3 M\right) F_2 F_5+\left(8 M^4-8 m^2 M^2\right) F_3 F_5\left.\right]\nonumber\\
&&+\alpha_{EB}+\beta_{MB},\nonumber\\
   &&\nonumber\\
&&\alpha_E-\beta_M\nonumber\\
=&&\frac{e^2}{288\pi m^4 M (m-M)(m+M)(2m+M)}\nonumber\\
&&\left[\right.\left(-32 M^5-192 m M^4-224 m^2 M^3+320 m^3 M^2+512 m^4 M\right)\nonumber\\
&&+\left(96 m^5+240 M m^4+192 M^2 m^3-80 M^3 m^2-192 M^4 m-64 M^5\right) F_1\nonumber\\
&&+\left(-192 m^5+608 M m^4+672 M^2 m^3-32 M^3 m^2-224 M^4 m-64 M^5\right) F_2\nonumber\\
&&+\left(-64 M^5-128 m M^4+64 m^2 M^3+128 m^3 M^2\right) F_3\nonumber\\
&&+\left(-64 M^5-160 m M^4+192 m^3 M^2+128 m^4 M\right) F_5\nonumber\\
&&+\left(-24 M^5-32 m M^4+15 m^2 M^3+44 m^3 M^2+21 m^4 M\right) F_1^2\nonumber\\
&&+\left(-32 M^5-32 m M^4+24 m^2 M^3+160 m^3 M^2+264 m^4 M\right) F_2^2\nonumber\\
&&+\left(32 m^2 M^3-32 M^5\right) F_3^2+\left(6 M m^4+2 M^2 m^3-2 M^3 m^2\right) F_4^2\nonumber\\
&&+\left(-26 M^5-16 m M^4+24 m^2 M^3+18 m^3 M^2+6 m^4 M\right) F_5^2\nonumber\\
&&+\left(-48 M^5-88 m M^4+60 m^2 M^3+184 m^3 M^2+84 m^4 M\right) F_1 F_2\nonumber\\
&&+\left(-48 M^5-24 m M^4+48 m^2 M^3+24 m^3 M^2\right)F_1 F_3\nonumber\\
&&+\left(-48 M^5-44 m M^4+42 m^2 M^3+56 m^3 M^2+18 m^4 M\right) F_1 F_5\nonumber\\
&&+\left(-32 M^5-64 m M^4+32 m^2 M^3+64 m^3 M^2\right) F_2 F_3\nonumber\\
&&+\left(-40 M^5-72 m M^4+32 m^2 M^3+104 m^3 M^2+72 m^4 M\right) F_2 F_5\nonumber\\
&&+\left(-56 M^5-16 m M^4+56 m^2 M^3+16 m^3 M^2\right) F_3 F_5\left.\right]\nonumber\\
&&+\alpha_{EB}-\beta_{MB}.
\label{polarizabilityexpressnew}
\eea
}

$\alpha_{EB}$ and $\beta_{MB}$ are bare polarizabilities defined in (\ref{barepol}). Substitute $m=1232$ MeV and $M=938$ MeV in (\ref{polarizabilityexpressnew}) we have approximately:
\bea
\alpha_E+\beta_M\sim && -0.860-0.556 F_1-1.789 F_2-0.240 F_5\nonumber\\
&&-0.069 F_1^2-0.961 F_2^2-0.015 F_4^2-0.020 F_5^2\nonumber\\
&&-0.536 F_1 F_2-0.023 F_1 F_3-0.085 F_1 F_5+0.009 F_2 F_3\nonumber\\
&&-0.234 F_2 F_5-0.007 F_3 F_5+\alpha_{EB}+\beta_{MB}\ (10^{-4} fm^3),\nonumber\\
&&\nonumber\\
\alpha_E-\beta_M\sim &&1.894+1.284 F_1+2.602 F_2+0.202 F_3+0.650 F_5\nonumber\\
&&+0.146 F_1^2+1.339 F_2^2+0.028 F_3^2+0.023 F_4^2+0.064 F_5^2\nonumber\\
&&+0.728 F_1 F_2+0.069 F_1 F_3+0.177 F_1 F_5+0.101 F_2 F_3\nonumber\\
&&+0.447 F_2 F_5+0.067 F_3 F_5+\alpha_{EB}-\beta_{MB}\ (10^{-4} fm^3).
\label{numpolarizability}
\eea

%Substituting the complex value of m in the amplitudes we get:
%\bea
%\alpha_E+\beta_M\sim &&-0.0734 F_1^2-0.570 F_2 F_1-0.024 F_3 F_1-0.090 F_5 F_1\nonumber\\
%&&-0.590 F_1-1.024 F_2^2-0.016 F_4^2-0.022 F_5^2\nonumber\\
%&&-1.909 F_2+0.009 F_2 F_3-0.249 F_2 F_5-0.007 F_3 F_5\nonumber\\
%&&-0.256 F_5-0.917+\alpha_{EB}+\beta_{MB}\ (10^{-4} fm^3),\nonumber\\
%&&\nonumber\\
%\alpha_E-\beta_M\sim && 0.155 F_1^2+0.772 F_2 F_1+0.073 F_3 F_1+0.187 F_5 F_1\nonumber\\
%&&+1.350 F_1+1.418 F_2^2+0.030 F_3^2+0.024F_4^2\nonumber\\
%&&+0.067 F_5^2+2.764F_2+0.107 F_2 F_3+0.213 F_3\nonumber\\
%&&+0.470 F_2 F_5+0.071 F_3 F_5+0.686 F_5+1.998+\alpha_{EB}-\beta_{MB}\ (10^{-4} fm^3).
%\eea
% \footnote{The minimal interaction, that is when all the form factors $F_1,\cdots,F_6$ and $\alpha_{EB},\beta_{MB}$ are 0, already induces non-zero polarizabilities. }
%think why \alpha_E+\beta_M is negative for bare theory.

%F3 contribution seems very small in \alpha_E+\beta_M

\subsection{Amplitudes at the $\Delta^+$ Pole}

In reality, $\Delta^+$ and proton have small mass gap:
\beq
m=M (1+x- i y),
\eeq
where $x\sim 0.3$ and $y\sim 0.05\sim \frac{1}{2}x^2$. Around the $\Delta^+$ resonance, the photon momentum divided by M, $q/M$, is of the same order as x, we can approximate the amplitudes to lowest non-trivial order of $q/M$, x and y.

First, at the $\Delta^+$ pole position, the contribution to the pole mainly comes from the first term in the propagator in s-channel where the momentum propagated is the sum of the momenta of initial proton and initial photon $k_1+k_2$. In this case, the denominator contributing to the pole is $(k_1+k_2)^2-m^2=(E_{CM}-m)(E_{CM}+m)$. We multiply the amplitudes by $(E_{CM}-m)$ and then expand them with respect to $q/M$, x, and y. We define $q/M= r_1 x$ and $y=r_2 x^2$. Around the peak, $r_1\sim 1$ and $r_2 \sim 0.5$. Then we can  expand the amplitudes multiplied by $(E_{CM}-m)$ with respect to x alone, and finally set $r_1=1$ (at the peak) and $r_2=\frac{y}{x^2}$. In center of mass frame, we rotate the proton wave functions to make them polarized along proton's direction of moving. That is, for proton moving in direction $\theta$ with respect to z-axis, we have the polarized proton wave functions $\tilde u_2$ by rotating the directly boosted wave functions $u_2$: 
\bea
&&\tilde{u}_2(p,\frac{1}{2}) =\cos\Big(\frac{\theta}{2}\Big) ~u_2(p,\frac{1}{2})+\sin\Big(\frac{\theta}{2}\Big) ~u_2(p,-\frac{1}{2}),\nonumber\\
&&\tilde{u}_2(p,-\frac{1}{2}) =-\sin\Big(\frac{\theta}{2}\Big) ~u_2(p,\frac{1}{2})+\cos\Big(\frac{\theta}{2}\Big) ~u_2(p,-\frac{1}{2}).
\label{polarizedproton}
\eea
Then the approximate amplitudes are:
\bea
&&{\cal A}_{ppRR}=\frac{(2 i r_2+1 -2|G|^2) x^2\cos^3\frac{\theta}{2}}{2(E_{CM}-m)},\nonumber\\
&&{\cal A}_{ppRL}=\frac{(-2 i r_2-1 -2|G|^2) x^2\cos\frac{\theta}{2}\sin^2\frac{\theta}{2}}{2(E_{CM}-m)},\nonumber\\
&&{\cal A}_{ppLR}=\frac{(-2 i r_2-1 -2|G|^2) x^2\cos\frac{\theta}{2}\sin^2\frac{\theta}{2}}{2(E_{CM}-m)},\nonumber\\
&&{\cal A}_{ppLL}=\frac{(6 i r_2+3 +2|G|^2+3(2 i r_2+1-2|G|^2)\cos\theta) x^2\cos\frac{\theta}{2}}{12(E_{CM}-m)},\nonumber\\
&&{\cal A}_{pmRR}=\frac{(2 i r_2+1 -2|G|^2) x^2\cos^2\frac{\theta}{2}\sin\frac{\theta}{2}}{2(E_{CM}-m)},\nonumber\\
&&{\cal A}_{pmRL}=\frac{(-2 i r_2-1 -2 |G|^2) x^2\sin^3\frac{\theta}{2}}{2(E_{CM}-m)},\nonumber\\
&&{\cal A}_{pmLR}=\frac{(-6 i r_2-3 +2|G|^2+3(2 i r_2+1+2|G|^2)\cos\theta) x^2\sin\frac{\theta}{2}}{12(E_{CM}-m)},\nonumber\\
&&{\cal A}_{pmLL}=\frac{(2 i r_2+1 -2|G|^2) x^2\cos^2\frac{\theta}{2}\sin\frac{\theta}{2}}{2(E_{CM}-m)}.
\label{peakbehavior}
\eea
In the above expressions, $x=\frac{\mbox{\tiny Re}(m)-M}{M}$, $r_2=\frac{y}{x^2}=-\frac{\mbox{\tiny Im}(m)}{M x^2}$. $G$ is the combination of form factors which describes the $\gamma N \Delta^+$ transition amplitudes as described in the previous section in (\ref{Gcoupling}). For the subscripts of the amplitudes, the first/second p/m stands for the final/initial proton polarization and the first/second L/R for final/initial photon polarization. Here only 8 of the 16 amplitudes are given, since the other 8 are related by parity:
\bea
&&{\cal A}_{mmRR}={\cal A}_{ppLL},\ \ \ \ \ \ \ {\cal A}_{mpRR}=-{\cal A}_{pmLL},\nonumber\\
&&{\cal A}_{mmRL}={\cal A}_{ppLR},\ \ \ \ \ \ \ {\cal A}_{mpRL}=-{\cal A}_{pmLR},\nonumber\\
&&{\cal A}_{mmLR}={\cal A}_{ppRL},\ \ \ \ \ \ \ {\cal A}_{mpLR}=-{\cal A}_{pmRL},\nonumber\\
&&{\cal A}_{mmLL}={\cal A}_{ppRR},\ \ \ \ \ \ \ {\cal A}_{mpLL}=-{\cal A}_{pmRR}.
\eea

In the same limit, the magnetic polarizability has an approximation:
\beq
\beta_M=-\frac{4\alpha |G|^2}{3 x M^3}+\beta_{MB}.
\eeq
This may imply that the magnetic polarizability has some relationship with proton-$\Delta^+$ (magnetic) transition.

\section{$\Delta^+\to p+\gamma$ Decay Width}
\label{sec:decayamp}
Aside from proton Compton scattering, another process can be readily accounted for in this unified proton-$\Delta^+$ electromagnetic theory, that is the decay $\Delta^+\to p+\gamma$.

We label the momentum and polarization of $\Delta^+$ as $k_1, \sigma_1$, the produced photon $k_2$,$\lambda_2$ and the proton $k_3$,$\sigma_3$, the matrix elements are:

\beq
{\mathcal A}_{\sigma_1,\lambda_2,\sigma_3}=e \ \ {\bar u_2}(k_3,\sigma_3) \ {\tilde\Gamma}^\mu \ u_4(k_1,\sigma_1)\ \epsilon_\mu^*(k_2, \lambda_2).
\eeq
It is of interest to find the decay width $\Gamma_{3/2}$ and $\Gamma_{1/2}$ for the final state helicity $\frac{3}{2}$ and $\frac{1}{2}$ respectively. %, where the helicity means the sum of the proton and photon helicities. %, or equivalently the helicity of the initial $\Delta^+$. 
Evaluating the amplitudes, we obtain after substituting the values of m (real value) and M: %, the numerical values are:
\bea
\Gamma_{3/2}=&& 0.0047 F_1^2+ 0.056 F_2^2 + 0.001 F_4^2 + 0.001 F_5^2 + 0.032 F_1F_2  + 0.004F_1 F_5  \nonumber\\
&&+0.0139 F_2 F_5 + 0.032 F_1  + 0.1113 F_2 +  0.0139 F_5 + 0.0557,\nonumber\\
\Gamma_{1/2}=&&0.0004 F_1^2 + 0.0120 F_2^2 + 0.0002F_4^2 + 0.0002 F_5^2 + 0.0002 F_6^2 + 0.0058 F_1 F_2 \nonumber\\
&&+ 0.0005  F_1F_5  - 0.0006  F_1F_6  + 0.0037 F_2 F_5-  0.0039 F_2 F_6 - 0.0004 F_5 F_6\nonumber\\
&&+ 0.0043 F_1 + 0.0293 F_2  + 0.0027 F_5  - 0.0028 F_6 + 0.0108.
\eea

If, on the other hand, we let $m=M(1+x)$ then in the limit of small x we find
\bea
&&\Gamma_{3/2}\sim\alpha M x^3 |G|^2,\nonumber\\
&&\Gamma_{1/2}\sim\frac{\alpha M x^3 |G|^2}{3}.
\label{widthapprox}
\eea

Experimentally $\Delta^+\to p+\gamma$ decay amplitudes can be extracted from the $\Delta^+$ peak of proton Compton scattering\cite{Blanpied01}, or from the peak in $\gamma+N\to \pi +N$ \cite{Dugger:2007bt, Ahrens:2004pf}. Then the decay width is estimated from the extracted amplitudes.

\section{Fitting Data}\label{datafit}
% \footnote{cannot download "low energy $\gamma p$ scattering and determination of proton polarizabilities" by P.S.Baranov 2001 and "Experimental status of the electric and magnetic polarizabilities of a proton" by P.S.Baranov 2001.}

We fit the model to the 714 proton Compton scattering datapoints from 8 experiments\cite{Hallin1993,Baranov1974,Zieger92,Hunger97,Blanpied01,MAMI2001,Olmos2001,MacGibbon1995}. Only data points with photon incident energy smaller than 455 MeV are used, in the  so-called first resonance region. In principle, one can also compare the model predictions with polarized measurements, where some data is available \cite{al:1993aa, Wada:1981ab}.

For several reasons, we set $F_4=0$ in our fitting. First, $F_4$ does not enter in the expressions of proton and $\Delta^+$ magnetic moments (\ref{magmoment}). Second, in all fits we attempted, the best fit value of  $F_4$ was nearly exactly zero, and in any case statistically consistent with zero.

The parameters we use to fit are chosen to be $F_1$, $\mu\equiv\frac{\mu_\Delta^+}{\mu_N}=\frac{M}{m}+F_2-\frac{1}{2}F_1$, $G=\frac{1}{4\sqrt{6}}(2F_1+8F_2+F_5+8)$, $F_6$ and the bare polarizabilities $\alpha_B$ and $\beta_B$ in (\ref{poleffLag}). $F_3$ is constrained using proton magnetic moment, see (\ref{magmoment}).

We minimize
%\footnote{Is this $\chi^2$ suitable. In Baranov 2000 paper, they use statistical error for each data point and systematic error to rescale each experiment.} 
$\chi^2=\sum_{i=1}^{714}\frac{((\frac{d\sigma}{d\Omega})_i^{\mbox{\tiny\it calc}}-(\frac{d\sigma}{d\Omega})_i^{\mbox{\tiny\it data}})^2}{\sigma_{\mbox{\tiny (stat)}i}^2+\sigma_{\mbox{\tiny (syst)}i}^2}$.
We do not attempt to rescale the data of each experiment within its own systematical uncertainty to see if it would lead to better consistency between datasets as it was done in  \cite{Baranov:2000na}. The optimal set of parameters is found to be: $F_1=-27.5$, $\mu=14.2$, $G =3.13$, $F_6 = 12.9$, $\alpha_B = 7.5$, $\beta_B = -8.2$ with $\chi^2\sim 6.3\times 10^3$.
%\footnote{the coupling constant of pion channel may need some adjusting and the form factors of the pion couplings are not considered. It seems that when considering pion form factors and no sigma exchange, the fit becomes much worse.}

We plot the c.m. frame cross section using the above fit parameters, together with data points in Figure.\ref{angleplot} and \ref{energyplot}. Those data measured or recorded in lab frame have been converted to c.m. frame. From Figure.\ref{energyplot} it is seen that the low energy cross section fits badly.

On average for each data point the fit is of $3\sigma$ deviation from the experiment value. The $\Delta^+$ resonance region is fitted well, while the low energy cross section deviates greatly from data points. In fact, the 68 data points with incident photon energy smaller than 140 MeV out of the total 714 data points contribute nearly a third of the total $\chi^2$. Our fit cannot take care of the low energy ($\lae$ 140 MeV) data points and "high" energy ($\gae$ 200 MeV) data points simultaneously. When giving a good fit in the resonance region, where most data points used in this thesis lie in, the predicted cross section at low energy cannot account of the large asymmetry of the cross section data at forward and backward angles. Our fit cross section at low energy is much higher than the data at forward angles and lower at backward angles. Since the polarizabilities are extracted according to low energy expansion of the cross section in (\ref{polarizability}), it is expected that the predicted $\alpha_E+\beta_M$, calculated using (\ref{numpolarizability}) where $F_2$ and $F_5$ are solved from the definitions of $\mu$ and $G$, is smaller than the experimental value, and $\alpha_E-\beta_M$ larger than experimental value. For the above fit values, $\alpha_E+\beta_M=-2.6 (10^{-4}fm^3)$, $\alpha_E-\beta_M=45.1(10^{-4}fm^3)$. %(The form factors contribute -1.2 and 31.5 to them respectively). 
By contrast, the original experiments have quoted values of $\alpha_E+\beta_M$ at about $14.0 (10^{-4}fm^3)$, and $\alpha_E-\beta_M$ at about $10.0 (10^{-4}fm^3)$ extracted from the same data.

\begin{figure}[htdp]
\setlength{\unitlength}{1cm}
\includegraphics[height=4cm,width=6cm]{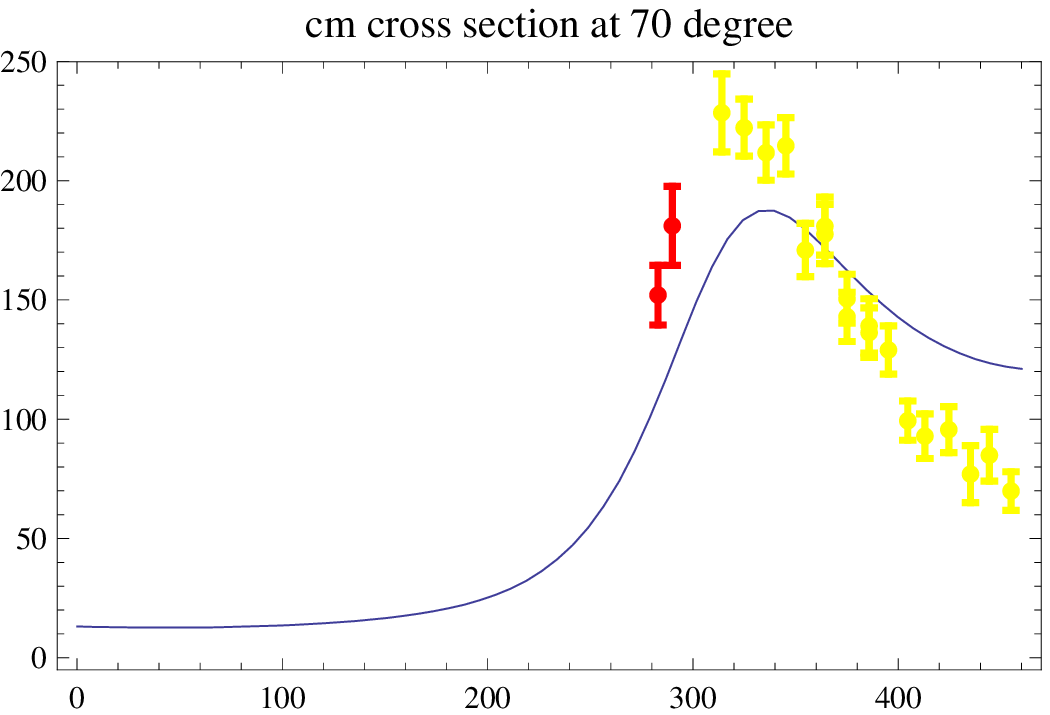}
\includegraphics[height=4cm,width=6cm]{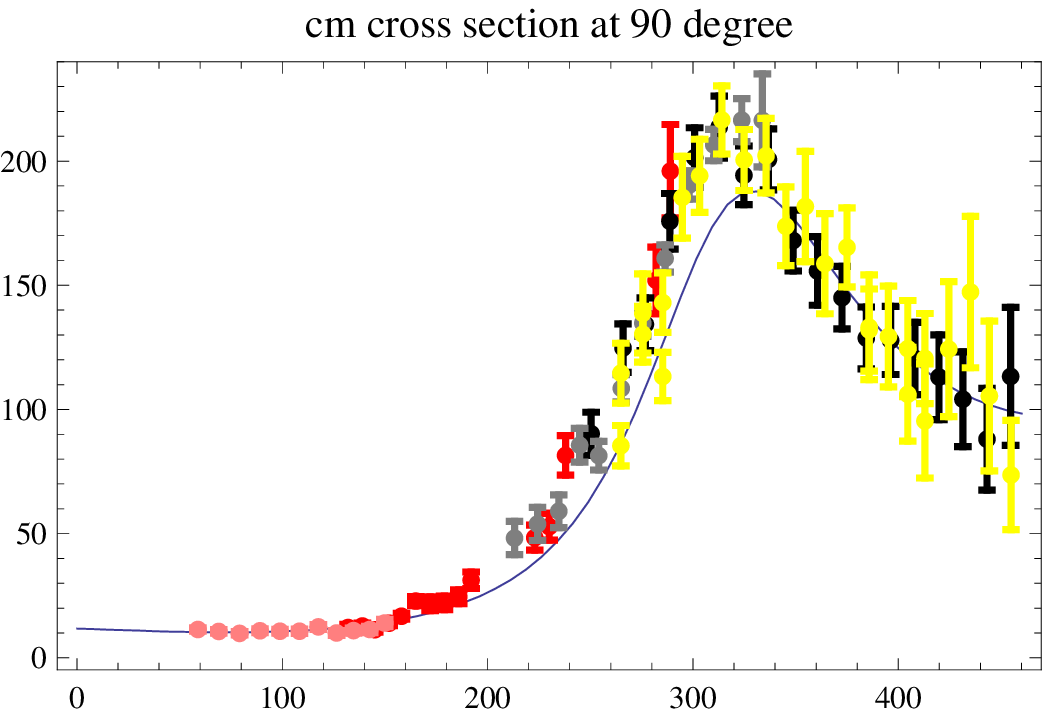}\\
\includegraphics[height=4cm,width=6cm]{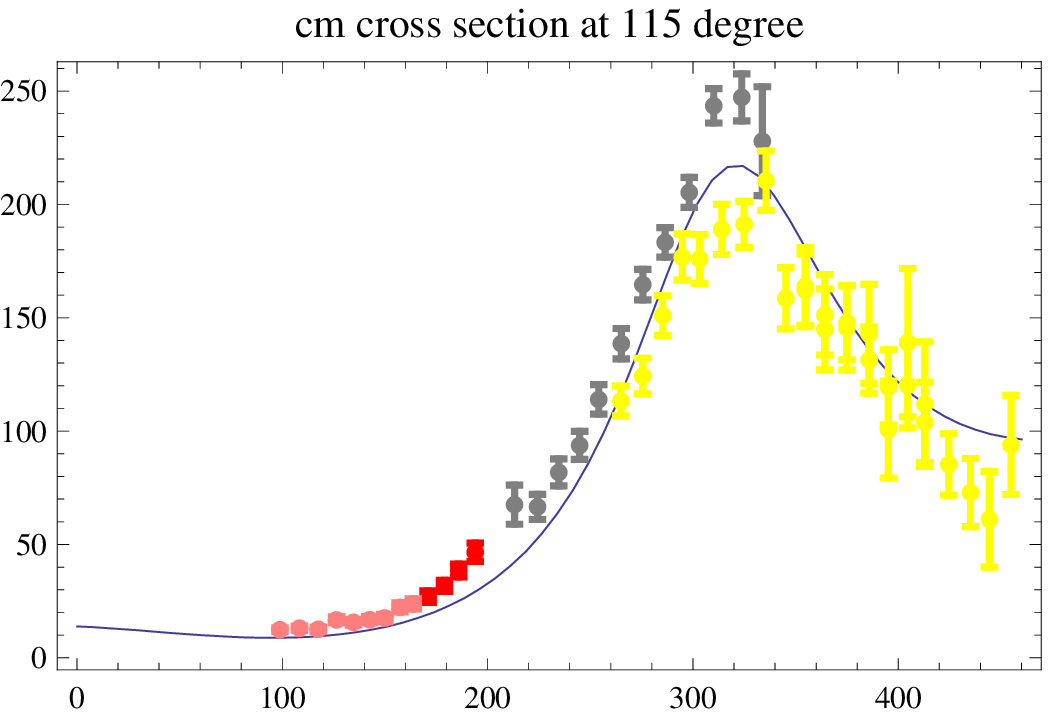}
\includegraphics[height=4cm,width=6cm]{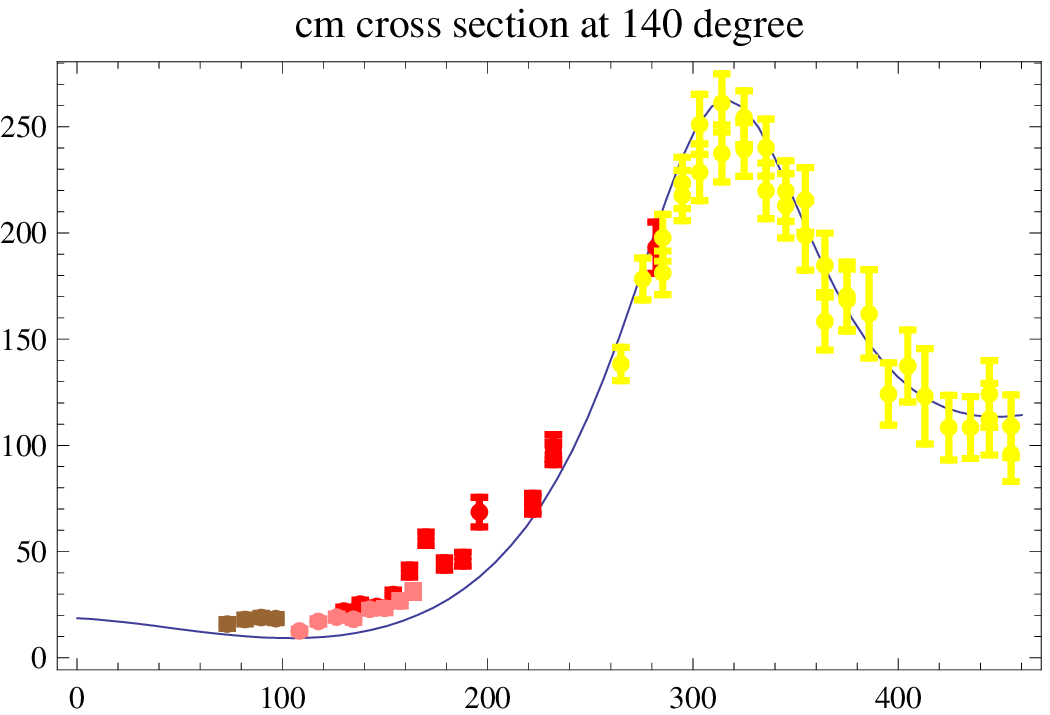}
\caption{Fixed c.m. angle cross section and the data points, where the parameters from the fitting to all the 714 data points are used for the theoretical cross section curve. The x-axis is lab frame photon energy and y-axis is c.m. frame differential cross section in unit of nb. %Those data measured or recorded in lab frame have been converted to c.m. frame using the transition in (\ref{labcmrela}). 
For \cite{Hallin1993,Baranov1974,Zieger92,Hunger97,Blanpied01,MAMI2001,Olmos2001,MacGibbon1995}, we use colors: Red, Green, Blue, Black, Gray, Yellow, Pink and Brown respectively. The angles of the data points included may differ from the values claimed by at most 3 degree.}
\label{angleplot}
\end{figure}

\begin{figure}[htdp]
\setlength{\unitlength}{1cm}
\includegraphics[height=4cm,width=6cm]{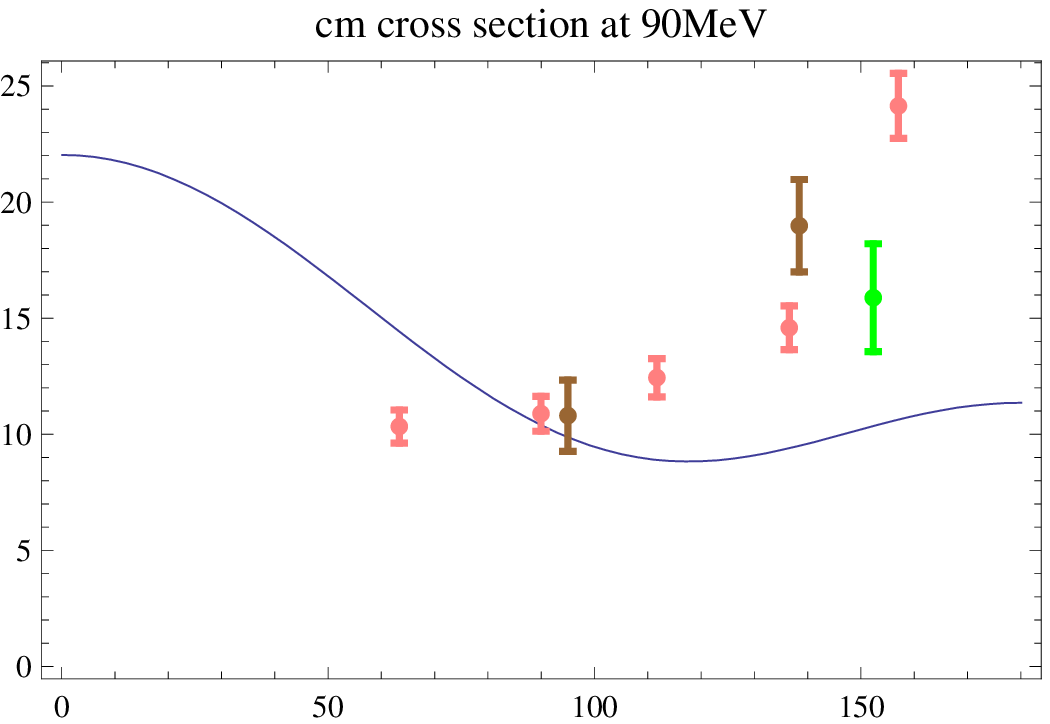}
\includegraphics[height=4cm,width=6cm]{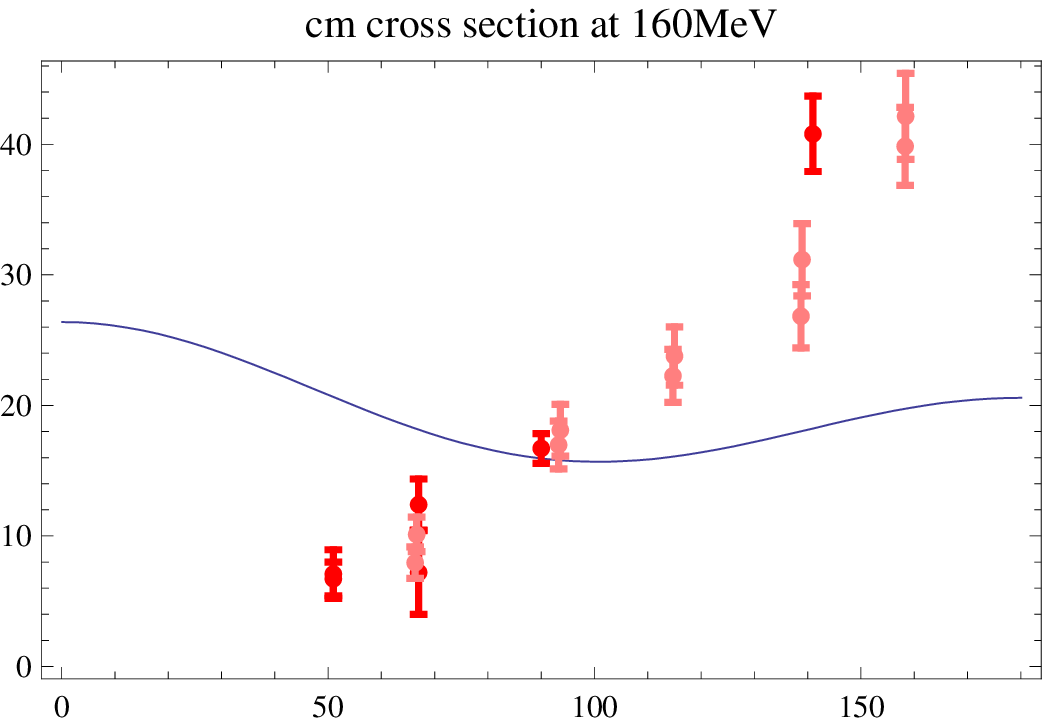}\\
\includegraphics[height=4cm,width=6cm]{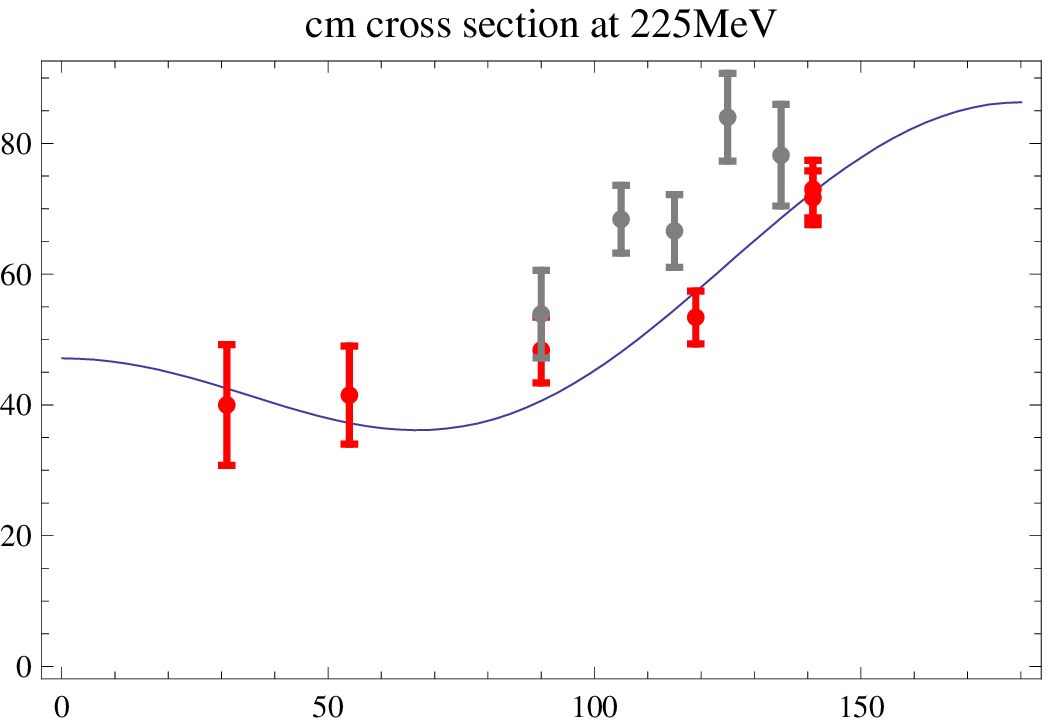}
\includegraphics[height=4cm,width=6cm]{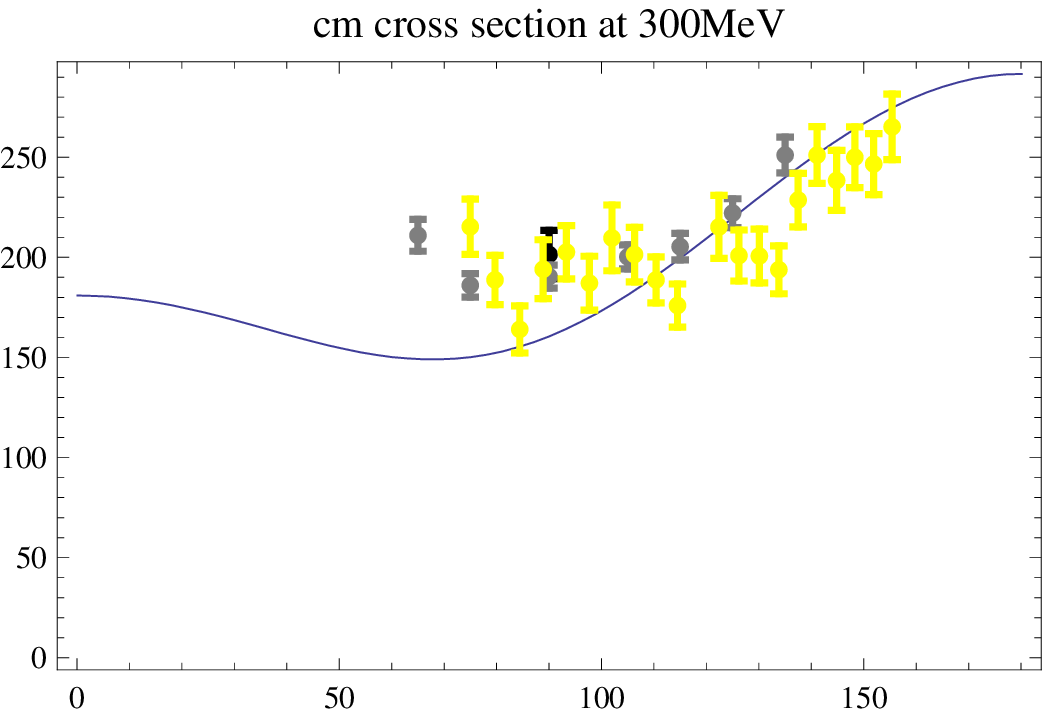}\\
\includegraphics[height=4cm,width=6cm]{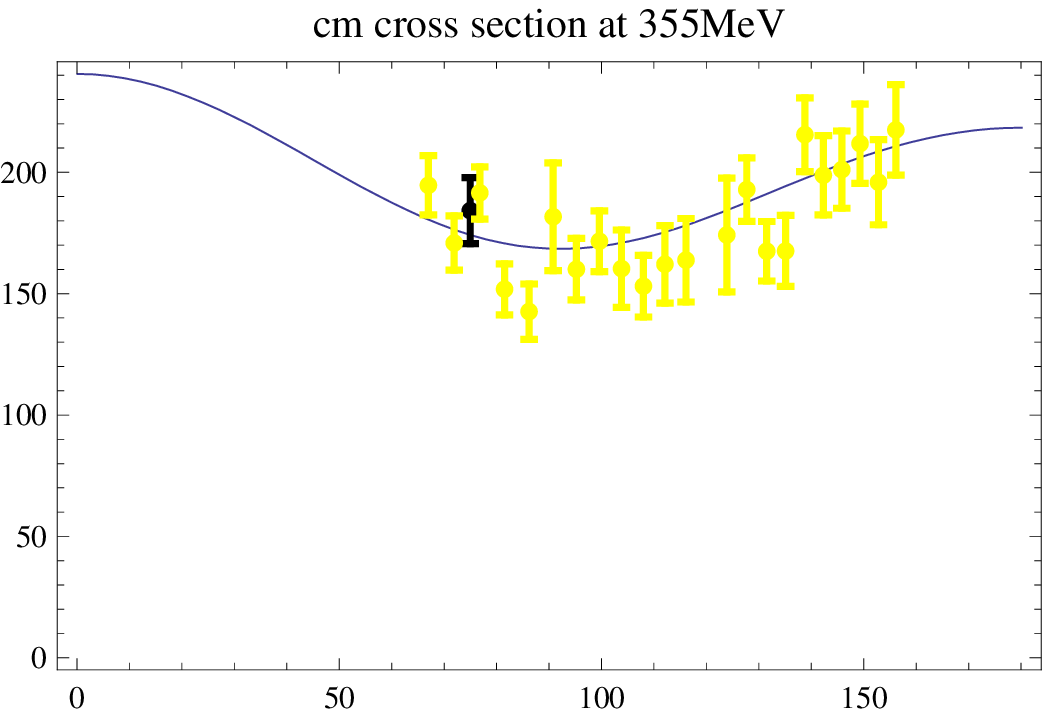}
\includegraphics[height=4cm,width=6cm]{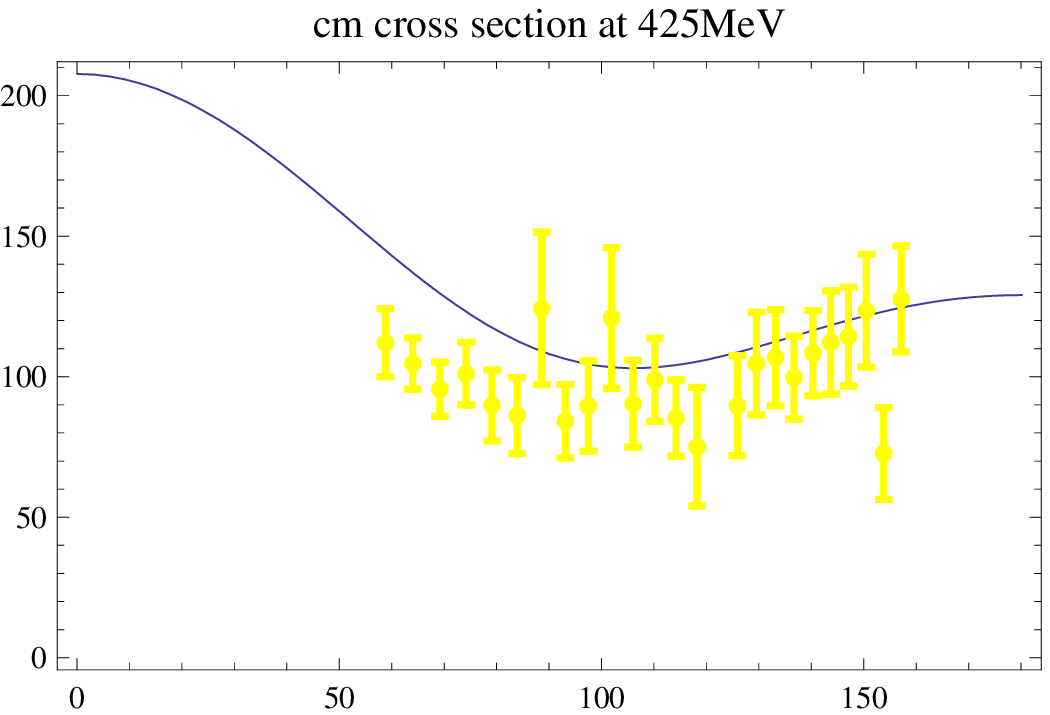}
\caption{Fixed c.m. angle cross section and the data points, where the parameters from the fitting to all the 714 data points are used for the theoretical cross section curve. The x-axis is c.m. frame scattering angle and y-axis is c.m. frame differential cross section in unit of nb. For \cite{Hallin1993,Baranov1974,Zieger92,Hunger97,Blanpied01,MAMI2001,Olmos2001,MacGibbon1995}, we use colors: Red, Green, Blue, Black, Gray, Yellow, Pink and Brown respectively. The incident photon energy of the data points included may differ from the values claimed by at most 4 MeV.}
\label{energyplot}
\end{figure}

%The difference between the predicted decay width and experimental value may have several explanations. First, as mentioned almost all of the experiments used to extract the decay amplitudes are $p+\gamma \to \pi+\gamma$ scattering yet we are using Compton scattering here. Then, in our fitting we did not rescale the experiments and if we had done a rescaling such that the data points around the peak were enhanced, the decay width extracted would increase too, as the same coupling $G$ is governing the peak in Compton scattering and decay width shown in (\ref{peakbehavior}) and (\ref{widthapprox}). Also, this fitting shows obvious deviation from data points, and especially in the pole range, the predicted cross sections are mostly below the data points, causing a smaller prediction for decay width, with the same reason just above.

\begin{figure}[htdp]
\setlength{\unitlength}{1cm}
\includegraphics[height=14cm,width=16cm]{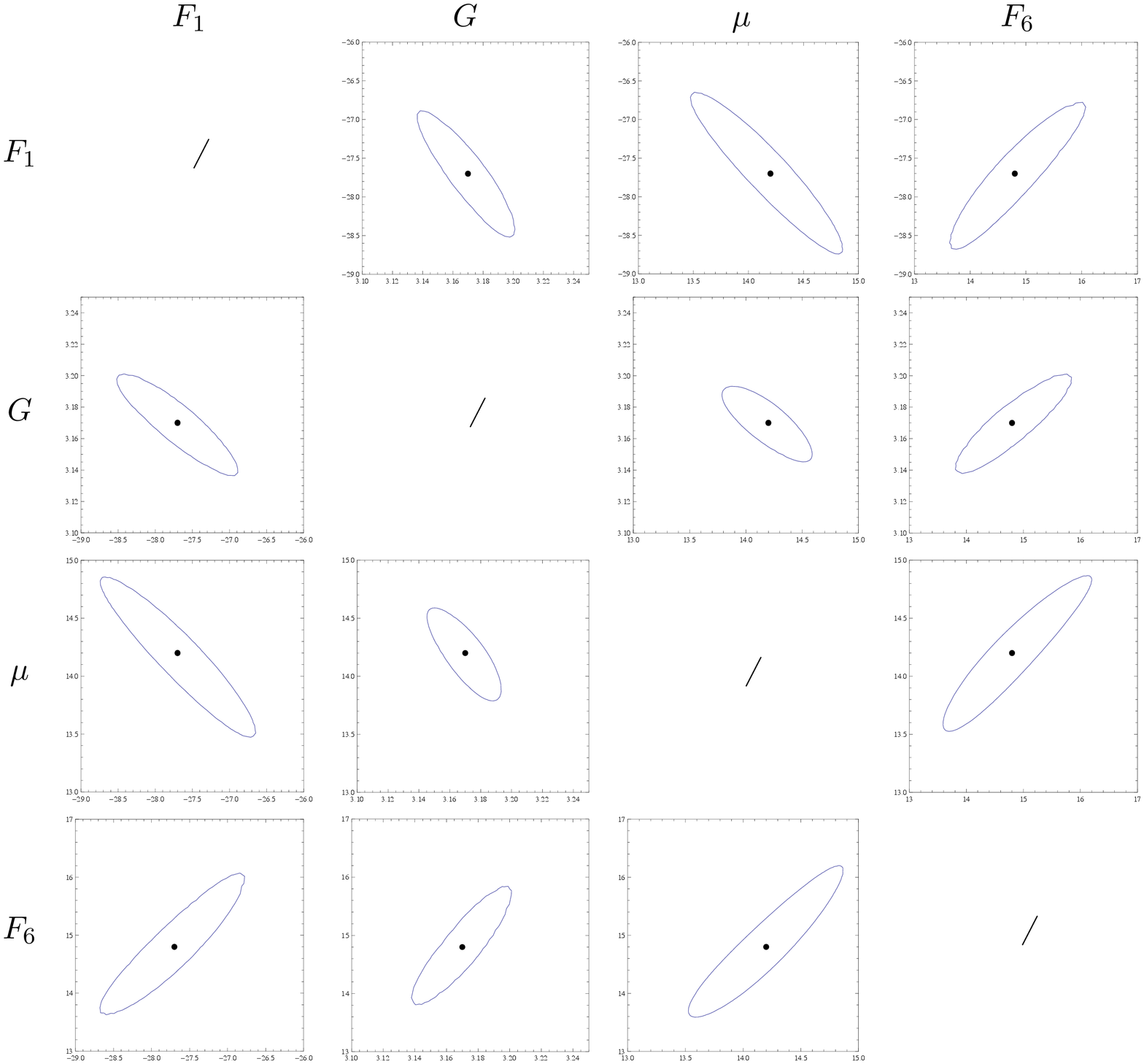}
\caption{Projections of the 6-dimensional 95\% confidence region into planes spanned by several pairs of parameters. Plots in each column share the same x-axis parameter as indicated at the top of each column. Plots in each row share the same y-axis parameter as indicated at the left of each row. Range for $F_1$ is (-29,-26); G: (3.1,3.25); $\mu$: (13,15); $F_6$: (13,17).}
\label{chisqu}
\end{figure}

The challenge is clearly that the low energy part and the $\Delta^+$ resonance range data points are difficult to fit well at the same time. The form factors (and thus the parameters $\mu$ and $G$) are generally functions of $k^2$ where $k$ is photon momentum. For real Compton scattering, $k^2$ is always 0, so the form factors should be constants in this thesis. However in the case of the bare polarizabilities  it is possible that they vary with energy and/or scattering angle \cite{Schumacher2013}, which would make fitting with constant bare polarizabilities unsuccessful.

The strategy we propose to deal with the possible variation of bare polarizabilities is as follows. First, we fit only the peak range data points and fix the form factors(and $\mu$ and $G$) from this fitting. Then we fit the low energy data points varying only the bare polarizabilities. For the peak range, we use only the MAMI(2001) experiment \cite{MAMI2001}, which contains 436 data points with photon incident energy ranging from 260 MeV to 455 MeV. A good fit is achieved at $F_1=-27.7$, $G=3.17$, $\mu=14.2$, $F_6=14.8$, $\alpha_B=2.1$, $\beta_B=-8.1$ with $\chi^2\sim 830$. It is notable that $F_1$, $F_6$, $\mu$ and $G$ have not changed much from the complete fit of all data points, yet $\chi^2$ per data point is much smaller. This may be indicative of the fact that experimental data prior to this latest and more precise measurement may not be consistent with each other. In the past, one of the strategies for dealing with this has been to allow rescaling the cross section data for each experiment within the systematical uncertainty which tends to be large \cite{Baranov:2000na}. 

The inclusion of the sigma channel and/or variation of the mass and width of the sigma meson do not appreciably alter the picture or the goodness of the fit.

In Figure.\ref{chisqu}, we give the contour plots of $\chi^2$ with respect to several pairs of parameters for this fit. It is seen that $G$ is very strictly constrained.

We then fix these values for $F_1$, $F_6$, $\mu$ and $G$ and fit the low energy data points, varying only $\alpha_B$ and  $\beta_B$. We take 68 data points with photon incident energy below 140 MeV and obtain the best fit values of $\alpha_B=-4.6$ and $\beta_B=17.9$ with $\chi^2\sim 194$. See Figure. \ref{lowangleplot} and \ref{lowenergyplot}. At these values, $\alpha_E+\beta_M=11.3\pm 0.9\pm 2.3(95\% C.L.)$ and $\alpha_E-\beta_M=7.8\pm 3.3\pm 2.0(95\% C.L.)$, much closer to values extracted (from the same data) previously. The first error is determined from the MAMI fit, by investigating how the contribution of $F_1$, $G$ and $\mu$ to the polarizabilities varies in the 3-dimensional 95\% confidence region spanned by these three parameters. The second error is from the low energy fit.

\begin{figure}[htdp]
\setlength{\unitlength}{1cm}
\includegraphics[height=4cm,width=5cm]{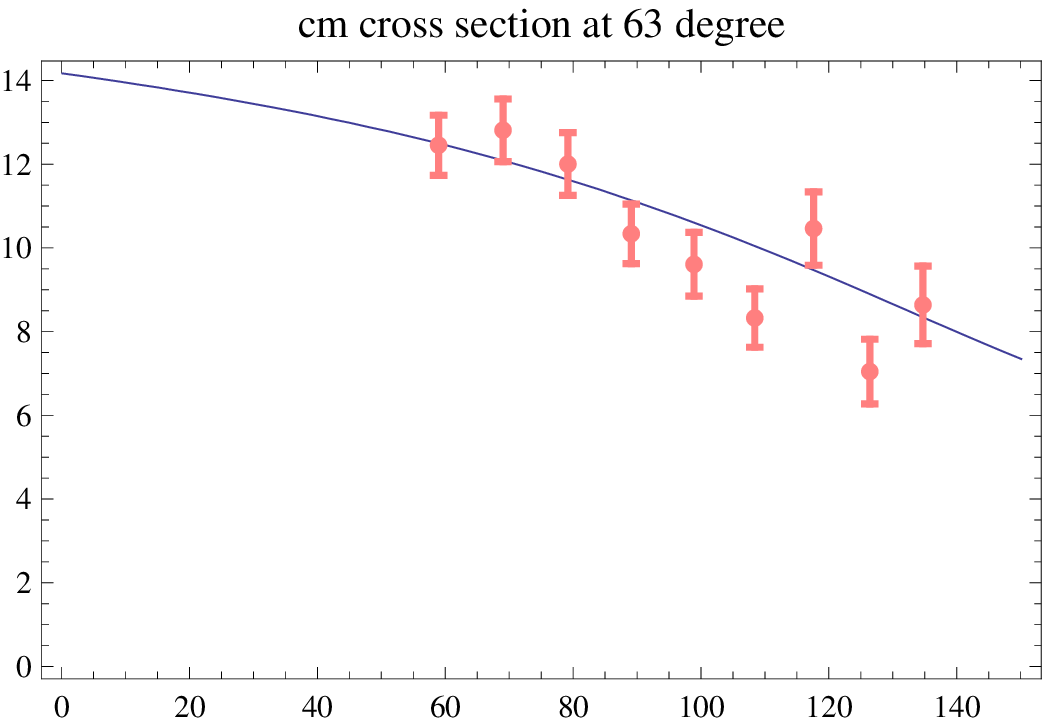}
\includegraphics[height=4cm,width=5cm]{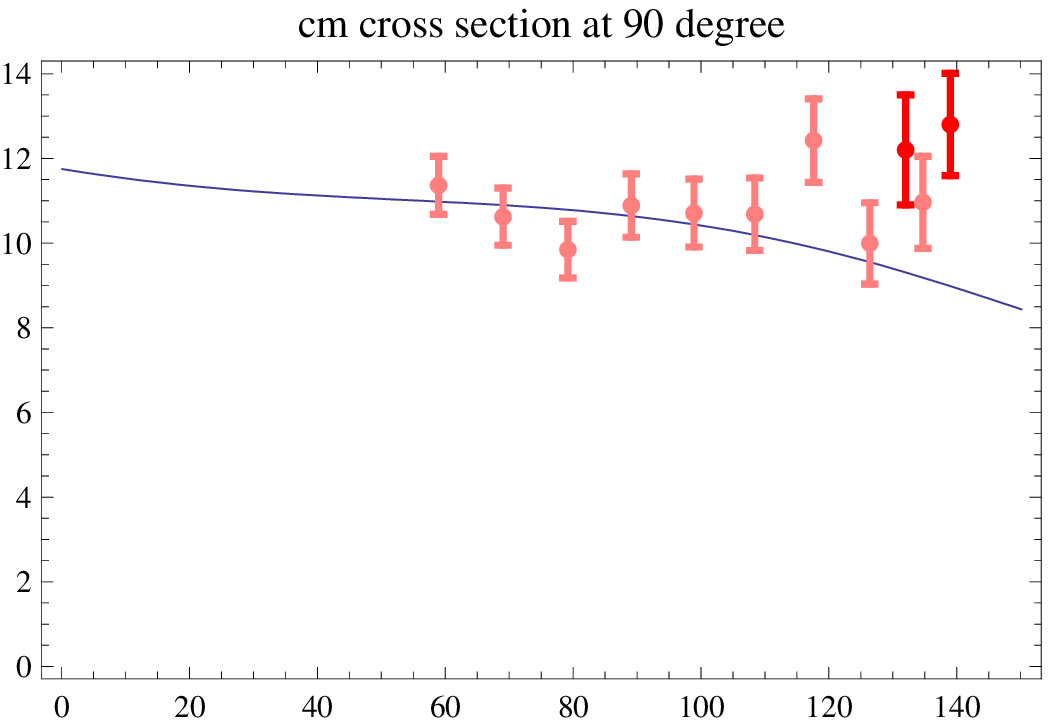}
\includegraphics[height=4cm,width=5cm]{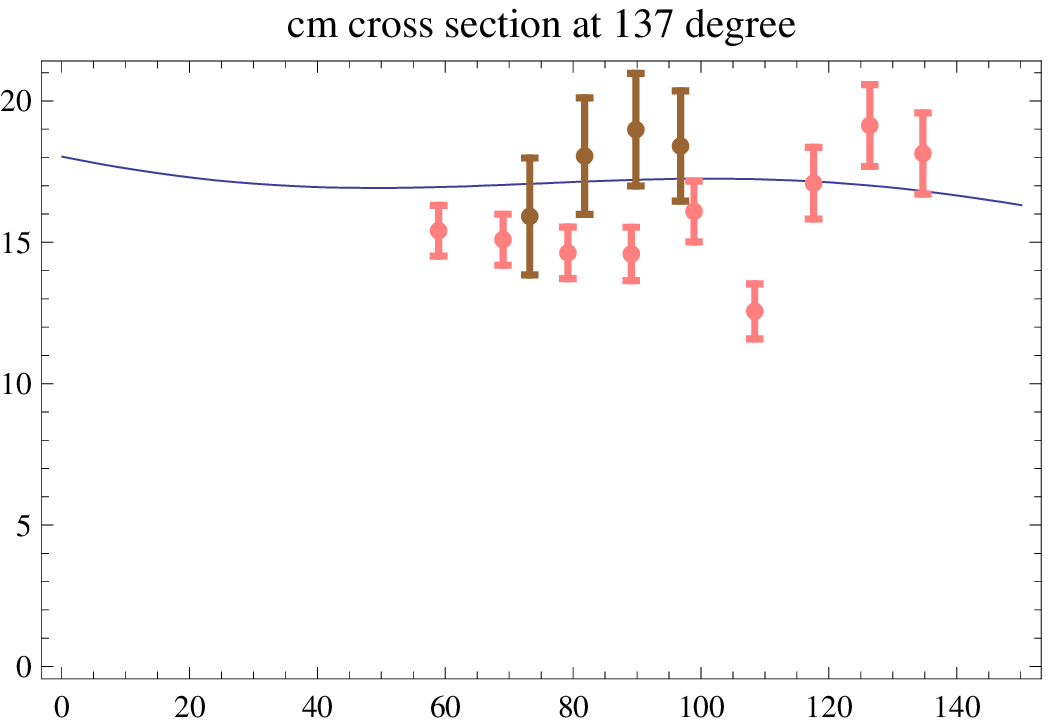}
\caption{Fixed c.m. angle cross section and the data points with photon incident energy below 140 MeV, for the fit parameters of the low energy data points. The x-axis is lab frame photon energy and y-axis is c.m. frame differential cross section in units of nanobarn. The angles of the data points included may differ from the nominal by at most 2.5 degree. %Those data measured or recorded in lab frame have been converted to c.m. frame using the transition in (\ref{labcmrela}). 
For \cite{Hallin1993,Baranov1974,Zieger92,Hunger97,Blanpied01,MAMI2001,Olmos2001,MacGibbon1995}, we use colors: Red, Green, Blue, Black, Gray, Yellow, Pink and Brown respectively.}
\label{lowangleplot}
\end{figure}

\begin{figure}[htdp]
\setlength{\unitlength}{1cm}
\includegraphics[height=4cm,width=5cm]{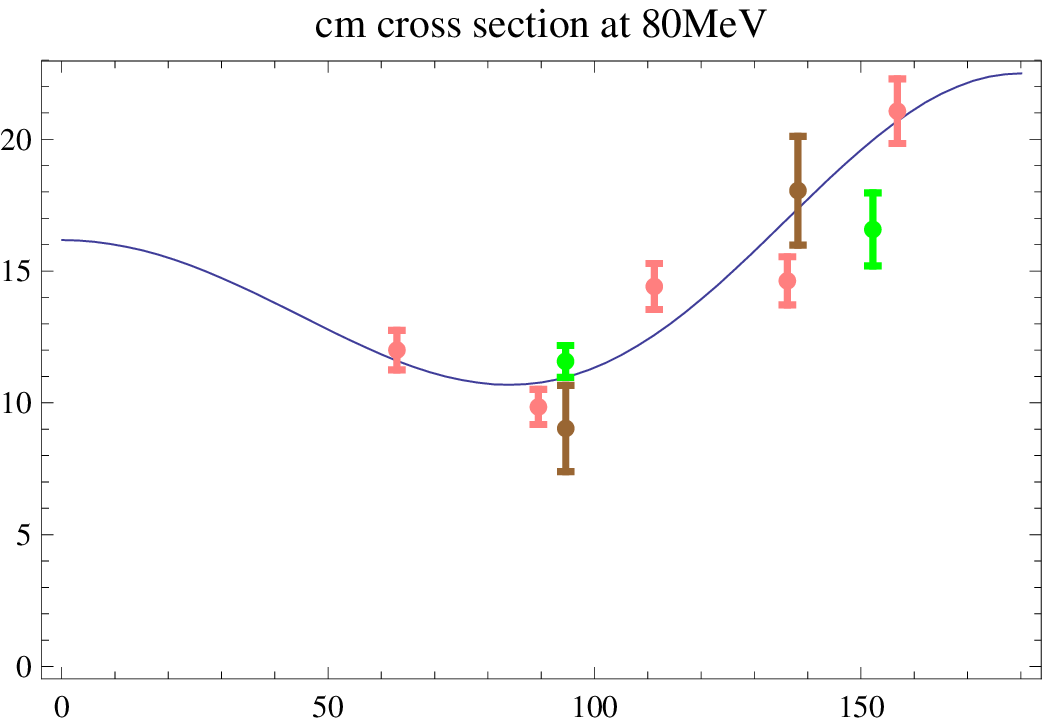}
\includegraphics[height=4cm,width=5cm]{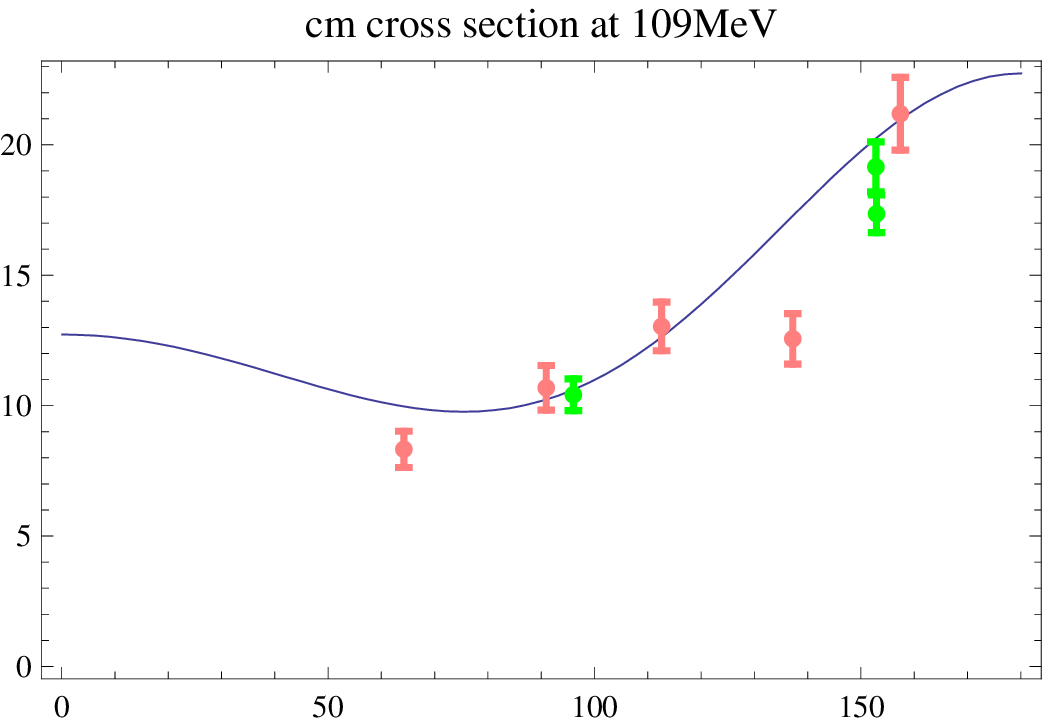}
\includegraphics[height=4cm,width=5cm]{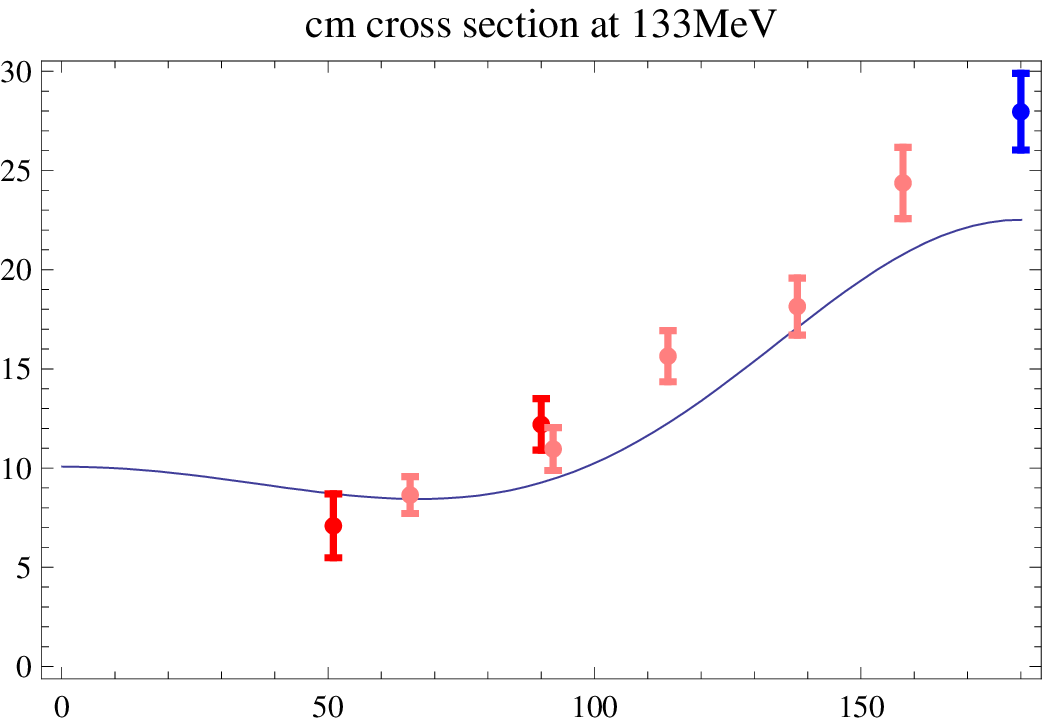}
\caption{Fit cross section and data points with photon incident energy below 140 MeV, for the fit parameters of the low energy data points. The x-axis is c.m. frame scattering angle and y-axis is c.m. frame differential cross section in units of nanobarn. The incident energy of the data points included may differ from the values claimed by at most 2.5 MeV. %Those data measured or recorded in lab frame have been converted to c.m. frame using the transition in (\ref{labcmrela}). 
For \cite{Hallin1993,Baranov1974,Zieger92,Hunger97,Blanpied01,MAMI2001,Olmos2001,MacGibbon1995}, we use colors: Red, Green, Blue, Black, Gray, Yellow, Pink and Brown respectively.}
\label{lowenergyplot}
\end{figure}

\begin{table}[htdp]
\begin{center}
$\begin{array}{|c|c|c|c|c|c|c|}
\hline&F_1&G&\mu&F_6&\alpha_B &\beta_B\\ \hline
\mbox{whole fit}&-27.5&3.13&14.2&12.9&7.5&-8.2\\ \hline
1 \sigma &1.4&0.04&0.5&0.9&1.1&0.9\\ \hline
%90\%(\mbox{not } 2 \sigma) &2.9&0.09&1.1&1.6&2.3&1.5\\ \hline
95\% &2.8&0.08&1.1&1.7&2.2&1.6\\ \hline
\hline
\mbox{MAMI fit} &-27.7&3.17&14.2&14.8&2.1&-8.1 \\ \hline
1 \sigma &0.7&0.03&0.4&0.8&0.8&0.5\\ \hline
%90\% &1.1&0.04&0.6&1.3&1.2&0.9 \\ \hline
95\%&1.4&0.05&0.7&1.5&1.5&1.1 \\ \hline
\hline
\end{array}$
\end{center}
\begin{center}
$\begin{array}{|c|c|c|c|}
\hline
&\mbox{value}&\mbox{95\% error from MAMI fit}&\mbox{95\% error from low energy fit}\\ \hline
\alpha_E+\beta_M &11.3&0.9&2.3\\ \hline
\alpha_E-\beta_M & 7.8 &3.3&2.0\\ \hline
\end{array}$
\end{center}
\caption{Our best-fit parameter values and confidence regions}
\end{table}

\section{Discussion and Conclusion}\label{conclu}

% One may well ask whether there is a description incorporating isospin. Indeed, the proton is part of an isospin I=1/2 nucleon multiplet , incorporating the proton and the neutron, while the Delta is part of the I=3/2 quadruplet containing $\Delta^{++}, \Delta^{+}, \Delta^{0}$ and $\Delta^{-}$. It is not  necessary to invent any new explanation for the fact that, for example, the system consisting of three u-quarks cannot exist in an s-wave state with spin 1/2 - the reason is that the total wavefunction must be antisymmetric under the exchange of the identical quarks and the color is always anti-symmetric for baryons.
%One possible way to incorporate this is that we may add a triple isospin index to the Rarita-Schwinger vector-spinor, $\psi^{\mu\alpha}_{abc}$ where $a,b,c=\pm1/2$, similar to what was done by Peccei. Then the restrictions due to antisymmetry can be imposed at the level of this larger state vector. Practical utility of such a description is limited by the unavailability of any data regarding the electromagnetic properties of the other $\Delta$ resonances, particularly the $\Delta^{++}$.
 
 Our model incorporates both minimal and non-minimal couplings. The former takes into account that the $\Delta^+$ is charged. The presence of the non-minimal couplings makes the description necessarily complicated, nevertheless the number of free parameters has been kept low. The parameters have clear physical interpretation, namely as the proton and $\Delta^+$ magnetic moments, as well as the strength of the $\gamma N\Delta^+$ magnetic transition (M1) and the two bare polarizabilities. We found that the data near the peak is well fit with our preferred set of parameters in the resonance region, but that the same set of parameters does not well describe the data at low energy. We have dealt with this in manner similar to \cite{McGovern:2012ew}.
 
Although we have been able to extract a value for the $\Delta^+$ magnetic moment, we cannot have high confidence in this value since proton Compton scattering does not probe the $\gamma\Delta^+\Delta^+$ vertex directly. The reason that this is possible at all is that the form factors have definite properties under Lorentz transformations, so that some linear combination of parameters which affects the Compton process also determines the magnetic moment of the $\Delta^+$. Qualitatively, we found that changes in the $\mu_{\Delta^+}$ affect the predicted cross-secion asymmetrically: lower values of the magnetic moment do not drastically change the prediction, but higher values greatly enhance the cross section, both on and off the resonance. Therefore, our result, conservatively stated is that  we exclude values of  the $\Delta^+$ magnetic moment $\mu_{\Delta^+}$ larger than about $14.2$. This should be compared to that extracted in \cite{Kotulla:2002cg} at $2.7^{+1.0}_{-1.3} (stat) \pm1.5 (syst) \pm 3(theor)$, but note that the quoted error is dominated by theoretical model uncertainty.
Our upper bound is also consistent with naive quark model expectations and some model calculations \cite{Linde:1995gr, Kotulla:2003pm, Kotulla:2008zz, Kotulla:2002cg, Kotulla:2002tx}. The more reliable path towards determining the $\Delta^+$ magnetic moment would be to extend the model to include pion form-factors and thus cover the case of $N+\gamma \to N+\gamma+ \pi$ scattering which has also been very well measured by some of the very same experimental groups as the Compton data considered in the thesis.

\chapter{Backgrounds in Gauge Field Scattering Amplitudes}

There are four known fundamental interactions, i.e. strong, weak, electromagnetic and gravity interactions. Standard Model (SM) of particle physics, formulated in 1967, has been a great achievement in particle physics. Three out of the four known fundamental interactions except gravity are unified in SM. Before symmetry breaking, SM has a gauge symmetry group $SU(3)_c\times SU(2)_{EW}\times U(1)_Y$. The massless gauge fields living in the adjoint representation of the gauge group are described by Yang-Mills (YM) theory. The gauge fields for $SU(3)_c$ strong interaction are gluons. The electroweak $SU(2)_{EW}\times U(1)_Y$ part of SM undergoes a symmetry breaking to electromagnetic $U(1)_{EM}$ symmetry in SM as the universe cools down. Higgs mechanism is invoked to break the electroweak gauge symmetry and gives masses to gauge fields and matter fields. The four massless YM gauge fields of $SU(2)_{EW}\times U(1)_Y$ recombine in the symmetry breaking, and three of them--$W^\pm$ and $Z$--become massive after eating three Goldstone bosons which are components of the Higgs field, with the remaining combination being the massless photon. SM has been very successful after its invention. Numerous predictions of SM, like the existence of $W^\pm$ and $Z$ together with their masses and decay channels, were verified precisely in the experiments.

Despite of the successes of SM, there are also many reasons for physicists to research new physics beyond SM. First of all, the key particle of SM, i.e. the Higgs boson, has not been confirmed to exist. ATLAS and CMS groups at Large Hadronic Collider (LHC) in the recent two years have reported their discovery of a Higgs-like particle \cite{CMS:2012nga,Chatrchyan:2012jja,Bhat:2012zj}, but people are far from being sure whether it is the SM Higgs. A second defect of SM is its inability to accommodate gravity, which has been understood little at quantum level. It is believed that eventually all the four known interactions will be unified in a more fundamental theory. A little surprisingly, the most stringent challenges on the microscopic scale physics come from the cosmological scale observations. In the universe there is about six times more dark matter than our known matter. Little is known about the individual dark matter particles, but certainly they are not SM particles. Another mystery of our universe is the baryon asymmetry, i.e. our universe contains more baryons than anti-baryons. To explain this asymmetry, baryon number violating processes are required, which do not exist in SM. There are many other reasons for the need of beyond SM physics: neutrinos in SM are massless while now they are known to be massive through neutrino oscillations; there is strong CP problem in SM--no experimental evidence for CP violation in the strong interaction sector and no reason for CP to be conserved theoretically in this sector of SM--which requires a fine tuning of SM parameters; there is hierarchy problem which concerns the huge discrepancy between the strength of weak force and gravity; there is no reliable theory for the early universe; etc.. Theoretically, several attracting new physics theories have emerged. String theory carries the hope to unify gravity with the other forces, and even to become the final theory of physics. Supersymmetry offers a solution to hierarchy problem, and provides a very good candidate for dark matter, i.e. the lightest supersymmetric particle. Grand Unification Theory allows baryon number violating processes, giving a solution to the cosmological baryon asymmetry mystery. A common problem for them is that they lack experimental supports. Experimentally, LHC--considered as one of the great engineering milestones of mankind--has been built with major purposes to find the Higgs and probe new physics.

%inadequate

On the hadronic colliders, physicists collect huge amount of data of cross sections for various scattering events, which measure the probabilities of these events. Then they compare the data to theoretical predictions. The impacts of new physics at the energy scales of current colliders are promised to be very weak. That is to say, the measured cross sections are very close to the SM predictions. Physicists require 5 sigma deviation of data from the SM background to claim the discovery of new physics. Thus, in order to discover new physics at the colliders, we need on one hand good quality data and on the other hand precise mastery of SM background. SM background can be calculated from the scattering amplitudes, which are determined by the SM Lagrangians. Since gluon scattering constitutes a large fraction of the processes in hadronic colliders, gluon scattering amplitudes--belonging to the scope of YM amplitudes--play important roles in discovering physics beyond SM.

%be invoked -> intervene

%Historically, theories like ... or particles like... are discovered in this way...

A traditional way to calculate amplitudes is using perturbation theory, drawing Feynman diagrams at each perturbation level and then calculating with Feynman rules. However, this way is very hard for computing YM amplitudes. On one hand, the number of Feynman diagrams increases exponentially with the number of external legs. The numbers of tree level diagrams for four, five, six and seven point YM amplitudes are 4, 25, 220 and 2485 respectively. For seven point scattering at one loop level, the number is 227585. On the other hand, the Feynman rules for YM fields are relatively complicated. The three point and four point YM vertices both have 6 terms. Actually Feynman rules are complicated and inefficient due to the fact that they contain redundant gauge freedom.These two features make the calculation of YM amplitudes very difficult. However, for the collider physics people need precise scattering amplitudes of many gluons and beyond tree level. This is a great task for theoretical physicists.

% that will be described later
% to which loop level have QED calculations been performed?

%lattice calculation?

Fortunately, much progress has been made in the calculation of YM amplitudes. In 1986, there was a breakthrough: Parke and Taylor \cite{Parke:1986gb} conjectured a concise expression for general N-point tree-level Maximal-Helicity-Violating (MHV) gluon amplitudes, in which only two of the gluons are of negative helicity and all others of positive helicity. The formula is proved a few years later \cite{Berends:1987me}. Tree-level on shell YM amplitudes with zero or one negative (positive) helicity vanish, thus MHV amplitudes are the simplest non trivial amplitudes.

It can be dated to the 1980s and 1990s that people use color decomposition \cite{Berends:1987me,Dixon:1996wi,Mangano:1990by} and spinor technique \cite{Dixon:1996wi,Gunion:1985vca,Xu:1986xb,Berends:1981rb} to simplify the calculation of YM amplitudes. The full YM amplitudes contain two aspects of information--gauge group part and dynamic part--which can be separated by color decomposition. At the same time all external lines in the partial amplitudes after color decomposition are ordered, thus the number of Feynman diagrams is greatly reduced compared to full amplitudes. Almost all the later developments of YM amplitudes just deal with the color decomposed amplitudes. Spinor technique is especially useful for massless gauge boson amplitudes. A four-dimensional null vector can be written as a direct product of a two-component spinor and a two-component anti-spinor. Polarization vectors in spinor formalism carry an arbitrary reference spinor or anti-spinor which does not affect the final result due to gauge invariance. The facts that spinor form is very suitable for expressing massless conditions of the external legs and that it can explicitly exploit gauge invariance by fixing the reference (anti-) spinors in the polarization vectors are responsible for the existence of very compact expressions of amplitudes in spinor form.

%  The positive helicity polarization vector for a given null vector can be expressed conveniently with the anti-spinor related to the momentum and an arbitrary reference spinor--as long as it is not parallel to the momentum--which does not affect the final result due to gauge invariance, and similarly for the negative helicity polarization vector with a spinor from the momentum and an arbitrary anti-spinor.
% It has become a convention nowadays to study only the color decomposed amplitudes.
% exploit, utilize 

The past ten years have witnessed many important developments in YM amplitudes. MHV amplitudes in spinor form are holomorphic, i.e. depending only on the spinors but not the anti-spinors corresponding to the external momenta. In 2003, Edward Witten studied this property by Fourier transforming the amplitudes into twistor space \cite{Witten1}. He found that gluon amplitudes are associated with curves in twistor space--especially MHV amplitudes are associated with straight lines. With the insights thus gained, in 2004, Freddy Cachazo, Peter Svrcek and Edward Witten proposed a rule \cite{Cachazo:2004kj} to construct on shell YM amplitudes by taking MHV amplitudes as vertices, known as CSW rule. In this rule, the inputs, i.e. MHV amplitudes, are gauge invariant on-shell amplitudes with compact expressions, and the number of diagrams is less than that of Feynman diagrams. Thus CSW rule has advantages over direct Feynman rule calculations, and over the Berends-Giele recursion relation \cite{Berends:1987me} in the 1980s which directly stems from Feynman rules and deals with off shell amplitudes in the recursion process.

% CSW rule also inspired many developments in loop-level amplitudes \cite{Anastasiou:2006jv,Anastasiou:2006gt,0407214,0510253,Britto:2005ha,Britto:2004nj,Britto:2004nc,Bullimore:2010pj,Cachazo:2004dr}.

In 2004, R. Britto, F. Cachazo and B. Feng discovered a powerful recursion relation \cite{Britto:2004ap} for tree-level on-shell gluon amplitudes, and in the next year E. Witten gave a nice explanation \cite{Britto:2005fq} for the recursion relation by complexifying the amplitudes, making it known as Britto-Cachazo-Feng-Witten (BCFW) recursion relation, which immediately triggered a hot in the research of YM amplitudes lasting to the current days and has been generalized to many other theories. Due to causality and unitarity, amplitudes are meromorphic functions of momenta viewed as complex variables. In BCFW recursion relation, a pair of momenta is deformed and the amplitudes become meromorphic functions of a single parameter describing the momenta deformation. The deformation can be chosen such that the amplitudes, as meromorphic functions of the deformation parameter, in general only have single poles due to the propagators depending on the deformation parameter. Then the meromorphic amplitudes are determined by these poles together with their residues. The residue at each pole equals to the product of two sub amplitudes, formed by cutting the  corresponding propagator in the whole amplitude. Since the propagator momentum becomes null at pole position, the two sub amplitudes are still on shell amplitudes. Similarly as CSW rule, BCFW recursion relation, as an on-shell recursion relation, is more efficient than Feynman rule calculation or Berends-Giele recursion relation. Compared with CSW rule, BCFW recursion relation is easier to operate and can be applied to many other theories.

Beyond the tree level, there also exist many progresses in one-loop on-shell YM amplitudes. The most famous method is the "unitarity method" \cite{Anastasiou:2006jv,Bern:1994zx,Bern:1994cg,Britto:2005ha,Britto:2004nj,Britto:2004nc,Cachazo:2004dr}. Loop amplitudes can be expanded in a basis of master integrals, with coefficients being rational functions of the momenta. For one-loop amplitudes, there are four types of master integrals, i.e. box, triangle, bubble and tadpole. The last one vanishes for massless field scattering amplitudes like YM amplitudes. By evaluating the master integrals explicitly, the most difficult part of loop level calculations, i.e. integration over loop momentum, can be done once and for all. Viewing the momenta as complex variables, one-loop YM amplitudes have branch cut type singularities, compared to poles for tree-level amplitudes. One loop YM amplitudes are cut-constructible in dimensional regularization, which means that the amplitudes are uniquely determined by their branch cuts and the discontinuity at the cuts, compared to that tree amplitudes can be constructed from their poles and corresponding residues. The discontinuity at the cuts are related to the imaginary parts of the one-loop amplitudes, which by optical theorem, equal to the product of two tree-level amplitudes. Quite similar to the tree-level case that a whole amplitude breaks into two parts by cutting one propagator, for an one-loop amplitude, two propagators on the loop are put on shell and the one-loop amplitude is cut into two on-shell tree-level amplitudes. Different pairs of propagators are cut, depending on different branch cuts, and each master integral has a distinct branch cut.  Finding out the master integrals is one obstacle for this unitarity method to be extended to higher loop levels. Another problem for this method to work at higher loop levels is that one encounters lower level loop amplitudes with external momenta in general D-dimensions--when these external legs come from cutting the loop propagators at higher loop levels. Not many analytic results have been obtained by this method beyond one loop level without maximal supersymmetry.

In the very recent years, there has emerged a new elegant approach to amplitudes \cite{ArkaniHamed:2012nw, ArkaniHamed:2009dg, ArkaniHamed:2009sx, ArkaniHamed:2009vw}, by relating them to a remarkable mathematical structure--positive Grassmannian, initiated by Nima Arkani-Hamed et. al.. Currently most of the efforts in this approach have been devoted to planar $\mathcal N=4$ Super-Yang-Mills (SYM) amplitudes, leaving much to be explored for amplitudes with less supersymmetries or non planar amplitudes. The Grassmannian formalism can manifest all the symmetries, as an improvement over previous approaches to amplitudes. Inherited from previous developments, like the BCFW recursion relation or the unitarity method for loop amplitudes, in this new approach, amplitudes are determined from their singularities and the authors construct the amplitudes with gauge invariant objects--here on shell three point amplitudes. Amplitudes to all loop levels can be constructed with these building blocks, without any off shell line in the diagrams.

At least compared to a decade ago, research in amplitudes is very active and promising these years, leaving many important questions to think about. In the following of this chapter, I will present some progresses already made with a little more details, and end with the motivations of our work.

%research in amplitudes are very active these years, leaving many important questions to think about
%The directions of off shell amplitude, high loop amplitude, non perturbative result, etc.. deserve future research. to include quarks

%\beq\raisebox{-3cm}{\includegraphics[width=14cm,height=6cm]{}}=1\eeq

%For example, there has been five-loop calculation of four-point $\mathcal N=4$ Super-Yang-Mills amplitude \cite{Bern:2012uc}.

%In this chapter, I will give some overview of the progresses that have been made in the calculations of YM amplitudes. First, color decomposition is introduced, which can separate dynamic information and group information in the YM amplitudes. Then, spinor technique is presented, which can simplify the expressions of the amplitudes for massless gauge fields. Then several recursive methods for YM amplitudes are reviewed, including CSW rules based on a class of special and simple amplitudes i.e. MHV amplitudes, Berends-Giele recursion relation stemming directly from Feynman rules, and the currently most famous BCFW recursion relation as a result of the analytic properties of the amplitudes. Finally, amplitude relations are discussed.

\section{Color Decomposition}

The Feynman rules for YM three point and four point vertices are:
\begin{eqnarray}
iV^{abc}_{\mu\nu\lambda}=&&i f^{abc}[(k_1-k_2)_\lambda g_{\mu\nu}+(k_2-k_3)_\mu g_{\nu\lambda}+(k_3-k_1)_\nu g_{\mu\lambda}],\nonumber\\
iV^{abcd}_{\mu\nu\lambda\rho}=&& i [f^{abe}f^{cde}(g_{\mu\lambda}g_{\nu\rho}-g_{\nu\lambda}g_{\mu\rho})\nonumber\\
&&+f^{ace}f^{bde}(g_{\mu\nu}g_{\lambda\rho}-g_{\nu\lambda}g_{\mu\rho})\nonumber\\
&&+f^{ade}f^{cbe}(g_{\mu\lambda}g_{\nu\rho}-g_{\rho\lambda}g_{\mu\nu})].\label{YMvertex}
\end{eqnarray}
In (\ref{YMvertex}), $f^{abc}$ is structure constant of the gauge group. YM coupling constant g is omitted throughout the thesis. The convention for momenta here is that positive momenta are out-going. The Feynman rules contain informations of two aspects: one is the dynamic part and the other is gauge group part i.e. structure constants. We can separate the two aspects by color decomposition \cite{Dixon:1996wi,Mangano:1990by} of the amplitudes. Tree level color decomposition is:

%$\mu$, $\nu$, $\lambda$ and $\rho$ are the Lorentz indices for YM fields 1, 2, 3 and 4, and a, b, c, d are their color indices respectively.

\beq
\mathcal A_{\mbox{tot}}(\{p_i,\epsilon_i,a_i\})=\sum_{\sigma\in S_n/Z_n}Tr(T^{a_{\sigma(1)}}T^{a_{\sigma(2)}}\cdots T^{a_{\sigma(n)}})A_n^{\mbox{\tiny tree}}(\sigma(1),\sigma(2),\cdots,\sigma(n)).
\label{treecolordecom}
\eeq
The sum is over all non-cyclic permutations. $\mathcal A$ represents the total amplitude, and A represents partial amplitudes, or color ordered amplitudes. All dynamic informations are contained in the color ordered amplitudes $A_n^{\mbox{\tiny tree}}$. In the decomposition at tree level (\ref{treecolordecom}), all the $n$ group generators are in a single trace. For one loop level color decompositions, there are double trace structures besides the single trace term. Yet there exist general formulas \cite{Bern:1994zx,Bern:1990ux,Bern:1994fz} that relate the partial amplitudes corresponding to double trace terms to those corresponding to the single trace terms. Thus we only need to calculate the single trace terms, also called primitive amplitudes in the literature, for tree and one loop levels. For the rest of the thesis, unless otherwise stated, when we mention YM amplitudes, tree or one loop level, we actually mean the primitive amplitudes.

%add a reference for the one loop decomposition

%, which are the minimal gauge invariant objects

There are several advantages for dealing with the primitive amplitudes instead of full amplitudes. First, for different gauge groups, the primitive amplitudes are the same, and only the trace parts in (\ref{treecolordecom}) are different. Secondly, all external lines in primitive amplitudes are ordered by color. Thus the number of Feynman diagrams is greatly reduced compared to full amplitudes. The color ordered 5 and 6 point amplitudes for example have 10 and 38 diagrams respectively, compared to 25 and 220 for the full amplitudes. Thirdly, we can use relatively simpler color ordered Feynman rules \cite{Dixon:1996wi} to calculate primitive amplitudes:
\bea
iV_{\mu\nu\lambda}=&&\frac{i}{\sqrt{2}} [(k_1-k_2)_\lambda g_{\mu\nu}+(k_2-k_3)_\mu g_{\nu\lambda}+(k_3-k_1)_\nu g_{\mu\lambda}],\nonumber\\
iV_{\mu\nu\lambda\rho}=&& \frac{i}{2}(2 g_{\mu\lambda}g_{\nu\rho}-g_{\nu\lambda}g_{\mu\rho}-g_{\mu\nu}g_{\lambda\rho}).\label{colorFeynrule}
\eea
%when they color decompose loop amplitudes, have they considered whether the color ordered Feynman rules can be used at loop level?
%Also for the BCFW technique or some loop level technique that will be introduced later, the number of singularities like poles or cuts of the amplitudes are reduced.

\section{Spinor Technique}
For YM amplitudes, spinor formalism \cite{Dixon:1996wi,Gunion:1985vca,Xu:1986xb,Berends:1981rb} is widely used and offers concise expressions in many occasions. In this section we give a short introduction to spinor formalism for massless particles, while for massive particles the spinor formalism is relatively more complicated \cite{Dittmaier:1998nn,Ozeren:2006ft,Rodrigo:2005eu,Schwinn:2005pi,Schwinn:2007ee}. For a null vector $k_\mu$ we can define a spinor $\lambda$ and anti-spinor $\tilde \lambda$ through Dirac equation:
\beq
k_{a {\dot a}}\lambda^a=0,\tilde \lambda^{{\dot a}} k_{a {\dot a}}=0,
\eeq
where
\beq
k_{a{\dot a}}\equiv k_\mu (\sigma^\mu)_{a{\dot a}},\ \sigma^\mu=(1,\vec{\sigma}),\ a,\dot a=1,2
\eeq
with $\vec \sigma\ $ being Pauli matrices. Then the momentum $k$ can be decomposed as:
\beq
k_{a \dot a}=\lmd_a \td \lmd_{\dot a},
\label{spinormassless}
\eeq
and we will abbreviate it as $k=\lmd \td\lmd$. Antisymmetric tensors $\epsilon^{ab}$ and $\epsilon_{ab}$ are used to raise or lower the spinor indices:
\beq
\lambda^a=\epsilon^{ab}\lambda_b,\lambda_a=\epsilon_{ab}\lambda^b,
\eeq
with $\epsilon^{12}=1$ and $\epsilon_{12}=-1$. Similarly for the anti-spinors with dotted indices.

We can define Lorentz invariant inner products for two spinors or anti-spinors as:
\beq
\langle \lambda_i\lambda_j\rangle\equiv\lambda_i^a\lambda_{ja},[\tilde \lambda_i\tilde \lambda_j]\equiv \tilde \lambda_{i \dot a}\tilde \lambda_j^{\dot a}.
\eeq
For two vectors $k_1=\lambda_1 \tilde \lambda_1$ and $k_2=\lambda_2 \tilde \lambda_2$, their inner product in spinor form is:
\beq
k_1\cdot k_2=\frac{1}{2}\langle \lambda_1\lambda_2\rangle [\tilde \lambda_1 \tilde \lambda_2].
\eeq
The inner products are antisymmetric under the interchange of two (anti-)spinors. Another very useful property is Schouten identity:
\beq
\lambda_i\langle \lambda_j \lambda_k \rangle+\lambda_j\langle \lambda_k \lambda_i \rangle+\lambda_k\langle \lambda_i \lambda_j \rangle=0,
\eeq
and similarly for anti-spinors. In the following, the word spinor may also mean anti-spinor when it is not  confusing. For massless on shell YM external leg with momentum $k=\lambda_k\tilde\lambda_k$, the two helicity polarization vectors can be written as:
\beq
\epsilon^+=\frac{\lambda_\mu \tilde \lambda_k}{\sqrt{2}\langle \lambda_\mu\lambda_k\rangle},\epsilon^-=\frac{\lambda_k \tilde \lambda_\mu}{\sqrt{2}[ \tilde \lambda_\mu \tilde \lambda_k]},
\label{polarvec}
\eeq
where $\lambda_\mu$ or $\tilde \lambda_\mu\ $ is an arbitrary reference spinor that is not proportional to $\lambda_k$ or $\tilde \lambda_k$. Different choices of the reference spinors correspond to gauge transformations of YM fields, and the dependences on the reference spinors disappear in the final results of on shell amplitudes.

In short, spinor form can conveniently utilize massless condition as in (\ref{spinormassless}), and gauge redundancy can be explicitly removed by fixing the reference (anti-) spinors in (\ref{polarvec}). These two points can explain why expressions are compact in spinor form.

\section{MHV Amplitudes and CSW Rule}
Although in general YM amplitudes with many legs are very difficult to calculate analytically, there is a class of tree level amplitudes, called MHV amplitudes, that were known at an early time for any number of YM fields scattering and have concise expressions. MHV means that only two of the YM fields are of negative helicity and all others of positive helicity. The formula was first conjectured by Parke and Taylor \cite{Parke:1986gb},\footnote{In this paper the amplitudes were neither color decomposed or in spinor formalism.} and proved a few years later \cite{Berends:1987me}. Denoting the two YM fields with negative helicity as x and y, in the spinor formalism with $k_i=\lambda_i \tilde \lambda_i$, the partial amplitude is:
\beq
A(1^+,2^+,\cdots,x^-,\cdots,y^-,\cdots,n^+)=\frac{\langle \lambda_x \lambda_y\rangle^4}{\Pi_{i=1}^n \langle \lambda_i\lambda_{i+1}\rangle}.
\label{MHV}
\eeq

Freddy Cachazo, Peter Svrcek and Edward Witten proposed a rule \cite{Cachazo:2004kj} to construct on shell YM amplitudes from MHV amplitudes, known as CSW rule. In this rule, the vertices are not the familiar three point and four point Feynman rule vertices, but the MHV vertices with any number of lines attached, of which two are negatively polarized. The propagators are just scalar propagators, and the two ends have opposite helicities. For propagators whose momenta are in general off shell, they use a reference anti spinor to define the spinors for the propagator momenta $q$: $\lambda_{qa}=q_{a\dot a}\eta^{\dot a}$ where the dependence on the arbitrary reference anti spinor $\eta$ cancels out in the final on shell amplitudes. In this rule, the MHV vertices take the on shell form (\ref{MHV}) although off shell propagators are attached to them.

%I used helicities when it means positive and negative helicities

From the facts that each MHV vertex contains two negative helicity YM fields, that each propagator provides one negative helicity YM field, and that the number of propagators is less than that of vertices by 1, it is deduced that for amplitudes with $r$ negative helicity YM fields, the number of MHV vertices is $r-1$. For amplitudes with 0 or 1 negative helicity YM field, there can be no MHV vertices, thus the amplitudes are 0. When there are 2 negative helicity YM fields there is 1 MHV vertex giving the MHV amplitude. These are trivial checks of CSW rule. The first non trivial check would be four point amplitude with 3 negative helicity YM fields. The amplitude is 0, and we can check it by CSW rule. There are two diagrams for this amplitude with two MHV vertices in each diagram, as shown in Figure \ref{CSW4p}.
\begin{figure}[h]
\centerline{\includegraphics[width=14cm,height=6cm]{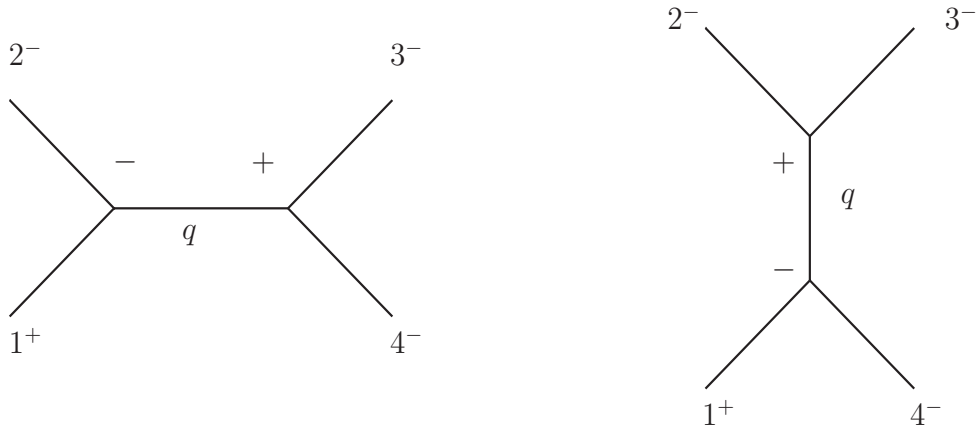}}
\caption{Diagrams to calculate 4 point YM amplitude with 3 external legs of negative helicity in CSW rule.}
\label{CSW4p}
\end{figure}

The momenta of the four external legs in spinor form are $k_i=\lambda_i \tilde \lambda_i$. Using an arbitrary reference anti spinor $\eta^{\dot a}$, which are chosen to be the same for both diagrams, a spinor $\lambda_{qa}=q_{a\dot a}\eta^{\dot a}$ is defined for each propagator with momentum $q$ in the two diagrams. Notate $\tilde \lambda_{i\dot a}\eta^{\dot a}$ as $\phi_i$, the first diagram evaluates as:
\beq
-\frac{\phi_1^3}{\phi_2\phi_3\phi_4}\frac{\langle \lambda_3\lambda_4\rangle}{[\tilde\lambda_2\tilde\lambda_1]},
\eeq
and the second diagram evaluates as:
\beq
-\frac{\phi_1^3}{\phi_2\phi_3\phi_4}\frac{\langle \lambda_3\lambda_2\rangle}{[\tilde\lambda_4\tilde\lambda_1]}.
\eeq
By momentum conservation one has $0=\sum_{i=1}^4 \langle \lambda_3\lambda_i\rangle [\tilde\lambda_i \tilde\lambda_1]=\langle \lambda_3\lambda_2\rangle [\tilde\lambda_2 \tilde\lambda_1]+\langle \lambda_3\lambda_4\rangle [\tilde\lambda_4 \tilde\lambda_1]$. Also applying the antisymmetry property of (anti) spinor inner product, the two diagrams sum up to be 0.

In general, CSW rule is proved in \cite{Risager:2005vk,Britto:2004ap}. Besides its application in calculating tree level YM amplitudes, CSW rule also inspired many developments for loop level amplitudes \cite{Anastasiou:2006jv,Anastasiou:2006gt,0407214,0510253,Britto:2005ha,Britto:2004nj,Britto:2004nc,Bullimore:2010pj,Cachazo:2004dr}.

%Incidently, consistency checks of CSW rules helped to ﬁnd holomorphic anomalies [77], which played an important role in later developments, such as the completion of unitarity cut method [126, 38, 37, 67, 65, 3] and direct evaluation of coeﬃcients of one-loop amplitudes [73, 66]. It was extended to contain more physical contents and beyond the tree-level [59, 61, 72].

\section{Berends-Giele Recursion Relation}
An efficient way to calculate YM amplitudes with many legs is through recursion relation. Berends-Giele recursion relation \cite{Berends:1987me} was proposed early and used widely. This recursion relation directly stems from Feynman rule calculation of amplitudes. For a color ordered amplitude with $n$ legs, we can draw all the diagrams with n+1 legs by inserting the (n+1)-th leg into the diagrams with $n$ legs, while maintaining color ordering and involving only three point or four point vertex of YM field. This recursion relation works regardless of whether the legs are on shell or not, thus having wide applications. Assume the (n+1)-th leg has Lorentz index $\mu$, the recursion relation for the current $J^\mu (1,2,\cdots,n)$ is:
\bea
J^\mu (1,2,\cdots,n)=&&\frac{-i}{k_{1,n}^2}\left[\sum_{i=1}^{n-1}V^{\mu\nu\rho}(k_{1,n},k_{1,i},k_{i+1,n}) J_\nu(1,\cdots,i)J_\rho(i+1,\cdots,n)\right.\\
&&\left.+\sum_{j=i+1}^{n-1}\sum_{i=1}^{n-2}V^{\mu\nu\rho\sigma}J_\nu(1,\cdots,i)J_\rho(i+1,\cdots,j) J_\sigma(j+1,\cdots,n)\right],\nb
\eea
where $k_{i,j}=k_i+k_{i+1}+\cdots+k_j$. $V_{\mu\nu\rho}$ and $V_{\mu\nu\rho\sigma}$ are the color ordered three and four point vertices (\ref{colorFeynrule}). The (n+1)-th leg is not contracted with any polarization vector, and the propagator corresponding to the momentum of the (n+1)-th leg is contained in the current $J^\mu$. Berends-Giele recursion relation can be represented by Figure \ref{BGrecursion}.
\begin{figure}[h]
\centerline{\includegraphics[width=14cm,height=5cm]{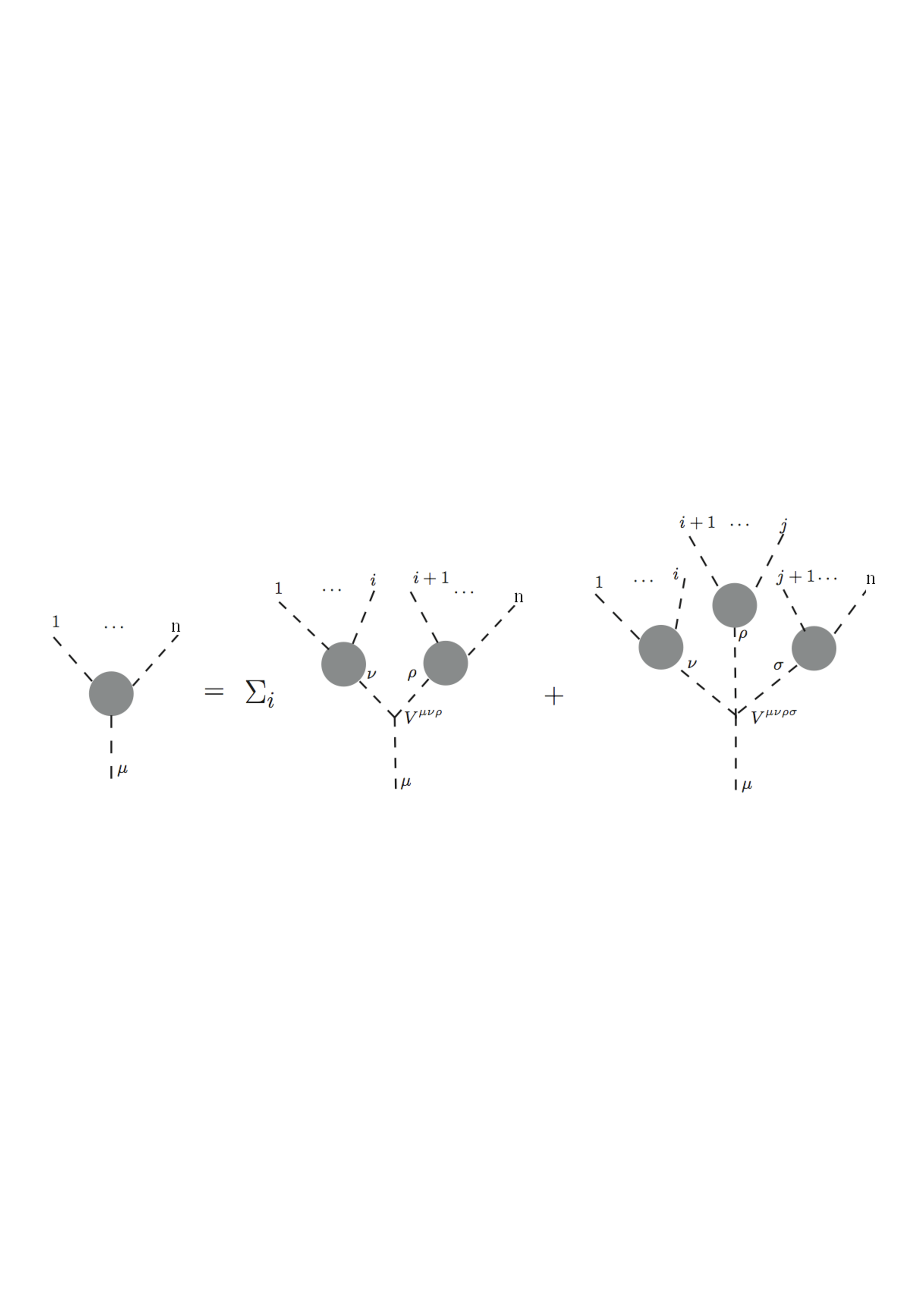}}
\caption{The pictorial representation of Berends-Giele recursion relation. The picture is taken from \cite{Feng:2011np}.}
\label{BGrecursion}
\end{figure}

\section{BCFW Recursion Relation}

Due to causality and unitarity, amplitudes are meromorphic functions of momenta viewed as complex variables. S-matrix program \cite{Olive1964,Chew1966,Eden1966} was proposed to study scattering amplitudes from some general principles like Lorentz invariance, locality, causality, gauge invariance and unitarity etc., especailly in strong interaction theory. This method depends little on the details of the theories. Combined with on shell recursion relations many important conclusions are derived, for example the Britto-Cachazo-Feng-Witten (BCFW) recursion relation \cite{Britto:2004ap,Britto:2005fq}.

BCFW recursion relation is an important progress of recent years in scattering amplitudes. BCFW recursion relation is first developed for tree level on shell YM amplitudes. It results in a hot in the research of amplitudes, and is extensively applied and generalized to many theories \cite{Badger:2005zh,Badger:2005jv,Benincasa:2007qj,Berger:2006sh,Bern:2005hs,Brandhuber:2008pf,Brandhuber:2007up,Cheung:2009dc,Gang:2010gy,Luo:2005rx,Luo:2005my,Park:2006va,Quigley:2005cu,ArkaniHamed:2008gz,Bern:2005hh,Bern:2005ji,Bern:2005cq,Bedford:2005yy,Cachazo:2005ca}. Since amplitudes can be understood as meromorphic functions of momenta, one can deform all the momenta in the complex plane. However, general deformations violate momenta conservation and on shell condition $k^2=0$, and the functions are difficult to deal with due to their multiple variables. Among all deformations, the well known BCFW deformation can leave momenta conservation and on shell condition intact, and have only one variable.

In BCFW deformation, one chooses a pair of momenta $k_i$ and $k_j$ to shift as:
\beq
k_i(z)=k_i+z\eta,\ k_j(z)=k_j-z\eta.
\eeq
This ensures momenta conservation. In order to keep the external legs on shell, conditions are placed on $\eta$:
\beq
\label{qcondition}
\eta^2=\eta\cdot k_i=\eta\cdot k_j=0.
\eeq
In four and higher dimensional spacetime, one can solve $\eta$ from (\ref{qcondition}). For example, for $k_i=\lambda_i \tilde \lambda_i$, $k_j=\lambda_j \tilde \lambda_j$, one can choose $\eta=\lambda_i \tilde \lambda_j$ or $\eta=\lambda_j \tilde \lambda_i$.

After shifting the pair of momenta, the original partial amplitude $A(k_i,k_j,\cdots)$ becomes a meromorphic function $A(z)$ of a single variable z. Many useful mathematic tools of meromorphic functions can then apply to $A(z)$. One relevant conclusion is: a meromorphic function with only pole singularities can be determined by its pole positions and the residues correspondingly.

Propagators result in poles which are in general single poles. By appropriate choices of reference spinors, the polarization vectors (\ref{polarvec}) provide no poles. Doing a contour integral, which encompasses all the poles $\{z_\alpha\}$ from propagators, of the function $\frac{A(z)}{z}$, one gets:
\beq
B=\oint dz\frac{A(z)}{z}\Rightarrow A(z=0)=B-\sum_{z_\alpha}Res(\frac{A(z)}{z})_{z_\alpha}.
\eeq
B is called boundary term. When $A(z)|_{z\to \infty} \to 0$, B is 0. In this case, we only need to calculate the residues at the poles--corresponding to propagators--to determine $A(z)$. Since at each pole, the momentum of the corresponding propagator becomes on shell, the residue at that pole equals the product of two less point on shell amplitudes, which are formed by cutting the corresponding propagator in the whole amplitude. Thus, when the boundary term B is 0, there is the on shell BCFW recursion relation:
\beq
A_n=A(z=0)=\sum_{z_\alpha,h=\pm}A_L(k_i(z_\alpha),k^h(z_\alpha))\frac{1}{k_\alpha^2}A_R(-k^{-h} (z_\alpha),k_j(z_\alpha)).
\label{BCFW}
\eeq
BCFW recursion relation with legs n-1 and $n$ shifted can be represented in Figure \ref{BCFWrep}.

\begin{figure}[h]
\centerline{\includegraphics[width=14cm,height=5cm]{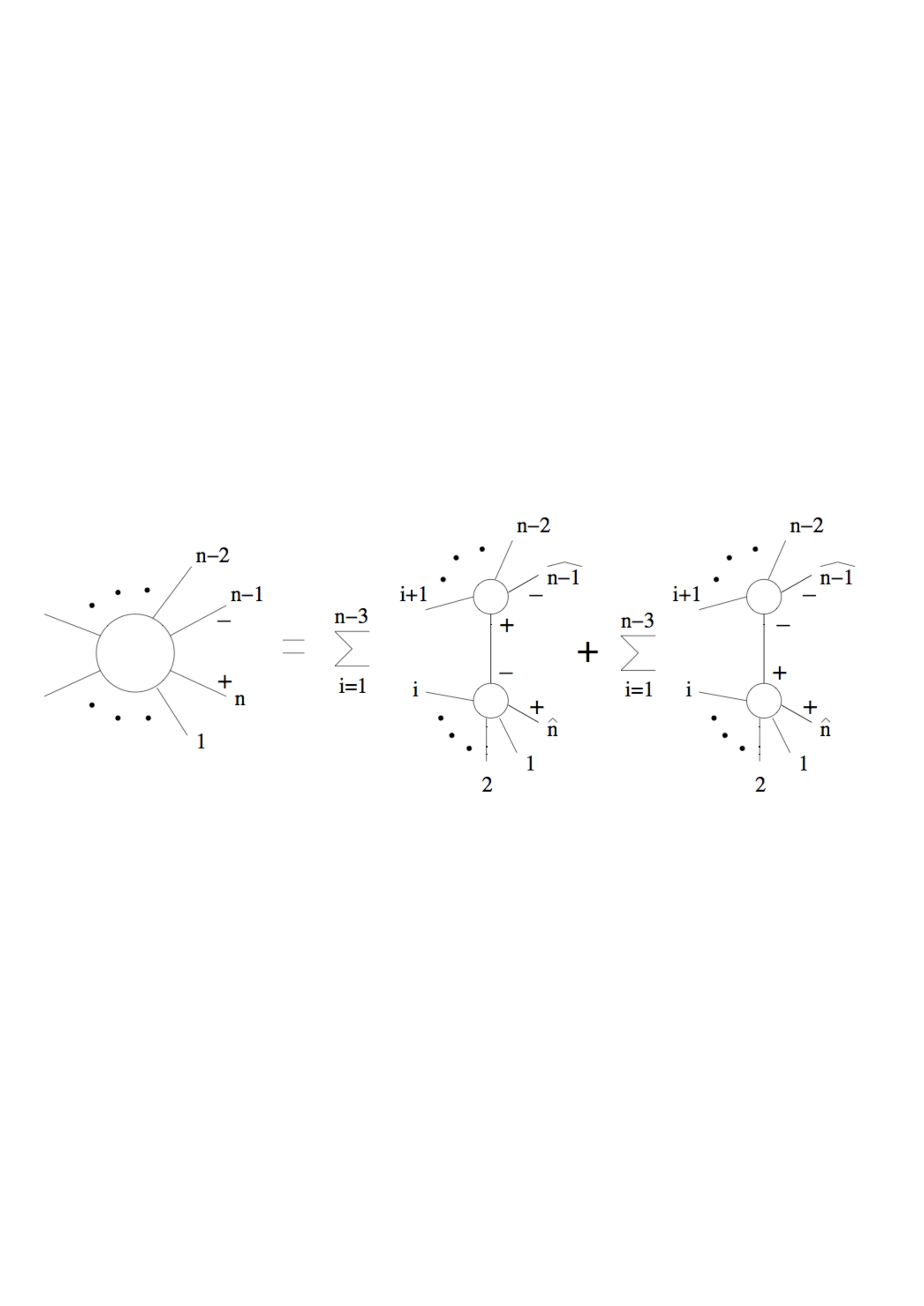}}
\caption{Pictorial representation of BCFW recursion relation with legs n-1 and $n$ shifted. The picture is taken from \cite{Feng:2011np}.}
\label{BCFWrep}
\end{figure}

For on shell YM scattering, it is always possible to choose a pair of momenta to shift such that the boundary term B vanishes \cite{ArkaniHamed:2008yf}, which results in the BCFW recursion relation (\ref{BCFW}). If one shifts a leg $i$ of positive helicity and a leg $j$ of negative helicity, the shift and polarization vectors can be chosen as:
\bea
k_i=\lambda_i \tilde \lambda_i \to \lambda_i \tilde \lambda_i+z\lambda_j \tilde \lambda_i&,&\ \ k_j=\lambda_j \tilde \lambda_j \to \lambda_j \tilde \lambda_j-z\lambda_j \tilde \lambda_i,\nonumber\\
\epsilon_i^+=\frac{\lambda_j \tilde \lambda_i}{\sqrt{2}\langle \lambda_j\lambda_i\rangle}&,&\ \ \epsilon_j^-=\frac{\lambda_j \tilde \lambda_i}{\sqrt{2}[\tilde \lambda_j\tilde\lambda_i]}.
\label{pnshift}
\eea
If one shifts legs $i$ and $j$ both of positive helicity (similarly for negative helicity), the shift and polarization vectors can be chosen as:
\bea
k_i=\lambda_i \tilde \lambda_i \to \lambda_i \tilde \lambda_i+z\lambda_j \tilde \lambda_i&,&\ \ k_j=\lambda_j \tilde \lambda_j \to \lambda_j \tilde \lambda_j-z\lambda_j \tilde \lambda_i,\nonumber\\
\epsilon_i^+=\frac{\lambda_j \tilde \lambda_i}{\sqrt{2}\langle \lambda_j\lambda_i\rangle}&,&\ \ \epsilon_j^+=\frac{\lambda_i (\tilde \lambda_j-z\tilde \lambda_i)}{\sqrt{2}\langle \lambda_i\lambda_j\rangle}.
\label{ppshift}
\eea
Actually, when the legs $i$ and $j$ are not adjacent in the color ordering, $A(z)$ under the shift (\ref{pnshift}) or (\ref{ppshift}) behaves as $\mathcal O(z^{-2})$ when $z$ goes to infinity, compared to the $\mathcal O(z^{-1})$ behavior for shifting legs $i$ and $j$. The behavior of $A(z)$ when $z$ goes to infinity is a non trivial property of the complexified amplitude $A(z)$, since individual Feynman diagram does not have such good scaling properties.

Due to constraints from gauge invariance, on shell amplitudes are in general much simpler than off shell ones. At the same time, for $n$ point amplitudes, there are only $n-3$ propagators which can have poles, compared to the tremendous amount of Feynman diagrams. Thus, BCFW recursion relation is very efficient in calculating on shell amplitudes analytically.

%A comparison of the efficiency of BCFW and Berends-Giele recursion relations for YM amplitudes is :

\subsection{BCFW Recursion Relation in QCD with Massive Dirac Fields}

Among the various generalizations of BCFW recursion relation \cite{Badger:2005zh,Badger:2005jv,Benincasa:2007qj,Berger:2006sh,Bern:2005hs,Brandhuber:2008pf,Brandhuber:2007up,Cheung:2009dc,Gang:2010gy,Luo:2005rx,Luo:2005my,Park:2006va,Quigley:2005cu,ArkaniHamed:2008gz,Bern:2005hh,Bern:2005ji,Bern:2005cq,Bedford:2005yy,Cachazo:2005ca, Chen3}, we exemplify its generalization to QCD amplitudes \cite{Chen3}, where there are massive quarks with mass m besides the massless YM fields i.e. gluons. For massive quarks, the momenta can be written as $k=\lambda\tilde \lambda+\beta\tilde\beta$ in spinor form, with $\langle \bt\lmd\rangle=[\td\bt\td\lmd]=m$. There is SU(2) freedom in the choices of $\{\lambda,\beta\}$ and correspondingly of $\{\tilde \lambda,\tilde \beta\}$ which leave $k$ invariant. One can shift two quark legs as:
\bea
k_{\hat q_1}&=&\lambda_{q_1}\tilde\lambda_{q_1}+\beta_{q_1} \tilde\beta_{q_1}+ z \lambda_{q_1}\tilde\beta_{q_1},\nonumber\\
k_{\hat q_2}&=&\lmd_{q_2}\td\lmd_{q_2}+\bt_{q_2}\td\bt_{q_2}-z \lmd_{q_1}\td\bt_{q_1},
\label{quarkshift}
\eea
with the condition:
\beq
\langle\lmd_{q_1},\lmd_{q_2}\rangle[\td\bt_{q_1},\td\lmd_{q_2}]+ \langle\lmd_{q_1},\bt_{q_2}\rangle[\td\bt_{q_1},\td\bt_{q_2}]=0.\label{massivecondition}
\eeq
In this way, the quarks remain on shell. (\ref{massivecondition}) is always possible due to the SU(2) freedom in choosing the spinors for the momentum. Choosing the spin states for the two quark legs as:
\beq
{\lmd_{q_1}\choose \td\bt_{q_1}},  {a\choose \td b}={1\over c_1} {\hat\lmd_{q_2}\choose \td\bt_{q_2}}-{1\over c_2} {\hat\bt_{q_2}\choose -\td\lmd_{q_2}},
\label{spinstates}
\eeq
with $c_1={[\td\bt_{q_2},\td\bt_{q_1}]\over m}$ and $c_2=-{[\td\lmd_{q_2},\td\bt_{q_1}]\over m}$, the amplitude $A(z)$ tends to 0 as $z$ goes to infinity under the shift (\ref{quarkshift}, \ref{massivecondition}), and can be calculated using BCFW technique.

Consider the amplitude with two quark-antiquark pairs as an example. Denote the two quark pairs as $q_1,\bar q_1, q_2, \bar q_2$. For the shift of $q_1-q_2$ as in (\ref{quarkshift}, \ref{massivecondition}), with the spin states (\ref{spinstates}), and taking the spin states of the two anti quarks $\bar q_1$ and $\bar q_2$ as $(\lmd_{\bar q_1},\td\bt_{\bar q_1})$ and $(\lmd_{\bar q_2},\td\bt_{\bar q_2})$ respectively, the amplitude calculated by BCFW technique is:
\beq
A({\hat q_1}^{-1\over 2},{\bar q_1}^{-1\over 2},{\hat q_2}^{z},{\bar q_2}^{-1\over 2})={2\over (k_{q_1}+k_{\bar q_1})^2}\left(\langle\lmd_{\bar q_1}, a\rangle [\td\bt_{q_1},\td\bt_{\bar q_2}]+\langle\lmd_{\bar q_2},\lmd_{q_1}\rangle [\td b,\td\bt_{\bar q_1}]\right).
\label{q1q2shift}
\eeq
The spin state of $q_2$ is correlated with that of $q_1$. In order to remove the correlation, a second shift is needed. Shifting $\bar q_1$ and $q_2$ this time, and choosing the spin states for them as $(\lmd_{\bar q_1}, \td\bt_{\bar q_1})$ and ${a'\choose \td b'}= {-\frac{1}{m}([\td \lmd_{q_2}\td\bt_{\bar q_1}]\lmd_{q_2}+[\td \bt_{q_2}\td\bt_{\bar q_1}]\bt_{q_2})\choose \td\bt_{\bar q_1}}$, the amplitude $A(z)$ can again be calculated by BCFW technique. Let the spin states of $q_1$ and $\bar q_2$ be ${\lmd_{q_1}\choose \td\bt_{q_1}}$, $(\lmd_{\bar q_2},\td\bt_{\bar q_2})$, the same as in the first shift, the amplitude is:
\bea
A({q_1}^{-1\over 2},{\hat{\bar q}_1}^{-1\over 2},{\hat q_2}^{z},{\bar q_2}^{-1\over 2})
={2\over (k_{q_1}+k_{\bar q_1})^2}\left(\langle\lmd_{ q_1}, a'\rangle [\td\bt_{\bar q_1},\td\bt_{\bar q_2}]+\langle\lmd_{\bar q_2},\lmd_{\bar q_1}\rangle [\td b',\td\bt_{q_1}]\right).
\label{barq1q2shift}
\eea
By a combination of (\ref{q1q2shift}) and (\ref{barq1q2shift}), one can obtain the amplitude with spin states ${\lmd_{q_1}\choose \td\bt_{q_1}}$, $(\lmd_{\bar q_1},\td\bt_{\bar q_1})$, ${\lmd_{q_2}\choose \td\bt_{q_2}}$ and $(\lmd_{\bar q_2},\td\bt_{\bar q_2})$:
\bea
A({q_1}^{-1\over 2},{\bar q_1}^{-1\over 2},{ q_2}^{-1\over 2},{\bar q_2}^{-1\over 2})=&&{2\over (k_{q_1}+k_{\bar q_1})^2}\left(\langle\lmd_{\bar q_1}, \lmd_{\bar q_2}\rangle[\td\bt_{q_1},\td \bt_{q_2}]+\langle\lmd_{\bar q_1}, \lmd_{q_2}\rangle [\td\bt_{q_1},\td\bt_{\bar q_2}]\right.\nb\\
&&\left.+\langle\lmd_{\bar q_2},\lmd_{q_1}\rangle [\td \bt_{q_2},\td\bt_{\bar q_1}]+\langle \lmd_{q_2}, \lmd_{q_1}\rangle [\td\bt_{\bar q_2},\td\bt_{\bar q_1}]\right).
\eea
Using little group generators \cite{Chen:2011ve}:
\beq
J^+= \left(\td\lmd\ppt{\td\bt}-\bt\ppt{\lmd} \right),~~  J^-= \left(\td\bt\ppt{\td\lmd}-\lmd\ppt{\bt}\right),
\eeq
one can obtain the amplitudes for other spin states:
\beq
A({q_1}^{\pm{1\over 2}},{\bar q_1}^{\pm{1\over 2}},{ q_2}^{\pm{1\over 2}},{\bar q_2}^{\pm{1\over 2}})=(J^+_{q_1})^{n_{q_1}}(J^+_{\bar q_2})^{n_{\bar q_2}} (J^+_{q_2})^{n_{q_2}} (J^+_{\bar q_1})^{n_{\bar q_1}}A({q_1}^{-1\over 2},{\bar q_1}^{-1\over 2},{ q_2}^{-1\over 2},{\bar q_2}^{-1\over 2}),
\eeq
where $n=1$ and $0$ correspond to helicity $+{1\over 2}$ and $-{1\over 2}$, i.e. spin states $(-\bt_{\bar q},\td\lmd_{\bar q})$ and $(\lmd_{\bar q},\td\bt_{\bar q})$, respectively. 

\subsection{BCFW Recursion Relation for Gauge Field Current}
In many cases off shell gauge field scattering amplitudes are also relevant, for example, as a sub diagram in a QCD scattering process or as building blocks for loop amplitudes. Besides the Berends-Giele recursion relation, BCFW technique is also extended to off shell amplitudes \cite{Feng:2011tw}. For the current $J^\mu(1,2,\cdots,n)$ with $n$ on shell YM legs, one can still choose a pair of on shell legs and do the usual BCFW deformation such that $\hat J^\mu(1,2,\cdots,n)\to 0$ as $z\to \infty$. Since for the current, gauge invariance no longer holds, one has to be careful about the two points: first, the reference spinors in the polarization vectors of the on shell legs should not change throughout recursion calculation; second, one need choose a gauge for the propagators, for which Feynman gauge is often chosen. In the on shell BCFW recursion relation (\ref{BCFW}), two light-like vectors $\epsilon^\pm$ are used as the polarization vectors for the two legs from breaking a propagator. In off shell recursion, due to the loss of gauge invariance, one needs two more light-like vectors $\epsilon^L\propto k$ (propagator momentum) and $\epsilon^T$, such that $\epsilon^{L/T}\cdot \epsilon^\pm=0$. With properly chosen normalizations such that $\epsilon^+\cdot \epsilon^-=1$ and $\epsilon^L\cdot \epsilon^T=1$, in four-dimensional spacetime one has the decomposition for the metric:
\beq
g_{\mu\nu}=\epsilon^+_\mu\epsilon^-_\nu+\epsilon^-_\mu\epsilon^+_\nu+\epsilon^L_\mu\epsilon^T_\nu+\epsilon^T_\mu\epsilon^L_\nu.
\eeq

%In the on shell BCFW recursion relation (\ref{BCFW}), two light-like vectors $\epsilon^\pm$ are used: not manifest

Choosing legs 1 and $n$ to do the BCFW deformation, one can get:
\bea
J^{\mu}\left(1,2,...,n\right) & = & \sum_{i=2}^{n-1}\sum_{h,\tilde h} \left[A\left(\hat 1,...,i,-\hat k_{1,i}^h\right)\cdot\frac{1}{k_{1,i}^{2}} \cdot J^{\mu}\left(\hat k_{1,i}^{\tilde h},i+1,...,\hat{n}\right)\right.\nb\\
&&+J^{\mu}\left(\hat{1},...,i,-\hat k_{1,i}^h\right)\left. \cdot\frac{1}{k_{i+1,n}^{2}}\cdot A\left(\hat k_{1,i}^{\tilde h},i+1,...,\hat{n}\right)\right].\nb\\ 
&& (h,\tilde h)=(+,-),(-,+),(L,T),(T,L)\label{currentBCFW}
\eea
Using Ward identity, the expression can be simplified a little. For example, by the Ward identity $(\hat k_{1,i}^L)_{\mu}\cdot J^{\mu}\left(i+1,...,\hat{n}\right)=0$, the $(h,\tilde h)=(T,L)$ configuration of the second term in (\ref{currentBCFW}) vanishes. Actually for on shell recursion, both $(h,\tilde h)=(T,L)$ and $(h,\tilde h)=(L,T)$ configurations of both terms in (\ref{currentBCFW}) vanish due to Ward identity, and (\ref{currentBCFW}) reduces to the on shell BCFW recursion (\ref{BCFW}).

\section{Amplitude Relations}\label{amprel}
For $n$ point YM amplitudes, there are $(n-1)!$ independent tree level primitive amplitudes at a first sight, which is the number of all non-cyclic permutations of the $n$ legs. Several important relations of the tree level primitive amplitudes have been discovered.

\begin{itemize}

\item Color-order reversed identity:
\beq 
A_n(1,2,...,n-1,n)=(-)^n A_n(n,n-1,...,2,1).
\label{colorreverse}
\eeq

\item $U(1)$-decoupling identity:
\beq
\sum_{\sigma~ \rm cyclic} A(1,\sigma(2),...,\sigma(n))=0.
\label{U1decouple}
\eeq

\item Kleiss-Kuijf (KK) relations conjectured in \cite{Kleiss:1988ne} and proved in \cite{DelDuca:1999rs}:
\beq
A_n(1,\{\alpha\}, n,\{\beta\}) = (-1)^{n_\beta}\sum_{\sigma\in OP(\{\alpha\},\{\beta^T\})} A_n(1,\sigma, n)~.
\label{KKrel}
\eeq
The order-preserved (OP) sum is over all permutations of the set $\alpha \bigcup \beta^T$ that preserve the relative ordering of legs in set $\alpha$ and the reversed relative ordering of set $\beta$. The number of legs in $\beta$ is $n_\beta$. The $U(1)$-decoupling identity (\ref{U1decouple}) is actually a special case of KK relations (\ref{KKrel}) when there is only one leg in the set $\beta$.
\item Bern-Carrasco-Johansson (BCJ) Relations conjectured in \cite{Bern:2008qj} and proved in \cite{Chen:2011jx,Feng:2010my}:
%\beq
%\sum_{\{\sigma\}\in P(O\{\alpha\}\cup O\{\beta\})}\sum_{i=1}^r\sum_{\sigma_J<\sigma_{\beta_i}} s_{\beta_i J} A_n(1,\{\sigma\},n)=0.
%\eeq
\beq
A_n(1,2,\{\alpha\},3,\{\beta\})=\sum_{\sigma_i\in POP(\{\alpha,\beta\})}A_n(1,2,3,\sigma_i) {\mathcal F}_i,
\label{BCJrel}
\eeq
where the sum is over all partially ordered permutations (POP) that preserve the ordering of legs in set $\beta$, and ${\mathcal F}_i$ are some functions of the momenta given in \cite{Bern:2008qj}. When there is only one leg in set $\alpha$, the relation is called "Fundamental BCJ Relation" \cite{Feng:2010my}:
\beq
0=s_{21} A_n(1,2,3,\cdots,n-1,n)+\sum_{j=3}^{n-1}(s_{21}+\sum_{t=3}^j s_{2t}) A_n(1,3,4,\cdots,j,2,j+1,\cdots,n),
\label{fundaBCJ}
\eeq
with $s_{ij}=(k_i+k_j)^2$. It can be used to derive the other BCJ relations, with the help of KK relations \cite{Feng:2010my}.
\end{itemize}

Using KK relations (\ref{KKrel}), one can fix the positions of legs 1 and $n$ in the primitive amplitudes, thus reducing the number of independent primitive amplitudes to $(n-2)!$. By BCJ relations, one can fix the positions of legs 1, 2 and 3, and the number of independent primitive amplitudes is reduced to $(n-3)!$.

There are concise proofs \cite{Feng:2010my} of these amplitude relations by induction and BCFW recursion relation. A common spirit of the proofs is that amplitudes are decomposed into two sub amplitudes by BCFW recursion relation, and the induction assumption can be applied to the two sub amplitudes.

In the proof of fundamental BCJ relation (\ref{fundaBCJ}), the fact that for BCFW shifting of a pair of non-adjacent YM legs the complexified amplitude $A(z)\to \mathcal O(z^{-2})$ as $z\to \infty$--instead of simply being vanishing--plays a key role. The main ideas of the proof are contained in the four point example. Shifting the legs 1 and 2, multiplying $\frac{\hat s_{23}(z)}{z}$ with the four point KK relation $A(\hat 1,\hat 2,3,4)+A(\hat 1,3,4,\hat 2)+A(\hat 1,3,\hat 2,4)=0$, and doing a contour integral which includes all the finite poles, one gets:
\beq
\oint \frac{dz}{z} \hat s_{23}(z) [A(\hat 1,\hat 2,3,4)+A(\hat 1,3,4,\hat 2)+A(\hat 1,3,\hat 2,4)]=0.
\eeq
The third term vanishes due to its large $z$ scaling behavior $\mathcal O(z^{-2})$. Performing the integral one gets $s_{23} A(1,2,3,4)+(s_{23}+s_{43}) A(1,3,4,2)=0$. Since $A(1,2,3,4)$ is the same as $A(2,3,4,1)$, and $A(2,4,3,1)=A(1,3,4,2)$ due to color-order reversed identity, one can rewrite it as:
\beq
s_{23} A(2,3,4,1)+(s_{23}+s_{43}) A(2,4,3,1)=0,
\eeq
which is the fundamental four point BCJ relation.

\section{A Latest Important Development: Amplitudes and Positive Grassmannian}

In the very recent years, Nima Arkani-Hamed et. al. have achieved a new understanding of amplitudes \cite{ArkaniHamed:2012nw, ArkaniHamed:2009dg, ArkaniHamed:2009sx, ArkaniHamed:2009vw} by relating them to positive Grassmannian. In this new approach, amplitudes to all loop levels can be constructed with on shell three point amplitudes, without any off shell line in the diagrams. In this section, I provide a preliminary introduction to this new approach. The formulae are all from \cite{ArkaniHamed:2012nw}.

Currently most of the results in this approach have been for planar $\mathcal N=4$ SYM amplitudes. For $\mathcal{N}=4$ SYM, all the helicity states can be grouped into a single Grassmann coherent state labeled by Grassmann (anti-commuting) parameters $\widetilde{\eta}^I$ for $I= 1,\ldots, 4$:
\beq
\left|\widetilde \eta \right> \equiv \left|+1 \right> + \widetilde \eta^I  \left|+\text{$\textstyle\frac{1}{2}$}\right>_I + \frac{1}{2!} \widetilde \eta^I \widetilde \eta^J \left|0 \right>_{IJ} + \frac{1}{3!} \epsilon_{IJKL} \widetilde \eta^I \widetilde \eta^J \widetilde \eta^K \left|-\text{$\textstyle\frac{1}{2}$} \right>^L + \frac{1}{4!} \epsilon_{IJKL} \widetilde \eta^I \widetilde \eta^J \widetilde \eta^K \widetilde \eta^L \left| -1 \right>.
\eeq
For three point amplitudes with the momenta $k_i=\lmd_i \td\lmd_i$, $i=1,2,3$, momentum conservation allows two solutions ie. $\lmd_1\propto\lmd_2\propto\lmd_3$ or $\td\lmd_1\propto\td\lmd_2\propto\td\lmd_3$, which correspond to the two building blocks:
\bea
\mathcal{A}^{(1)}_3&=&\raisebox{-0.8cm}{\includegraphics[width=2cm,height=2cm]{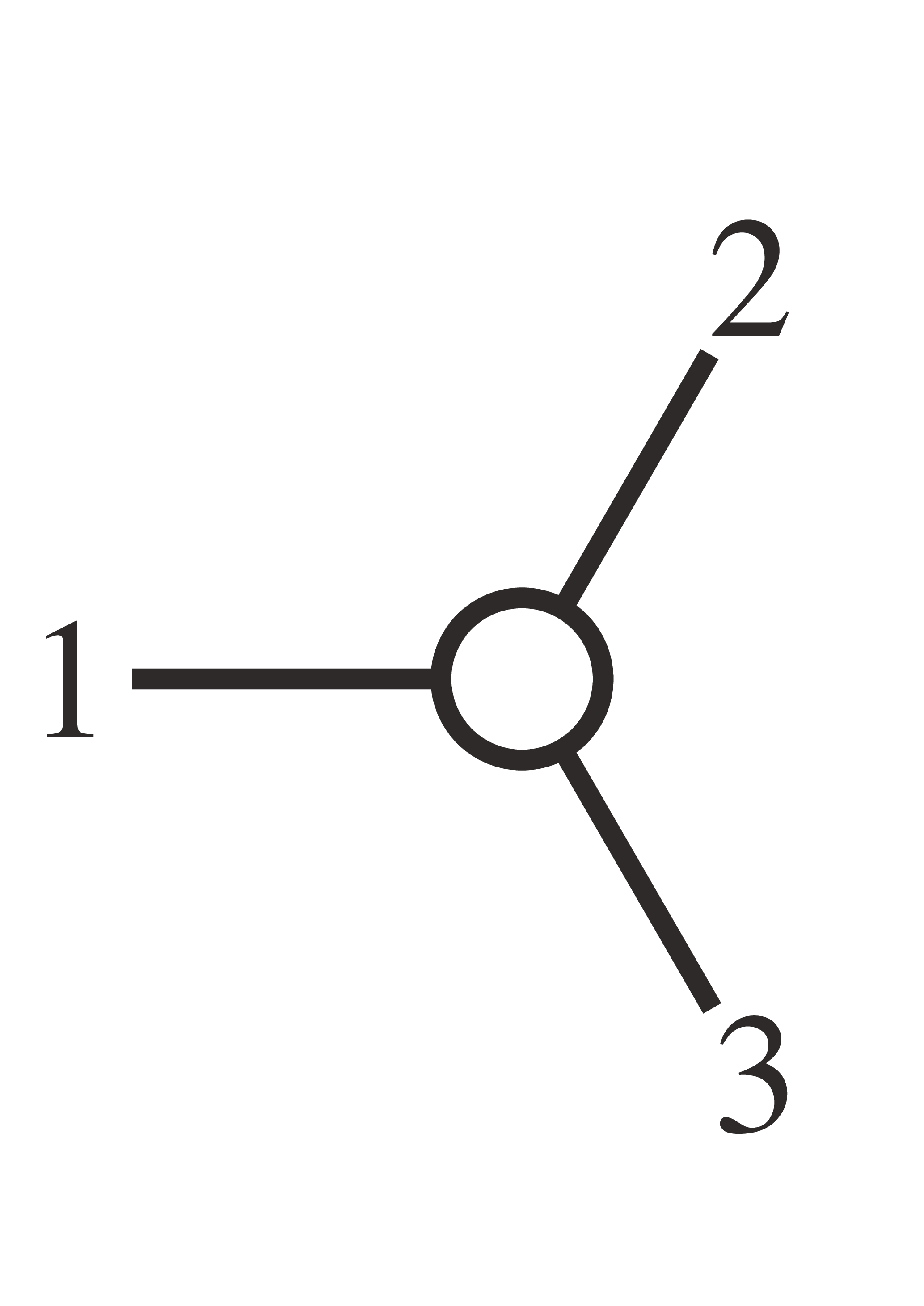}}=\displaystyle\,\frac{\delta^{1\times4}\big([23]\widetilde{\eta}_1+[31]\widetilde{\eta}_2+[12]\widetilde{\eta}_3\big)}{[12][23][31]}\delta^{2\times2}\big(\lambda_1 \widetilde \lambda_1 + \lambda_2 \widetilde \lambda_2 + \lambda_3 \widetilde \lambda_3\big),\nonumber\\
\mathcal{A}^{(2)}_3&=&\raisebox{-0.8cm}{\includegraphics[width=2cm,height=2cm]{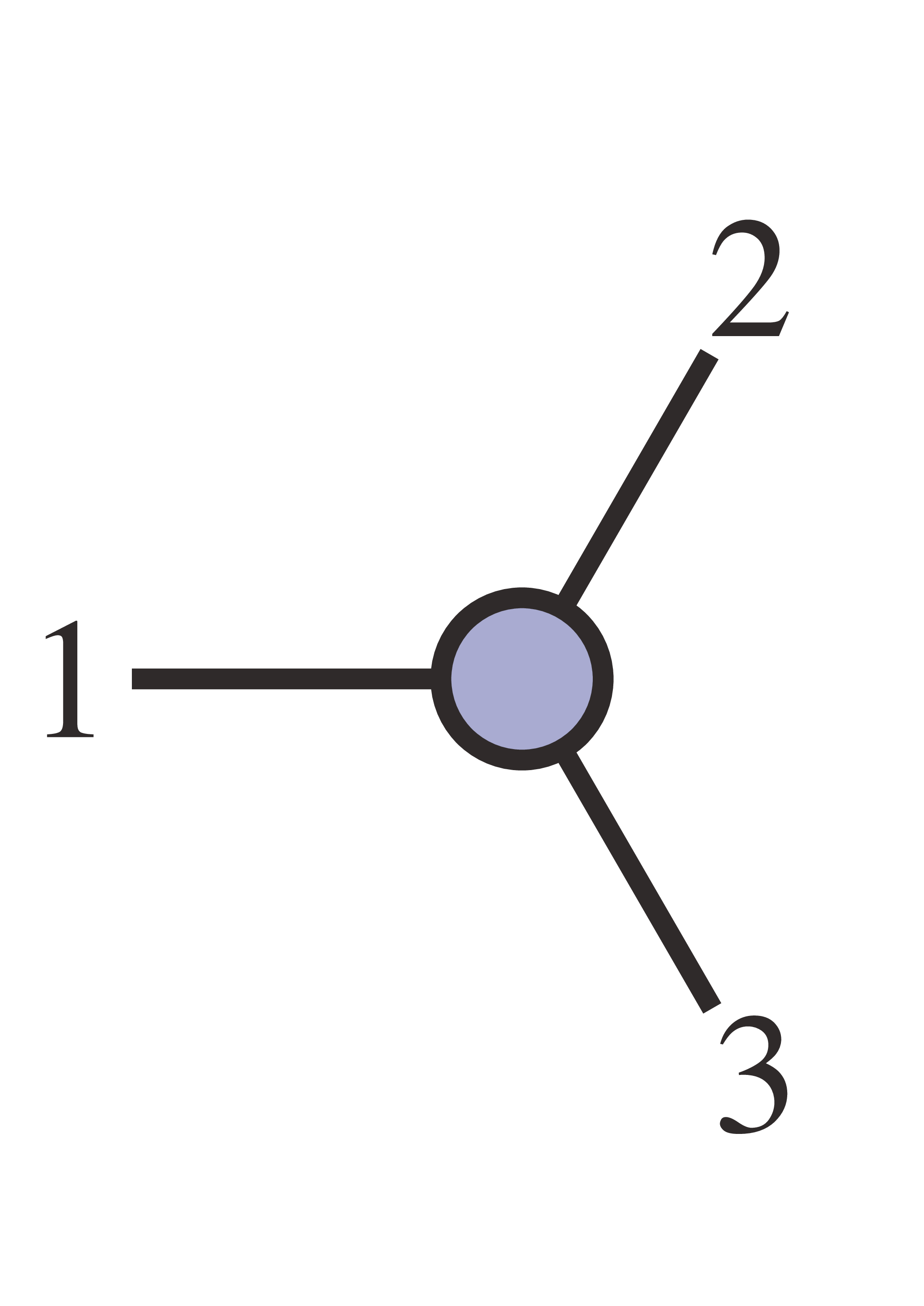}}=\displaystyle \frac{\delta^{2\times4}\big(\lambda_1\widetilde{\eta}_1+\lambda_2\widetilde{\eta}_2+\lambda_3\widetilde{\eta}_3\big)}{\langle12\rangle\langle23\rangle\langle31\rangle} \delta^{2\times2}\big(\lambda_1 \widetilde \lambda_1 + \lambda_2 \widetilde \lambda_2 + \lambda_3 \widetilde \lambda_3\big). 
\label{grassmanblock}
\eea
On shell diagrams for n-point scattering can be built up by joining the above white and black trivalent blocks. There are two operations of the diagrams--''merge" and ''square move"--after which the diagrams are equivalent to the original diagrams. After modulo these equivalent operations, the authors defined reduced diagrams, which they found to correspond to permutations of the external legs. There can appear ''bubble"s in the diagrams which correspond to loops in the familiar treating of amplitudes.

There are $n$ 2-componet spinors $\lambda_i$ and anti-spinors $\td\lmd_i$ to describe the kinematics for n-point scattering. All the $\lambda_i$'s can be collectively denoted by a $(2\!\times\!n)$-matrix:
\beq
\lambda\equiv\left(\begin{array}{@{}cccc@{}}\lambda_1^1&\lambda_2^1&\cdots&\lambda^1_n\\\lambda_1^2&\lambda_2^2&\cdots&\lambda^2_n\end{array}\right)\raisebox{-.75pt}{\text{{\Large$\Leftrightarrow$}}}\left(\begin{array}{@{}cccc@{}}\lambda_1&\lambda_2&\cdots&\lambda_n\end{array}\right).
\eeq
This $\lambda$ can be viewed as a $2$-plane in $n$ dimensions---an element of $G(2,n)$ as realized in \cite{ArkaniHamed:2009dn}. Similarly $\widetilde \lambda$ is defined and is an element of $G(2,n)$.

In Grassmannian space, one can linearize the momentum conservation condition $\sum\limits_i \lambda_i \td\lmd_i=0$ by introducing auxiliary planes. Then the two building blocks in (\ref{grassmanblock}) can be represented as:
\bea
\mathcal{A}^{(1)}_{3}&=&\!\int\!\!\frac{d^{1\times3}W}{\mathrm{vol}(GL(1))}\frac{\delta^{1\times4}\big(W\!\cdot\!\widetilde{\eta}\big)}{(1)(2)(3)} \delta^{1\times2}\big(W\!\cdot\!\widetilde{\lambda}\big) \delta^{2\times2}\big(\lambda\!\cdot\!W^\perp\!\big),\\
\mathcal{A}^{(2)}_{3}&=&\!\int\!\!\frac{d^{2\times3}B}{\mathrm{vol}(GL(2))}\frac{\delta^{2\times4}\big(B\!\cdot\!\widetilde{\eta}\big)}{(12)(23)(31)} \delta^{2\times2}\big(B\!\cdot\!\widetilde{\lambda}\big) \delta^{2\times1}\big(\lambda\!\cdot\!B^\perp\!\big),
\eea
where $W\in G(1,3)$ is auxiliary 1-plane in 3-dimensions; $B\in G(2,3)$ is auxiliary 2-plane in 3-dimensions; $W^\perp$ and $B^\perp$ are the orthogonal planes for $W$ and $B$; $W$ being given as a $(1\times 3)$-matrix $W\equiv(w_1,w_2,w_3)$ then $(a)\equiv(w_a)$; $B$ being $(2\times 3)$-matrix then $(ab)$ is the determinant of the matrix formed by the a-th and b-th columns in $B$. In short, each white trivalent block carries a measure:
\beq
d\Omega_w\equiv\frac{d^{1\times3}W}{\mathrm{vol}{(GL(1))}}\frac{1}{(a)(b)(c)},
\label{measure1}
\eeq
and each black trivalent block:
\beq
d\Omega_b\equiv\frac{d^{2\times3}B}{\mathrm{vol}{(GL(2))}}\frac{1}{(a\,b)(b\,c)(c\,a)}.
\label{measure2}
\eeq
The integration measure for a whole on shell diagram is just the product of the above two types of measures corresponding to each trivalent building block in the diagram. With these auxiliary planes $W$ and $B$, the momentum conservation constraints on $\lambda$ and $\td\lmd$ are decoupled such that $\lmd$ and $\td\lmd$ appear linearly in the $\delta$-functions. This makes the integration over internal lines--reminding that they are all on shell lines--actually trivial.

%Grassmannian formalism is more powerful in exhibiting singularities than traditional ways. For example, the four-point tree-level amplitude is represented by:
%\beq
%\includegraphics[width=3cm,height=3cm]{four_point_box.pdf}
%\eeq
%Both s and u channel singularities are contained in this single diagram. And it is manifest that the singularities are determined by on shell data.
%Combining these two terms, the singularities of the {\it full amplitude} are, \cite{ArkaniHamed:2010kv}:
%\vspace{-0.0cm}\eq{\hspace{-4.5cm}\raisebox{-39.5pt}{\includegraphics[scale=.95]{general_amplitude_boundary}}\hspace{-3cm}\vspace{-0.0cm}\label{general_amplitude_boundary}}
%Here we have suggestively used the symbol ``$\partial$'' to signify ``singularity of''. Of course, the symbol $\partial$ is often used to denote ``boundary'' or ``derivative''; we will soon see that all of these senses are appropriate.

For $\mathcal{N}<4$ SYM, there appear Jaccobi factors in the measures (\ref{measure1}) and (\ref{measure2}), which is the major apparent difference from the case with maximal supersymmetry. Results for $\mathcal{N}<4$ SYM have been rare compared to $\mathcal{N}=4$ SYM.

\section{Motivations and Sketches of Our Work in Gauge Field Amplitudes}
%Remark the works on boundary behaviors
From the previous sections, we have seen that the behavior of an amplitude under large deformations of momenta is a very important property of the amplitude. First, with vanishing behavior at large deformations, BCFW recursion relation can be constructed. Second, as in the proof of BCJ relations in Section \ref{amprel}, $\mathcal O(z^{-2})$ scaling behavior of the amplitude $A(z)$ with a pair of non-adjacent legs deformed is crucial for BCJ relations to hold in YM theory. Thirdly, were there better scaling behavior than $\mathcal O(z^{-2})$ at large deformations, there would be more powerful amplitude relations, which could further decrease the number of independent primitive amplitudes. Thus, studying the large deformation behavior (boundary behavior) of amplitudes are quite meaningful.

The boundary behavior of on shell YM amplitudes are studied in \cite{ArkaniHamed:2008yf}. Besides on shell amplitudes, off shell amplitudes are also very important. First, off shell YM amplitudes like gluon amplitudes are often encountered as sub diagrams in a complete QCD process. Second, due to confinement, gluon scattering is in principle off shell. Third, off shell tree amplitudes are useful for building up loop level amplitudes. In \cite{Nima1} and \cite{Boels}, the authors analyzed the boundary behavior in Arkani-Hamed-Kaplan (AHK) gauge \cite{Nima1} with legs not shifted being off shell. We intended to analyze the boundary behavior in Feynman gauge with all legs off shell, to uncover why amplitudes with non-adjacent legs deformed have better scaling properties than those with adjacent legs deformed, to construct off shell recursion relation and to obtain off shell amplitude relations. These are a sketch of our joint work \cite{Zhang:2013cha} with my advisor Dr. Gang Chen that will be described in Chapter \ref{boubehav}.
%these are all the what about in ...

In a previous work by Gang Chen \cite{Chen1}, he developed a new recursion relation, induced from complexified Ward identity, to calculate boundary terms for off shell YM amplitudes. Then we together generalized the method to obtain recursion relations for the whole off shell amplitudes both at tree and one loop levels \cite{Chen2}. These are described in Chapter \ref{Wardidenrec}.

\chapter{Boundary Behaviors for General Off-shell Amplitudes in Yang-Mills Theory}\label{boubehav}

\section{Introduction}

Recent years, BCFW recursion relation \cite{Britto:2004nj,Britto:2004nc,Britto:2004ap} has been widely used in various quantum field theories. At tree level, the amplitudes in pure Yang-Mills theory are rational functions of external momenta and external polarization vectors in spinor form 
\cite{Parke:1986gb,Xu:1986xb,Berends:1987me,Kosower,Dixon:1996wi,Witten1}. According to this, BCFW recursion relation was proposed and developed in \cite{Britto:2004nj,Britto:2004nc,Britto:2004ap}, and then proved in \cite{Britto:2005fq} using the pole structures of tree level on shell amplitudes.  Besides the progresses on on-shell amplitudes,  off-shell amplitudes are also studied using BCFW or other methods \cite{Feng:2011tw, Chen1, Chen3, Britto, Chen2, Berends:1987me}.  Although off-shell amplitudes are gauge dependent and usually complicated, they are of great importance in the phenomenological calculations. Moreover, off-shell amplitudes emerge in the construction of on-shell loop level amplitudes.   Hence it is also valuable to get recursion relations for general off-shell amplitudes.

BCFW recursion relation works very well when the amplitudes vanish at large BCFW shift limit. Hence the boundary behavior of the amplitudes is very important for building up BCFW recursion relation.  Furthermore, improved boundary behavior also implies new amplitude relations like BCJ relations \cite{Bern:2008qj,Boels,Feng:2010my}.  For tree and loop level Yang-Mills amplitudes, the boundary behavior was analyzed in \cite{Nima1, Boels} in AHK gauge for both adjacent and non-adjacent BCFW shifts. Hence a natural question is whether it is possible to analyze the boundary behavior in usual Feynman gauge, and why essentially non-adjacent BCFW shifts result in improved boundary behavior in Feynman gauge compared with adjacent BCFW shifts. Furthermore, according to the boundary behavior, can we build up the recursion relation correspondingly for general off-shell amplitudes?

In this chapter, we first describe the procedure to obtain general off-shell amplitudes recursively in \sref{Sec:Off-shell} using BCFW technique and the technique in \cite{Chen3}. The procedure bases on the boundary behavior of amplitudes in Feynman gauge, which is proved in the following sections. We use this technique to calculate off shell amplitudes and analyze off shell amplitude relations. In \sref{Sec:Reduce} we prove that the boundary behavior of amplitudes can be analyzed using reduced vertexes, which are defined in the section. Using the conclusion of this section, we directly obtain the boundary behavior for adjacent shifts. In \sref{Sec:Non-Adj} we analyze the behavior of the amplitudes for non-adjacent shifts. We find that permutation sum greatly improves the boundary behavior for non-adjacent shifts compared to adjacent shifts. 

%In \sref{Sec:Conclusion}, we discuss some relationships between our work and other 

%\textcolor{red}{and we feel that these permutation sums also underlie the discussions in...}
 
\section{Recursion Relation for General Off-shell Amplitudes}\label{Sec:Off-shell}
%\section{Discussion on Off-shell BCFW Recursion Relations}\label{Sec:Off-shell}
Throughout this chapter, we will use $k_l$ and $k_r$ for the pair of momenta to be shifted, with indices $\mu$ and $\nu$. The momenta shift is \cite{Schwinn:2007ee,Badger:2005zh,Badger:2005jv}:
\beq
\hat k_l=k_l+z \eta~~~~~~\hat k_r=k_r-z\eta,
\label{momshift4}
\eeq
with 
\beq\label{conditionEta4}
\eta^2=k_l\cdot \eta=k_r\cdot \eta=0. 
\eeq 
Since we need to shift two off-shell legs for general off-shell amplitudes in Yang-Mills theory, we do not require the momenta of the two shifted legs, ie. $k_l$ and $k_r$, to be on-shell. Other un-shifted legs are also in general off shell. There are no propagators for the un-contracted external legs. Let two arbitrary vectors $\epsilon_{l\ \mu}$ and $\epsilon_{r\ \nu}$ couple to the two shifted legs, the amplitude component is $\mathcal{M}^{\mu\nu} \epsilon_{l\ \mu} \epsilon_{r\ \nu}$. The indices of other external legs are suppressed.

To get all the components of $\mathcal{M}^{\mu\nu}$, we need to know the amplitudes $\mathcal{M}^{\mu\nu} \epsilon_{l\ \mu} \epsilon_{r\ \nu}$ for $4\times 4$ independent pairs of $\epsilon_{l\ \mu}$ and $\epsilon_{r\ \nu}$ in four dimensional field theory.  According to \cite{Chen1,Chen2}, when one of the shifted legs is contracted with its momentum, the amplitude is reduced to less point amplitudes according to the cancellation details of Ward identity in Feynman gauge. For example with color ordered amplitude $\mathcal M(k_1,k_2,\cdots,k_{N+1})^{\mu_1\mu_2\cdots\mu_{N+1}}$, we derive:

{\begin{small}
\bea
&&k_{\mu_{N+1}} \mathcal{M}(k_1,k_2,\cdots,k_{N+1})^{\mu_1\mu_2\cdots\mu_{N+1}}\label{longitudinalcomp}\\
=&&\frac{i}{\sqrt{2}}\frac{g_{\rho\sigma}}{(k_1+k_{N+1})^2}\mathcal{M}(k_1,-k_1)^{\mu_1\sigma} \mathcal{M}(k_2,k_3,\cdots,k_{N},-K_{2,N})^{\mu_2\mu_3\cdots\mu_N \rho}\nonumber\\
&&-\frac{i}{\sqrt{2}}\frac{g_{\rho\sigma}}{(k_N+k_{N+1})^2}\mathcal{M}(k_N,-k_N)^{\mu_N\sigma} \mathcal{M}(k_1,k_2,\cdots,k_{N-1},-K_{1,N-1})^{\mu_1\mu_2\cdots\mu_{N-1} \rho}\nonumber\\
&&+\sum_{j=1}^{N-1}\frac{i}{\sqrt{2}K_{1,j}^2 K_{j+1,N}^2} [K_{1,j\ \rho}\mathcal{M}(k_1,k_2,\cdots,k_j,-K_{1,j})^{\mu_1\mu_2\cdots\mu_j \rho}] \nonumber\\
&&\ \ \ \ \ \ \ \ \ \ \ \ \ \ \ \ \ \ \ \ \ \ \ \ \ \ \ \ \cdot[k_{N+1\ \sigma} \mathcal{M}(k_{j+1},k_{j+2},\cdots,k_N,-K_{j+1,N})^{\mu_{j+1}\mu_{j+2}\cdots\mu_N \sigma}]\nonumber\\
&&-\sum_{j=1}^{N-1}\frac{i}{\sqrt{2}K_{1,j}^2 K_{j+1,N}^2}[ k_{N+1\ \rho}\mathcal{M}(k_1,k_2,\cdots,k_j,-K_{1,j})^{\mu_1\mu_2\cdots\mu_j \rho} ]\nonumber\\
&&\ \ \ \ \ \ \ \ \ \ \ \ \ \ \ \ \ \ \ \ \ \ \ \ \ \ \ \ \cdot[K_{j+1,N\ \sigma} \mathcal{M}(k_{j+1},k_{j+2},\cdots,k_N,-K_{j+1,N})^{\mu_{j+1}\mu_{j+2}\cdots\mu_N \sigma}].\nonumber
\eea
\end{small}}
In the above we have reduced $k_{\mu_{N+1}} \mathcal{M}(k_1,k_2,\cdots,k_{N+1})^{\mu_1\mu_2\cdots\mu_{N+1}}$ to less point amplitudes. $K_{1,j}=k_1+k_2+\cdots+k_j$ and $K_{j+1,N}=k_{j+1}+\cdots+k_N$. The indices for the amplitudes are in the same order as the momenta in the brackets of the amplitudes. We define $\mathcal{M}(k_1,-k_1)^{\mu_1 \rho}=i k_1^2 g^{\mu_1\rho}$ and $\mathcal{M}(k_N,-k_N)^{\mu_N \sigma}=i k_N^2 g^{\mu_N \sigma}$.

Hence to build up BCFW recursion relation for general off-shell amplitudes, we only need to consider other three transverse components of the external vectors coupling to the shifted legs.

For convenience, the momenta can be written in spinor form \cite{Dittmaier:1998nn}:
\be
k=\left\{ \begin{array}{cl}
\lmd\td\lmd+\bt\td\bt & ~~~~~\textrm{if $k$ is time-like}\\
\lmd\td\lmd-\bt\td\bt & ~~~~~\textrm{if $k$ is space-like.}\\
\lmd\td\lmd & ~~~~~\textrm{if $k$ is light-like}
\end{array}\right.
\ee
Here we exemplify the cases with time-like or light-like $k_l$ and $k_r$, and the case with either space-like $k_l$ or $k_r$ is similar.

We first consider the case with both $k_l$ and $k_r$ off shell. We write $k_l$ as $k_l= \lmd_l\td\lmd_l+\bt_l\td\bt_l$ \cite{Chalmers}.  As analyzed in \cite{Chen3}, since there is $SU(2)$ freedom for choosing the  spinors of $k_l$, we can choose them such that $(\lmd_l\td\bt_l)\cdot k_r=(\bt_l\td\lmd_l)\cdot k_r=0$. At the same time we can set the spinors for $k_r$ to be either $k_r= \lmd_l\td\lmd_r+\bt_r\td\bt_l$ or $k_r= \lmd'_r\td\lmd_l+\bt_l\td\bt'_r$. Hence we have two choices for the shifting momentum $\eta$ as $\eta=\lmd_l\td\bt_l$ or $\eta'=\bt_l\td\lmd_l$, which satisfy the condition \eref{conditionEta4}.

First for $\eta=\lmd_l\td\bt_l$, the external vectors are written as 
 \be
\begin{array}{cc}
\epsilon_l\in\left( \begin{array}{cl}
 \epsilon_l^-&= \lmd_l\td\bt_l\\
 \epsilon_l^+&= \bt_l\td\lmd_l\\
  \epsilon_l^{\perp}&= \lmd_l\td\lmd_l-\bt_l\td\bt_l,
% \epsilon_l^k&= \lmd_l\td\lmd_l+\bt_l\td\bt_l,
\end{array}\right) &~~~~~\epsilon_r\in\left( \begin{array}{cl}
 \epsilon_r^-&= \lmd_l\td\bt_l\\
 \epsilon_r^+&= \bt_r\td\lmd_r\\
  \epsilon_r^{\perp}&= \lmd_l\td\lmd_r-\bt_r\td\bt_l-z\lmd_l\td\bt_l
% \epsilon_r^k&= \lmd_l\td\lmd_r+\bt_r\td\bt_l,
\end{array}\right).
\ea
\label{epsilonset1}
\ee
%do not change \epsilon_l^\perp?
Under the momenta shift \eref{momshift4}, we have
\bea\label{spinorShift}
\td\lmd_l&\rightarrow& \hat{\td\lmd}_l=\td\lmd_l+z\td\bt_l \nb\\
\bt_r&\rightarrow& \hat\bt_r=\bt_r-z\lmd_l.
\eea
In $\epsilon_r^\perp$, we add the term $-z\lmd_l\td\bt_l$, such that after the momenta shift \eref{spinorShift}, $\hat\epsilon_r^\perp$ is independent of $z$ and still $\hat k_r\cdot\hat\epsilon_r^\perp=0$.

%Similar for $\epsilon_l^\perp$.

%But can little group generates the component with $\epsilon^\perp$? Or use some linearity property?

Then for $\eta'=\bt_l\td\lmd_l$, we just replace $\epsilon_r$ with $\epsilon'_r$ which is defined as following:
\be
\epsilon'_r\in\left( \begin{array}{cl}
 \epsilon_r^{'-}&= \lmd'_r\td\bt'_r\\
 \epsilon_r^{'+}&= \bt_l\td\lmd_l\\
  \epsilon_r^{'\perp}&= \lmd'_r\td\lmd_l-\bt_l\td\bt'_r-z\bt_l\td\lmd_l
 %\epsilon_r^{'k}&= \lmd'_r\td\lmd_l+\bt_l\td\bt'_r
\end{array}\right).
\label{epsilonset2}
\ee
Under the momenta shift, we have 
\bea\label{spinorShift2}
\lmd_l&\rightarrow& \hat{\lmd}_l=\lmd_l+z\bt_l \nb\\
\td\bt'_r&\rightarrow& \hat{\td\bt}'_r=\td\bt'_r-z\td\lmd_l.
\eea
If one of the legs is on shell and the other is off-shell, without loss of generality, we set $l$-leg to be on-shell and $r$-leg to be off-shell. Writing $k_l$ as $\lmd_l\td\lmd_l$ and using the little group transformation of $k_r$, the momentum of $r$-leg can be written as $k_r=\lmd_l\td\lmd_r+\bt_r\td\bt'_r=\lmd'_r\td\lmd_l+\bt{''}_r\td\bt{'''}_r$. Correspondingly, one of the shifting momentum is $\eta=\lmd_l \td\bt'_r$ and the other is $\eta'=\bt^{''}_r\td\lmd_l$. When the shifting momentum is $\eta$, the external vectors  are written as 
 \be
\begin{array}{cc}
\epsilon_l\in\left( \begin{array}{cl}
 \epsilon_l^-&= {\lmd_l\td\bt'_r\over [\td\lmd_l,\td\bt'_r]}\\
 \epsilon_l^+&={\bt_l\td\lmd_l\over \la\bt_l,\lmd_l\ra}\\
\end{array}\right) &~~~~~\epsilon_r\in\left( \begin{array}{cl}
 \epsilon_r^-&= \lmd_l\td\bt'_r\\
 \epsilon_r^+&= \bt_r\td\lmd_r\\
  \epsilon_r^{\perp}&=\lmd_l\td\lmd_r-\bt_r\td\bt'_r-z \lmd_l\td\bt'_r
 %\epsilon_r^k&=\lmd_l\td\lmd_r+\bt_r\td\bt'_r
\end{array}\right) .
\ea
\ee
Under the momenta shift, the spinors transform as 
\bea\label{spinorShift3}
\td\lmd_l&\rightarrow& \hat{\td\lmd}_l=\td\lmd_l+z\td\bt'_r \nb\\
\bt_r&\rightarrow& \hat\bt_r=\bt_r-z\lmd_l.
\eea
When the shifting momentum is $\eta'$, then the external vectors can be written as 
\be
\begin{array}{cc}
\epsilon_l\in\left( \begin{array}{cl}
 \epsilon_l^-&= {\lmd_l\td\bt'_r\over [\td\lmd_l,\td\bt'_r]}\\
 \epsilon_l^+&={\bt''_r\td\lmd_l\over \la\bt''_r,\lmd_l\ra}\\
\end{array}\right) &~~~~~\epsilon'_r\in\left( \begin{array}{cl}
 \epsilon_r^{'-}&= \lmd'_r\td\bt{'''}_r\\
 \epsilon_r^{'+}&= \bt{''}_r\td\lmd_l\\
  \epsilon_r^{'\perp}&=\lmd'_r\td\lmd_l-\bt{''}_r\td\bt{'''}_r-z\bt{''}_r\td\lmd_l
%\epsilon_r^{'k}&= \lmd'_r\td\lmd_l+\bt{''}_r\td\bt{'''}_r
\end{array}\right).
\ea
\ee
Correspondingly, the spinors transform as 
\bea\label{spinorShift4}
\lmd_l&\rightarrow& \hat{\lmd}_l=\lmd_l+z\bt^{''}_r \nb\\
\td\bt^{'''}_r&\rightarrow& \hat{\td\bt}^{'''}_r=\td\bt^{'''}_r-z\td\lmd_l.
\eea

The case with both shifted lines on-shell has been discussed in \cite{Boels}.

To use BCFW recursion relation for the full amplitudes, we need to analyze the boundary behavior of the amplitudes with shifted momenta. For all the cases discussed above, the following conditions hold:
\be\label{conditionVector}
\hat k_l\cdot \hat\epsilon_l=\hat k_r \cdot \hat\epsilon_r=0.
\ee

As will be proved in the following sections, under the conditions \eref{conditionEta4} and \eref{conditionVector}, we have  
\be\label{behaviorOff}
\mathcal{\hat M}^{\mu\nu}=\left\{ \begin{array}{cl}
z  A_1 g^{\mu\nu}+ A_0 g^{\mu\nu}+ B^{\mu\nu}+\mathcal{O}({1\over z}) &~~~\text{for adjacent shift}\\
 A'_0 g^{\mu\nu}+\mathcal{O}({1\over z}) &~~~\text{for non-adjacent shift}
\end{array}\right. .
\ee
In \eref{behaviorOff}, all the un-shifted and shifted external legs can be off-shell. 

%If the ``soft" lines are all on-shell, the $z^{-1}$ order can be further determined as ${1\over z} (A'_{-1} g^{\mu\nu}+ B^{'\mu\nu})$ for non-adjacent shift. $B^{\mu\nu}$ and $B^{'\mu\nu}$ are tensors antisymmetric in $\mu$ and $\nu$. 

According to \eref{behaviorOff}, we can get the large $z$ scaling behavior for general off-shell amplitudes $\mathcal{M}^{\mu\nu}$ $\epsilon_{l\ \mu}$ $\epsilon_{r\ \nu}$ for all the BCFW shifts above:
\begin{itemize}
\item Both $k_l$ and $k_r$ off-shell with shifting momentum: $\eta=\lmd_l\td\bt_l$
\bea\label{behaviorTable1}
\left.\begin{array}{c|c|c|c}
 &\epsilon^-_r &\epsilon^+_r&\epsilon_r^{\perp}\\ \hline 
\epsilon^-_l & z^{-1}&z^2&z^0 \\ \hline 
 \epsilon^+_l&z^2 &z^3&z^2 \\ \hline 
\epsilon_l^{\perp}  &z &z^3&z^2 \\ 
\end{array}\right. &~~~~~~\left.\begin{array}{c|c|c|c}
 &\epsilon^-_r &\epsilon^+_r&\epsilon_r^{\perp}\\ \hline 
\epsilon^-_l & z^{-1}&z&z^{-1} \\ \hline 
 \epsilon^+_l&z &z^2&z\\ \hline 
\epsilon_l^{\perp}  &z^{0} &z^2&z \\ 
\end{array}\right. \nb \\
\text{Adjacent}&~~\text{Non-adjacent }
\eea
\item Both $k_l$ and $k_r$ off-shell with shifting momentum $\eta'=\bt_l \td\lmd_l$
\bea\label{behaviorTable2}
\left.\begin{array}{c|c|c|c}
 &\epsilon^-_r &\epsilon^+_r&\epsilon_r^{\perp}\\ \hline 
\epsilon^-_l & z^{3}&z^2&z^2 \\ \hline 
 \epsilon^+_l&z^2 &z^{-1}&z^0 \\ \hline 
\epsilon_l^{\perp}  &z^3 &z&z^2 \\ 
\end{array}\right. &~~~~~~\left.\begin{array}{c|c|c|c}
 &\epsilon^-_r &\epsilon^+_r&\epsilon_r^{\perp}\\ \hline 
\epsilon^-_l & z^{2}&z&z \\ \hline 
 \epsilon^+_l&z &z^{-1}&z^{-1} \\ \hline 
\epsilon_l^{\perp}  &z^2 &z^{0}&z \\ 
\end{array}\right. \nb \\
\text{Adjacent}&~~\text{Non-adjacent }
\eea
\item $k_l$ on-shell and $k_r$ off-shell with shifting momentum $\eta=\lmd_l \td\bt'_r$
\bea\label{behaviorTable3}
\left.\begin{array}{c|c|c|c}
 &\epsilon^-_r &\epsilon^+_r&\epsilon_r^{\perp}\\ \hline 
\epsilon^-_l & z^{-1}&z^2&z^0 \\ \hline 
 \epsilon^+_l&z^2 &z^3&z^2 \\ 
\end{array}\right. &~~~~~~\left.\begin{array}{c|c|c|c}
 &\epsilon^-_r &\epsilon^+_r&\epsilon_r^{\perp}\\ \hline 
\epsilon^-_l & z^{-1}&z&z^{-1} \\ \hline 
 \epsilon^+_l&z &z^2&z \\ 
\end{array}\right. \nb \\
\text{Adjacent}&~~\text{Non-adjacent }
\eea
\item $k_l$ on-shell and $k_r$ off-shell with shifting momentum $\eta'=\bt^{''}_r\td\lmd_l$
\bea\label{behaviorTable4}
\left.\begin{array}{c|c|c|c}
 &\epsilon^-_r &\epsilon^+_r&\epsilon_r^{\perp}\\ \hline 
\epsilon^-_l & z^{3}&z^2&z^2 \\ \hline 
 \epsilon^+_l&z^2 &z^{-1}&z^0 \\ 
\end{array}\right. &~~~~~~\left.\begin{array}{c|c|c|c}
 &\epsilon^-_r &\epsilon^+_r&\epsilon_r^{\perp}\\ \hline 
\epsilon^-_l & z^{2}&z&z \\ \hline 
 \epsilon^+_l&z &z^{-1}&z^{-1} \\ 
\end{array}\right. \nb \\
\text{Adjacent}&~~\text{Non-adjacent }
\eea
\end{itemize}

According to the little group property and the analysis in \cite{Chen3}, and using essentially the same procedures therein, we can construct the BCFW recursion relation for off shell amplitudes. We exemplify the procedure in the case that all external legs are off shell and show how it is reduced to less point amplitudes.

We choose a specific $r$-leg, and two non adjacent $l$-legs, ie. $l_1$ and $l_2$. Then we can do two shifts: $l_1$ and $r$ legs, or $l_2$ and $r$ legs. When we shift $l_1$ and $r$ legs, we shift them as in table \ref{behaviorTable1}, and we choose the vectors coupling to $l_1$ as $\epsilon_{l_1}^-=\eta_1=\lambda_{l_1}\tilde \beta_{l_1}$. At the same time we couple to $l_2$ a vector $\epsilon_{l_2}^-=\eta_2=\lambda_{l_2}\tilde \beta_{l_2}$. For choices of $\epsilon_{r(1)}^-$ and $\epsilon_{r(1)}^\perp$ on $r$ leg, the two amplitudes:
\beq
\mathcal{M}_{\mu_r\mu_{l_1}\mu_{l_2}} \epsilon_{l_1}^{-\,\mu_{l_1}} \epsilon_{l_2}^{-\,\mu_{l_2}} \epsilon_{r(1)}^{-\,\mu_r} \ \ \mbox{and}\ \ \mathcal{M}_{\mu_r\mu_{l_1}\mu_{l_2}} \epsilon_{l_1}^{-\,\mu_{l_1}} \epsilon_{l_2}^{-\,\mu_{l_2}} \epsilon_{r(1)}^{\perp\,\mu_r}
\label{decoherence1}
\eeq
are of $\mathcal{O}(z^{-1})$, and can be reduced to less point amplitudes using BCFW technique. The subscript ''$(1)$'' in $\epsilon_{r(1)}^-$ or $\epsilon_{r(1)}^\perp$ means that it is for $l_1-r$ shifting. For the same reason when we shift $l_2$ and $r$-legs, we also obtain two amplitudes:
\beq
\mathcal{M}_{\mu_r\mu_{l_1}\mu_{l_2}} \epsilon_{l_1}^{-\,\mu_{l_1}} \epsilon_{l_2}^{-\,\mu_{l_2}} \epsilon_{r(2)}^{-\,\mu_r} \ \ \mbox{and}\ \ \mathcal{M}_{\mu_r\mu_{l_1}\mu_{l_2}} \epsilon_{l_1}^{-\,\mu_{l_1}} \epsilon_{l_2}^{-\,\mu_{l_2}} \epsilon_{r(2)}^{\perp\,\mu_r}
\label{decoherence2}
\eeq
that are of $\mathcal{O}(z^{-1})$, and can be reduced to less point amplitudes using BCFW technique. In the four amplitudes of \eref{decoherence1} and \eref{decoherence2}, the vectors $\epsilon_r$ coupling to $r$-leg are correlated with the vectors coupling to $l_1$ or $l_2$, thus we cannot act on $l_1$ or $l_2$ with their little group generators to obtain other components of the amplitudes. However, in four dimensional spacetime, from the four amplitudes\footnote{Actually three amplitudes are enough since the vectors coupling to the r-leg are all transverse to $k_r$.} we can solve out $\mathcal{M}_{\mu_r\mu_{l_1}\mu_{l_2}} \epsilon_{l_1}^{-\,\mu_{l_1}} \epsilon_{l_2}^{-\,\mu_{l_2}} \epsilon_r^{-\,\mu_r}$, where $\epsilon_r^{-\,\mu_r}$ is independent of the vectors $\epsilon_{l_1}$ and $\epsilon_{l_2}$. Then we can act on $\mathcal{M}_{\mu_r\mu_{l_1}\mu_{l_2}} \epsilon_{l_1}^{-\,\mu_{l_1}} \epsilon_{l_2}^{-\,\mu_{l_2}} \epsilon_r^{-\,\mu_r}$ with the little group generators \cite{Chen:2011ve} for $l_1$, $l_2$ and $r$ legs, and get all of $\mathcal{M}_{\mu_r\mu_{l_1}\mu_{l_2}} \epsilon_{l_1}^{i\,\mu_{l_1}} \epsilon_{l_2}^{j\,\mu_{l_2}} \epsilon_r^{k\,\mu_r}$ with $i,j,k\in\{-,\perp,+\}$. Together with the longitudinal components which have been reduced to less point amplitudes in \eref{longitudinalcomp}, we have built up a BCFW recursion relation for general off shell amplitudes.

Several supplements for the above procedure. First, if for some special cases, \eref{decoherence1} and \eref{decoherence2} cannot determine $\mathcal{M}_{\mu_r\mu_{l_1}\mu_{l_2}} \epsilon_{l_1}^{-\,\mu_{l_1}} \epsilon_{l_2}^{-\,\mu_{l_2}} \epsilon_r^{-\,\mu_r}$, we can replace either $l_1-r$ shift or $l_2-r$ shift as in Table \ref{behaviorTable2}. Secondly, when one of the shifted legs is on shell, we can get the $\epsilon^-$ and $\epsilon^+$ components on this on shell line using the above procedure, and the momentum component from \eref{longitudinalcomp}. These components are sufficient for an on shell leg. Thirdly, in the above procedure, we required $l_1$ and $l_2$ both non-adjacent to $r$ leg. Actually for the procedure to work, we only need three amplitudes which can be reduced by BCFW technique, with the fourth amplitude from \eref{longitudinalcomp}. From Table \ref{behaviorTable1} or \ref{behaviorTable2}, we can see that a non-adjacent shift plus an adjacent shift is already enough for the procedure to work, which means that our procedure works from 4 point level.

In conclusion, with the proper boundary behavior to be discussed in the following sections, and using the little group techniques in \cite{Chen3}, BCFW recursion relation can be generalized to calculate general tree level off shell amplitudes. 

\subsection{Applications}

First we show a concrete and simple example of our method of constructing off shell tree level amplitudes, ie. constructing a four point amplitude from three point amplitudes. If one external leg takes its longitudinal component, ie. contracted with its momentum, then the amplitude can be decomposed as in \eref{longitudinalcomp}. We will focus on the transverse components of the amplitude. For that, we will first compute $\mathcal{M}(k_1,k_2,k_3,k_4)^{\mu_1\mu_2\mu_3\mu_4}(\lambda_1\tilde \beta_1)_{\mu_1}$ $(\lambda_2\tilde \beta_2)_{\mu_2} (\lambda_3\tilde \beta_3)_{\mu_3}(\lambda_4\tilde \beta_4)_{\mu_4}$. The time-like or space-like momenta are $k_i=\lambda_i\tilde \lambda_i\pm\beta_i\tilde \beta_i$. Other transverse components of the amplitude can be obtained by little group generators \cite{Chen3}: $\mathcal R^+_i=\tilde \lambda_i \frac{\partial}{\partial \tilde \beta_i}-\beta_i\frac{\partial}{\partial \lambda_i}$, $\mathcal R^-_i=\tilde \beta_i \frac{\partial}{\partial \tilde \lambda_i}-\lambda_i\frac{\partial}{\partial \beta_i}$. We will shift $k_1-k_4$, $k_2-k_4$ and $k_3-k_4$ respectively, so according to our previous discussions we choose $\lambda_i\tilde \beta_i$ to be normal to $k_4$ for $i=1,2,3$. We notate $\epsilon_i=\lambda_i\tilde \beta_i$ for $i=1,2,3,4$.

We first do $k_1-k_4$ shift with $\eta_1=\lambda_1\tilde \beta_1=\epsilon_1$. We define $\epsilon_+=\lambda_1\tilde \beta_2$, $\epsilon_-=\lambda_2\tilde \beta_1$. The metric can be decomposed as $g^{\mu\nu}=\sum\limits_h \epsilon_h^{\mu}\epsilon_{\bar h}^{\nu}=\frac{\epsilon_1^\mu \epsilon_2^\nu+\epsilon_2^\mu \epsilon_1^\nu-\epsilon_+^{\mu} \epsilon_-^{\nu}-\epsilon_-^{\mu} \epsilon_+^{\nu}}{\epsilon_1\cdot \epsilon_2}$, with $\epsilon_h=\{\frac{\epsilon_1}{\sqrt{|\epsilon_1\cdot\epsilon_2|}}$, $\frac{\epsilon_2}{\sqrt{|\epsilon_1\cdot\epsilon_2|}}$, $\frac{\epsilon_+}{\sqrt{|\epsilon_1\cdot\epsilon_2|}}$, $-\frac{\epsilon_-}{\sqrt{|\epsilon_1\cdot\epsilon_2|}}\}$, $\epsilon_{\bar h}=\{\frac{\epsilon_2}{\sqrt{|\epsilon_1\cdot\epsilon_2|}},\frac{\epsilon_1}{\sqrt{|\epsilon_1\cdot\epsilon_2|}}$, $-\frac{\epsilon_-}{\sqrt{|\epsilon_1\cdot\epsilon_2|}}$, $\frac{\epsilon_+}{\sqrt{|\epsilon_1\cdot\epsilon_2|}}\}$. We have explained that $\mathcal{M}(\hat k_1,k_2,k_3,\hat k_4)^{\mu_1\mu_2\mu_3\mu_4}$ $(\lambda_1\tilde \beta_1)_{\mu_1}$ $(\lambda_2\tilde \beta_2)_{\mu_2}$ $(\lambda_3\tilde \beta_3)_{\mu_3}$ $(\lambda_1\tilde \beta_1)_{\mu_4}$ is of $\mathcal{O}(z^{-1})$ and can be decomposed using BCFW technique. For this shift, there is only one pole term, and with a little simplifications we can get at $z=0$:
\bea
&&\mathcal{M}(k_1,k_2,k_3,k_4)^{\mu_1\mu_2\mu_3\mu_4}(\lambda_1\tilde \beta_1)_{\mu_1}(\lambda_2\tilde \beta_2)_{\mu_2}(\lambda_3\tilde \beta_3)_{\mu_3}(\lambda_1\tilde \beta_1)_{\mu_4}\label{14shift}\\
=&&\frac{-i}{(k_1+k_2)^2}\sum_h A^h_L(z_{14}) A^{\bar h}_R(z_{14})\nonumber\\
=&&\frac{i}{2(k_1+k_2)^2}(k_1-k_2)\cdot \epsilon_+\ (2k_3\cdot\epsilon_1\  \epsilon_-\cdot \epsilon_3-(2k_3+k_1+k_2)\cdot\epsilon_-\  \epsilon_1\cdot\epsilon_3)\nonumber\\
&&+\frac{i}{2(k_1+k_2)^2}(k_1-k_2)\cdot \epsilon_-\ (2k_3\cdot\epsilon_1\  \epsilon_+\cdot \epsilon_3-(2k_3+k_1+k_2)\cdot\epsilon_+\  \epsilon_1\cdot\epsilon_3)\nonumber\\
&&-\frac{i}{(k_1+k_2)^2}k_2\cdot\epsilon_1\ (k_3\cdot\epsilon_1 \ \epsilon_2\cdot\epsilon_3-k_3\cdot\epsilon_2\  \epsilon_1\cdot\epsilon_3+z_{14}\eta_1\cdot\epsilon_3\ \epsilon_1\cdot\epsilon_2).\nonumber
\eea
%$\sum\limits_h$ is the sum over the "helicity" on the BCFW cut. 
We use the letter A to represent the amplitudes where each leg is contracted with its "polarization vector", to differentiate from the tensor amplitude $\mathcal M$. The pole position $z_{14}=-\frac{(k_1+k_2)^2}{2k_2\cdot \eta_1}$. 

%We have checked the expression with the direct calculation using Feynman rules.

%at first I forgot that $k_4$ (unshifted) is normal to the $\epsilon_i$, and spent a lot of time on $(k_1+k_2)\cdot\epsilon_3$ like terms.

Similarly for $k_3-k_4$ shifting with $\eta_3=\lambda_3\tilde\beta_3=\epsilon_3$ and $z_{34}=-\frac{(k_3+k_2)^2}{2k_2\cdot \eta_3}$, we choose $\tilde\epsilon_h=\{\frac{\epsilon_2}{\sqrt{|\epsilon_2\cdot\epsilon_3|}},\frac{\epsilon_3}{\sqrt{|\epsilon_2\cdot\epsilon_3|}},\frac{\tilde\epsilon_+}{\sqrt{|\epsilon_2\cdot\epsilon_3|}},-\frac{\tilde\epsilon_-}{\sqrt{|\epsilon_2\cdot\epsilon_3|}}\}$, $\tilde\epsilon_{\bar h}=\{\frac{\epsilon_3}{\sqrt{|\epsilon_2\cdot\epsilon_3|}},\frac{\epsilon_2}{\sqrt{|\epsilon_2\cdot\epsilon_3|}},-\frac{\tilde\epsilon_-}{\sqrt{|\epsilon_2\cdot\epsilon_3|}},\frac{\tilde\epsilon_+}{\sqrt{|\epsilon_2\cdot\epsilon_3|}}\}$, with $\tilde\epsilon_+=\lambda_2\tilde \beta_3$ and $\tilde\epsilon_-=\lambda_3\tilde \beta_2$. The metric $g^{\mu\nu}=\sum\limits_h \tilde\epsilon_h^{\mu}\tilde\epsilon_{\bar h}^{\nu}=\frac{\epsilon_2^\mu \epsilon_3^\nu+\epsilon_3^\mu \epsilon_2^\nu-\tilde \epsilon_+^{\mu} \tilde \epsilon_-^{\nu}-\tilde \epsilon_-^{\mu}\tilde \epsilon_+^{\nu}}{\epsilon_2\cdot \epsilon_3}$. We can get at z=0:
\bea
&&\mathcal{M}(k_1,k_2,k_3,k_4)^{\mu_1\mu_2\mu_3\mu_4}(\lambda_1\tilde \beta_1)_{\mu_1}(\lambda_2\tilde \beta_2)_{\mu_2}(\lambda_3\tilde \beta_3)_{\mu_3}(\lambda_3\tilde \beta_3)_{\mu_4}\label{34shift}\\
=&&\frac{-i}{(k_2+k_3)^2}\sum_h A^h_L(z_{34}) A^{\bar h}_R(z_{34})\nonumber\\
=&&-\frac{i}{2(k_2+k_3)^2}(k_2-k_3)\cdot \tilde\epsilon_+\ (2k_1\cdot\epsilon_3\  \tilde\epsilon_-\cdot \epsilon_1-(2k_1+k_2+k_3)\cdot\tilde\epsilon_-\  \epsilon_1\cdot\epsilon_3)\nonumber\\
&&-\frac{i}{2(k_2+k_3)^2}(k_2-k_3)\cdot \tilde\epsilon_-\ (2k_1\cdot\epsilon_3\  \tilde\epsilon_+\cdot \epsilon_1-(2k_1+k_2+k_3)\cdot\tilde\epsilon_+\  \epsilon_1\cdot\epsilon_3)\nonumber\\
&&-\frac{i}{(k_2+k_3)^2}k_2\cdot\epsilon_3\ (k_1\cdot\epsilon_3 \ \epsilon_1\cdot\epsilon_2-k_1\cdot\epsilon_2\  \epsilon_1\cdot\epsilon_3+z_{34}\eta_3\cdot\epsilon_1\ \epsilon_2\cdot\epsilon_3).\nonumber
\eea

For $k_2-k_4$ shifting with $\eta_2=\lambda_2\tilde\beta_2=\epsilon_2$, there are two pole contributions from s and u channels, with $z_{24}^{(s)}=-\frac{(k_1+k_2)^2}{2k_1\cdot \eta_2}$ and $z_{24}^{(u)}=-\frac{(k_2+k_3)^2}{2k_3\cdot \eta_2}$, and we can get at $z=0$:
\bea
&&\mathcal{M}(k_1,k_2,k_3,k_4)^{\mu_1\mu_2\mu_3\mu_4}(\lambda_1\tilde \beta_1)_{\mu_1}(\lambda_2\tilde \beta_2)_{\mu_2}(\lambda_3\tilde \beta_3)_{\mu_3}(\lambda_2\tilde \beta_2)_{\mu_4}\label{24shift}\\
=&&\frac{-i}{(k_1+k_2)^2}\sum_h A^h_L(z_{24}^{(s)}) A^{\bar h}_R(z_{24}^{(s)})+\frac{-i}{(k_2+k_3)^2}\sum_h A^h_L(z_{24}^{(u)}) A^{\bar h}_R(z_{24}^{(u)})\nonumber\\
=&&\frac{i}{2(k_1+k_2)^2}(k_1-k_2)\cdot \epsilon_+\ (2k_3\cdot\epsilon_2\  \epsilon_-\cdot \epsilon_3-(2k_3+k_1+k_2)\cdot\epsilon_-\  \epsilon_2\cdot\epsilon_3)\nonumber\\
&&+\frac{i}{2(k_1+k_2)^2}(k_1-k_2)\cdot \epsilon_-\ (2k_3\cdot\epsilon_2\  \epsilon_+\cdot \epsilon_3-(2k_3+k_1+k_2)\cdot\epsilon_+\  \epsilon_2\cdot\epsilon_3)\nonumber\\
&&-\frac{i}{(k_1+k_2)^2}k_1\cdot\epsilon_2\ (k_3\cdot\epsilon_1 \ \epsilon_2\cdot\epsilon_3-k_3\cdot\epsilon_2\  \epsilon_1\cdot\epsilon_3-z_{24}^{(s)}\eta_2\cdot\epsilon_3\ \epsilon_1\cdot\epsilon_2)\nonumber\\
&&-\frac{i}{2(k_2+k_3)^2}(k_2-k_3)\cdot \tilde\epsilon_+\ (2k_1\cdot\epsilon_2\  \tilde\epsilon_-\cdot \epsilon_1-(2k_1+k_2+k_3)\cdot\tilde\epsilon_-\  \epsilon_2\cdot\epsilon_3)\nonumber\\
&&-\frac{i}{2(k_2+k_3)^2}(k_2-k_3)\cdot \tilde\epsilon_-\ (2k_1\cdot\epsilon_2\  \tilde\epsilon_+\cdot \epsilon_1-(2k_1+k_2+k_3)\cdot\tilde\epsilon_+\  \epsilon_2\cdot\epsilon_3)\nonumber\\
&&-\frac{i}{(k_2+k_3)^2}k_3\cdot\epsilon_2\ (k_1\cdot\epsilon_3 \ \epsilon_1\cdot\epsilon_2-k_1\cdot\epsilon_2\  \epsilon_1\cdot\epsilon_3-z_{24}^{(u)}\eta_2\cdot\epsilon_1\ \epsilon_2\cdot\epsilon_3)\nonumber
\eea

Since $\epsilon_i$ have all been chosen to be normal to $k_4$ for $i=1,2,3,4$, we can express $\epsilon_4$ linearly in $\epsilon_{\{1,2,3\}}$:
\bea
&&\epsilon_4=c_1 \epsilon_1+c_2 \epsilon_2+c_3 \epsilon_3,\nonumber\\
&&c_1=\frac{\langle \lambda_2\lambda_4\rangle [\tilde \beta_3 \tilde \beta_4]}{\langle \lambda_1\lambda_2\rangle [\tilde \beta_1 \tilde \beta_3]}, c_2=-\frac{\langle \lambda_1\lambda_4\rangle [\tilde \beta_3 \tilde \beta_4]}{\langle \lambda_1\lambda_2\rangle [\tilde \beta_2 \tilde \beta_3]}, c_3=\frac{\langle \lambda_1\lambda_4\rangle [\tilde \beta_2 \tilde \beta_4]}{\langle \lambda_1\lambda_3\rangle [\tilde \beta_2 \tilde \beta_3]}.
\eea
Then we can obtain $\mathcal{M}(k_1,k_2,k_3,k_4)^{\mu_1\mu_2\mu_3\mu_4}(\lambda_1\tilde \beta_1)_{\mu_1}(\lambda_2\tilde \beta_2)_{\mu_2}(\lambda_3\tilde \beta_3)_{\mu_3}(\lambda_4\tilde \beta_4)_{\mu_4}$ from a linear combination of \eref{14shift}, \eref{24shift} and \eref{34shift} with coefficients $c_1, c_2, c_3$, and we have the final expression:
\bea
&&\mathcal{M}(k_1,k_2,k_3,k_4)^{\mu_1\mu_2\mu_3\mu_4}(\lambda_1\tilde \beta_1)_{\mu_1}(\lambda_2\tilde \beta_2)_{\mu_2}(\lambda_3\tilde \beta_3)_{\mu_3}(\lambda_4\tilde \beta_4)_{\mu_4}\label{offshell4p}\\
=&&\frac{i}{2(k_1+k_2)^2}(k_1-k_2)\cdot \epsilon_+\ (2k_3\cdot\epsilon_4\  \epsilon_-\cdot \epsilon_3-(2k_3+k_1+k_2)\cdot\epsilon_-\  \epsilon_4\cdot\epsilon_3)\nonumber\\
&&+\frac{i}{2(k_1+k_2)^2}(k_1-k_2)\cdot \epsilon_-\ (2k_3\cdot\epsilon_4\  \epsilon_+\cdot \epsilon_3-(2k_3+k_1+k_2)\cdot\epsilon_+\  \epsilon_4\cdot\epsilon_3)\nonumber\\
%&&+\frac{i}{(k_1+k_2)^2}k_3\cdot\epsilon_4\ (k_3\cdot\epsilon_1 \ \epsilon_2\cdot\epsilon_3-k_3\cdot\epsilon_2\  \epsilon_1\cdot\epsilon_3)\nonumber\\
&&-\frac{i}{2(k_2+k_3)^2}(k_2-k_3)\cdot \tilde\epsilon_+\ (2k_1\cdot\epsilon_4\  \tilde\epsilon_-\cdot \epsilon_1-(2k_1+k_2+k_3)\cdot\tilde\epsilon_-\  \epsilon_4\cdot\epsilon_3)\nonumber\\
&&-\frac{i}{2(k_2+k_3)^2}(k_2-k_3)\cdot \tilde\epsilon_-\ (2k_1\cdot\epsilon_4\  \tilde\epsilon_+\cdot \epsilon_1-(2k_1+k_2+k_3)\cdot\tilde\epsilon_+\  \epsilon_4\cdot\epsilon_3)\nonumber\\
&&+\frac{i}{(k_1+k_2)^2}k_3\cdot\epsilon_4\ (k_3\cdot\epsilon_1 \ \epsilon_2\cdot\epsilon_3-k_3\cdot\epsilon_2\  \epsilon_1\cdot\epsilon_3)\nonumber\\
&&+\frac{i}{(k_2+k_3)^2}k_1\cdot\epsilon_4\ (k_1\cdot\epsilon_3 \ \epsilon_1\cdot\epsilon_2-k_1\cdot\epsilon_2\  \epsilon_1\cdot\epsilon_3)\nonumber\\
&&+\frac{i}{2}(2\epsilon_1\cdot\epsilon_3 \ \epsilon_2\cdot\epsilon_4-\epsilon_1\cdot\epsilon_2 \ \epsilon_3\cdot\epsilon_4-\epsilon_1\cdot\epsilon_4 \ \epsilon_2\cdot\epsilon_3).\nonumber
\eea

Besides the direct application of our method in calculating off shell amplitudes, the method can be used to formally analyze some amplitude relations. For example, tree level on shell KK relations \cite{Kleiss:1988ne} can be easily proved by BCFW technique \cite{Feng:2010my}. The key point is that by BCFW, the amplitude is decomposed to two less point amplitudes and for each sub amplitude the induction can be directly applied. We shall show that by our method, the on shell version of proof can be easily extended to the proof of off shell tree level KK relations, which have recently been proved using Berends-Giele recursion relation \cite{Berends:1987me} in \cite{Fu:2012uy}. Instead of proving the KK relation directly, we shall prove the generalized U(1) decoupling property, which is equivalent to KK relations \cite{Ma:2011um}. The generalized U(1) decoupling is expressed as:
\beqs
\sum\limits_{OP(\{\alpha\}\cup \{\beta\})}{\mathcal M}(1,OP(\{\alpha\}\cup \{\beta\}))=0,
\eeqs
where OP means preserving the ordering of legs in set $\alpha$ and set $\beta$.

We use induction to prove this identity, and it is easily verified that the identity holds for three and four point off shell tree level amplitudes.

First we consider the longitudinal components, ie. some legs are contracted with their momenta. In this case, the amplitude can be decomposed using \eref{longitudinalcomp}. Since each term of \eref{longitudinalcomp} is a product of two less point amplitudes, induction is easily realized. If the leg 1 is contracted with $k_1$, then $\{\alpha_i,\beta_j\}$ are divided into two groups, say $G_L$ and $G_R$. If in $G_L$ or $G_R$, there are legs from the sets $\{\alpha_i\}$ and $\{\beta_j\}$ simultaneously, the identity holds by induction. The rest case is when $G_L$ and $G_R$ contain only legs in $\{\alpha_i\}$ or $\{\beta_j\}$. The two contributions from ${\mathcal M}(1,\{\alpha\},\{\beta\})$ and ${\mathcal M}(1,\{\beta\},\{\alpha\})$ are obviously opposite to each other. Thus the induction works when leg 1 is contracted with $k_1$. If the leg $\alpha_{i_0}$ (similarly for $\beta_{j_0}$) is contracted with $k_{\alpha_{i_0}}$, other legs are again divided into two groups $G_L$ and $G_R$. Assume the leg 1 is in $G_L$. The amplitude is $\mathcal M(1\in G_L, \alpha_{i_0}, G_R)$. If $G_R$ contains legs from the sets $\{\alpha_i\}$ and $\{\beta_j\}$ simultaneously, using induction we can obtain the generalized U(1) decoupling identity. If $G_R$ only contains legs in $\{\beta_j\}$, notated as $\{\beta_{j_R}\}$, these terms cancel those with the ordering of $\alpha_{i_0}$ and $\{\beta_{j_R}\}$ interchanged, ie. $\mathcal M(1\in G_L, G_R=\{\beta_{j_R}\}, \alpha_{i_0})$. Finally, if $G_R$ only contains legs in $\{\alpha_i\}$, notated as $\{\alpha_{i_R}\}$, by viewing $\alpha_{i_0}$ and $\{\alpha_{i_R}\}$ together as one leg $\alpha_{\tilde i_0}$, the situation is reduced to a less point amplitude and induction can be applied.

Then we investigate the transverse components. As done in our previous example, we choose one leg  and three other legs to shift respectively. The common leg can be chosen as leg 1, with no difference for choosing other legs. All other legs are contracted with $\epsilon_i$ normal to both $k_i$ and $k_1$. For the shift $k_{s_1}-k_1$ with $\eta_{s_1}=\epsilon_{s_1}$, regardless of being adjacent or non-adjacent, we know that $\hat{\mathcal M}^{\mu_1\mu_{i_1}\mu_{i_2}\cdots\mu_{i_n}}(\eta_{s_1})_{\mu_1}(\epsilon_{i_1})_{\mu_{i_1}}(\epsilon_{i_2})_{\mu_{i_2}}\cdots(\epsilon_{i_n})_{\mu_{i_n}}$ are of $\mathcal O(z^{-1})$ and can be decomposed using BCFW technique. Then the proof of on shell KK relations in \cite{Feng:2010my} can be directly used here to show that at $z=0$ $\sum\limits_{OP(\{\alpha\}\cup \{\beta\})}{\mathcal M}(1,OP(\{\alpha\}\cup \{\beta\}))^{\mu_1\mu_{i_1}\mu_{i_2}\cdots\mu_{i_n}}$ $(\eta_{s_1})_{\mu_1}$ $(\epsilon_{i_1})_{\mu_{i_1}}$ $(\epsilon_{i_2})_{\mu_{i_2}}$ $\cdots$ $(\epsilon_{i_n})_{\mu_{i_n}}=0$. Similarly for the second and third shifts we have $\sum\limits_{OP(\{\alpha\}\cup \{\beta\})}{\mathcal M}(1,OP(\{\alpha\}\cup \{\beta\}))^{\mu_1\mu_{i_1} \mu_{i_2}\cdots\mu_{i_n}}$ $(\eta_{s_2})_{\mu_1}$ $(\epsilon_{i_1})_{\mu_{i_1}}$ $(\epsilon_{i_2})_{\mu_{i_2}}$ $\cdots(\epsilon_{i_n})_{\mu_{i_n}}=0$ and $\sum\limits_{OP(\{\alpha\}\cup \{\beta\})}{\mathcal M}(1,OP(\{\alpha\}\cup \{\beta\}))^{\mu_1\mu_{i_1}\mu_{i_2}\cdots\mu_{i_n}}$ $(\eta_{s_3})_{\mu_1}$ $(\epsilon_{i_1})_{\mu_{i_1}}$ $(\epsilon_{i_2})_{\mu_{i_2}}$ $\cdots(\epsilon_{i_n})_{\mu_{i_n}}=0$. By a linear combination we have $\sum\limits_{OP(\{\alpha\}\cup \{\beta\})}{\mathcal M}(1,OP(\{\alpha\}\cup \{\beta\}))^{\mu_1\mu_{i_1}\mu_{i_2}\cdots\mu_{i_n}}$ $(\epsilon_1)_{\mu_1}$ $(\epsilon_{i_1})_{\mu_{i_1}}$ $(\epsilon_{i_2})_{\mu_{i_2}}\cdots (\epsilon_{i_n})_{\mu_{i_n}}=0$, and by little group generators we see that all the transverse components of the amplitudes satisfy the generalized U(1) decoupling identity.

Thus we have proven off shell KK relations based on the on shell proof and our generalized BCFW technique of constructing off shell tree level amplitudes.

Based on the $\mathcal{O}(z^{-2})$ behavior of on shell gauge field amplitudes under non adjacent shifts, the BCJ relations \cite{Bern:2008qj} are proved in \cite{Feng:2010my}. As discussed in our \sref{order1overz}, we have $\mathcal{M}(\hat k_1,k_2,\hat k_4,k_3)^{\mu_1\mu_2\mu_4\mu_3}$ $(\lambda_1\tilde \beta_1)_{\mu_1}$ $(\lambda_1\tilde \beta_1)_{\mu_4}\to \mathcal{O}(z^{-2})$ at large $z$ for $k_1-k_4$ shifting. Then with the same arguments as in \cite{Feng:2010my}, we have the "BCJ-like" relations, as done in \sref{amprel} for some components of four point off shell amplitudes:
$$(s_{12}\mathcal{M}(k_1,k_2,k_3,k_4)^{\mu_1\mu_2\mu_3\mu_4}-s_{13}\mathcal{M}(k_1,k_3,k_2,k_4)^{\mu_1\mu_3\mu_2\mu_4})(\lambda_1\tilde \beta_1)_{\mu_1}(\lambda_1\tilde \beta_1)_{\mu_4}=0.$$
Similarly we have by $k_2-k_4$ shifting:
$$(s_{21}\mathcal{M}(k_2,k_1,k_3,k_4)^{\mu_2\mu_1\mu_3\mu_4}-s_{23}\mathcal{M}(k_2,k_3,k_1,k_4)^{\mu_2\mu_3\mu_1\mu_4})(\lambda_2\tilde \beta_2)_{\mu_2}(\lambda_2\tilde \beta_2)_{\mu_4}=0,$$
and some other similar expressions.

For more point amplitudes, we do not have the $\mathcal{O}(z^{-2})$ behavior for non adjacent shifts in general. Although by careful study of the $\mathcal{O}(z^{-1})$ terms it is possible to prove BCJ relations in some theories, like the color scalar theory \cite{Du:2011js}, we have not been able to obtain similar BCJ relations for general more point off shell YM amplitudes. We leave this direction to future work.

\section{Amplitudes with Reduced Vertexes}\label{Sec:Reduce}

In this section we are going to introduce some reduced vertexes, and prove that amplitudes constructed from the reduced vertexes have the same boundary behavior as those constructed from ordinary vertexes.

%In some places of the paper, we also call the complex momentum line as "hard line", \textcolor{red}{and the (other? or include left and right?) external legs as "soft line", as in cite.}

We first clarify some conventions for the rest of this chapter. If we draw the complex momentum line from left to right, other external legs besides the shifted pair would be either above or below this complex line.  For a given shift, the set of external legs above (or below) the complex line is fixed together with their order, however the legs above the complex line and those below it can have all possible relative positions. To further specify the vertexes, we sort the vertexes as in Figure \ref{vertexclass}.
\begin{figure}[htb]
\centering
\includegraphics{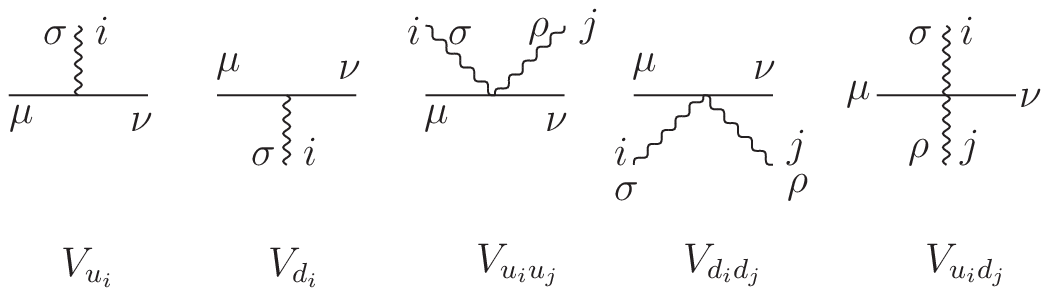}
\caption{A classification of the vertexes. The horizontal line is the complex line and the photon lines are external legs besides the shifted pair. Letters $i$ and $j$ are the numbering of the external legs, and $\mu$, $\nu$, $\sigma$, $\rho$ are the indices.}
\label{vertexclass}
\end{figure}

For a three-point vertex with legs 1, 2 and 3 in anti-clockwise order, we write it in the following form:
\bea\label{newV3c4}
V_{\mu_1\mu_2\mu_3}&\equiv&S_{\mu_1\mu_2\mu_3}+ R_{\mu_1\mu_2\mu_3}+M_{\mu_1\mu_2\mu_3},
\eea
where  
\bea\label{newV3s4}
S_{\mu_1\mu_2\mu_3}&=&\frac{i}{\sqrt 2}\left(g_{\mu_1\mu_2}(k_1-k_2)_{\mu_3}\right) \nb\\
R_{\mu_1\mu_2\mu_3}&=&\frac{i}{\sqrt 2}\left(-2g_{\mu_2\mu_3}(k_3)_{\mu_1}+2g_{\mu_3\mu_1}(k_3)_{\mu_2}\right) \nb\\
M_{\mu_1\mu_2\mu_3}&=&\frac{i}{\sqrt 2}\left(-g_{\mu_2\mu_3}(k_1)_{\mu_1}+g_{\mu_3\mu_1}(k_2)_{\mu_2}\right). 
\eea
In this manner, $k_3$ is in a special role and we will choose the appropriate one as $k_3$ in specific situations. When the legs 1 and 2 are on the complex line and 3 is an external leg, we further divide the M term into $M^L$ and $M^R$ as represented in Figure \ref{Msymbol}.
\begin{figure}[]
\centering
\includegraphics[height=4cm,width=14cm]{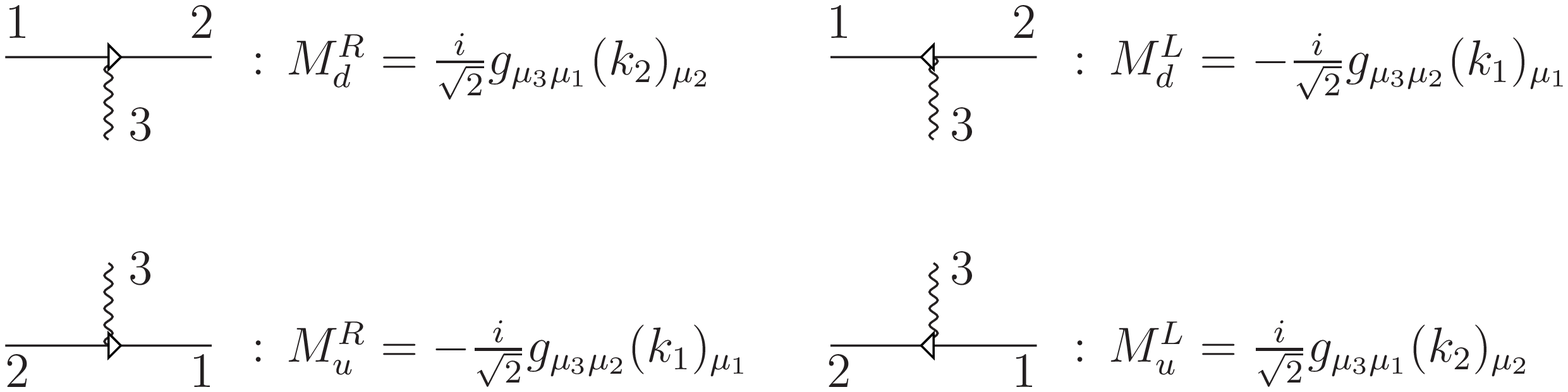}
\caption{The symbols and meanings of $M_{u/d}^{L/R}$.}
\label{Msymbol}
\end{figure}

Contracting a three point vertex $V_{\mu_1\mu_2\mu_3}$ with $k_3^{\mu_3}$, we get:
\beq
k_3^{\mu_3} \cdot V_{\mu_1\mu_2\mu_3}=\frac{i}{\sqrt{2}}g_{\mu_1\mu_2} k_2^2-\frac{i}{\sqrt{2}}g_{\mu_1\mu_2} k_1^2-\frac{i}{\sqrt{2}}k_{2\ \mu_2}k_2{}_{\ \mu_1}+\frac{i}{\sqrt{2}}k_{1\ \mu_1}k_1{}_{\ \mu_2},
\label{kdotV4}
\eeq
and we represent these terms by the symbols in Figure \ref{vertexnotation4}.
\begin{figure}[htb]
\centering
\includegraphics[height=4cm,width=10cm]{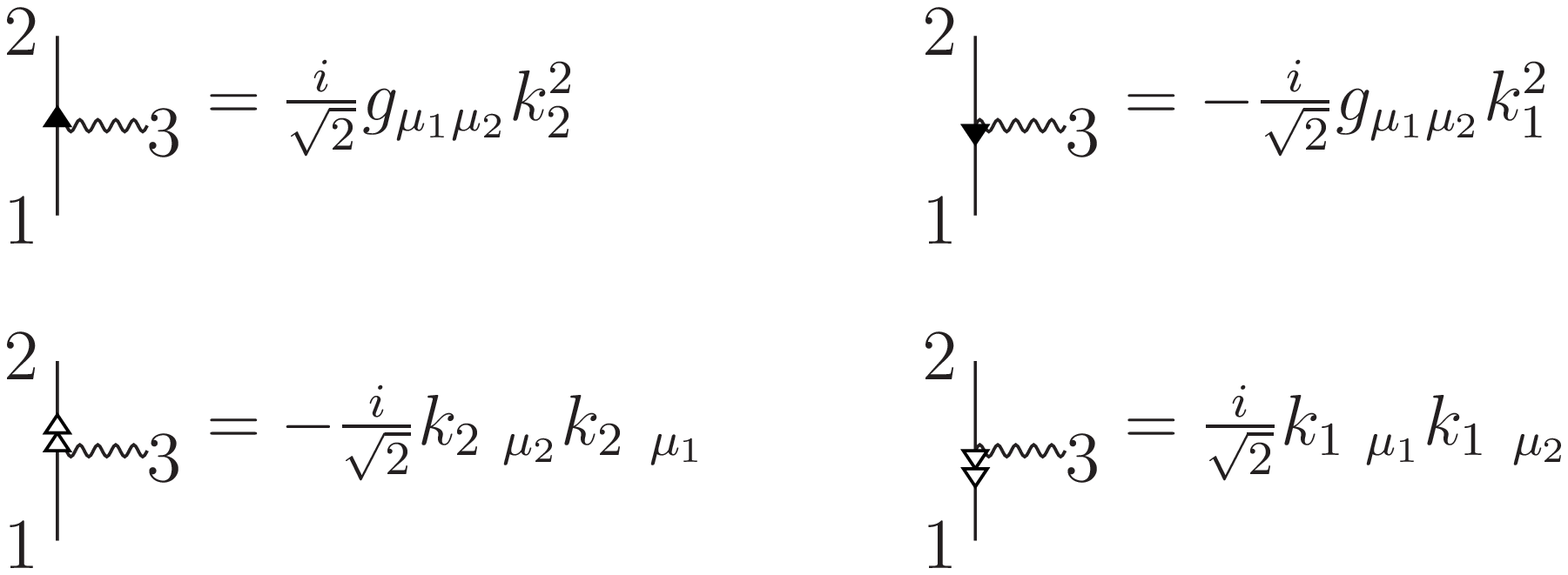}
\caption{Notations for \eref{kdotV4}.}
\label{vertexnotation4}
\end{figure}

In the following of this chapter, the method of induction is assumed. For example, when we discuss the $\mathcal{O}(z^1)$, $\mathcal{O}(z^0)$ and $\mathcal{O}(z^{-1})$ behavior of N point amplitudes, we only need to consider the diagrams with all the external legs attaching the complex line. When some of these external legs form vertexes outside the complex line, the conclusions for less point amplitudes apply to these diagrams as we do not require the external legs to be on shell.

%Including these diagrams or not does not affect our derivation of the structures at N point level.

\subsection{Reduced Vertexes}
The central conclusion of this subsection is that the boundary behavior of amplitudes with BCFW momenta shift \eref{momshift4} under the conditions \eref{conditionEta4} and \eref{conditionVector} can be obtained by using the reduced vertexes as following:
\bea
{\bar V}_{u/d}&=&S_{u/d}+R_{u/d},\nonumber\\
{\bar V}_{u_i u_j}&=&\frac{i}{2}(2 g^{\nu\sigma}g^{\mu\rho}-2g^{\mu\sigma}g^{\nu\rho}-g^{\sigma\rho}g^{\mu\nu}),\nonumber\\
{\bar V}_{d_i d_j}&=&\frac{i}{2}(2g^{\nu\sigma}g^{\mu\rho}-2g^{\mu\sigma}g^{\nu\rho}-g^{\sigma\rho}g^{\mu\nu}),\nonumber\\
{\bar V}_{u_i d_j}&=&ig^{\sigma\rho}g^{\mu\nu}.
\label{ReduV}
\eea
The meanings of the vertex names, the external legs and their indices refer to Figure \ref{vertexclass}, and the meanings of S term and R term in the first line refer to \eref{newV3s4} with the external leg playing the role of leg 3 and legs 1 and 2 on the complex line.

We first prove some useful lemmas. First, for a tree level tensor current $\mathcal{M}_{12\cdots N}^{\mu_1\mu_2\cdots \mu_N}$, we shift $k_i$ and $k_j$: $\hat k_i\to k_i+z\eta$ and $\hat k_j\to k_j-z\eta$ with $\eta^2=0$ and $k_i\cdot \eta=0$. We couple $\hat \epsilon_i$ to the $\hat k_i$ leg with $\hat k_i \cdot \hat \epsilon_i=0$. If $\hat \epsilon_i\sim \mathcal{O}(z^{n_i})$, naive power counting gives ${\hat k}_{j\ \mu_j} \hat{\mathcal{M}}_{12\cdots N}^{\mu_1\mu_2\cdots \mu_N} \hat \epsilon_{i\ \mu_i}\sim \mathcal{O}(z^{2+n_i})$. However, we have:

%should $k_j\cdot \eta=0$?
\begin{lemma}{Generalized Ward Identity 1}
\\${\hat k}_{j\ \mu_j} \hat{\mathcal{M}}_{12\cdots N}^{\mu_1\mu_2\cdots \mu_N} \hat \epsilon_{i\ \mu_i}\sim \mathcal{O}(z^{n_i})$, for $\hat k_i\to k_i+z\eta$ and $\hat k_j\to k_j-z\eta$ with $\eta^2=0$, $k_i\cdot \eta=0$ and $\hat k_i \cdot \hat \epsilon_i=0$.
\label{GWI1}
\end{lemma}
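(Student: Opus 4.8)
The plan is to prove the estimate by induction on the number of legs $N$, using the contraction identity \eref{kdotV4} at the vertex where leg $j$ attaches. Two kinematic facts, both immediate consequences of the hypotheses $\eta^2=k_i\cdot\eta=0$ and $\hat k_i\cdot\hat\epsilon_i=0$, drive the whole argument and should be recorded first. Because $\eta^2=0$ and $k_i\cdot\eta=0$, the shifted virtuality is frozen, $\hat k_i^2=k_i^2$, so any seed factor of the form $\mathcal{M}(\hat k_i,-\hat k_i)=i\,k_i^2\,g$ is $z$-independent. Second, writing $\hat k_j=(k_i+k_j)-\hat k_i$ and using $\hat k_i\cdot\hat\epsilon_i=0$ gives $\hat k_j\cdot\hat\epsilon_i=(k_i+k_j)\cdot\hat\epsilon_i$, whose only $z$-dependence comes from $\hat\epsilon_i\sim\mathcal{O}(z^{n_i})$; thus the apparently $\mathcal{O}(z^{1+n_i})$ contraction of the hard leg momentum into the polarization is really only $\mathcal{O}(z^{n_i})$. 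The base case is the bare three-point current $\hat k_j{}^{\mu_j}V_{\mu_j\mu_a\mu_b}\hat\epsilon_i{}^{\mu_a}$: expanding with \eref{kdotV4} and applying the two facts term by term (the $g\,\hat k^2$ term via $\hat k_i^2=k_i^2$, the $\hat k\hat k$ terms via $\hat k_i\cdot\hat\epsilon_i=0$ and the second fact) collapses the naive $\mathcal{O}(z^{2+n_i})$ down to $\mathcal{O}(z^{n_i})$.

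For the inductive step, contract $\hat k_j$ into the (three- or four-point) vertex at which leg $j$ enters and expand it with \eref{kdotV4}. Each resulting term is of one of two types: a term proportional to $g_{\mu_a\mu_b}\hat K^2$, in which the factor $\hat K^2$ cancels the adjacent internal propagator $1/\hat K^2$ and fuses the two pieces into a genuine lower-point current; or a term proportional to $\hat K_{\mu_a}\hat K_{\mu_b}$, which is precisely an internal line of a lower-point current contracted with its own momentum, i.e. a generalized-Ward-identity contraction to which the inductive hypothesis applies. Either way the problem is reduced to currents with fewer legs, so the induction closes. Equivalently, and perhaps more transparently, one may invoke the complexified Ward identity \eref{longitudinalcomp} directly — it is an algebraic consequence of the Feynman rules and so holds verbatim for the shifted momenta, since $\hat k_i+\hat k_j=k_i+k_j$ — and then power-count the four groups of terms on its right-hand side. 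The decisive point of the $z$-counting is that the hard momentum $z\eta$ threads a single hard line from leg $i$ to leg $j$; its propagators are $\mathcal{O}(1/z)$ because $\hat K^2=K^2+2z\,K\cdot\eta=\mathcal{O}(z)$, while the collapsing $\hat K^2$ factors produced along that line are $\mathcal{O}(z)$ and merely cancel those propagators, yielding no net growth. The only would-be growing factors are the two removed by the facts above, which is why the net scaling is $\mathcal{O}(z^{n_i})$ rather than $\mathcal{O}(z^{2+n_i})$.

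The main obstacle is the uniformity of this power count across all diagram topologies. One must track where the hard momentum $z\eta$ flows once leg $j$ has been opened up: whether leg $i$ (carrying $\hat\epsilon_i\sim\mathcal{O}(z^{n_i})$) lands in the same reduced sub-current as the reinjected momentum or in the complementary one, and, in the mixed terms, whether $\hat k_j$ gets contracted into a sub-current that does \emph{not} contain leg $i$ — in which case the extra power must be recovered by re-using the inductive hypothesis on that sub-current rather than by the second fact. The boundary terms of \eref{longitudinalcomp}, which carry the $z$-independent seeds $\mathcal{M}(\hat k_i,-\hat k_i)=i\,k_i^2\,g$, together with the four-point-vertex contraction not covered by \eref{kdotV4}, must be checked separately but are strictly softer. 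Establishing that the cancellation of the two leading powers is topology-independent — in particular that no single term secretly retains an $\mathcal{O}(z^{1+n_i})$ or $\mathcal{O}(z^{2+n_i})$ piece — is the real content of the lemma and the step demanding the most care.
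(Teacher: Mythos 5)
Your two kinematic facts and the three-point base case are correct, and they are indeed exactly where the hypotheses $\eta^2=k_i\cdot\eta=0$ and $\hat k_i\cdot\hat\epsilon_i=0$ enter. The gap is in the inductive step of your main route, in the fate of the $g_{\mu_a\mu_b}\hat K^2$ terms. When $\hat K$ is the hard momentum (the line carrying $z\eta$), the factor $\hat K^2=\mathcal{O}(z)$ does cancel the adjacent propagator $1/\hat K^2=\mathcal{O}(1/z)$, but what is left is \emph{not} a genuine lower-point current: it is the original diagram with that internal line pinched, an object to which the inductive hypothesis cannot be applied, and whose individual scaling is $\mathcal{O}(z^{1+n_i})$ --- already at four points, pinching the hard propagator leaves one hard three-point vertex of order $\mathcal{O}(z)$ contracted with $\hat\epsilon_i$. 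The diagrams in which leg $j$ enters through a four-point vertex are of exactly the same order, $\mathcal{O}(z)\cdot\mathcal{O}(z^{n_i})$, since the four-point vertex is momentum independent and $\hat k_j\sim\mathcal{O}(z)$; so your claim that these are ``strictly softer'' and can be ``checked separately'' is false. In the paper's proof these two classes of terms are grouped together and cancel identically --- this is the cancellation of Figure \ref{treecancel2c4}, which holds purely algebraically, with no on-shell or transversality conditions --- and only after this cancellation do the surviving hollow-triangle terms of Figure \ref{vertexnotation4} obey the claimed bound, by direct power counting or by induction. By discarding the four-point vertices you remove precisely the mechanism that kills the leading power uniformly over topologies, which, as you yourself observe in your last paragraph, is the real content of the lemma.

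Your alternative route, by contrast, is sound and would repair the proof if promoted to the main argument: \eref{longitudinalcomp} is an algebraic consequence of the Feynman rules, valid verbatim for the shifted momenta because $\hat k_i+\hat k_j=k_i+k_j$, and its right-hand side contains only products of genuine lower-point currents precisely because the Figure \ref{treecancel2c4} cancellations have already been carried out in deriving it. Power-counting those terms with your two facts, and re-using \lref{GWI1} inductively on the sub-current that contains leg $i$ (with the $\mathcal{O}(1/z)$ of the shifted $1/\hat K^2$ supplying the remaining suppression in the mixed terms, as you correctly flag), gives $\mathcal{O}(z^{n_i})$ for every group. This is the paper's own mechanism repackaged: the paper runs the diagrammatic cancellation inside the induction, whereas you would import it once and for all through \eref{longitudinalcomp}.
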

\begin{proof}
The proof can be done by induction, similar to the proof of actual tree-level Ward identity in \cite{Chen1,Chen2}. For three point tensor currents this Lemma can be verified directly. Assume it holds for no more than N point tensor currents and we assume $j=N+1$. We construct an (N+1) point tensor current by inserting $(N+1)$-th leg into an N-point one.

When $(N+1)$-th leg is inserted into a propagator or external leg to form a three vertex $V_j$, we use the notations in Figure \ref{vertexnotation4} to decompose $k_j\cdot V_j$. Among the four terms, the first line two terms, ie. solid triangle terms, plus the terms from $k_j$ inserted to a three point vertex in the N point diagram cancel as in Figure \ref{treecancel2c4}.
\begin{figure}[]
\centering
\includegraphics{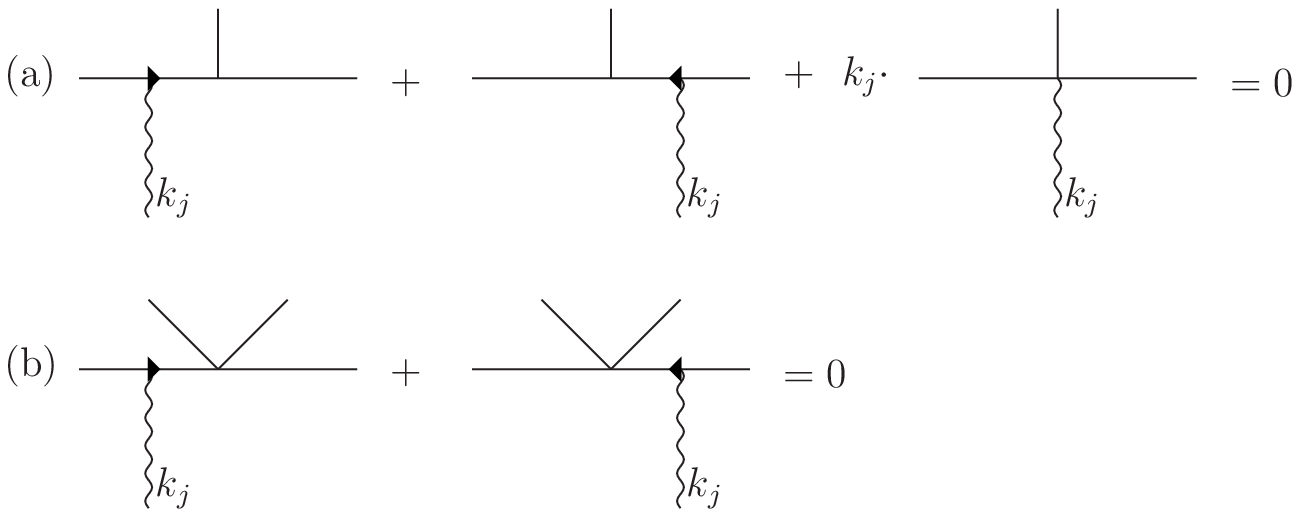}
\caption{Groups of terms that cancel. These terms cancel solely due to the vertex, without any on shell conditions on the legs. Only the $k_j$ leg as in \lref{GWI1} are specially represented by photon line. This Figure is from \cite{Chen2}.} 
\label{treecancel2c4}
\end{figure}

Then the remaining terms are the second line double hollow triangle terms in Figure \ref{vertexnotation4} when $(N+1)$-th leg is inserted to a propagator or external leg in the original N point diagram. Then by direct power counting or the use of the induction assumption, it is seen that the order of $z$ are decreased by at least 2. Thus, we have proven that for N+1 point amplitude, the order of $z$ for ${\hat k}_{j\ \mu_j} \hat{\mathcal {M}}_{12\cdots N}^{\mu_1\mu_2\cdots \mu_N} \hat \epsilon_{i\ \mu_i}$ are decreased by at least 2 compared to naive power counting, finishing the proof for \lref{GWI1}. \endofproof
\end{proof}

\begin{lemma}{Generalized Ward Identity 2}
\\${\hat k}_{j\ \mu_j} {\hat k}_{i\ \mu_i} \mathcal{M}_{12\cdots N}^{\mu_1\mu_2\cdots \mu_N}\sim \mathcal{O}(z^1)$, for a shift: $\hat k_i\to k_i+z\eta$ and $\hat k_j\to k_j-z\eta$ with $\eta^2=0$.
\label{GWI2}
\end{lemma}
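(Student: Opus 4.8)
The plan is to prove \lref{GWI2} by induction on the number of legs $N$, in close parallel with the argument already used for \lref{GWI1}, treating the double momentum contraction $\hat k_{j\,\mu_j}\hat k_{i\,\mu_i}\hat{\mathcal{M}}^{\mu_1\cdots\mu_N}$ as a mild variant of the single contraction handled there. The crucial observation is that $\hat k_i = k_i + z\eta$ scales as $\mathcal{O}(z)$, so that contracting the free index $\mu_i$ with $\hat k_i$ plays exactly the role of coupling a ``polarization'' $\hat\epsilon_i$ of order $z^{n_i}$ with $n_i=1$ in \lref{GWI1}. The only feature of \lref{GWI1} that is lost is the transversality $\hat k_i\cdot\hat\epsilon_i=0$, since here $\hat k_i\cdot\hat k_i=\hat k_i^2=k_i^2+2z\,k_i\cdot\eta\neq 0$ (recall that in this lemma we impose only $\eta^2=0$, not $k_i\cdot\eta=0$). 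Thus \lref{GWI2} is the statement that the conclusion of \lref{GWI1} survives even when the transversality on the $i$-leg is dropped, and the whole task reduces to checking that this condition was never actually needed for the $z$-counting of the $\hat k_j$-reduction.

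Concretely, I would set $j=N+1$ to be the inserted leg and build the $(N+1)$-point current from an $N$-point one exactly as in the proof of \lref{GWI1}. When leg $N+1$ attaches to a propagator or external leg to form a three-point vertex $V_{N+1}$, I contract with $\hat k_{N+1}=\hat k_j$ and use the four-term split of $\hat k_j\cdot V_j$ recorded in \eref{kdotV4} and Figure~\ref{vertexnotation4}. As in Figure~\ref{treecancel2c4}, the two inverse-propagator (solid-triangle) terms cancel against the insertions of $\hat k_j$ into the neighbouring vertices of the $N$-point diagram---a cancellation that is purely algebraic at the vertex and uses no on-shell or transversality data---while the two remaining (double-hollow-triangle) terms carry explicit momentum factors that lower the naive power of $z$ by two. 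Starting from the naive estimate of \lref{GWI1}, which for $n_i=1$ reads $\mathcal{O}(z^{2+n_i})=\mathcal{O}(z^{3})$, this two-power gain delivers the claimed $\mathcal{O}(z^{1})$. Here $\eta^2=0$ is what guarantees that the shifted propagators $1/\hat K^2$ and the inverse-propagator factors $\hat k^2$ produced by \eref{longitudinalcomp} scale as $1/z$ and $z$ respectively, with no spurious $z^2$ enhancement along the complex line.

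The step I expect to be the genuine obstacle is isolating the single place where \lref{GWI1} did use $\hat k_i\cdot\hat\epsilon_i=0$, namely the vertex at which the $i$-leg itself sits, and showing that the extra term now proportional to $\hat k_i^2\sim z$ does not push any diagram above $\mathcal{O}(z^1)$. I would handle this by splitting the output of the first ($\hat k_j$) Ward reduction into ``off-diagonal'' terms, in which legs $i$ and $j$ land in different lower-point sub-currents---these are controlled directly by \lref{GWI1} applied to the factor containing leg $i$, since there $\hat k_i$ meets a genuinely lower-point object---and ``diagonal'' terms, in which both $i$ and $j$ feed the same minimal sub-structure; the latter are precisely where the induction hypothesis for \lref{GWI2} itself must be invoked. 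Verifying that the non-transverse remainder $\propto\hat k_i^2$ is absorbed into the double-hollow-triangle terms (and hence still enjoys the two-power reduction) is the one computation I would carry out explicitly at the three- and four-point base cases before trusting the inductive bookkeeping.
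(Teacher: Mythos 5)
Your skeleton (induction on $N$, insertion of leg $j$, the split \eref{kdotV4}, the purely algebraic cancellation of Figure~\ref{treecancel2c4}, then a two-power reduction of the remainder) is exactly the paper's intended ``same procedure as \lref{GWI1}'', and you correctly isolate the new difficulty: with only $\eta^2=0$ the would-be polarization $\hat k_i$ satisfies neither $\hat k_i\cdot \hat k_i=0$ nor $k_i\cdot\eta=0$. The gap is that your plan for closing the induction assigns the wrong tool to each class of residual terms, and in both cases the tool you name cannot do the job. For your \emph{off-diagonal} terms, the factor containing leg $i$ has the form $\hat K_{\rho}\,\hat{\mathcal M}_L^{\cdots\rho}\,\hat k_{i\,\mu_i}$ (cf.\ the factorized sums of \eref{longitudinalcomp}): a lower-point current whose two shifted legs (leg $i$ and the split leg of momentum $-\hat K$) are \emph{each} contracted with a momentum. \lref{GWI1} cannot be applied to this factor in either assignment of roles, because its hypotheses $\hat k\cdot\hat\epsilon=0$ and $k_i\cdot\eta=0$ both fail there --- which is precisely the obstruction you identified at the outset. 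What that factor actually is, is a lower-point instance of \lref{GWI2}; so it is here, not in the diagonal terms, that the induction hypothesis of \lref{GWI2} itself must be invoked (naive counting gives $z^3$ for the factor, the induction gives $z^1$, and combined with the $1/\hat K^2\sim z^{-1}$ propagator and the $\mathcal O(z)$ factor carrying the $\hat k_j$ contraction one lands on $\mathcal O(z)$).

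Conversely, for your \emph{diagonal} terms --- leg $j$ inserted directly on leg $i$ --- induction cannot be invoked at all: the internal line carries $\hat k_i+\hat k_j=k_i+k_j$, so the entire rest of the diagram is unshifted and there is no lower-point shifted current to which \lref{GWI2} could apply. These terms are instead saved by an explicit algebraic cancellation, which is exactly the computation you postponed to a low-point check: taking legs $1,2,3$ in \eref{kdotV4} to be $i$, the internal line, and $j$, and contracting with $\hat k_i^{\mu_1}$, the solid-triangle term $-\frac{i}{\sqrt 2}g_{\mu_1\mu_2}\hat k_i^2$ and the hollow-triangle term $+\frac{i}{\sqrt 2}\hat k_{i\,\mu_1}\hat k_{i\,\mu_2}$ --- each separately of the dangerous order $\mathcal O(z^2)$ under the weaker hypotheses of \lref{GWI2} --- cancel identically upon this contraction. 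Note this is a genuinely different mechanism from \lref{GWI1}, where the first term is harmless because $k_i\cdot\eta=0$ makes $\hat k_i^2$ $z$-independent and the second vanishes by transversality; so ``the same procedure'' hides a real change in which terms pair up. With these two repairs (the \lref{GWI2} induction hypothesis for the off-diagonal factors, and the $\hat k_i^2$ cancellation for the diagonal ones) your outline closes and reproduces the paper's proof; as written, its inductive step does not.
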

In this Lemma, no on shell condition is placed on leg $i$ or $j$. By naive power counting, ${\hat k}_{j\ \mu_j} {\hat k}_{i\ \mu_i} \mathcal{M}_{12\cdots N}^{\mu_1\mu_2\cdots \mu_N} \sim \mathcal{O}(z^3)$, yet actually decreased by 2 orders of z. This Lemma can also be proved by induction with the same procedure as the proof for the previous Lemma.

With the above two Lemmas, we are ready to prove our central conclusion \tref{redamplitude} of this section.

For each diagram the vertexes in it are \{$V_{u_i}$, $V_{d_j}$, $V_{u_i u_{i+1}}$, $V_{d_j d_{j+1}}$, $V_{u_i d_j}$\}, determined by the different orderings of the external legs. We denote this diagram as $\mathcal{M}^{\mu\nu}(\{u_i,d_j,$ $u_i u_{i+1},$ $d_j d_{j+1},$ $u_i d_j\})$ where $\mu$ and $\nu$ are the indices of the shifted legs. In the rest of the chapter, and also for \eref{behaviorOff}, when we talk about $\hat{\mathcal{M}}^{\mu\nu}(\{u_i,d_j, u_i u_{i+1}, d_j d_{j+1}, u_i d_j\})$ with $\mu$ and $\nu$ indices not contracted, we will always assume it contracted with $\hat \epsilon^\mu_l\sim \mathcal{O}(z^{n_l})$ and $\hat \epsilon^\nu_r\sim \mathcal{O}(z^{n_r})$, which satisfy $\hat k_l \cdot \hat \epsilon_l=0$ and $\hat k_r \cdot \hat \epsilon_r=0$, and we will not write $\hat \epsilon^\mu_l$ and $\hat \epsilon^\nu_r$, and suppress $n_l+n_r$ in the order $z$ analysis of the amplitudes.

\begin{theorem} 
For the shift of a pair of momenta $\hat k_l^\mu=k_l^\mu+z \eta^\mu$ and $\hat k_r^\nu=k_r^\nu-z\eta^\nu$, the amplitude at large $z$ has the property:
\beq
\hat{\mathcal{M}}^{\mu\nu}(\{u_i,d_j, u_i u_{i+1}, d_j d_{j+1}, u_i d_j\})=\hat{\mathcal{M}}^{\mu\nu}(\{\overline{u_i},\overline{d_j}, \overline{u_i u_{i+1}}, \overline{d_j d_{j+1}}, \overline{u_i d_j}\})+\mathcal{O}(z^{-1}).
\eeq
\label{redamplitude}
\end{theorem}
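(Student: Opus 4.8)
The plan is to prove \tref{redamplitude} by induction on the number of external legs, reducing at each step to the diagrams in which every external leg---or every $z$-independent sub-blob, which attaches to the complex line through a propagator carrying no shift momentum and may therefore be treated as a single effective external leg---is attached directly to the complex line. Along such a line one has a chain of three- and four-point vertices joined by propagators whose momenta $\hat K_i=K_i+z\eta$ all satisfy $\hat K_i\approx z\eta$ at large $z$. Writing each three-point vertex as $S+R+M$ through \eref{newV3c4} and \eref{newV3s4}, and each four-point vertex in its full color-ordered form, I would compare the full-vertex amplitude with the reduced-vertex amplitude built from \eref{ReduV} term by term, and show that every term discarded in the reduction enters the remainder of the diagram as a contraction that the generalized Ward identities force down to $\mathcal{O}(z^{-1})$.

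For the three-point vertices the only discarded piece is the $M$ term of \eref{newV3s4}. Its momentum factors $(k_1)_{\mu_1}$ and $(k_2)_{\mu_2}$ are exactly the complex-line momenta entering the vertex, so at large $z$ each behaves like $z\eta$; thus $M$ is naively of the same order as $S$ and one order larger than $R$. The key observation is that, through the adjacent propagator, $(k_1)_{\mu_1}$ contracts the complex-line momentum $\hat K_L$ into the sub-current $J_L$ on that side, producing $\hat K_L^{\alpha}J_{L\,\alpha}$. This is precisely the Ward-identity contraction that underlies \lref{GWI1}: contracting a shifted line momentum into a sub-amplitude lowers its naive $z$-order by two, the same mechanism that cancels the solid-triangle terms of \eref{kdotV4} as displayed in Figure \ref{treecancel2c4}. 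Hence the $M$ contribution, naively $\mathcal{O}(z)$, is in fact $\mathcal{O}(z^{-1})$, so that replacing the full vertex by $\bar V_{u/d}=S_{u/d}+R_{u/d}$ disturbs the amplitude only at $\mathcal{O}(z^{-1})$.

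For the four-point vertices the reduction replaces the full vertex by the structures $\bar V_{u_iu_j}$, $\bar V_{d_id_j}$ or $\bar V_{u_id_j}$ of \eref{ReduV}, selected according to whether the two external legs sit on the same side of or on opposite sides of the complex line. Since the four-point vertex carries no momentum, the suppression of the discarded metric structures must come entirely from the index routing: each discarded contraction either forces a factor $\eta\cdot\eta$, which vanishes by \eref{conditionEta4}, or routes a complex-line momentum $\approx z\eta$ into a neighbouring sub-current, where \lref{GWI1} and \lref{GWI2} again gain two powers of $z$. Combining the three- and four-point analyses and applying the induction hypothesis to the two sub-currents produced whenever the chain is cut, the full-vertex and reduced-vertex amplitudes agree up to $\mathcal{O}(z^{-1})$, which is the content of the theorem.

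The main obstacle is the global bookkeeping of replacing all vertices at once: one must rule out that the individually $\mathcal{O}(z^{-1})$ error terms conspire across neighbouring vertices to rebuild an $\mathcal{O}(z^{0})$ or $\mathcal{O}(z^{1})$ contribution. This forces the cancellation pattern of Figure \ref{treecancel2c4} and the two-order gain of \lref{GWI1} and \lref{GWI2} to be applied uniformly along the entire line rather than at an isolated vertex. I would handle this by peeling off the outermost vertex of the complex line, establishing the Ward-identity suppression there, and invoking the induction hypothesis on the shorter chain that remains, so that the accumulated error stays at $\mathcal{O}(z^{-1})$ at every stage of the recursion.
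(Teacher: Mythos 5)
Your proposal has a genuine gap, and it sits exactly where the real content of \tref{redamplitude} lies. You assert that every piece discarded in passing to the reduced vertexes is \emph{individually} suppressed to $\mathcal{O}(z^{-1})$: each $M$ term of \eref{newV3s4} separately, and, separately again, the difference between the original four-point vertex and the reduced one of \eref{ReduV}. The first assertion is correct only for terms containing a single $M$ factor (or, more generally, when the momentum carried by the $M$ factor is contracted into a complete sub-current built from \emph{full} vertexes terminating on a shifted leg, so that \lref{GWI1} applies). It fails for products of $M$ factors in which an $M^R$ is immediately followed by an $M^L$: there the shifted momentum carried by $M^R$ does not flow into a current at all, but contracts through the intervening propagator directly onto the momentum factor of the adjacent $M^L$, giving $\pm\hat K\cdot\hat K/\hat K^{2}=\pm 1$. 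The propagator pole is cancelled rather than the numerator suppressed, the pair collapses to a momentum-independent contact structure of the form $g^{\mu\sigma}g^{\rho\nu}$, and the resulting diagrams contribute at $\mathcal{O}(z^{0})$. These are precisely the terms (c) of Figure \ref{Mtermcancel}, which the paper's proof shows escape \lref{GWI2}. Your ``peel off the outermost vertex'' induction does not repair this: after the first replacement, subsequent $M$-momenta are contracted into chains built from \emph{reduced} vertexes, and the generalized Ward identities \lref{GWI1} and \lref{GWI2} are statements about full-vertex currents only --- reduced-vertex currents do not obey them, so the two-power gain you rely on is simply absent from the second step onward.

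The error in your four-point analysis is the mirror image of the same gap. The reduced vertex $\bar V_{u_id_j}=ig^{\sigma\rho}g^{\mu\nu}$ differs from the color-ordered four-point vertex by cross terms such as $g^{\mu\sigma}g^{\nu\rho}$; since the four-point vertex carries no momentum, this difference produces neither a factor $\eta\cdot\eta$ nor a shifted momentum routed into a current --- it merely reroutes indices, and in a chain of $S$ terms it enters at $\mathcal{O}(z^{0})$, unsuppressed. The theorem is nevertheless true because these $\mathcal{O}(z^{0})$ differences are cancelled by the adjacent $M^R M^L$ contact terms coming from Feynman diagrams of a \emph{different topology} (two three-point vertexes in place of one four-point vertex): this is Steps 3 and 4 of the paper's proof, the sum of (c) and (d) in Figure \ref{Mtermcancel}. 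A vertex-by-vertex, diagram-by-diagram replacement argument is structurally unable to see this cross-topology recombination. As an internal consistency check: if both of your suppression claims were correct, then \tref{redamplitude} would also hold with the \emph{original} four-point vertex in place of $\bar V_{u_id_j}$, which contradicts the $\mathcal{O}(z^{0})$ difference just exhibited; the modification of the four-point vertex in \eref{ReduV} exists precisely to absorb the $M$-pair terms your argument discards.
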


$(\{\overline{u_i},\overline{d_j}, \overline{u_i u_{i+1}}, \overline{d_j d_{j+1}}, \overline{u_i d_j}\})$ means that the vertexes are the reduced vertexes $(\{\bar V_{u_i},\bar V_{d_j},$ $\bar V_{u_i u_{i+1}},$ $\bar V_{d_j d_{j+1}},$ $\bar V_{u_i d_j}\})$, see \eref{ReduV}. The highest possible scaling behavior for $\hat{\mathcal{M}}^{\mu\nu}(\{u_i, d_j,$ $u_i u_{i+1},$ $d_j d_{j+1},$ $u_i d_j\})$ is $\mathcal{O}(z^1)$, and this theorem says that the first two orders can be calculated using the reduced vertexes.\footnote{Following the proof of this theorem, it can be seen that for on shell amplitudes, the reduced vertexes give exactly the same amplitudes as original vertexes.}

\begin{proof}
{\bf Step 1.} We notate a diagram by the positions of the vertexes from left to right on the complex line. Using $V_{u/d}=\bar V_{u/d}+M_{u/d}^L+M_{u/d}^R$, we have:
\bea
&&{\bar V}_{(u/d)_1} {\bar V}_{(u/d)_2}\cdots{\bar V}_{(u/d)_n}\\
=&&(V_{(u/d)_1}-M_{(u/d)_1}^L-M_{(u/d)_1}^R)(V_{(u/d)_2}-M_{(u/d)_2}^L-M_{(u/d)_2}^R)\cdots(V_{(u/d)_n}-M_{(u/d)_n}^L-M_{(u/d)_n}^R),\nonumber
\eea
and by expanding it we get:
\bea
&&V_{(u/d)_1} V_{(u/d)_2}\cdots V_{(u/d)_n}\nonumber\\
=&&{\bar V}_{(u/d)_1} {\bar V}_{(u/d)_2}\cdots{\bar V}_{(u/d)_n}\nonumber\\
+&&\sum_i (-1)^{i-1} \mbox{(i vertexes are replaced with their M term components}).
\label{expansion}
\eea
For diagrams containing four point vertexes, we only re-express the three point vertexes therein without any change to four point vertexes at this step, and then do the similar expansion as in \eref{expansion}.

{\bf Step 2.} In this step, we prove that for each term in \eref{expansion}, in order to contribute at $\mathcal{O}(z^1)$ and $\mathcal{O}(z^0)$, the last M factor in the term should be $M_{u/d}^L$, and for the same reason the first M factor should be $M_{u/d}^R$. This is clearly shown in (a) in Figure \ref{Mtermcancel}. 

\begin{figure}[]
\includegraphics[height=12cm,width=12cm]{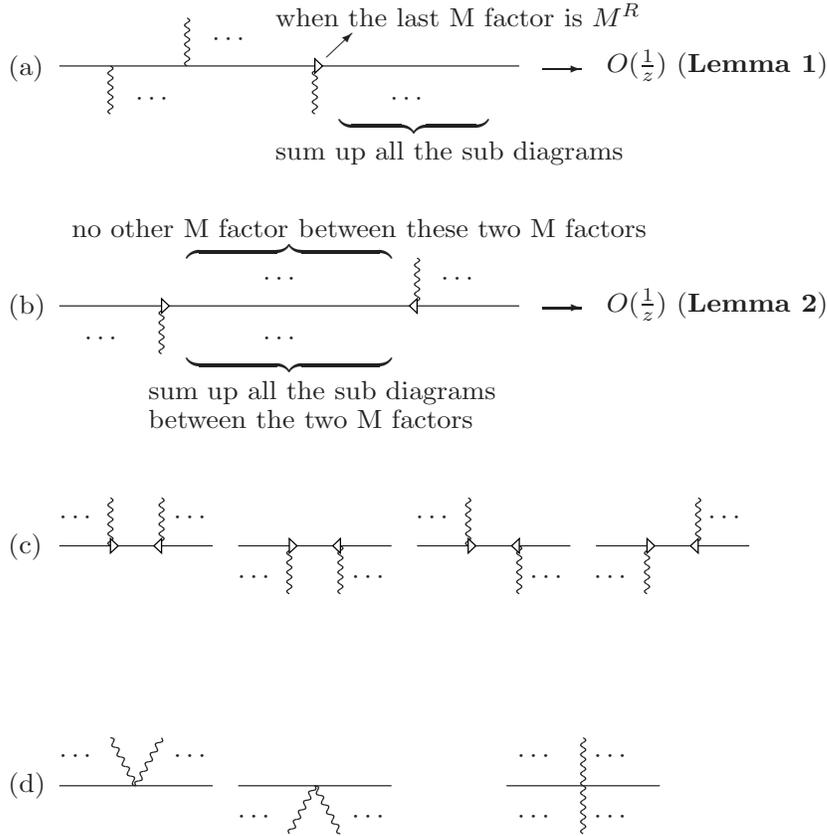}~~
\caption{In (a), naive power counting tells us that any diagram is $\mathcal{O}(z^{1})$, yet the sum turns out to be $\mathcal{O}(z^{-1})$ since \lref{GWI1} tells us that for this sum the actual $z$ dependence is at least lowered by 2 orders compared to naive power counting. The same manner works for the case when the last M factor is $M_u^R$ instead of $M_d^R$, and also works for the analysis of the first M factor in the terms of \eref{expansion}. In (b) \lref{GWI2} works when and only when there are some vertexes between the two M factors. When the two M factors are next to each other, the contributions are shown in (c), which escape \lref{GWI2} and may contribute to $\mathcal{O}(z^0)$. The vertexes besides these two M terms are summed up to be reduced three point vertexes as explained in {\bf Step 3}. (d) gives the corresponding terms with four point vertexes which add up with (c) to replace the four point vertexes with the reduced ones.}
\label{Mtermcancel}
\end{figure}

{\bf Step 3.} From step 2 we know that for contributions at $\mathcal{O}(z^{1})$ and $\mathcal{O}(z^0)$, for all the terms containing at least one M factor in \eref{expansion}, we only need to consider the terms where the last M factor is $M_{u/d}^L$ and the first M factor is $M_{u/d}^R$. For such terms, there clearly exists a pair of M factors $M_{u/d}^R$ and $M_{u/d}^L$, where $M_{u/d}^R$ is on the left of $M_{u/d}^L$ and there are no other M factors between them. This is represented in (b) of Figure \ref{Mtermcancel}. Due to \lref{GWI2} these terms do not contribute to $\mathcal{O}(z^{1})$ and $\mathcal{O}(z^0)$, except the special terms represented in (c) of Figure \ref{Mtermcancel}, where the $M_{u/d}^R$ and $M_{u/d}^L$ are next to each other. For the terms in (c), since the product of the two M terms decrease the order of $z$ by 1, there can be no other four point vertexes at the two sides of the two M terms, in order to contribute at $\mathcal{O}(z^{1})$ and $\mathcal{O}(z^0)$. The terms in (c) add up to be:
\beq
{\bar V}_{(u/d)_1} {\bar V}_{(u/d)_2}\cdots{\bar V}_{(u/d)_i} M_{(u/d)_{i+1}}^R M_{(u/d)_{i+2}}^L{\bar V}_{(u/d)_{i+3}}\cdots{\bar V}_{(u/d)_n},
\eeq
which means that on the two sides of the two M terms all the vertexes are the reduced three point vertexes \eref{ReduV}.

{\bf Step 4.} In the first 3 steps, we have analyzed the terms in \eref{expansion} with at least one M factor, which are reduced to the terms in (c) of Figure \ref{Mtermcancel}. The other terms in \eref{expansion} are either all comprised of reduced three point vertexes, or of reduced three point vertexes plus one and only one four point vertex. The latter case is given in (d) of Figure \ref{Mtermcancel}. (c) and (d) sum up to replace the four point vertex with the reduced one. Thus, we have shown that at $\mathcal{O}(z^{1})$ and $\mathcal{O}(z^0)$, all the terms in \eref{expansion} are reduced either to a product of reduced three point vertexes, or a product of reduced three point vertexes and one reduced four point vertex.\endofproof
\end{proof}

\subsection{Application}\label{theorem1Application}
As a simple application of \tref{redamplitude}, we can directly obtain the large-z scaling behavior for amplitudes with adjacent BCFW shifts. 

For $\mathcal{M}^{\mu\nu}(\{u_i,$ $d_j,$ $u_i u_{i+1},$ $d_j d_{j+1},$ $u_i d_j\})$, we denote the product of all the vertexes in it as $\mathcal{N}^{\mu\nu}(\{u_i,$ $d_j,$ $u_i u_{i+1},$ $d_j d_{j+1},$ $u_i d_j\})$, and the product of all the propagators in the complex line in it as $\mathcal{C}(\{u_i,$ $d_j,$ $u_i u_{i+1},$ $d_j d_{j+1},$ $u_i d_j\})$.  Here and following, we usually suppress $(\{u_i,$ $d_j,$ $u_i u_{i+1},$ $d_j d_{j+1},$ $u_i d_j\})$ for convenience. 
Then  the amplitude is written as 
\bea\label{infyTerm}
\mathcal{M}^{\mu\nu}=\sum_{\mathcal{D}}{\mathcal{N}^{\mu\nu}\over \mathcal{C}} , 
\eea
where the sum is over all the Feynman diagrams.
%\bea\label{infyTerm}
%\mathcal{\bar M}_{\mathcal{D}_I}^{\mu\nu}&=&\sum_{\mathcal{D}_I}\mathcal{C}^{h_I-1}\mathcal{N}^{\mu\nu}, \nb\\
%\mathcal{\bar M}_{\mathcal{D}_{II}}^{\mu\nu}&=&\sum_{ \mathcal{D}_{II}} \mathcal{C}^{h_{II}-1}\mathcal{N}^{\mu\nu}\nb \\
%\mathcal{\bar M}_{\mathcal{D}_{III}}^{\mu\nu}&=&\sum_{ \mathcal{D}_{III} } \mathcal{C}^{h_{III}-1}\mathcal{N}^{\mu\nu}\\
%&\cdots& \nb
%\label{propagatorsum}
%\eea
 %The terms in \eref{infyTerm} are enough to determine the terms in \eref{barTerm}.}
 
The amplitude can be expanded as $\mathcal{M}^{\mu\nu}=\mathcal{M}_1^{\mu\nu} z + \mathcal{M}_0^{\mu\nu}+\mathcal{M}^{\mu\nu}_{-1} {1\over z}+ \mathcal{O}({1\over z^2})$ in the large $z$ limit. We need to discuss the large-z scaling behavior for some types of Feynman diagrams. For convenience we denote the types of Feynman diagrams as following: $\mathcal{D}_I$ denotes the diagrams where all vertexes in the complex line are reduced three point vertexes. $\mathcal{D}_{II}$ denotes the diagrams where the complex line contains only one reduced four point vertex which is not $\bar V_{u_id_j}$ and other vertexes are reduced three point vertexes, while in $\mathcal{D'}_{II}$ the four point vertex is $\bar V_{u_id_j}$. In $\mathcal{D}_{III}$, there are two reduced four point vertexes in the complex line neither of which is $\bar V_{u_id_j}$ and other vertexes are reduced three point vertexes, while in $\mathcal{D'}_{III}$ at least one of the four point vertexes is $\bar V_{u_id_j}$. For $\mathcal{M}_1^{\mu\nu}$ and $\mathcal{M}_0^{\mu\nu}$, we only need to take $\mathcal{D}_I$, $\mathcal{D}_{II}$ and $\mathcal{D'}_{II}$ into consideration.

The contribution to the amplitudes from each kind of Feynman diagrams can be expanded respectively as:
\bea
{\mathcal{N}_{h_I}^{\mu\nu} z^{h_I}+ \mathcal{N}_{{h_I}-1}^{\mu\nu} z^{{h_I}-1}+\mathcal{N}_{{h_I}-2}^{\mu\nu}z^{{h_I}-2}\cdots\over \mathcal{C}_{{h_I}-1} z^{{h_I}-1} +\mathcal{C}_{{h_I}-2} z^{{h_I}-2}+\mathcal{C}_{{h_I}-3} z^{{h_I}-3}\cdots} ~&&\text{\ \ for} ~\mathcal{D}_I \\
{\mathcal{N}_{h_{II}}^{\mu\nu} z^{h_{II}}+ \mathcal{N}_{{h_{II}}-1}^{\mu\nu} z^{{h_{II}}-1}+\mathcal{N}_{{h_{II}}-2}^{\mu\nu}z^{{h_{II}}-2}\cdots\over \mathcal{C}_{{h_{II}}} z^{h_{II}} +\mathcal{C}_{h_{II}-1} z^{h_{II}-1}+\mathcal{C}_{h_{II}-2} z^{h_{II}-2}\cdots} ~&&\text{\ \ for}~ \mathcal{D}_{II} \\
{\mathcal{N}_{h_{II'}}^{\mu\nu} z^{h_{II'}}+ \mathcal{N}_{{h_{II'}}-1}^{\mu\nu} z^{{h_{II'}}-1}+\mathcal{N}_{{h_{II'}}-2}^{\mu\nu}z^{{h_{II'}}-2}\cdots\over \mathcal{C}_{{h_{II'}}} z^{h_{II'}} +\mathcal{C}_{h_{II'}-1} z^{h_{II'}-1}+\mathcal{C}_{h_{II'}-2} z^{h_{II'}-2}\cdots} ~&&\text{\ \ for}~ \mathcal{D}_{II'} \\
{\mathcal{N}_{h_{III}}^{\mu\nu} z^{h_{III}}+ \mathcal{N}_{h_{III}-1}^{\mu\nu} z^{h_{III}-1}+\mathcal{N}_{h_{III}-2}^{\mu\nu}z^{h_{III}-2}\cdots\over \mathcal{C}_{h_{III}+1} z^{h_{III}+1} +\mathcal{C}_{h_{III}} z^{h_{III}}+\mathcal{C}_{{h_{III}}-1} z^{{h_{III}}-1}\cdots} ~&&\text{\ \ for}~ \mathcal{D}_{III}\\
{\mathcal{N}_{h_{III'}}^{\mu\nu} z^{h_{III'}}+ \mathcal{N}_{h_{III'}-1}^{\mu\nu} z^{h_{III'}-1}+\mathcal{N}_{h_{III'}-2}^{\mu\nu}z^{h_{III'}-2}\cdots\over \mathcal{C}_{h_{III'}+1} z^{h_{III'}+1} +\mathcal{C}_{h_{III'}} z^{h_{III'}}+\mathcal{C}_{{h_{III'}}-1} z^{{h_{III'}}-1}\cdots} ~&&\text{\ \ for}~ \mathcal{D}_{III'}
\eea 
where we use $h_I$ to denote the highest $z$-order of $\mathcal{N}^{\mu\nu}$ for $\mathcal D_I$ type of Feynman diagrams, and similarly $h_{II}$ and $h_{III}$.

Then we can write 
\bea\label{LargeExap}
\mathcal{M}_1^{\mu\nu}&=&\mathcal{\bar M}_1^{\mu\nu}\nb\\
\mathcal{M}_0^{\mu\nu}&=&\mathcal{\bar M}_0^{\mu\nu}+\sum_{\mathcal{D}_{II'}} {\mathcal{N}_{h_{II'}}^{\mu\nu}\over \mathcal{C}_{h_{II'}}}-\sum_{\mathcal{D}_I} {\mathcal{C}_{h_I-2} \mathcal{N}_{h_I}^{\mu\nu}\over \mathcal{C}_{h_I-1}^2 }\nb\\
\mathcal{M}_{-1}^{\mu\nu}&=&\mathcal{\bar M}_{-1}^{\mu\nu}-\sum_{\mathcal{D}_1}{\mathcal{C}_{h_I-2} \mathcal{N}_{h_I-1}^{\mu\nu}\over \mathcal{C}_{h_I-1}^2}+\sum_{\mathcal{D}_I} {(\mathcal{C}_{h_I-2}^2-\mathcal{C}_{h_I-1}\mathcal{C}_{h_I-3})\mathcal{N}_{h_I}^{\mu\nu}\over \mathcal{C}_{h_I-1}^3}-\sum_{\mathcal{D}_{II}}{\mathcal{C}_{h_{II}-1} \mathcal{N}_{h_{II}}^{\mu\nu}\over \mathcal{C}_{h_{II}}^2}\nb\\
&&-\sum_{\mathcal{D}_{II'}}{\mathcal{C}_{h_{II'}-1} \mathcal{N}_{h_{II'}}^{\mu\nu}\over \mathcal{C}_{h_{II'}}^2}+\sum_{\mathcal{D}_{II'}} {\mathcal{N}_{h_{II'}-1}^{\mu\nu}\over \mathcal{C}_{h_{II'}}}+\sum_{\mathcal{D}_{III'}}
{\mathcal{N}_{h_{III'}}^{\mu\nu}\over \mathcal{C}_{h_{III'}+1}}+\mathcal{M}_{-1(M)}^{\mu\nu},
\eea
with
\bea\label{barTerm}
\mathcal{\bar M}_1^{\mu\nu}&=&\sum_{\mathcal{D}_I} {\mathcal{N}_{h_I}^{\mu\nu}\over \mathcal{C}_{h_I-1}}\nb\\
\mathcal{\bar M}_0^{\mu\nu}&=&\sum_{\mathcal{D}_1} {\mathcal{N}_{h_I-1}^{\mu\nu}\over \mathcal{C}_{h_I-1}}+\sum_{\mathcal{D}_{II}} {\mathcal{N}_{h_{II}}^{\mu\nu}\over \mathcal{C}_{h_{II}}}\nb\\
\mathcal{\bar M}_{-1}^{\mu\nu}&=&\sum_{\mathcal{D}_I} {\mathcal{N}_{h_I-2}^{\mu\nu}\over \mathcal{C}_{h_I-1}}+\sum_{\mathcal{D}_{II}} {\mathcal{N}_{h_{II}-1}^{\mu\nu}\over \mathcal{C}_{h_{II}}}+\sum_{\mathcal{D}_{III}} {\mathcal{N}_{h_{III}}^{\mu\nu}\over \mathcal{C}_{h_{III}+1}}.
\eea
In \eref{LargeExap} the last term for $\mathcal{M}_{-1}^{\mu\nu}$, ie $\mathcal{M}_{-1(M)}^{\mu\nu}$, is the contribution from M terms of the three point vertexes,  which is represented by the diagrams (a) and (b) in Figure \ref{Mtermcancel}. This term will be discussed in \sref{order1overz}. In \eref{LargeExap} and \eref{barTerm}, the summations are over ordered product $OP\{\alpha_{u_N} \bigcup \alpha_{d_M}\}$ \cite{Boels}, where $\alpha_{u_N}$ is the ordered subsets of $N$ up-legs $\{u_1,u_2, \cdots, u_N\}$ and $\alpha_{d_M}$ is the ordered subsets of $M$ down-legs $\{d_1,d_2, \cdots, d_M\}$. The ordered product  is the set of all permutations which leave the order of $\alpha_{u_N}$ and $\alpha_{d_M}$ invariant. For example, we have
\be
\sum_{\mathcal{D}_I} {\mathcal{N}_{h_I}^{\mu\nu}\over \mathcal{C}_{h_I-1}}\equiv\sum_{OP\{\alpha_{u_N} \bigcup \alpha_{d_M}\}} {\mathcal{N}_{h_I}^{\mu\nu}\over \mathcal{C}_{h_I-1}}.
\ee

Using \tref{redamplitude}, we can classify the terms that contribute to $\mathcal{M}_1^{\mu\nu}$ and $\mathcal{M}_0^{\mu\nu}$ into the following groups:
\begin{enumerate}
\item $\mathcal{D}_I$ with all the reduced three point vertexes  taking their S term components.
\item $\mathcal{D}_I$ with only one of the reduced three point vertex taking its R term part.
\item $\mathcal{D}_{II}$ with all the reduced three point vertexes  taking their S term components.
\item $\mathcal{D}_{II'}$ with all the reduced three point vertexes  taking their S term components.
\end{enumerate}
For the meaning of R and S terms in the reduced three point vertexes, refer to \eref{newV3s4} and \eref{ReduV}, with the external legs playing the role of leg 3 therein.

Case 1 is manifestly proportional to $g^{\mu\nu}$ and contributes to $\mathcal{N}_{h_I}^{\mu\nu}$ and $\mathcal{N}_{h_I-1}^{\mu\nu}$ in \eref{LargeExap} and \eref{barTerm}; Case 2 and Case 3 contribute to $\mathcal{N}_{h_I-1}^{\mu\nu}$ and $\mathcal{N}_{h_{II}}^{\mu\nu}$ respectively, and are manifestly antisymmetric in $\mu$ ad $\nu$; Case 4, which contributes to $\mathcal{N}_{h_{II'}}^{\mu\nu}$, is manifestly proportional to $g^{\mu\nu}$. Thus according to \eref{LargeExap}, an immediate conclusion is made that, for adjacent or non-adjacent BCFW shifts, $\mathcal{M}_1^{\mu\nu}$ is proportional to $g^{\mu\nu}$, and $\mathcal{M}_0^{\mu\nu}$  is in the form of $A g^{\mu\nu}+B^{\mu\nu}$  with $B^{\mu\nu}$ antisymmetric in $\mu$ and $\nu$. In the next section, we will see how non-adjacent shifts imply improved boundary behavior compared with adjacent shifts.

\section{Amplitudes with Non-adjacent BCFW Shifts}\label{Sec:Non-Adj}
We first show a property which is special for non-adjacent BCFW shifts. Such property is very useful in analyzing each summation in the right hand side of \eref{barTerm}. Furthermore, it is this property that results in better boundary behavior for amplitudes under non adjacent shifts.   
%\textcolor{red}{Delete: We use $u_i,$ to represent the i-th  leg of the $\bar V^3$ above the complex line, $u_i u_{i+1}$ the i-th and $i+1$-th legs of $\bar V^4_{u_i u_{i+1}}$ above, $d_i$ the j-th leg below, $d_j d_{j+1}$ the j-th and $j+1$-th legs of $\bar V^4_{d_j d_{j+1}}$ below, and $u_i d_{j}$ the i-th and j-th legs of $\bar V^4_{u_i d_{j}}$.} 

%the products of the vertexes are not the same but just the highest order are the same

%\begin{figure}[htb]
%\centering
%\includegraphics{GenD.eps}
%\caption{A example}
%\label{Tree1}
%\end{figure}

\subsection{Permutation Sums}
In this subsection, we discuss $\sum_{\mathcal{D}_I} {\mathcal{N}_{h_I}^{\mu\nu}\over \mathcal{C}_{h_I-1}}$ in detail. The conclusions also hold for other similar summations in \eref{barTerm}, ie. $\sum_{\mathcal{D}_{II}} {\mathcal{N}_{h_{II}}^{\mu\nu}\over \mathcal{C}_{h_{II}}}$ and $\sum_{\mathcal{D}_{III}} {\mathcal{N}_{h_{III}}^{\mu\nu}\over \mathcal{C}_{h_{III}+1}}$. We use $k_{l,u_i}$ to denote for $k_l+k_{u_1}+k_{u_{u_1}}+\cdots+k_{u_i}$ and $k_{d_j,u_i}$ for $k_{d_j}+k_{d_{j-1}}+\cdots +k_{d_1}+k_l+k_{u_1}+k_{u_2}+\cdots+k_{u_j}$.  As a warm-up exercise, we investigate an example with N legs above and 1 leg below the complex line, see Figure \ref{example1}.
\begin{figure}[]
\centering
\includegraphics{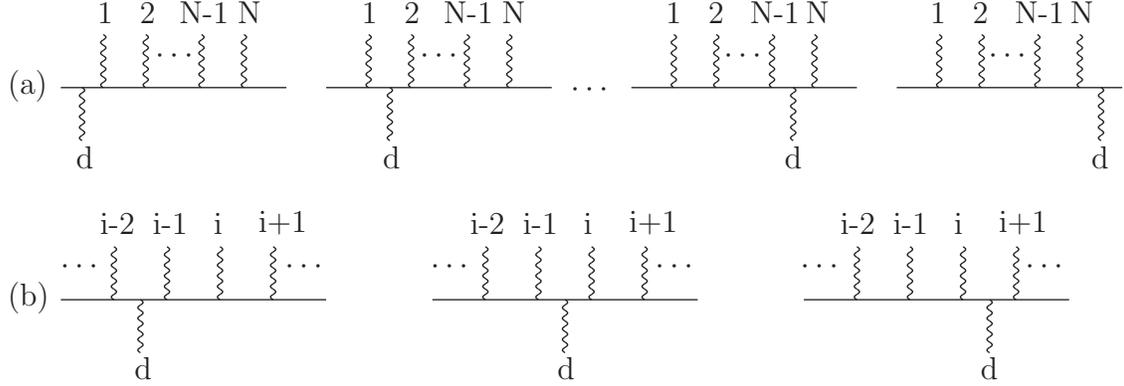}
\caption{When there are N legs above and 1 leg below the complex line, (a) shows all the diagrams with all external legs attaching the complex line and contributing at $\mathcal{O}(z^1)$. (b) contains three diagrams out of (a) for analysis.}
\label{example1}
\end{figure}
We first investigate the highest $z$ order terms of the products of the propagators for the three diagrams as in (b) of Figure \ref{example1}. For convenience, we will omit the $-i$ factors in the propagators in the following. Since there is only one leg "d" below the complex line, this "d" can be viewed as "$d_1$". For the three diagrams of (b) in Figure \ref{example1}, $\mathcal{C}_{h_I-1} (\{u_i,d_j, u_i u_{i+1}, d_j d_{j+1}, u_i d_j\})$ are:

%\cgd{For $\mathcal{O}(z^1)$ of this amplitude ie. $\mathcal{\bar M}_1^{\mu\nu}$, we can use $\mathcal{C}_{h_I-1} (\{u_i,d_j, u_i u_{i+1}, d_j d_{j+1}, u_i d_j\})$, ie. the highest order terms of the propagators, instead of $\mathcal{C} (\{u_i,d_j, u_i u_{i+1}, d_j d_{j+1}, u_i d_j\})$ which is the product of the propagators, and we can use $\mathcal{N}_h^{\mu\nu}(\{u_i,d_j, u_i u_{i+1}, d_j d_{j+1}, u_i d_j\})$ instead of $\mathcal{N}^{\mu\nu}(\{u_i,d_j, u_i u_{i+1}, d_j d_{j+1}, u_i d_j\})$. All the vertexes in the diagrams of Figure \eref{example1} are reduced three point vertexes with only S components, which means that $\mathcal{N}_h^{\mu\nu}(\{u_i,d_j, u_i u_{i+1}, d_j d_{j+1}, u_i d_j\})$ for all these diagrams are the same. }

\bea\label{b1}
&&\frac{1}{2 z k_{l,u_1}\cdot \eta}\frac{1}{2 z k_{l,u_2}\cdot \eta}\cdots \frac{1}{2 z k_{l,u_{i-2}}\cdot \eta}\frac{1}{2 z k_{d,u_{i-2}}\cdot \eta}\frac{1}{2 z k_{d,u_{i-1}}\cdot \eta}\cdots\frac{1}{2 z k_{d,u_{N-1}}\cdot \eta}\\
=&&\frac{1}{2z k_d\cdot \eta}\frac{1}{2 z k_{l,u_1}\cdot \eta}\cdots \frac{1}{2 z k_{l,u_{i-3}}\cdot \eta}(\frac{1}{2 z k_{l,u_{i-2}}\cdot \eta}-\frac{1}{2 z k_{d,u_{i-2}}\cdot \eta})\frac{1}{2 z k_{d, u_{i-1}}\cdot \eta}\cdots\frac{1}{2 z k_{d, u_{N-1}}\cdot \eta}.\nonumber
\eea
%%%%%%%%%%%%%%%%%%%%%%
\bea\label{b2}
&&\frac{1}{2 z k_{l,u_1}\cdot \eta}\frac{1}{2 z k_{l,u_2}\cdot \eta}\cdots \frac{1}{2 z k_{l,u_{i-1}}\cdot \eta}\frac{1}{2 z k_{d,u_{i-1}}\cdot \eta}\frac{1}{2 z k_{d,u_{i}}\cdot \eta}\cdots\frac{1}{2 z k_{d, u_{N-1}}\cdot \eta}\\
=&&\frac{1}{2z k_d\cdot \eta}\frac{1}{2 z k_{l,u_1}\cdot \eta}\cdots \frac{1}{2 z k_{l,u_{i-2}}\cdot \eta}(\frac{1}{2 z k_{l,u_{i-1}}\cdot \eta}-\frac{1}{2 z k_{d,u_{i-1}}\cdot \eta})\frac{1}{2 z k_{d,u_{i}}\cdot \eta}\cdots\frac{1}{2 z k_{d,u_{N-1}}\cdot \eta}.\nonumber
\eea
\bea\label{b3}
&&\frac{1}{2 z k_{l,u_1}\cdot \eta}\frac{1}{2 z k_{l,u_2}\cdot \eta}\cdots \frac{1}{2 z k_{l,u_i}\cdot \eta}\frac{1}{2 z k_{d, u_{i}}\cdot \eta}\frac{1}{2 z k_{d, u_{i+1}}\cdot \eta}\cdots\frac{1}{2 z k_{d, u_{N-1}}\cdot \eta}\\
=&&\frac{1}{2z k_d\cdot \eta}\frac{1}{2 z k_{l,u_1}\cdot \eta}\cdots \frac{1}{2 z k_{l,u_{i-1}}\cdot \eta}(\frac{1}{2 z k_{l,u_{i}}\cdot \eta}-\frac{1}{2 z k_{d,u_{i}}\cdot \eta})\frac{1}{2 z k_{d, u_{i+1}}\cdot \eta}\cdots\frac{1}{2 z k_{d, u_{N-1}}\cdot \eta}.\nonumber
\eea
It is observed that the first term in \eref{b2} cancels the second term in \eref{b3} and the first term in \eref{b1} cancels the second term in \eref{b2}. This manner of cancellation happens for each two successive diagrams in (a) of Figure \ref{example1}, and it is found that the sum of the highest order z terms of the products of propagators in all diagrams in (a) of Figure \ref{example1} turns out to be 0. When including the numerator, ie. the product of the vertexes $\mathcal{N}^{\mu\nu}$, the summation  of  equations such as  \eref{b1}, \eref{b2} and \eref{b3} for all the diagrams in (a) of Figure \ref{example1} is  just
\bea\label{Rec1}
\sum_{\mathcal{D}_I}{\mathcal{N}^{\mu\nu}\over \mathcal{C}_{h_I-1}}&\equiv&\sum_{OP\{\alpha_{u_N} \bigcup d\}}{\mathcal{N}^{\mu\nu}\over \mathcal{C}_{h_I-1}}\nb\\
&=&\sum_{i=1}^N{1\over 2 z k_{l,u_1}\cdot \eta}{1\over 2 z k_{l,u_2}\cdot \eta}\cdots {1\over 2 z k_{l,u_{i-1}}\cdot \eta}{1\over 2 z k_{d}\cdot \eta}{1\over 2 z k_{d,u_{i}}\cdot \eta}\cdots{1\over 2 z k_{d,u_N}\cdot \eta}\nb\\
&\times&(\mathcal{N}^{\mu\nu}(\cdots d, u_i, \cdots)-\mathcal{N}^{\mu\nu}(\cdots u_i, d, \cdots)).
\eea

For general non-adjacent BCFW shifts with $N$ up-legs and $M$ down-legs.  We can prove that the summation in \eref{barTerm} can be recombined into the summation of terms  like \eref{Rec1}.
\begin{theorem}\label{lemResum}
\bea\label{Resum}
&&\sum_{\mathcal{D}_I}{\mathcal{N}^{\mu\nu}(\{u_i,d_j\})\over \mathcal{C}_{h_I-1}(\{u_i,d_j\})}\equiv \sum_{OP\{\alpha_{u_N} \bigcup \alpha_{d_M}\}}{\mathcal{N}^{\mu\nu}(\{u_i,d_j\})\over \mathcal{C}_{h_I-1}(\{u_i,d_j\})}\nb \\
&=&\sum_{j, i=1}^{M,N}\sum_{OP\{\alpha_{u_{i-1}}\atop \bigcup \alpha_{d_{j-1}}\}}\sum_{{OP\{\alpha_{(u_{i+1},u_M)}\atop \alpha_{(d_{j+1}, d_{M})}\} }}\nb \\
&&{1\over 2 z k_{l,u_1}\cdot \eta}{1\over 2 z k_{d_1,u_1}\cdot \eta}\cdots {1\over 2 z k_{d_{j-1},u_{i-1}}\cdot \eta}{1\over 2 z (k_{d_1}+\cdots+k_{d_M})\cdot \eta}{1\over 2 z k_{d_{j},u_i}\cdot \eta}\cdots{1\over 2 z k_{d_{M-1},u_N}\cdot \eta}\nb\\ \nb\\ 
&\times&(\mathcal{N}^{\mu\nu}(\cdots, d_j, u_i, \cdots)-\mathcal{N}^{\mu\nu}(\cdots, u_i, d_j, \cdots)).\nb\\
\eea
In the last line of \eref{Resum}, only the order of nearby up-leg and down-leg pair, ie. $u_i$ and $d_j$ is inter-changed.  In the original form in large z limit only one of the propagators in $\mathcal{M}^{\mu\nu}(\cdots, d_j, u_i, \cdots)$ and $\mathcal{M}^{\mu\nu}(\cdots, u_i, d_j, \cdots)$ is different which is the propagator between $u_i$  and $d_j$. In the recombined summation, this different propagator is replaced with ${1\over 2 z (k_{d_1}+\cdots+k_{d_M})\cdot \eta}$ while other propagators are not changed. Similar equations hold for the other summations in \eref{barTerm}. For example, for $\sum_{\mathcal{D}_{II}} {\mathcal{N}_{h_{II}}^{\mu\nu}\over \mathcal{C}_{h_{II}}}$, say the four point vertex in $\mathcal{D}_{II}$ is $\bar V_{d_j d_j+1}$, we simplify identify this four point vertex as a three point vertex at the corresponding position in the corresponding $\mathcal D_I$ type of diagram. That is to say, we define $d'_i=d_i$ for $i<j$, $d'_i=d_j d_{j+1}$ for $i=j$, $d'_i=d_{i+1}$ for $i>j$, $k_{d'_i}=k_{d_i}$ for $i<j$, $k_{d'_i}=k_{d_j}+k_{d_{j+1}}$ for $i=j$, $k_{d'_i}=k_{d_{i+1}}$ for $i>j$, and replace the $\{d_i\}$ in \eref{Resum} with $\{d'_i\}$. We do not repeat for other summations in \eref{barTerm}.
\end{theorem}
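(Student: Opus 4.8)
The plan is to promote the explicit cancellation already carried out in the warm-up example (the case of $N$ up-legs and a single down-leg, equations \eref{b1}--\eref{b3} culminating in \eref{Rec1}) to arbitrary $M$ by iterating the same elementary partial-fraction collapse. The engine of the whole argument is the identity
\beq
\frac{1}{(2z\,A\cdot\eta)(2z\,B\cdot\eta)}=\frac{1}{2z\,(B-A)\cdot\eta}\left(\frac{1}{2z\,A\cdot\eta}-\frac{1}{2z\,B\cdot\eta}\right),
\eeq
which holds because $\eta^2=0$ forces every complex-line propagator $\widehat{k}^2=k^2+2z\,k\cdot\eta$ to be linear in $z$ at leading order, so that $\mathcal{C}_{h_I-1}$ is literally a product of eikonal factors $1/(2z\,K\cdot\eta)$ over the partial momentum sums $K$ running along the line. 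When $A$ and $B$ are the momenta flowing through the two propagators straddling an adjacent up--down pair, $B-A$ is the momentum of the single leg sitting between them, so the right-hand side is a genuine telescoping step.

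First I would record the leading behavior of the building blocks: each complex-line propagator contributes $1/(2z\,K\cdot\eta)$, and two shuffles related by transposing an adjacent pair $u_i,d_j$ differ only in the momentum label of the internal line shared between the vertices attaching those legs. This is exactly what makes the difference $\mathcal{N}^{\mu\nu}(\cdots d_j,u_i,\cdots)-\mathcal{N}^{\mu\nu}(\cdots u_i,d_j,\cdots)$ in \eref{Resum} the natural object: the telescoping pairs each shuffle with the one obtained by swapping its pivot pair. I would then reprove the $M=1$ collapse \eref{Rec1} in this streamlined language, since it serves as the base case.

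For the inductive step I would organize the shuffle sum $OP\{\alpha_{u_N}\cup\alpha_{d_M}\}$ by the position of the pivot pair $(u_i,d_j)$ and apply the partial-fraction identity repeatedly to the propagators that separate up-legs from down-legs. The bookkeeping must show that telescoping in the prefix leaves the shuffle $OP\{\alpha_{u_{i-1}}\cup\alpha_{d_{j-1}}\}$ of the remaining left legs, telescoping in the suffix leaves $OP\{\alpha_{(u_{i+1},u_M)}\cup\alpha_{(d_{j+1},d_M)}\}$, and that the single surviving ``broken'' propagator between them collapses not to a one-leg eikonal factor but to $1/(2z\,(k_{d_1}+\cdots+k_{d_M})\cdot\eta)$. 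The appearance of the \emph{total} down-momentum here, rather than a partial sum, is the crucial and nonobvious feature; I expect to use momentum conservation along the complex line together with $k_l\cdot\eta=k_r\cdot\eta=0$ to rewrite the accumulated partial sums into this form. Finally, the extension to the $\sum_{\mathcal{D}_{II}}$ and $\sum_{\mathcal{D}_{III}}$ sums follows by treating each reduced four-point vertex as a merged three-point vertex at the corresponding position, exactly as indicated below the theorem, so that no separate argument is needed.

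The hard part will be the nested telescoping for $M>1$. Unlike the single-down-leg warm-up, where the cancellation is a one-dimensional chain, the up- and down-legs now interleave in two independent orders, and one must check that the prefix and suffix collapses are mutually compatible and that all interior cancellations are complete, leaving precisely the pivot-difference structure of \eref{Resum} with the total-down-momentum propagator. Verifying that the numerators track the propagator cancellations --- that the terms being subtracted really do carry matching vertex factors away from the pivot --- is where the bulk of the care is needed.
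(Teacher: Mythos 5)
Your toolkit is the right one --- the eikonal partial-fraction identity (which is indeed the engine of the paper's $M=1$ warm-up \eref{Rec1}), the $M=1$ base case, the conditions $\eta^2=k_l\cdot\eta=k_r\cdot\eta=0$, and the merged-vertex reduction of $\sum_{\mathcal{D}_{II}}$ and $\sum_{\mathcal{D}_{III}}$ --- but the step you defer as ``the hard part'' is the entire content of the theorem, and the mechanism you sketch for it does not close as stated. The partial-fraction identity applied to the two propagators straddling a leg always extracts the eikonal factor of their momentum \emph{difference}, i.e.\ a single-leg factor $1/(2z\,k_{d_j}\cdot\eta)$ or some partial sum; it never by itself produces the universal factor $1/(2z\,(k_{d_1}+\cdots+k_{d_M})\cdot\eta)$ of \eref{Resum}. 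Momentum conservation cannot repair this ``at the end,'' because with $k_l\cdot\eta=k_r\cdot\eta=0$ it supplies exactly one relation, $(k_{u_1}+\cdots+k_{u_N})\cdot\eta=-(k_{d_1}+\cdots+k_{d_M})\cdot\eta$, between \emph{total} sums; the generic partial sums accumulated by telescoping are not convertible by it. Already in the first nontrivial case $N=1$, $M=2$, one must interleave the two moves inside individual shuffle terms (e.g.\ replace $1/(2z\,k_{u}\cdot\eta)$ by $-1/(2z(k_{d_1}+k_{d_2})\cdot\eta)$ using the complement relation \emph{before} any regrouping); prescribing which propagator to break in each shuffle for general $N,M$, and showing that the leftovers reassemble into prefix/suffix shuffle sums times pivot differences, is precisely what has to be proved --- your ``the bookkeeping must show that\dots'' restates the claim rather than establishing it. There is also an organizational flaw: a given shuffle contains many adjacent up--down pairs, so ``organizing the shuffle sum by the position of the pivot pair'' is not a partition of the left-hand side; the pivot structure exists only on the right-hand side.

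The paper closes this gap with two devices you could borrow. First, it reduces the identity to coefficient matching: since the $\mathcal{N}^{\mu\nu}(\sigma)$ are independent symbols, it suffices to show, for each fixed ordering $\sigma$, that the sum of right-hand-side propagator coefficients multiplying $\mathcal{N}^{\mu\nu}(\sigma)$ --- one $+$ term for each adjacent $(d_j,u_i)$ pair in $\sigma$, one $-$ term for each adjacent $(u_i,d_j)$ pair --- equals $1/\mathcal{C}_{h_I-1}(\sigma)$; this also dissolves your worry about ``numerators tracking the propagator cancellations.'' Second, it inducts on the rightmost leg of $\sigma$: writing the right-hand-side coefficient as $1/\bar{\mathcal{C}}(\sigma)$, it proves the dichotomy that $\bar{\mathcal{C}}=\mathcal{C}$ whenever $\sigma$ ends in an up-leg, while $1/\bar{\mathcal{C}}=(1/\mathcal{C})\times\bigl(-{\textstyle\sum} k_{u}\cdot\eta\bigr)/\bigl({\textstyle\sum} k_{d}\cdot\eta\bigr)$ (sums over the legs attached so far) whenever $\sigma$ ends in a down-leg. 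Attaching one more leg preserves this dichotomy --- the new ordinary propagator, the rescaling of the special propagator from $\sum_{j\le M-1}k_{d_j}$ to $\sum_{j\le M}k_{d_j}$, and one extra exchange term combine exactly --- and only when \emph{all} legs are present does momentum conservation set the anomaly ratio to $1$. That explicitly tracked anomaly ratio, rather than a terminal application of momentum conservation, is the piece your telescoping plan is missing.
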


\begin{proof}
To prove this, we only need to prove  that each term with a fixed order of up and down type legs in the left hand side of \eref{Resum} is equal to the sum of terms in the right hand side with the same order in $\mathcal{N}^{\mu\nu}$. This can be done recursively. First we assume that, for each ordering of legs in $\mathcal{N}^{\mu\nu}$, the summation  of the right hand side of \eref{Resum} with $N-1$ up-legs and $M-1$ down-legs is  
\bea
{\mathcal{N}^{\mu\nu}(\cdots u_{N-1})\over\mathcal{\bar C}_{h_I-1}(\cdots u_{N-1})}
\eea
when the most right side leg is $u_{N-1}$, with
$$\frac{1}{\mathcal{\bar C}_{h_I-1}(\cdots u_{N-1})}={1\over \mathcal{C}_{h_I-1}(\cdots u_{N-1})}=
{1\over 2 z k_{u_1}\cdot \eta}{1\over 2 z (k_{u_1}+k_{d_1})\cdot \eta}\cdots  \cdots{1\over 2 z k_{d_{M-1} u_{N-2}}\cdot \eta}.
$$
Similarly for the case with the most right side leg being $d_{M-1}$, the summation is
\bea
{\mathcal{N}^{\mu\nu}(\cdots d_{M-1})\over \mathcal{\bar C}_{h_I-1}(\cdots d_{M-1})}
\eea
with ${1\over \mathcal{\bar C}_{h_I-1}(\cdots d_{M-1})}={1\over \mathcal{C}_{h_I-1}(\cdots d_{M-1})}{-2 z (k_{u_1}+\cdots+k_{u_{N-1}})\cdot \eta\over 2 z (k_{d_1}+\cdots+k_{d_{M-1}})\cdot \eta}$ and $${1\over\mathcal{C}_{h_I-1}(\cdots d_{M-1})}={1\over 2 z k_{u_1}\cdot \eta}{1\over 2 z (k_{u_1}+k_{d_1})\cdot \eta}\cdots  \cdots{1\over 2 z k_{d_{M-2} u_{N-1}}\cdot \eta}. $$
Then if we attach leg $u_N$ to the complex line following the sequence  $(\cdots u_{N-1})$, we can get 
\bea
{\mathcal{\bar C}_{h_I-1}(\cdots u_{N-1} u_{N})}={\mathcal{C}_{h_I-1}(\cdots u_{N-1} u_{N})}.
\eea

If we attach $u_N$ to the complex line following the sequence $(\cdots d_{M-1})$, we can obtain 
\bea
\mathcal{\bar C}_{h_I-1}(\cdots d_{M-1} u_{N})&=&\mathcal{\bar C}_{h_I-1}(\cdots d_{M-1}){1\over2 z (k_{u_{N-1}}+\cdots+k_{d_{M-1}})\cdot \eta}\nb\\ &&+\mathcal{C}_{h_I-1}(\cdots d_{M-1}){1\over 2 z (k_{d_1}+\cdots+k_{d_{M-1}})\cdot\eta}\nb\\
&=&\mathcal{C}_{h_I-1}(\cdots d_{M-1}u_N).
\eea
Here there is one additional contribution from changing the order of  $d_{M-1}$ and $u_N$ in the right hand side of \eref{Resum}.

Similarly, if we attach the leg $d_M$ to the complex line following the sequence $(\cdots d_{M-1})$, we can get 
\bea
{1\over\mathcal{\bar{C}}_{h_I-1}(\cdots d_{M-1} d_{M})}&=&{1\over\mathcal{\bar{C}}_{h_I-1}(\cdots d_{M-1})}{1\over2 z (k_{u_{N-1}}+\cdots+k_{d_{M-1}})\cdot \eta}\times {2 z (k_{d_1}+\cdots+k_{d_{M-1}})\cdot\eta\over 2 z (k_{d_1}+\cdots+k_{d_{M}})\cdot\eta}\nb\\ &=&{1\over \mathcal{C}_{h_I-1}(\cdots d_{M-1}d_{M})}{-2 z (k_{u_1}+\cdots+k_{u_{N-1}})\cdot \eta\over 2 z (k_{d_1}+\cdots+k_{d_{M}})\cdot \eta}.
\eea
And if attaching  the line $d_M$ to the complex line following the sequence $(\cdots u_{N-1})$,  we can get 
\bea
{1\over \mathcal{\bar{C}}_{h_I-1}(\cdots  u_{N-1} d_{M})}&=&{1\over \mathcal{\bar C}_{h_I-1}(\cdots u_{N-1})}{2 z (k_{d_1}+\cdots+k_{d_{M-1}})\cdot\eta\over 2 z (k_{d_1}+\cdots+k_{d_{M}})\cdot\eta}{1\over2 z (k_{u_{N-1}}+\cdots+k_{d_{M-1}})\cdot \eta}\nb\\ &&+{1\over \mathcal{C}_{h_I-1}(\cdots u_{N-1})}{-1\over 2 z (k_{d_1}+\cdots+k_{d_{M}})\cdot\eta}\nb\\
&=&{1\over \mathcal{\bar C}_{h_I-1}(\cdots u_{N-1}d_M)}{-2 z (k_{u_1}+\cdots+k_{u_{N-1}})\cdot \eta\over 2 z (k_{d_1}+\cdots+k_{d_{M}})\cdot \eta}.
\eea

Thus for $N$ up legs and $M$ down legs, we get:
\bea
\mathcal{\bar C}_{h_I-1}(\cdots u_{N})&=&\mathcal{C}_{h_I-1}(\cdots u_{N}) \nb\\
{1\over  \mathcal{\bar{C}}_{h_I-1}(\cdots  d_{M})}&=&{1\over \mathcal{C}_{h_I-1}(\cdots d_M)}{-2 z (k_{u_1}+\cdots+k_{u_{N}})\cdot \eta\over 2 z (k_{d_1}+\cdots+k_{d_{M}})\cdot \eta}.
\eea
With momenta conservation and the shift condition \eref{conditionEta4} it is easy to see 
\bea
{-2 z (k_{u_1}+\cdots+k_{u_{N}})\cdot \eta\over 2 z (k_{d_1}+\cdots+k_{d_{M}})\cdot \eta}=1.
\eea
By induction, the equation \eref{Resum}, ie. \tref{lemResum}, has been proved. \endofproof
\end{proof}
\begin{corollary}\label{corResum}
When the $\mathcal{N}^{\mu\nu}(\{u_i,d_j\})$ are independent of the relative orders of the external legs, we have 
\be
\sum_{\mathcal{D}_I}{1\over \mathcal{C}_{h_I-1}}=0.
\ee
Such equations hold also for the other cases in \eref{barTerm}. For example, $\sum_{\mathcal{D}_{II}}{1\over \mathcal{C}_{h_{II}}}=0$ and $\sum_{\mathcal{D}_{III}}{1\over \mathcal{C}_{h_{III}+1}}=0$.
\end{corollary}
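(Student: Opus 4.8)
The plan is to obtain Corollary \ref{corResum} directly from Theorem \ref{lemResum}, whose identity \eref{Resum} already carries out the entire combinatorial cancellation. The first observation I would make is that every summand on the right-hand side of \eref{Resum} is proportional to a difference $\mathcal{N}^{\mu\nu}(\cdots, d_j, u_i, \cdots)-\mathcal{N}^{\mu\nu}(\cdots, u_i, d_j, \cdots)$ in which only the relative order of a single up-leg $u_i$ and a single down-leg $d_j$ has been swapped. The hypothesis of the corollary is exactly that $\mathcal{N}^{\mu\nu}$ does not depend on the relative order of the external legs; hence each of these differences vanishes identically, and the whole right-hand side of \eref{Resum} is zero.

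It then remains to read the vanishing back off the left-hand side. Since under the hypothesis $\mathcal{N}^{\mu\nu}$ takes the same value on every ordered diagram of type $\mathcal{D}_I$, it is a common factor that pulls out of the permutation sum, so that \eref{Resum} reduces to $\mathcal{N}^{\mu\nu}\sum_{\mathcal{D}_I}1/\mathcal{C}_{h_I-1}=0$. As $\mathcal{N}^{\mu\nu}$ is a fixed nonzero tensor while $\sum_{\mathcal{D}_I}1/\mathcal{C}_{h_I-1}$ is an ordinary scalar, this forces $\sum_{\mathcal{D}_I}1/\mathcal{C}_{h_I-1}=0$. Equivalently, one may simply rerun the inductive telescoping in the proof of Theorem \ref{lemResum} with the numerators set to unity; this is nothing but the elementary cancellation already exhibited in the warm-up of Figure \ref{example1}, where the leading-$z$ products of propagators were seen to sum to zero once the numerators no longer distinguish orderings.

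For the remaining cases $\sum_{\mathcal{D}_{II}}1/\mathcal{C}_{h_{II}}=0$ and $\sum_{\mathcal{D}_{III}}1/\mathcal{C}_{h_{III}+1}=0$ I would reuse the same device employed in Theorem \ref{lemResum}: each reduced four-point vertex is identified with an effective three-point vertex by merging its two adjacent same-side legs, for example $d'_j=d_jd_{j+1}$ with $k_{d'_j}=k_{d_j}+k_{d_{j+1}}$, after which one replaces $\{d_i\}$ by $\{d'_i\}$ throughout \eref{Resum}. Since this merging neither disturbs the recombination of the denominators nor the pairwise-difference structure of the numerators, the same argument applies verbatim. There is no real obstacle here: all of the analytic content sits in Theorem \ref{lemResum} (ultimately in the momentum-conservation identity $-2z(k_{u_1}+\cdots+k_{u_N})\cdot\eta = 2z(k_{d_1}+\cdots+k_{d_M})\cdot\eta$ used at the end of its proof), and the only points demanding care are the legitimacy of factoring out the order-independent $\mathcal{N}^{\mu\nu}$ and checking that the effective-vertex merging preserves the order-independence hypothesis.
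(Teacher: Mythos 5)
Your proof is correct and takes essentially the same route the paper intends: \cref{corResum} is read off directly from \tref{lemResum}, since under the order-independence hypothesis every difference $\mathcal{N}^{\mu\nu}(\cdots,d_j,u_i,\cdots)-\mathcal{N}^{\mu\nu}(\cdots,u_i,d_j,\cdots)$ on the right-hand side of \eref{Resum} vanishes, and the order-independent numerator factors out of the left-hand side (equivalently, and more cleanly, one reruns the theorem with unit numerators, which also disposes of the degenerate case $\mathcal{N}^{\mu\nu}=0$). Your treatment of $\sum_{\mathcal{D}_{II}}1/\mathcal{C}_{h_{II}}$ and $\sum_{\mathcal{D}_{III}}1/\mathcal{C}_{h_{III}+1}$ via the merged effective legs $d'_j$ with $k_{d'_j}=k_{d_j}+k_{d_{j+1}}$ is precisely the device prescribed in the statement of \tref{lemResum}.
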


\subsection{$\mathcal{O}(z^1)$ Behavior of the Amplitudes in the Large $z$ Limit under Non-adjacent BCFW Shifts}

To obtain the $\mathcal{O}(z^1)$ behavior of the amplitude $\mathcal{M}^{\mu\nu}$, we only need the case 1 in \sref{theorem1Application}, that is $\mathcal{D}_I$ with all the reduced three point vertexes taking their S term components. Furthermore we only need to keep the terms with highest order of $z$ in all the vertexes and propagators, ie. $\mathcal{N}_{h_I}^{\mu\nu}$ and $\mathcal{C}^{h_I-1}$. The $z$ order of S term $S_{u/d}$ does not depend on its position on the complex line. As a result, all diagrams of type $\mathcal{D}_I$ have the same $\mathcal{N}_{h_I}^{\mu\nu}\propto g^{\mu\nu}$ and using \cref{corResum} we obtain:
\beq
\mathcal{M}_1^{\mu\nu}=\sum_{\mathcal{D}_I}\frac{\mathcal{N}_{h_I}^{\mu\nu}}{\mathcal{C}^{h_I-1}}=\mathcal{N}_{h_I}^{\mu\nu}\sum_{\mathcal{D}_I}\frac{1}{\mathcal{C}^{h_I-1}}=0.
\label{orderz}
\eeq
In conclusion, $\mathcal{O}(z^1)$ of $\mathcal{M}^{\mu\nu}$ for non-adjacent shifts vanish.
%%%%%

\subsection{$\mathcal{O}(z^0)$ Behavior of the Amplitudes in the Large $z$ Limit under Non-adjacent BCFW Shifts}\label{orderz0}
In this subsection, we are going to show that: for non-adjacent shifts, 
\be\label{Mz0}
\mathcal{M}_0^{\mu\nu}\propto g^{\mu\nu}.
\ee

Using \eref{LargeExap} and \eref{barTerm}, we can classify the terms that contribute to $\mathcal{M}_0^{\mu\nu}$ into the following groups:
\begin{itemize}
\item $\sum_{\mathcal{D}_I} {c_{h_I-2} \mathcal{N}_{h_I}^{\mu\nu}\over c_{h_I-1}^2 }\propto g^{\mu\nu}$, since $\mathcal{N}_{h_I}^{\mu\nu}$ is proportional to $g^{\mu\nu}$ in diagrams $\mathcal{D}_I$. 
\item $\sum_{\mathcal{D}_{II'}} {\mathcal{N}_{h_{II'}}^{\mu\nu}\over c_{h_{II'}}}\propto g^{\mu\nu}$.  In $\mathcal{D}_{II'}$,  there is one reduced four point vertex ${\bar V}_{u_i d_j}$ in the complex line. And all the others are reduced three point vertexes with only their $S$ term components. According to the forms of ${\bar V}_{u_i d_j}$ and $S$ term, it is seen $\mathcal{N}_{h_{II'}}^{\mu\nu}\propto g^{\mu\nu}.$
\item $\sum_{\mathcal{D}_{II}} {(\mathcal{N}_{II})_{h_{II}}^{\mu\nu}\over (c_{II})_{h_{II}}}=0$, using \cref{corResum}, essentially the same as in \eref{orderz}.
\item $\sum_{\mathcal{D}_I} {\mathcal{N}_{h_I-1}^{\mu\nu}\over c_{h_I-1}}\propto g^{\mu\nu}.$ $\mathcal{D}_I$ are the diagrams comprised all of reduced three point vertexes. There are two contributions to this summation. One contribution is when only one of the reduced three point vertexes takes its R term part and other vertexes take their S components. Without loss of generality, we assume the vertex with the leg $u_i$ takes its R part. All these diagrams have the same $\mathcal{N}_{h_I-1}^{\mu\nu}$. According to \cref{corResum}, the sum of all these diagrams contribute 0 to $\sum_{\mathcal{D}_I} {\mathcal{N}_{h_I-1}^{\mu\nu}\over c_{h_I-1}}$. The other contribution is when all the reduced three point vertexes take their $S$ term components. This contribution is obviously proportional to $g^{\mu\nu}$.
\end{itemize}
%For the meaning of R and S term in the reduced three point vertex, refer to \eref{newV3s4} and \eref{ReduV}, with the external legs playing the role of Line 3 therein.
Thus we have proven that for non-adjacent shifts, $\mathcal{M}_0^{\mu\nu}$ is proportional to $g^{\mu\nu}$.

\subsection{$\mathcal{O}(z^{-1})$ Behavior of the Amplitudes in the Large $z$ Limit under Non-adjacent BCFW Shifts}\label{order1overz}
The previous two sub sections do not depend on whether the external legs are on-shell or off-shell. In this sub section, we discuss $\mathcal{M}_{-1}^{\mu\nu}$ in the two cases when the external lines are all on-shell and when some of them are off-shell.

When all external lines are on shell, the "generalized Ward identities" in \lref{GWI1} and \lref{GWI2} become the real Ward identities where the expressions are exactly zero. Thus the last term for $\mathcal{M}_{-1}^{\mu\nu}$ in \eref{LargeExap}, ie. $\mathcal{M}_{-1(M)}^{\mu\nu}$, is 0. By the similar arguments as in the last sub section,  it is easy to see that each other term except $\mathcal{\bar M}_{-1}^{\mu\nu}$ in the third equation of \eref{LargeExap} is in the form of $A g^{\mu\nu}+B^{\mu\nu}$ with $B^{\mu\nu}$ antisymmetric in $\mu$ and $\nu$. We are going to concentrate on terms that contribute to $\mathcal{\bar M}_{-1}^{\mu\nu}$ in \eref{barTerm}:
\begin{itemize}

\item $\sum_{\mathcal{D}_I} {\mathcal{N}_{h_I-2}^{\mu\nu}\over c_{h_I-1}}\propto A g^{\mu\nu}+B^{\mu\nu}$. In $\mathcal{D}_I$, all the vertexes in the complex line are the three point vertexes $\bar V_{u/d}$. We can classify them into the following groups:  \\ \textbf{\textcircled{\bf{a}}} When $\bar V_{u/d}$ all take their $S$-term components or only one of them takes its $R$ term part, such contributions are obviously of form $A g^{\mu\nu}+B^{\mu\nu}$. \\ \textbf{\textcircled{\bf{b}}} When  the two vertexes with R parts are all above (or below) the complex line, for example $R_{u_i}$ and $R_{u_j}$, and others are all taking S terms. Furthermore since each R term decreases order of $z$ by 1 compared to S term, to contribute to the next to next order of the product of the vertexes ie. $\mathcal{N}_{h_I-2}^{\mu\nu}$, each S term of other vertexes should take its highest z order term and contributes the same to $\mathcal{N}_{h_I-2}^{\mu\nu}$ regardless of its position on the complex line. Thus $\mathcal{N}_{h_I-2}^{\mu\nu}$ are the same for all these diagrams. Same to \eref{orderz}, using \cref{corResum}, these terms contribute 0 to $\mathcal{\bar M}_{-1}^{\mu\nu}$. \\ \textbf{\textcircled{\bf{c}}} When the two vertexes with R parts are $R_{u_i}$ and $R_{d_j}$, with indices $\mu_{u_i}$ and $\mu_{d_j}$, other vertexes are all taking S components. $R_{u_i}$ and $R_{d_j}$ are also independent of their positions on the complex line. Thus as for the calculation of $\mathcal{N}_{h_I-2}^{\mu\nu}$, we can regard $S_{u_i'}$ and $S_{d_j'}$ as commuting, $S_{u_i'}$ and $R_{d_j}$ commuting, and $R_{u_i}$ and $S_{d_j'}$ commuting. Applying \tref{lemResum}, we can see that the only non-vanishing terms are from:

\begin{small}
\begin{equation*}
\mathcal{N}_{h_I-2}^{\mu\nu}(\cdots, d_j, u_i, \cdots)-\mathcal{N}_{h_I-2}^{\mu\nu}(\cdots, u_i, d_j, \cdots)\propto (R_{d_j})^\mu{}_{\rho}{}^{\mu_{d_j}}(R_{u_i})^{\nu\rho}{}^{\mu_{u_i}}-(R_{u_i})^{\rho\mu}{}^{\mu_{u_i}}(R_{d_j})_\rho{}^\nu{}^{\mu_{d_j}},
\end{equation*}
\end{small}
%actually it can also be proved using the \cref{corResum} instead of \tref{lemResum}. With two R terms, the highest order of the product of vertexes is the same for all diagrams. When $u_i$ is on the left of $d_j$, it is $A_1 T^{\mu\nu}$ with T some tensor. When $u_i$ is on the right of $d_j$, it is $A_2 T^{\nu\mu}$. According to \cref{corResum}, $A_1+A_2=0$, thus the contribution is antisymmetric in $\mu$ and $\nu$.

which is antisymmetric in $\mu$ and $\nu$, invoking that $R$ term is antisymmetric in its first two indices, referring to \eref{newV3s4}.

%These diagrams can be sorted according to $R_{u_i}$ on the left of $R_{d_j}$ or $R_{u_i}$ on the right of $R_{d_j}$. If for the former sort of diagrams, the product of the vertexes is $A^{\mu\nu}$, the product of the vertexes in the latter sort of diagrams is $A^{\nu\mu}$. Again, in these diagrams their propagators need only the highest order z terms for $\mathcal{O}(z^{-1})$. Thus, the sum of the product of the effective/boundary propagators for former sort of diagrams is opposite to that of the latter sort of diagrams, thus the sum of all diagrams is proportional to $A^{\mu\nu}-A^{\nu\mu}$, that is antisymmetric in $\mu$ and $\nu$. 

\item $\sum_{\mathcal{D}_{II}} {\mathcal{N}_{h_{II}-1}^{\mu\nu}\over c_{h_{II}}}\propto A g^{\mu\nu}+B^{\mu\nu}$. In $\sum_{\mathcal{D}_{II}}$, the diagrams are comprised of one reduced four point vertex, which is not ${\bar V}_{u_i d_j}$, and the rest vertexes are reduced three point vertexes. In the definition of the reduced vertexes \eref{ReduV}, we call the last term of $\bar{V}_{u_i u_j}$ or $\bar{V}_{d_i d_j}$ as metric term and the first two terms as antisymmetric term. For three point reduced vertex, we also call S term as the metric term and R term as the antisymmetric term. The discussion is parallel to the case above: \\ \textbf{\textcircled{\bf{a}}}   Only one or none of the reduced vertexes takes its anti-symmetric part. The contribution is of form $A g^{\mu\nu}+B^{\mu\nu}$. \\ \textbf{\textcircled{\bf{b}}} The four point vertex and one three point vertex take their antisymmetric terms. When they are both above (or below) the complex line, the contribution to $\sum_{\mathcal{D}_{II}} {\mathcal{N}_{h_{II}-1}^{\mu\nu}\over c_{h_{II}}}$ is 0.  \\ \textbf{\textcircled{\bf{c}}} The four point vertex and one three point vertex take their antisymmetric terms and they are on the opposite sides of the complex line. The contribution is antisymmetric in $\mu$ and $\nu$.

\item $\sum_{\mathcal{D}_{III}} {\mathcal{N}_{h_{III}}^{\mu\nu}\over c_{h_{III}+1}}\propto A g^{\mu\nu}+B^{\mu\nu}. $ In $\mathcal{D}_{III}$, the diagrams are comprised of two reduced four point vertexes, neither of which is ${\bar V}_{u_i d_j}$, and the other reduced three point vertexes all take their S term parts. The discussion is again parallel to the cases above: \\ \textbf{\textcircled{\bf{a}}}   Only one or none of the reduced four point vertexes takes its anti-symmetric part. The contribution is of form $A g^{\mu\nu}+B^{\mu\nu}$. \\ \textbf{\textcircled{\bf{b}}} The two reduced four point vertexes both take their anti-symmetric parts and are both above (or below) the complex line. It contributes 0 to $\sum_{\mathcal{D}_{III}} {\mathcal{N}_{h_{III}}^{\mu\nu}\over c_{h_{III}+1}}$. \\ \textbf{\textcircled{\bf{c}}} The two reduced four point vertexes take their antisymmetric parts and are on the opposite sides of the complex line. The contribution is antisymmetric in $\mu$ and $\nu$.

\end{itemize}

Above all, when all the external legs are on shell, for non-adjacent shifts, $\mathcal{O}(z^{-1})$ of $\mathcal{M}^{\mu\nu}$, ie. $\mathcal{M}_{-1}^{\mu\nu}$, is in form of a metric term plus a term antisymmetric in $\mu$ and $\nu$.

Now we discuss the case when some external legs are off-shell. The additional contribution is from the last term $\mathcal{M}_{-1(M)}^{\mu\nu}$ in \eref{LargeExap}, which is from the diagrams (a) and (b) of Figure \ref{Mtermcancel}. We analyze how the diagrams contribute to $\mathcal{M}_{-1}^{\mu\nu}$. Take the diagram (a) for example, with the last $M^R$ factor to be $M_{d_i}^R$ (same analysis for $M_{u_{i}}^R$). Assume the next vertex is $V_{u_{j}}$ (same analysis for $V_{d_{j}}$, and if the next vertex is four point vertex see below). Then $M_{d_i}^R\  V_{u_{j}}$ can be decomposed according to \eref{kdotV4} and Figure \ref{vertexnotation4}, see Figure \ref{MRdotV}.

\begin{figure}[htb]
\centering
\includegraphics{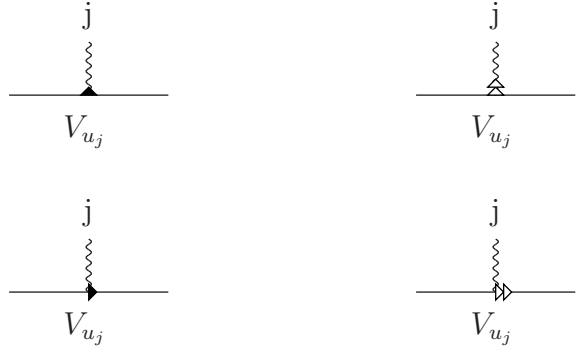}
\caption{Decomposition of $M_{d_i}^R\  V_{u_j}$ in the notations of Figure \ref{vertexnotation4}. The horizontal line is the complex line, and photon line represents external leg.}
\label{MRdotV}
\end{figure}

Among the four terms in Figure \ref{MRdotV}, the first line two terms combined is in the form
\beq
k_j^2 g^{\mu_j\delta}-k_j^{\mu_j} k_j^\delta,
\label{kkterm}
\eeq
where $\delta$ is some index we do not care here. The first term in the second line of Figure \ref{MRdotV} need not be considered since they will cancel in group (together with the terms when the next vertex is a four point vertex) in the manner of Figure \ref{treecancel2c4}. In this cancellation, diagrams with some vertexes outside the complex line is involved, but it does not affect the property of our conclusion, once we apply less point results to these diagrams. The second term in the second line of Figure \ref{MRdotV} acts on the next vertex on the complex line, and can be analyzed in the same steps as in this paragraph. Only when the vertex being acted on is the last vertex on the complex line, the second line two terms of Figure \ref{MRdotV} should be retained, which sum up to equal $k_r^2 g^{\nu\delta}-k_r^{\nu} k_r^\delta$, also in the form of \eref{kkterm}. (b) of Figure \ref{Mtermcancel} is similarly analyzed, and results in terms in the form of \eref{kkterm}. \eref{kkterm} is 0 when $k_j$ is on shell and only receives contributions from off shell external legs. Thus we can make the conclusion that the additional contributions to $\mathcal{M}_{-1}^{\mu\nu}$ from off shell external legs are:
\beq
\sum_{\mbox{\tiny off shell\ }j} (k_j^2 g^{\mu_j\delta}-k_j^{\mu_j} k_j^\delta)\cdots,
\label{kkform}
\eeq
where the sum is over each off shell external leg.

Direct calculation shows that \eref{kkform} is antisymmetric in $\mu$ and $\nu$ when there is only 1 leg above and 1 leg below the complex line, and not antisymmetric for 5 point amplitudes, unlike to be antisymmetric for more point amplitudes.

In conclusion, for non adjacent BCFW shifts of on shell tree amplitudes, $\mathcal{O}(z^{-1})$ of $\mathcal{M}^{\mu\nu}$ is in form of a metric term plus a term antisymmetric in $\mu$ and $\nu$; for amplitudes with off shell legs, $\mathcal{O}(z^{-1})$ has additional contributions from the off shell legs in the form of \eref{kkform}, which manifestly vanish when the legs become on shell. We guess that when gluing off shell tree amplitudes into on shell loop level amplitudes, terms in \eref{kkform} may cancel the contribution from ghost loops, which deserves further investigation.

\subsection{Example}
Using our results in \sref{Sec:Reduce} and \sref{Sec:Non-Adj}, besides proving the large $z$ behavior as above, we can also easily calculate the boundary terms of the shifted amplitudes. Two main simplifying features of our method of calculating boundary terms are: the tensor structures of the contributing terms are very simple due to the reduced vertexes \eref{ReduV} and \tref{redamplitude}; the number of contributing Feynman diagrams are reduced due to permutation sum \tref{lemResum}. We will show the calculation of boundary terms for non-adjacently shifted 4 and 5 point amplitudes.

According to \eref{LargeExap} and \eref{barTerm} and \sref{orderz0}, $\mathcal{M}_0^{\mu\nu}=\sum_{\mathcal{D}_1} {\mathcal{N}_{h_I-1}^{\mu\nu}\over \mathcal{C}_{h_I-1}}-\sum_{\mathcal{D}_I} {\mathcal{C}_{h_I-2} \mathcal{N}_{h_I}^{\mu\nu}\over \mathcal{C}_{h_I-1}^2 }+\sum_{\mathcal{D}_{II'}} {\mathcal{N}_{h_{II'}}^{\mu\nu}\over \mathcal{C}_{h_{II'}}}$. As discussed in \sref{orderz0}, for $\mathcal{N}_{h_I-1}^{\mu\nu}$ in the first term, all the three point vertexes only need to take their S parts.

First we calculate the boundary term of four point amplitude $\hat{\mathcal M}^{\mu_1\mu_2\mu_3\mu_4}$, with $k_1$ and $k_3$ shifted as in \eref{momshift4} and \eref{conditionEta4}: $k_1\to k_1+z\eta$, $k_3\to k_3-z\eta$, $k_1\cdot \eta=0$, $k_3\cdot \eta=0$. As explained in the paragraph above \tref{redamplitude}, we assume $\hat{\mathcal M}^{\mu_1\mu_2\mu_3\mu_4}$ to be contracted with $\hat \epsilon^{\mu_1}_1$ and $\hat \epsilon^{\mu_3}_3$, which satisfy $\hat k_1 \cdot \hat \epsilon_1=0$ and $\hat k_3 \cdot \hat \epsilon_3=0$. The diagrams are in Figure \ref{fourpoint}.

\begin{figure}[htb]
\centering
\includegraphics[height=4cm,width=12cm]{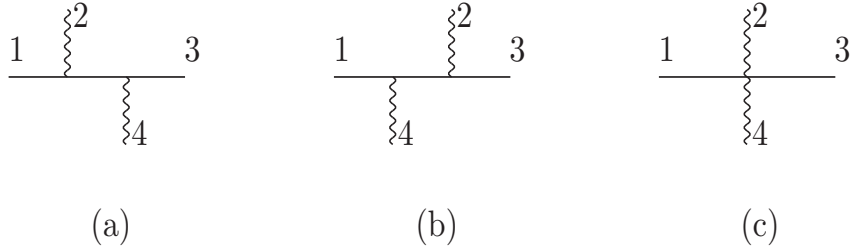}
\caption{The Feynman diagrams for four point amplitude. Legs $k_1$ and $k_3$ are shifted.}
\label{fourpoint}
\end{figure}

For $\sum_{\mathcal{D}_1} {\mathcal{N}_{h_I-1}^{\mu\nu}\over \mathcal{C}_{h_I-1}}$, (a) and (b) in Figure \ref{fourpoint} contribute. For the propagators, we only need the highest order in z. And for all the three point vertexes, we take their S parts, and extract the next leading order term in $z$ in the product of the vertexes. The result can be quickly written out. Especially, there is no trouble from tensor index contractions. $\sum_{\mathcal{D}_1} {\mathcal{N}_{h_I-1}^{\mu\nu}\over \mathcal{C}_{h_I-1}}=\frac{ig^{\mu_1\mu_3}(k_4^{\mu_2}\eta^{\mu_4}-k_2^{\mu_4}\eta^{\mu_2})}{k_2\cdot \eta}$. In it, $k_2\cdot\eta=-k_4\cdot \eta$ has been used, which is due to momentum conservation and \eref{conditionEta4}.

For $-\sum_{\mathcal{D}_I} {\mathcal{C}_{h_I-2} \mathcal{N}_{h_I}^{\mu\nu}\over \mathcal{C}_{h_I-1}^2 }$, (a) and (b) in Figure \ref{fourpoint} contribute. For the propagators, we need the leading and next leading order term. And for all the three point vertexes, we only need the leading order of their S parts. The result can again be quickly written out. $-\sum_{\mathcal{D}_I} {\mathcal{C}_{h_I-2} \mathcal{N}_{h_I}^{\mu\nu}\over \mathcal{C}_{h_I-1}^2 }=\frac{i g^{\mu_1\mu_3}\eta^{\mu_2}\eta^{\mu_4}(k_2^2+k_4^2-2k_1\cdot k_3)}{2(k_2\cdot\eta)^2}$.

For $\sum_{\mathcal{D}_{II'}} {\mathcal{N}_{h_{II'}}^{\mu\nu}\over \mathcal{C}_{h_{II'}}}$, only (c) in Figure \ref{fourpoint} contributes, and the vertex is very simple, refer to \eref{ReduV}. $\sum_{\mathcal{D}_{II'}} {\mathcal{N}_{h_{II'}}^{\mu\nu}\over \mathcal{C}_{h_{II'}}}=i g^{\mu_1\mu_3}g^{\mu_2\mu_4}$.

In sum, for four point amplitude with $k_1$ and $k_3$ shifted, the boundary term is:
\bea
\mathcal{M}_0^{\mu\nu}=&&\sum_{\mathcal{D}_1} {\mathcal{N}_{h_I-1}^{\mu\nu}\over \mathcal{C}_{h_I-1}}-\sum_{\mathcal{D}_I} {\mathcal{C}_{h_I-2} \mathcal{N}_{h_I}^{\mu\nu}\over \mathcal{C}_{h_I-1}^2 }+\sum_{\mathcal{D}_{II'}} {\mathcal{N}_{h_{II'}}^{\mu\nu}\over \mathcal{C}_{h_{II'}}}\nonumber\\
=&&i(\frac{k_4^{\mu_2}\eta^{\mu_4}-k_2^{\mu_4}\eta^{\mu_2}}{k_2\cdot \eta}+\frac{\eta^{\mu_2}\eta^{\mu_4}(k_2^2+k_4^2-2k_1\cdot k_3)}{2(k_2\cdot\eta)^2}+g^{\mu_2\mu_4})g^{\mu_1\mu_3}.
\eea

Then for five point amplitude $\hat{\mathcal M}^{\mu_1\mu_2\mu_3\mu_4\mu_5}$, with $k_1$ and $k_3$ shifted: $k_1\to k_1+z\eta$, $k_3\to k_3-z\eta$, $k_1\cdot \eta=0$, $k_3\cdot \eta=0$, the diagrams are in Figure \ref{fivepoint}.

\begin{figure}[htb]
\centering
\includegraphics[height=8cm,width=15cm]{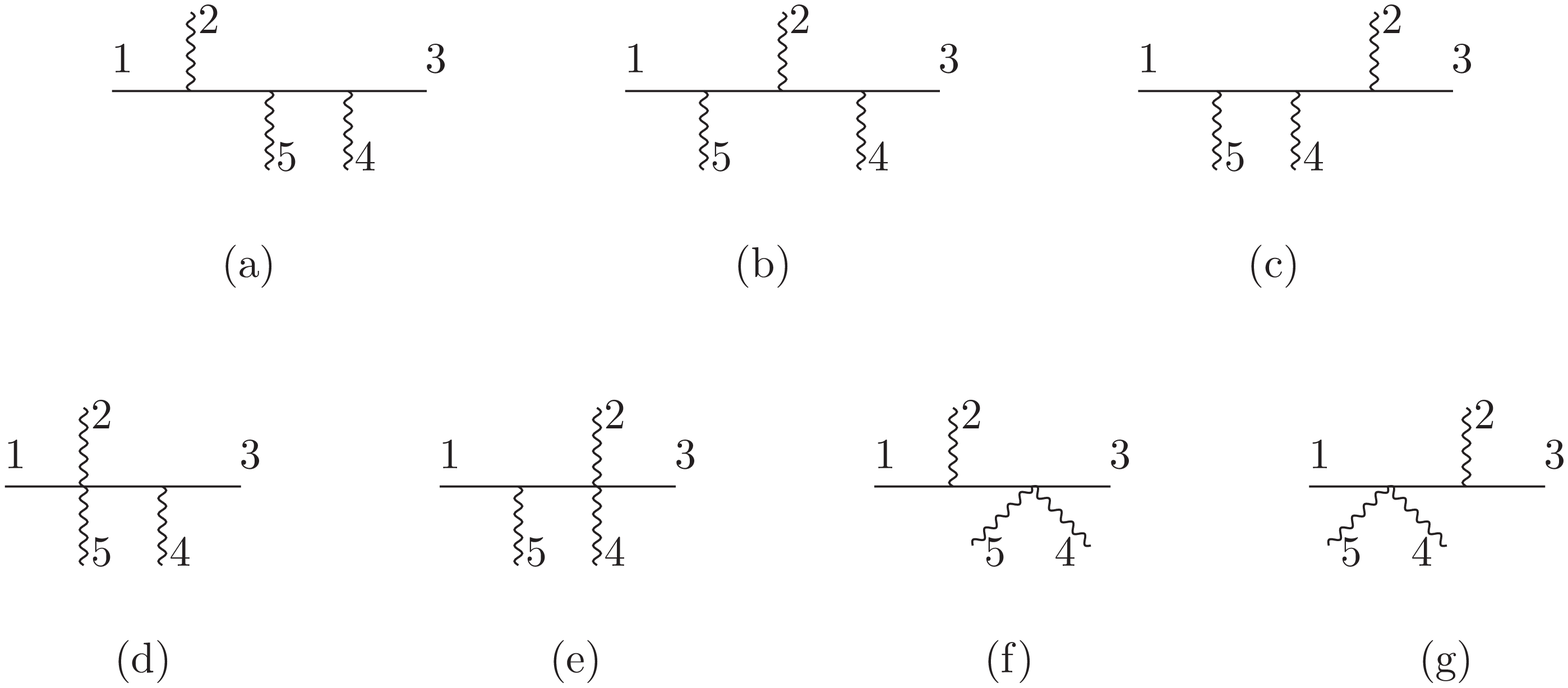}
\caption{The Feynman diagrams for five point amplitude. Legs $k_1$ and $k_3$ are shifted.}
\label{fivepoint}
\end{figure}

From \tref{lemResum} and discussions in \sref{orderz0}, the sum of (f) and (g) in Figure \ref{fivepoint} does not contribute to $\mathcal{M}_0^{\mu\nu}$. Again we can easily obtain each term in $\mathcal{M}_0^{\mu\nu}=\sum_{\mathcal{D}_1} {\mathcal{N}_{h_I-1}^{\mu\nu}\over \mathcal{C}_{h_I-1}}-\sum_{\mathcal{D}_I} {\mathcal{C}_{h_I-2} \mathcal{N}_{h_I}^{\mu\nu}\over \mathcal{C}_{h_I-1}^2 }+\sum_{\mathcal{D}_{II'}} {\mathcal{N}_{h_{II'}}^{\mu\nu}\over \mathcal{C}_{h_{II'}}}$:

\bea
\sum_{\mathcal{D}_1} {\mathcal{N}_{h_I-1}^{\mu\nu}\over \mathcal{C}_{h_I-1}}&=&\frac{i g^{\mu_1\mu_3}}{\sqrt{2}k_2\cdot \eta }(\frac{\eta^{\mu_2}\eta^{\mu_4} k_2^{\mu_5}-\eta^{\mu_4}\eta^{\mu_5} k_5^{\mu_2}}{k_4\cdot \eta}+\frac{\eta^{\mu_4}\eta^{\mu_5} k_4^{\mu_2}-\eta^{\mu_2}\eta^{\mu_5} k_2^{\mu_4}}{k_5\cdot \eta}),\nonumber\\
-\sum_{\mathcal{D}_I} {\mathcal{C}_{h_I-2} \mathcal{N}_{h_I}^{\mu\nu}\over \mathcal{C}_{h_I-1}^2 }&=&\frac{i g^{\mu_1\mu_3}\eta^{\mu_2}\eta^{\mu_4}\eta^{\mu_5}}{2\sqrt{2}(k_2\cdot\eta)^2 k_4\cdot \eta k_5\cdot \eta}(k_2\cdot\eta(k_1+k_5)^2-k_2\cdot\eta(k_3+k_4)^2\nonumber\\
&&\ \ -k_5\cdot\eta(k_1+k_2)^2+k_4\cdot\eta(k_2+k_3)^2),\nonumber\\
\sum_{\mathcal{D}_{II'}} {\mathcal{N}_{h_{II'}}^{\mu\nu}\over \mathcal{C}_{h_{II'}}}&=&\frac{i g^{\mu_1\mu_3}}{\sqrt{2}}(\frac{\eta^{\mu_5}g^{\mu_2\mu_4}}{k_5\cdot\eta}-\frac{\eta^{\mu_4}g^{\mu_2\mu_5}}{k_4\cdot\eta}).
\eea

From the above two simple examples we can see the convenience of our method in calculating boundary terms. First, we have no trouble from complicated tensor contractions, compared to calculating amplitudes in usual way. Only slight amount of algebraic simplification of the expressions is needed. Second, although some individual diagrams contribute to the boundary term, the sum of some such diagrams do not contribute by using \tref{lemResum}, like (f) and (g) in Figure \ref{fivepoint}. This further lighten the burden of calculating boundary terms.

%when people only care about the boundary terms, or want to calculate the boundary terms first and then use some other method like BCFW, they can apply our method to calculate.

\section{Conclusion}

In this chapter, we have carefully analyzed the boundary behavior of pure Yang-Mills amplitudes under adjacent and non adjacent BCFW shifts in Feynman gauge. We introduced reduced vertexes for Yang-Mills fields, proved that these reduced vertexes are equivalent to the original vertexes, as for the study of the first two orders of boundary behavior, which greatly simplifies our analysis of boundary behavior. Boundary behavior for adjacent shifts is readily obtained using reduced vertexes. Then we find that the boundary behavior for non-adjacent shifts is much better than those of adjacent shifts. Comparing to adjacent shifts, non adjacent shifts allow us to permute the external legs while retaining color ordering. We proved a theorem about permutation sum, which plays key roles in our analysis of non-adjacent boundary behavior, and the theorem is the essential reason for the improvement of boundary behavior for non adjacent shifts compared to adjacent shifts. The conclusions are, $\mathcal{O}(z^{1})$ of $\mathcal{M}^{\mu\nu}$ is proportional to metric $g^{\mu\nu}$ for adjacent shifts, and vanishes for non adjacent shifts; $\mathcal{O}(z^{0})$ of $\mathcal{M}^{\mu\nu}$ is metric term plus antisymmetric term for adjacent shifts, and is proportional to $g^{\mu\nu}$ for non adjacent shifts. Based on the boundary behavior, we find that it is possible to generalize BCFW recursion relation to calculate general tree level off shell amplitudes, with the aid of previous papers \cite{Chen1,Chen2,Chen3}. The procedure is described in the second section, before we discuss boundary behavior.

We proved that boundary behavior at $\mathcal{O}(z^1)$ and $\mathcal{O}(z^0)$ does not depend on whether the external legs are on shell or not. We also analyzed the $\mathcal{O}(z^{-1})$ behavior for non adjacent shifts. When all the external legs are on shell, $\mathcal{O}(z^{-1})$ of $\mathcal{M}^{\mu\nu}$ is metric term plus antisymmetric term. When some external legs are off shell, we also give the general form of the contribution to $\mathcal{O}(z^{-1})$ from each off shell leg, which manifestly vanishes when the leg becomes on shell. For on shell loop level amplitudes, the loop lines can be dealt with as off shell legs here and has the contribution to $\mathcal{O}(z^{-1})$ in the form we have obtained, which seems very likely to cancel the ghost loop contributions, resulting in some good $\mathcal{O}(z^{-1})$ behavior for loop level non adjacently shifted on shell amplitudes. This deserves our further investigation.

Our conclusions on boundary behavior in Feynman gauge are consistent with those in AHK gauge in \cite{Boels,Nima1}. Our work has two major advantages. First, we can present a procedure to calculate general tree level off shell amplitudes using BCFW technique and the technique in \cite{Chen3}. And the second is related to our permutation sum theorem, ie. \tref{lemResum}. This theorem tells us why the amplitudes with non-adjacent BCFW shifts have improved boundary behavior. Actually, in \cite{Boels} there are several important conjectures about the relationship between the improved boundary behavior and the general permutation sums, which are partially proved in \cite{Du:2011se}. Hopefully, some generalization of our theorem here will be helpful for the proof of these conjectures. This will be left for further work.

\chapter{Ward Identity Implies Recursion Relation at Tree and Loop Levels }\label{Wardidenrec}

\allowdisplaybreaks{}

\section{Introduction}

%\textcolor{red}{add some analysis of the off shell amplitudes in case of collinear limit etc..?}%should we mention the importance or usage of off-shell amplitudes

At tree level, the amplitudes of pure Yang-Mills fields can be written as rational functions of external momenta and polarization vectors in spinor form \cite{Parke:1986gb,Xu:1986xb,Berends:1987me,Kosower,Dixon:1996wi,Witten1}. Such rational functions can be analyzed in detail in algebra system. According to this, BCFW recursion relation was proposed and developed in \cite{Britto:2004nj,Britto:2004nc,Britto:2004ap}, and then proved in \cite{Britto:2005fq} using the pole structure of the tree level on shell amplitudes. This has been an exiting progress on the amplitudes in pure Yang-Mills theory. For theory with massive fields \cite{Badger:2005zh,Ozeren:2006ft,Schwinn,Chen:2011ve,Chen3}, the amplitudes are also rational functions of external momenta and polarization vectors in spinor form.

%For theory with massive fields, the amplitudes are also rational functions of external momenta and polarization vectors in spinor form, and there are progresses in this direction \cite{Badger1,Ozeren,Schwinn,Chen:2011ve,Chen3}.   

At loop level, although the whole amplitudes are no longer rational functions in general, they can be decomposed into some basic scalar integrals with coefficients being rational functions of external spinors \cite{Bern:1994zx,BernD2}. The coefficient structures are studied in depth in \cite{Dixon4,Bern:2005hh,Bern1}. On the other hand, the integrands of the amplitudes are rational functions of the external spinors and integral momenta. For the N=4 planar super Yang-Mills theory, \cite{Nima}  gives an explicit recursive formula for the all-loop integrands of scattering amplitudes.

The amplitudes in gauge theory are constrained by gauge symmetry. This leads to Ward identity which constrains the amplitudes at all loop levels. Inspired by the BCFW momenta shift, Gang Chen considered the Ward identity for tree level amplitudes with complexified momenta for a pair of external legs, and then obtained a recursion relation for the boundary terms using BCFW technique in a recent article \cite{Chen1}. However, in \cite{Chen1}, the author chose a particular momenta shift such that the external states of the complexified legs are independent of the complex parameter $z$. Then a natural question is how to obtain recursion relations for other possible momenta shifts. Furthermore, is it possible to obtain the full amplitudes from the Ward identity,  and to extend the technique to one loop amplitudes? In this chapter, we will give positive answers to all these questions. 

In Section \ref{TreeRec}, we first give the proof of Ward identity at tree level using Feynman rules directly, and then derive the recursion relation for off shell tree level amplitudes, where the cancellation details in the proof of Ward identity helps to simplify the recursion relations. Section \ref{LoopWard} is parallel to Section \ref{TreeRec}. We first extend the proof of Ward identity to one loop level and then derive the recursion relation for one loop off shell amplitudes. Our technique does not rely on the on-shell momenta shifts. Also, in our calculation using the recursion relation, four point vertexes are not used explicitly. We calculate three and four point one loop off shell amplitudes as examples in Section \ref{example}.

\section{Ward Identity and Implied Recursion Relation at Tree Level}\label{TreeRec}
In \cite{Chen1}, Gang Chen directly proved complexified Ward identity for pure Yang-Mills amplitudes at tree level, and then used it to deduce a recursion relation for the boundary terms of the tree level complexified amplitudes. Here we generalize the method to deduce a recursion relation for tree level amplitudes with one external off shell leg. This section will serve as a basis for our generalization to one loop level in the next section. We will call the external off shell leg $\Le$ with momentum $\ke^\mu$, and the corresponding off shell amplitudes $A_\mu$, with the propagator corresponding to $\ke$ stripped.

%We will also use the name "vector current" to denote $A_\mu$ in the text.

\subsection{Proof of Ward Identity at Tree Level}
Although done in a previous paper \cite{Chen1}, we briefly summerize some key points in the direct proof of tree level Ward identity, since these points are useful for deriving tree level recursion relation and also will be part of the proof at one loop level in the next section.

The amplitude is complexified by shifting the momenta of a pair of external legs. We choose $\Le$ and one on shell leg $L_s$ with momentum $k_s=\lmd_s\td\lmd_s$, and the shift is:
\beq
k_s \to k_s-z \eta,\ \ \ \ \ \ \ \ \ \ \ \ \ \ \ke\to \ke+z\eta,
\label{momshift}
\eeq
where $z$ is the complexifing parameter and $\eta$ should satisfy $\eta^2=0$ and $k_s \cdot \eta=0$. Unlike in \cite{Chen1} in which one should choose a particular momenta shift such that the external states of the complexified legs are independent of the complex parameter $z$, here we do not need this requirement and have more freedom in choosing $\eta$.

%\footnote{\textcolor{red}{should $\eta$ normal to the polarization vector?}}

The color ordered Feynman rules of the gauge field are as in \cite{Dixon:1996wi}, also given in (\ref{colorFeynrule}), with outgoing momenta. We also write the Feynman rules for ghost fields here in Figure \ref{ghostFeynrule}, which will be used in the next section. 

\begin{figure}[htb]
\centering
\includegraphics[height=3.5cm,width=15cm]{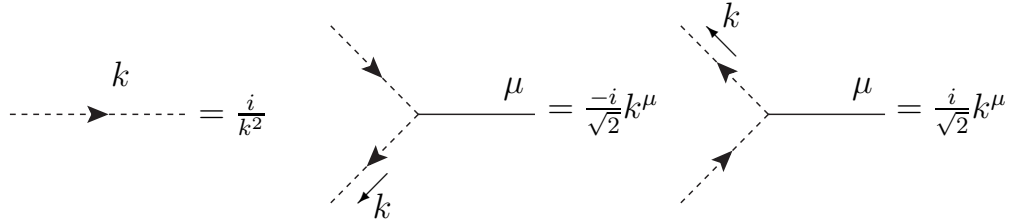}
\caption{Ghost field color ordered Feynman rules. Dashed line for ghost field and solid line for gauge field.}
\label{ghostFeynrule}
\end{figure}

For a three point vertex with legs 1, 2 and $\Le$ in anti-clockwise order, we write it in the following form:\footnote{At the time the works in Chapter \ref{boubehav} and \ref{Wardidenrec} were done, a little different conventions were used, thus we give the conventions again. Chapter \ref{boubehav} and \ref{Wardidenrec} can be read independently. Note that (\ref{kedotV}) and (\ref{kdotV4}) are actually different but equal.}
\bea\label{newV3}
V_{\mu_1\mu_2\mu}&\equiv&S_{\mu_1\mu_2\mu}+ R_{\mu_1\mu_2\mu}+M_{\mu_1\mu_2\mu},
\eea
where  
\bea\label{newV3s}
S_{\mu_1\mu_2\mu}&=&\frac{i}{\sqrt 2}\left(\eta_{\mu_1\mu_2}(k_1-k_2)_{\mu}\right) \nb\\
R_{\mu_1\mu_2\mu}&=&\frac{i}{\sqrt 2}\left(-2\eta_{\mu_2\mu}(\ke)_{\mu_1}+2\eta_{\mu\mu_1}(\ke)_{\mu_2}\right) \nb\\
M_{\mu_1\mu_2\mu}&=&\frac{i}{\sqrt 2}\left(-\eta_{\mu_2\mu}(k_1)_{\mu_1}+\eta_{\mu\mu_1}(k_2)_{\mu_2}\right). 
\eea
We will refer to these terms as S, R and M parts of the vertex. Contracting this vertex with $\ke$, we get:
\begin{equation}
\ke^\mu \cdot V_{\mu_1\mu_2\mu}=\frac{i}{\sqrt{2}}\eta_{\mu_1\mu_2} k_2^2-\frac{i}{\sqrt{2}}\eta_{\mu_1\mu_2} k_1^2+\frac{i}{\sqrt{2}}k_{2\ \mu_2}k_3{}_{\ \mu_1}-\frac{i}{\sqrt{2}}k_{1\ \mu_1}k_3{}_{\ \mu_2},
\label{kedotV}
\end{equation}
and we represent these terms by the symbols in Figure \ref{vertexnotation}. These terms are frequently used throughout the chapter, and we will call the terms in the first line of Figure \ref{vertexnotation} as solid triangle terms, and the second line terms as hollow triangle terms.
\begin{figure}[htb]
\centering
\includegraphics[height=4cm,width=12cm]{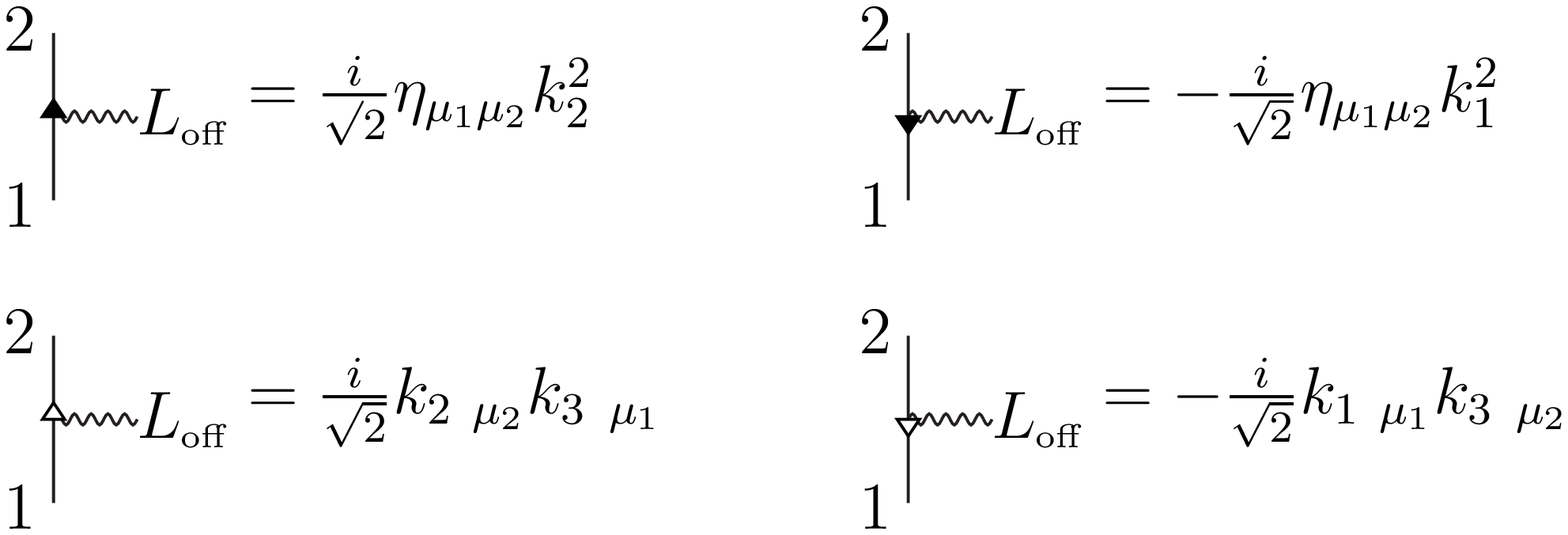}
\caption{Notations for \eref{kedotV}. We specialize $\Le$ using photon line.}
\label{vertexnotation}
\end{figure}

Then a proof of tree level Ward identity can be shown in several steps. Assume it holds for N-point and less than N-point amplitudes (for example, 3-point case can be immediately checked), we will show how it holds for (N+1)-point amplitudes. We choose $\Le$ as the (N+1)-th leg. We can construct an (N+1)-point color ordered diagram from an N-point one by inserting $\Le$ to an N-point diagram between leg 1 and leg N.

First, when $\Le$ is inserted to a propagator or leg 1 or leg N, we denote the vertex as $V_{\mbox{\tiny off}}$, and contract it with $\ke$, the following two hollow triangle terms in Figure \ref{treecancel1} vanish due to less-point Ward identities or the on-shell conditions of leg 1 or N. The meaning of the symbols are in Figure \ref{vertexnotation}.

\begin{figure}[htb]
\centering
\includegraphics[height=2.5cm,width=10cm]{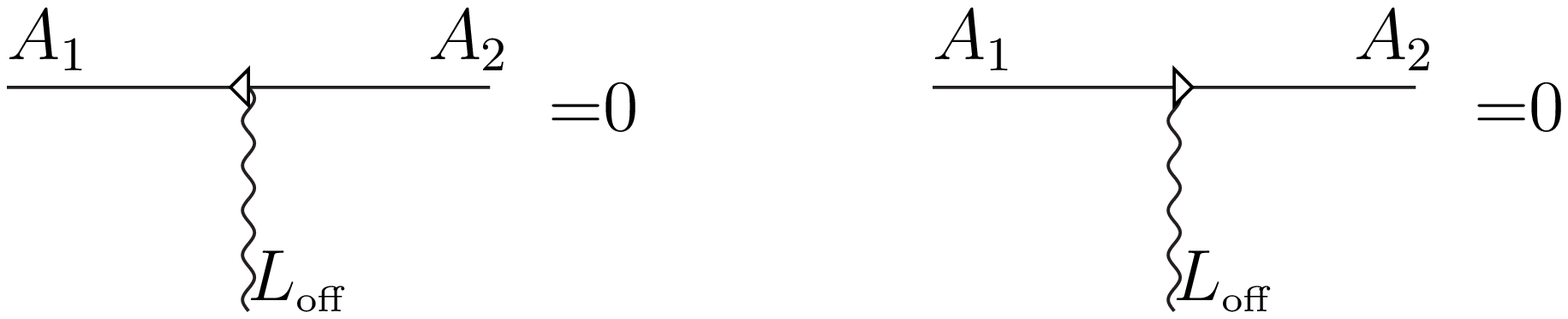}
\caption{When $\Le$ is inserted to a propagator or leg 1 or leg N, these terms vanish due to less point Ward identity or the on-shell conditions of leg 1 or N. $A_1$ and $A_2$ are sub amplitudes.}
\label{treecancel1}
\end{figure}

Second, $\Le$ is inserted to a three-point vertex in the N-point diagram. These terms and the remaining terms from the above case--ie. solid triangle terms--can be re-combined as in Figure \ref{treecancel2} to cancel each other.
\begin{figure}[htb]
\centering
\includegraphics[height=10cm,width=15cm]{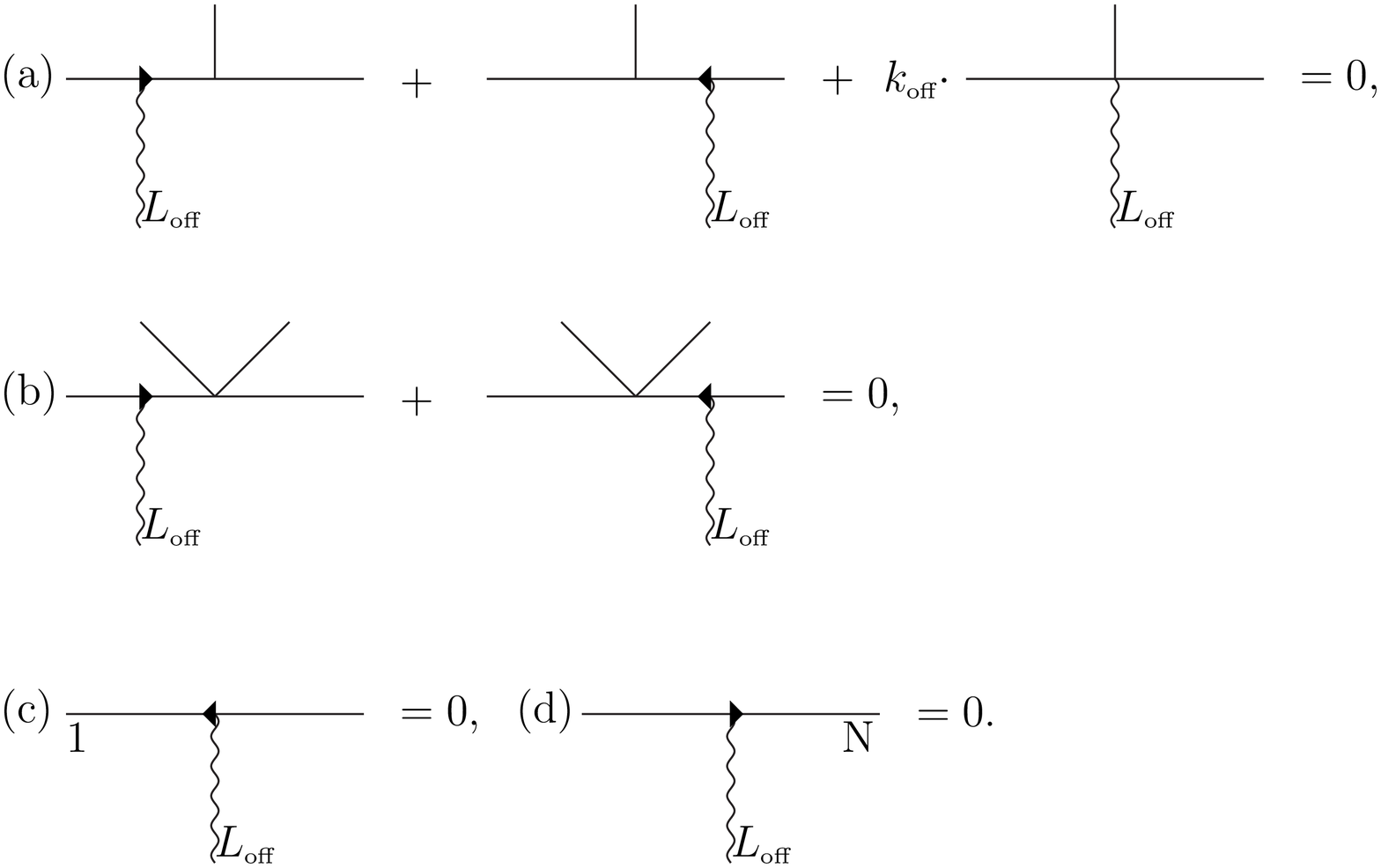}
\caption{A group of diagrams cancel. In (a) and (b), the cancellation is solely due to the vertex structures, not dependent on whether the legs are on shell or off shell. (c) and (d) are due to on shell conditions for leg 1 and leg N: $k_1^2=0$ and $k_N^2$=0.} 
\label{treecancel2}
\end{figure}

Figure \ref{treecancel1} and Figure \ref{treecancel2} constitute the proof of Ward identity at tree level.

\subsection{Recursion Relation for Tree Level Off Shell Amplitudes}\label{treerec}
As discussed in \cite{Chen1}, from the complexified Ward identity ${{\hat k}_{\mbox{\tiny off}}}^\mu \cdot {\hat A}_\mu=0$, by a derivative over $z$ we get:  
\be\label{recz}
\hat A_\mu \eta^\mu |_{z\to 0}=-{d\hat A_\mu\over dz}{\hat k}_{\mbox{\tiny off}}^\mu |_{z\to 0}.
\ee
The symbol $\hat{}$ represents that the quantity is complexified, ie. depends on the shift parameter z. Here $\ke$ is shifted as in \eref{momshift}: ${\hat k}_{\mbox{\tiny off}}=\ke+z\eta$. Our destination is to calculate $A_\mu$, and we will realize it by calculating the right hand side of \eref{recz}.

%For convenience we choose to shift $\Le$ and some other line $L_s$ which is not color adjacent to $\Le$. In other words, for a color ordered amplitude $A(1,2,\cdots,N,\Le)$, we choose to shift $\Le$ and another line which is not Line 1 or N. This is always possible for four and more point amplitudes, and for this kind of shift we need not consider the derivative of the polarization vector of $L_s$. The shift is as in \ref{momshift}.

We name the vertex which contains $\Le$ as $V_{\mbox{\tiny off}}$. At tree level, we have the following three cases: 
\begin{enumerate}
\item the derivative acts on a propagator;
\item the derivative acts on a three point vertex which does not contain $\Le$;
\item the derivative acts on a three point vertex $V_{\mbox{\tiny off}}$.
\end{enumerate}
In the first and second cases, when $V_{\mbox{\tiny off}}$ is a three point vertex, we write $\ke^\mu \cdot \Ve{}_{\ \mu}$ as in Figure \ref{vertexnotation}, and take out the hollow triangle terms. These terms, together with the terms from the third case where the derivative acts on the M part of $\Ve{}_{\ \mu}$, add up to be 0 due to Ward identity.

%add up to be proportional to some Ward Identities or the derivative of the Ward Identities of some sub-amplitudes, and thus add up to be 0.

From above we know that in the first and second cases, we only need the solid triangle terms for $\ke^\mu \cdot \Ve{}_{\ \mu}$ as represented in Figure \ref{vertexnotation}, when $\Ve$ is a three point vertex; in the third case, $\frac{d}{dz}$ only need to act on the S and R parts of $\Ve{}_{\ \mu}$ as written in \eref{newV3s}. The first two cases can be further simplified. Due to (a) and (b) in Figure \ref{treecancel2}, the terms relevant for the first two cases are reduced to those with $\ke$ neighboring to the three point vertex or the propagator to be differentiated, as depicted in Figure \ref{Tree1}.

\begin{figure}[htb]
\centering
\includegraphics[height=13cm,width=10cm]{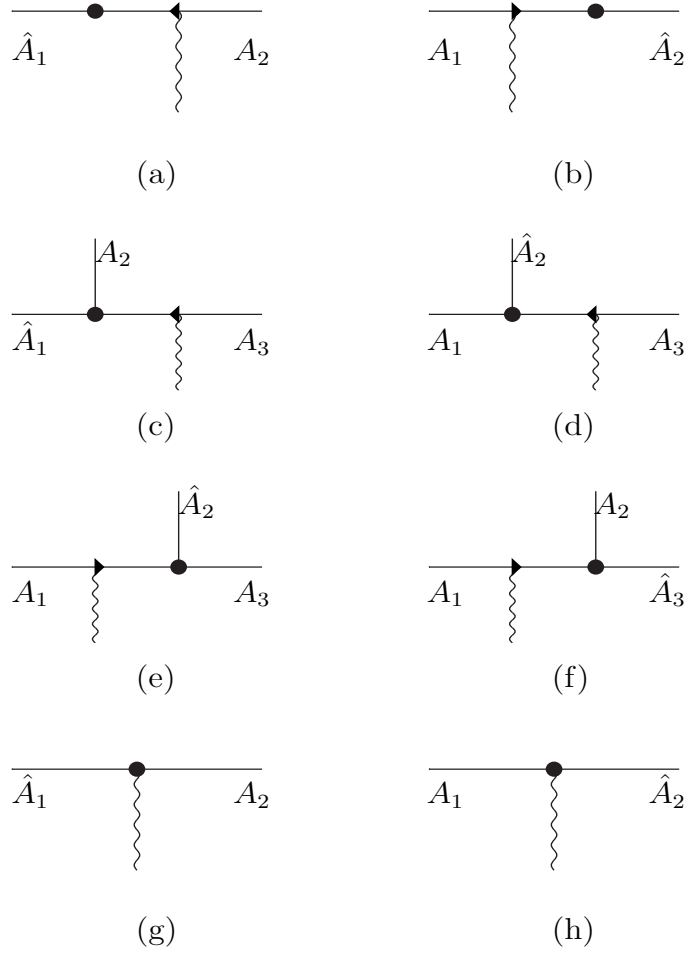}
\caption{Terms to be calculated for tree-level off-shell amplitudes. Here and following, the dark solid circle symbol $\bullet$ denotes where we act $d\over dz$. We shift $\Le$ and some other on shell leg $L_s$. $\{A_i\}$ denote some sub-diagrams with less external states. $\hat A_k$ with hat includes $L_s$. In different diagrams, the same $A_k$ symbols do not mean the same sub amplitudes. They sum over all allowed sub amplitudes.}
\label{Tree1}
\end{figure}

%what about adjacent shift, is our formulas correct, and keeping $k_s\cdot \eta=0$ as if it is not 0 can or cannot give us the correct final result? It seems for adjacent shifting the diagrams and expressions are easier to write.

Thus, for the first case, the diagrams are (a) and (b) in Figure \ref{Tree1}. The contributions from (a) and (b) to $-{d\hat A_\mu\over dz}{\hat k}_{\mbox{\tiny off}}^\mu |_{z\to 0}$ are:
\bea
&&\mbox{for (a)}\ \ \frac{-i\sqrt{2}}{k_{A_1}^2 k_{A_2}^2}k_{A_1}\cdot \eta\  A_1\cdot A_2,\nonumber\\
&&\mbox{for (b)}\ \ \frac{i\sqrt{2}}{k_{A_1}^2 k_{A_2}^2}k_{A_2}\cdot \eta\  A_1\cdot A_2.
\label{preexptree1}
\eea
As noted in Figure \ref{Tree1}, \{$A_i$\} are some less point amplitudes; $k_{A_i}$ is the total momentum of the external legs contained in the sub amplitude $A_i$. If some $A_i$ just contains one external leg $L_m$, we define this $A_i$ to be $i k_m^2 \epsilon_m$.

%Another point is that, although $k_s\cdot \eta=0$, we keep it in the evaluations here and below as if it is not 0, as will be explained at the end of this subsection.

The second case corresponds to (c) (d) (e) and (f) in Figure \ref{Tree1}, and the contributions to $-{d\hat A_\mu\over dz}{\hat k}_{\mbox{\tiny off}}^\mu |_{z\to 0}$ are:
\bea
&&\mbox{for (c)}\ \ \frac{1}{2 k_{A_1}^2 k_{A_2}^2 k_{A_3}^2}(A_3\cdot \eta\  A_1\cdot A_2+A_1\cdot \eta\  A_2\cdot A_3-2A_2\cdot \eta\  A_1\cdot A_3),\nonumber\\
&&\mbox{for (d)}\ \ \frac{1}{2 k_{A_1}^2 k_{A_2}^2 k_{A_3}^2}(-A_3\cdot \eta\  A_1\cdot A_2+2A_1\cdot \eta\  A_2\cdot A_3-A_2\cdot \eta\  A_1\cdot A_3),\nonumber\\
&&\mbox{for (e)}\ \ \frac{-1}{2 k_{A_1}^2 k_{A_2}^2 k_{A_3}^2}(-2A_3\cdot \eta\  A_1\cdot A_2+A_1\cdot \eta\  A_2\cdot A_3+A_2\cdot \eta\  A_1\cdot A_3),\nonumber\\
&&\mbox{for (f)}\ \ \frac{-1}{2 k_{A_1}^2 k_{A_2}^2 k_{A_3}^2}(-A_1\cdot \eta\  A_2\cdot A_3+2A_2\cdot \eta\  A_1\cdot A_3-A_3\cdot \eta\  A_1\cdot A_2).
\label{preexptree2}
\eea

And the third case corresponds to (g) and (h) in Figure \ref{Tree1}, whose contributions are:
\bea
&&\mbox{for (g)}\ \ \frac{-i}{\sqrt{2} k_{A_1}^2 k_{A_2}^2}(\ke\cdot \eta \ A_1\cdot A_2+2A_1\cdot \eta\  \ke\cdot A_2-2A_2\cdot \eta\  \ke\cdot A_1),\nonumber\\
&&\mbox{for (h)}\ \ \frac{-i}{\sqrt{2} k_{A_1}^2 k_{A_2}^2}(-\ke\cdot \eta\  A_1\cdot A_2+2A_1\cdot \eta\  \ke\cdot A_2-2A_2\cdot \eta\  \ke\cdot A_1).
\label{preexptree3}
\eea
As explained before \eref{preexptree1}, in this case $\frac{d}{dz}$ only need to act on the S and R parts of $\Ve{}_{\ \mu}$ as written in \eref{newV3s}.

It can be observed that, \eref{preexptree2} for $L_s$ contained in $A_1$ or $A_2$ or $A_3$, the expressions are the same. In the case when $L_s$ is contained in $A_2$ we should sum (d) and (e) in Figure \ref{Tree1} to see that the expression is the same as when $L_s$ is contained in $A_1$ or $A_3$. The common expression is:
\beq
\frac{1}{2 k_{A_1}^2 k_{A_2}^2 k_{A_3}^2}(A_3\cdot \eta\  A_1\cdot A_2+A_1\cdot \eta\  A_2\cdot A_3-2A_2\cdot \eta\  A_1\cdot A_3).
\label{finaltree1}
\eeq
\eref{preexptree1} and \eref{preexptree3} summed up also give a common expression, regardless of whether $L_s$ is contained in $A_1$ or $A_2$:
\beq
\frac{-i}{\sqrt{2} k_{A_1}^2 k_{A_2}^2}\left(\ (k_{A_1}-k_{A_2})\cdot \eta\  A_1\cdot A_2+2A_1\cdot \eta\  \ke\cdot A_2-2A_2\cdot \eta\  \ke\cdot A_1\right).
\label{finaltree2}
\eeq
The final tree level result for $A_\mu \eta^\mu$ is the sum of \eref{finaltree1} and \eref{finaltree2}. There are three choices for $\eta$, ie. $\epsilon_s^+$, $\epsilon_s^-$ and $k_s$, satisfying $\eta^2=0$ and $k_s\cdot\eta=0$. In four dimensional spacetime, the last component of $A_\mu$ can then be determined by the Ward identity $\ke^\mu A_\mu=0$.

%, which can be written in the form of $\tilde A_\mu \eta^\mu$. In the expressions we should sum over all allowed allocations of the on shell external legs into $\{A_i\}$. It is easy to show that the off shell amplitude $A_\mu=\tilde A_\mu$. In four dimensional spacetime, we only need to find 4 independent $\eta_i$ such that $A_\mu \eta^\mu_i=\tilde A_\mu \eta^\mu_i$. Since in the shift $\eta^\mu$ is required to satisfy $k_s \cdot \eta=0$ and $\eta^2=0$, the three choices of $\eta_i$ as $\epsilon_s^+$, $\epsilon_s^-$ or $k_s$ satisfy $A_\mu \eta^\mu_i=\tilde A_\mu \eta^\mu_i$. The remaining choice of $\eta_i$ can be chosen as $\ke$. This is not obvious to satisfy $A_\mu \eta^\mu_i=\tilde A_\mu \eta^\mu_i$. However, in our calculations we have kept the terms $k_s\cdot \eta$ as if it is not 0, and by this trick it comes out that $\tilde A_\mu \ke^\mu=0=A_\mu \ke^\mu$. In conclusion $A_\mu=\tilde A_\mu$, where $\tilde A_\mu$ is contained in the sum of \ref{finaltree1} and \ref{finaltree2} in form of $\tilde A_\mu \eta^\mu$.

Compare to Berends-Giele recursion relation \cite{Berends:1987me}, it is seen that \eref{finaltree1} corresponds to $\ke$ contained in a four point vertex, and \eref{finaltree2} is equivalent to the contribution when $\ke$ is contained in a three point vertex. Thus our method at tree level is equivalent to Berends-Giele recursion relation. This on one hand supports the correctness of our method, and on the other hand a little undermines the value of our method at tree level. There are also other recursion relations for off shell tree level amplitudes, eg. \cite{Feng:2011tw}. Yet we are going to extend our method to one loop level, where the situation is much more complicated and our method is new. 

%proof read the above two paragraphs

%%%%%%%%%%%%%
\section{Ward Identity and Implied Recursion Relation at 1-loop Level}\label{LoopWard}

In this section we are going to extend our method to 1-loop level. We will show how complexified Ward identity holds at 1-loop level and then we deduce the corresponding recursive calculation of 1-loop off shell amplitudes. Using our method, we will calculate three and four point 1-loop off shell amplitudes as examples. In our calculation we use FDH scheme \cite{Bern:1991aq}, in which only the loop momentum is continued to dimensionality different from 4.

We first explain some subtleties at loop level. First, after momentum shifting, some lines on the loop carry complex momenta. This brings ambiguities to the meaning of the loop integral and prevents us from translating the loop momentum $l\to l+k$ or flip it $l\to -l$. However, according to \eref{recz}, what we need for our technique is the derivative of the integral at the value $z\to 0$. And it is easy to prove that:
\begin{eqnarray}
&&\int d^D l \frac{d}{dz}f(l^\mu,{\hat k}^\mu)|_{z\to 0}=\int d^D l \frac{d}{dz}f(-l^\mu,{\hat k}^\mu)|_{z\to 0},\\
&&\int d^D l \frac{d}{dz}f(l^\mu,{\hat k}^\mu)|_{z\to 0}=\int d^D l \frac{d}{dz}f(l^\mu+{\hat k}'^\mu,{\hat k}^\mu)|_{z\to 0}.
\label{complexintegrand}
\end{eqnarray}
Thus for our technique, we can translate or flip the loop momentum even when the integrand is complexified.

%for the fact that we only make the loop momentum in general dimension and external states in 4 dimension, what is the meaning of l+k, and in $l\to l+k$?

Second, some attention should be paid to color orderings and symmetry factors. At tree level there is only one color ordering contributing to the primitive part of the color ordered amplitudes. At one loop level, most diagrams also only have one color ordering. However, for gauge field loop diagrams, there are three kinds of diagrams having two color orderings. Those are diagrams with two vertexes on the loop: two three-point vertexes; two four-point vertexes; a three-point vertex and a four-point vertex. For the first and second cases, the contributions from the two color orderings are the same at integrand level. For the third case, the contributions from the two color orderings at integrand level differ by a translation and flip of the loop momentum, and due to \eref{complexintegrand} the two orderings contribute the same in our method after integration. In a word, these three kinds of diagrams have a factor of 2 from possible color orderings. At the same time, these three kinds of diagrams have symmetry factor $\frac{1}{2}$, just canceling the doubling from color orderings. For ghost loop diagrams, those with two vertexes on the loop also have a doubling from two color orderings, while there is only either clockwise or anti clockwise ghost loop when there are only two vertexes on the ghost loop. We replace the doubling from color orderings by drawing both clockwise and anti-clockwise ghost loop diagrams which are actually equal when there are only two vertexes on the ghost loop.

Finally, as our convention for the loop momentum for all our loop diagrams, we specify the loop momentum in the following way. For each external leg $L_i$, when we want to make a path from it to the loop, there is one definite vertex $V$ on the loop first encountered in the path, then we say the external leg $L_i$ is associated with this loop vertex $V$. We find the vertex with which $\Le$ is associated and call it $V_0$. Assume all the legs associated with $V_0$ in color ordering are $L_j, L_{j+1}, \cdots, L_N, \Le, L_1, \cdots, L_i$, then we assign the momentum of the first loop propagator on the counter clockwise side of $V_0$ as $l-k_1-\cdots-k_i$, with the loop momentum flowing in counter clockwise direction. Loop momentum $l$ is to be integrated.

\subsection{Proof of Ward Identity at 1-Loop Level}\label{loopWI}

In this section we use $A^l$ for 1-loop amplitudes, and $A^t$ for tree level amplitudes.

Two point and three point 1-loop Ward identity is easy to verify directly. Similar to the proof at tree level, we use induction, assume Ward identity holds for N and less than N point one loop amplitudes, and construct an (N+1) point diagram from an N point one by inserting $\ke$ in different places. We denote the vertex with $\ke$ as $V_{\mbox{\tiny off}}$ and when $V_{\mbox{\tiny off}}$ is a three point vertex, we decompose $\ke \cdot V_{\mbox{\tiny off}}$ as in Figure \ref{vertexnotation}.

{\bf Case 1.} When $\ke$ is linked to a propagator (including gauge field loop propagator) or external leg of the N point diagram, the solid triangle terms from $\ke \cdot V_{\mbox{\tiny off}}$ mostly cancel the terms with $\ke$ in a four point vertex, in the manner of Figure \ref{treecancel2}. Only the terms in Figure \ref{loopremain1} remain. 
\begin{figure}[htb]
\centering
\includegraphics[height=3.8cm,width=13cm]{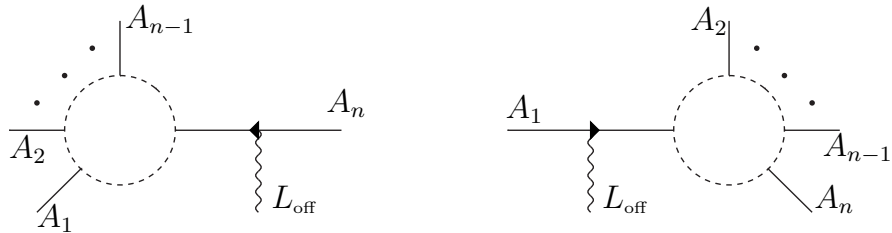}
\caption{The remaining terms in the first case that does not cancel in the manner of Figure \ref{treecancel2}. The loop is ghost loop and has two directions.}
\label{loopremain1}
\end{figure}

{\bf Case 2.} We need to consider the hollow triangle terms from $\ke \cdot V_{\mbox{\tiny off}}$ remaining from the above case, and we divide them into two sub cases:

\ \ {\bf \small Sub Case 1.} When $V_{\mbox{\tiny off}}$ is not on the loop, these terms vanish due to Ward identity for less point amplitudes in the induction assumption, similar to the tree level counterpart Figure \ref{treecancel1}.

\ \ {\bf \small Sub Case 2.} The remaining sub case is that $V_{\mbox{\tiny off}}$ is on the gauge field loop. We analyze one of the hollow triangle terms in Figure \ref{goontheloop}. The Figure has considered all the possible cases with the first right side vertex to be three or four point, and different types of second vertex relevant. When the first right side vertex is a three point vertex, acting on it with one of the factor in the hollow triangle term, we can again decompose it as in Figure \ref{vertexnotation} into solid and hollow triangle terms. (a) and (b) in Figure \ref{goontheloop} are in fact the same diagrams as in Figure \ref{treecancel2}. (c) vanishes due to tree level Ward identity, and (d) is due to on shell condition for external legs besides $\Le$. Then the type of term in (e) of Figure \ref{goontheloop} remains, which is a hollow triangle term staying on the loop, and it will act on the next vertex on the loop, repeating the same processes as in (a)-(d) of Figure \ref{goontheloop}, until it meets the final vertex on the loop. For this sub case, the remaining diagrams are in Figure \ref{loopremain2}.

\begin{figure}[htb]
\centering
\includegraphics[height=15cm,width=14cm]{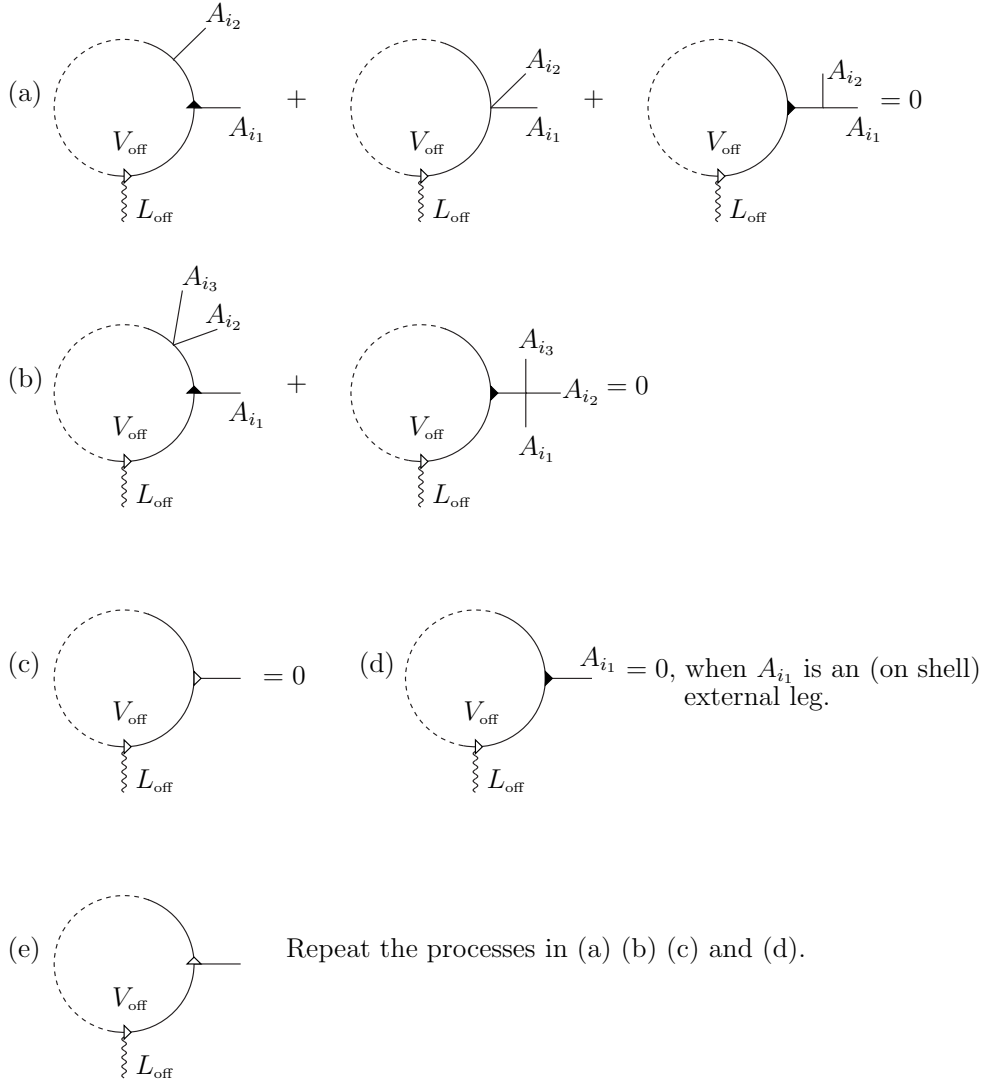}
\caption{Analysis of the action of the hollow triangle terms in {\bf\small Sub Case 2}. The dashed line is not ghost field, but just part of the loop diagram not relevant.}
\label{goontheloop}
\end{figure}

\begin{figure}[htb]
\centering
\includegraphics[height=8cm,width=12cm]{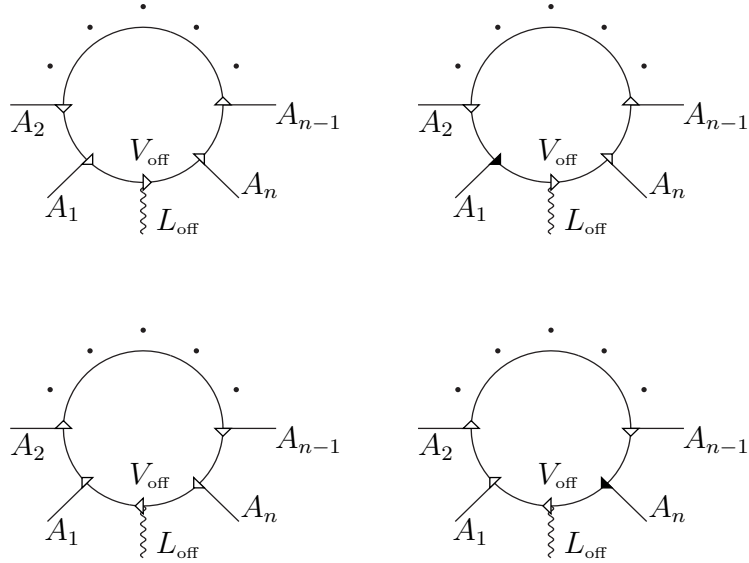}
\caption{The terms from {\bf \small Sub Case 2}. Except the hollow triangle terms at $V_{\mbox{\tiny off}}$, other hollow and solid triangle terms on the loop are induced from the hollow triangle term at the previous loop vertex, as described in the text of {\bf\small Sub Case 2.} and Figure \ref{goontheloop}.}
\label{loopremain2}
\end{figure}

%maybe we can use double triangle for kk term in k dot V expression. and then after it acts on the next vertex, we replace double triangle with single triangle.

{\bf Case 3.} The remaining case: $\ke$ is linked to a ghost propagator of the N point diagram, as in Figure \ref{loopremain3}.

\begin{figure}[htb]
\centering
\includegraphics[height=4cm,width=6cm]{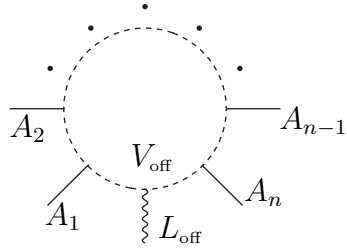}
\caption{Diagram for {\bf Case 3.} The ghost loop can be in two directions.}
\label{loopremain3}
\end{figure}

By direct and simple calculations, the terms from Figure \ref{loopremain1}, Figure \ref{loopremain2} and Figure \ref{loopremain3}, with same set of sub amplitudes $A_i$, add up to be 0. Combine {\bf Case 1, 2, 3}, we have proven that Ward identity holds at $N+1$ point one loop level. Thus by induction we have proven Ward identity at one loop level using Feynman rules in a direct way.

%The complication compared to tree level proof comes from the terms in Figure \ref{loopremain1}, Figure \ref{loopremain2} and Figure \ref{loopremain3}, where we find that the ghost loop contribution and a remaining part from the longitudinal component of the gluon vertex cancel.

\subsection{Recursion Relation for Loop Level Off Shell Amplitudes}\label{LoopRec}
Similar to the tree level off shell amplitudes, we can use $\hat A_\mu \eta^\mu |_{z\to 0}=-{d\hat A_\mu\over dz}{\hat k}_{\mbox{\tiny off}}^\mu |_{z\to 0}$ to calculate one loop level off shell amplitudes. The experience at tree level, and the details of how Ward identity holds at one loop level discussed in the last subsection, help us to simplify our discussion and calculation of one loop level off shell amplitudes.

% Same as tree level, in order to avoid derivatives over the polarization vectors of external states, we choose to shift $\Le$ and one other line $L_s$ which is not color adjacent to $\Le$, which is always possible for four and more point amplitudes.

When the derivative acts on a gauge field propagator or a vertex which is not on the loop, we can use the expressions derived in Section \ref{treerec} directly, ie. \eref{finaltree1} and \eref{finaltree2}, for the contributions to $-{d\hat A_\mu\over dz}{\hat k}_{\mbox{\tiny off}}^\mu |_{z\to 0}$:
\bea
&&\frac{1}{2 k_{A_1}^2 k_{A_2}^2 k_{A_3}^2}(A^{l/t}_3\cdot \eta\  A^{l/t}_1\cdot A^{l/t}_2+A^{l/t}_1\cdot \eta\  A^{l/t}_2\cdot A^{l/t}_3-2A^{l/t}_2\cdot \eta\  A^{l/t}_1\cdot A^{l/t}_3),\label{finalloop1}
\\
&&\frac{-i}{\sqrt{2} k_{A_1}^2 k_{A_2}^2}\left(\ (k_{A_1}-k_{A_2})\cdot \eta\  A^{l/t}_1\cdot A^{l/t}_2+2A^{l/t}_1\cdot \eta\  \ke\cdot A^{l/t}_2-2A^{l/t}_2\cdot \eta\  \ke\cdot A^{l/t}_1\right).\nonumber
\eea
In \eref{finalloop1}, we allocate the on shell external legs into $\{A^{l/t}_i\}$ in color ordering, with one and only one $A_i^{l/t}$ being one loop level. As in tree level, in each expression we should sum over all allowed allocations of the on shell external legs into $\{A^{l/t}_i\}$.

When the derivative acts on a gauge field loop propagator or a loop vertex, these are shown in Figure \ref{loop1}. For the same reasons as discussed in tree level recursion calculation, in (a) to (f), we only need to consider $\Le$ next to the propagator or vertex differentiated and only need the solid triangle term. In (g), we only differentiate the S and R terms of the vertex. The contributions from M part of the vertex in (g) will be dealt with in the following and are grouped into \eref{finalloop3}. In Figure \ref{loop1}, we encounter tree level two leg off shell amplitudes $A^t_{\sigma\rho}$. This quantity can also be calculated recursively using our method, but in this chapter we will not discuss it, and will use Feynman rules to calculate it in our example. Those $A^t$ without sub indices are tree level one leg off shell amplitudes, with the off shell leg index suppressed. (a) is 0 due to our convention for the loop momentum, described in the paragraph before Section \ref{loopWI}.

Regardless of whether the other shifted leg $L_s$ is among $\{L_1,L_2,\cdots,L_j\}$ or among $\{L_{j+1},\cdots$, $L_N\}$, the contributions to the integrand of $-{d\hat A_\mu\over dz}{\hat k}_{\mbox{\tiny off}}^\mu |_{z\to 0}$ from Figure \ref{loop1} are (we use $K_{m,n}$ to represent for $k_m+k_{m+1}+\cdots+k_n$):
\bea
&&(a)\ \ \ \ \ \ \ \,:0\label{finalloop2}\\
&&(b)+(g): \frac{-i}{\sqrt{2} l^2 (l+\ke)^2}(\ (2 l+\ke)\cdot \eta\  A^t_{\sigma\rho}(1,2,\cdots,N) \ g^{\sigma\rho}\nonumber\\
&&\ \ \ \ \ \ \ \ \ \ \ \ \ +2\eta^\sigma\  A^t_{\sigma\rho}(1,2,\cdots,N) \ k^\rho-2\eta^\rho\  A^t_{\sigma\rho}(1,2,\cdots,N) \ k^\sigma)\nonumber\\
&&(d)+(e): \frac{1}{2 l^2(l-K_{1,j})^2 K_{j+1,N}^2}(A^t(j+1,\cdots,N)\cdot \eta\  A^t_{\sigma\rho}(1,2,\cdots,j) \ g^{\sigma\rho}\nonumber\\
&&\ \ \ \ \ \ \ \ \ \ \ \ \ -2A^{t\ \sigma}(j+1,\cdots,N) \ \eta^\rho \ A^t_{\sigma\rho}(1,2,\cdots,j) +A^{t\ \rho}(j+1,\cdots,N)\ \eta^\sigma\  A^t_{\sigma\rho}(1,2,\cdots,j)\ )\nonumber\\
&&(c)+(f):\frac{1}{2 (l+\ke)^2(l-K_{1,j})^2 K_{1,j}^2}(A^t(1,\cdots,j)\cdot \eta\  A^t_{\sigma\rho}(j+1,\cdots,N)\  g^{\sigma\rho}\nonumber\\
&&\ \ \ \ \ \ \ \ \ \ \ \ \ -2A^{t\ \rho}(1,\cdots,j) \ \eta^\sigma\  A^t_{\sigma\rho}(j+1,\cdots,N)  +A^{t\ \sigma}(1,\cdots,j)\ \eta^\rho\  A^t_{\sigma\rho}(j+1,\cdots,N)\ )\nonumber
\eea

\begin{figure}[htb]
\centering
\includegraphics[height=15cm,width=11cm]{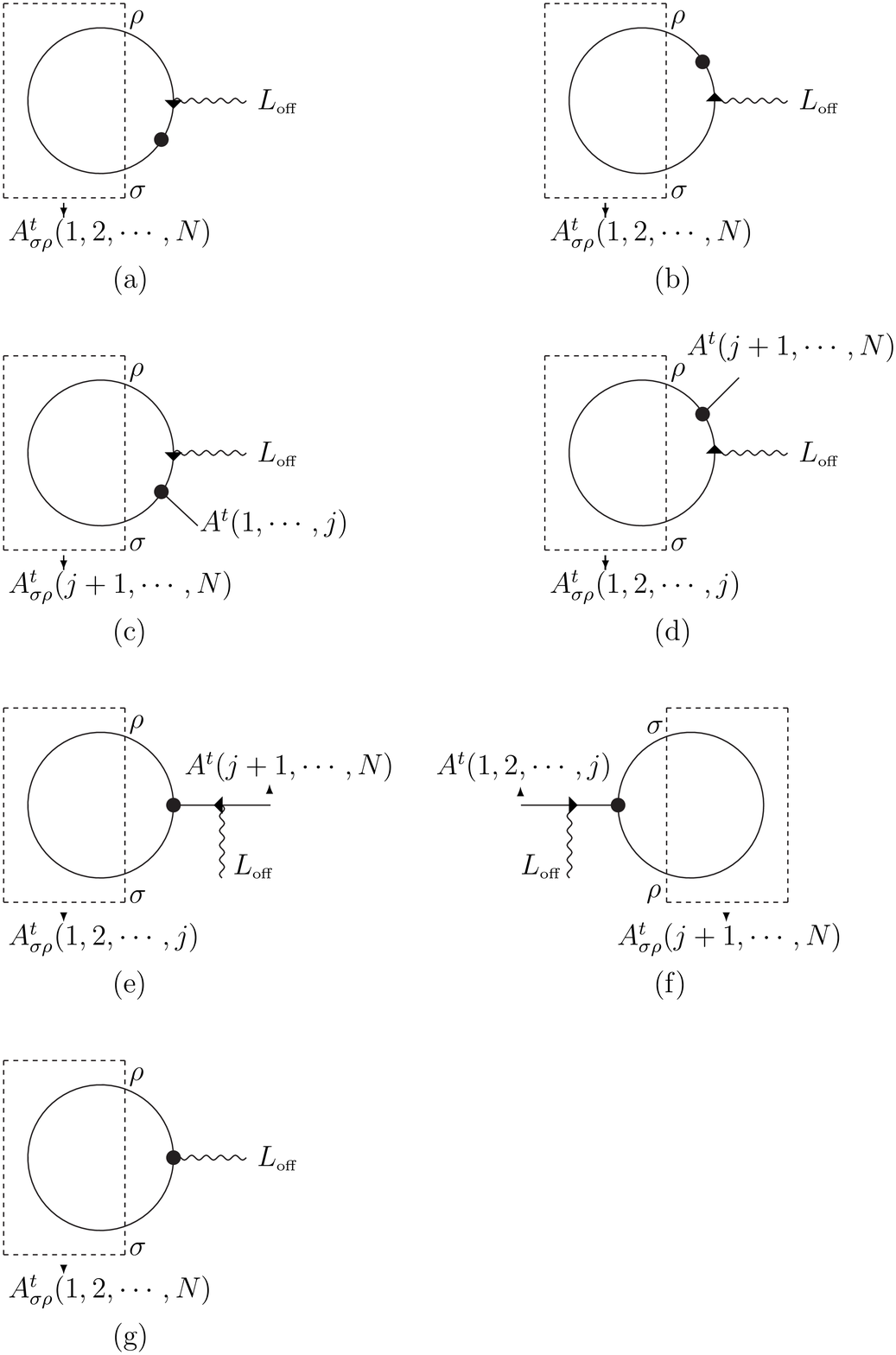}
\caption{Diagrams with derivative acting on the propagator or vertex on the loop, which cannot directly apply the tree level results.}
\label{loop1}
\end{figure}

The final contributions to $-{d\hat A^l_\mu\over dz} {{\hat k}_{\mbox{\tiny off}}}^\mu |_{z\to 0}$ come from the derivatives in the diagrams of Figure \ref{loopremain1}, Figure \ref{loopremain2} and Figure \ref{loopremain3}. Denoting the diagrams as $D_i$, since $\sum D_i\cdot \ke=0$ from the last subsection, we have $-\sum {d\hat D_{i\ \mu}\over dz}{\hat k}_{\mbox{\tiny off}}^\mu |_{z\to 0}=\sum  D_{i\ \mu}\  \eta^\mu$. This is like an opposite operation compared to the method in the current chapter, to deal with the set of diagrams $D_i$, but it simplifies the local calculation: eg. the diagrams in Figure \ref{loopremain1} turn out to be not contributing. We use $K_{A_{m,n}}$ to represent for $k_{A_m}+k_{A_{m+1}}+\cdots+k_{A_n}$, with $k_{A_i}$ the total momentum of the external legs contained in the sub amplitude $A_i$. The total momentum conservation is then $K_{A_{1,n}}+\ke=0$. The contributions to the integrand of $-{d\hat A^l_\mu\over dz} {{\hat k}_{\mbox{\tiny off}}}^\mu |_{z\to 0}$ from derivatives in Figure \ref{loopremain1}, Figure \ref{loopremain2} and Figure \ref{loopremain3} are:
\bea
&&\frac{(-i)^n\ (l-K_{A_{1,1}})\cdot A_2\ (l-K_{A_{1,2}})\cdot A_3\ \cdots (l-K_{A_{1,n-2}})\cdot A_{n-1}}{(\sqrt{2})^{n+1}k_{A_1}^2\cdots k_{A_n}^2(l-K_{A_{1,1}})^2\cdots(l-K_{A_{1,n-1}})^2}\nonumber\\
&&(-\frac{2 l\cdot A_1\ (l-K_{A_{1,n-1}})\cdot A_n\ (2l+\ke)\cdot \eta}{l^2(l+\ke)^2}\nonumber\\
&&+\frac{l\cdot A_1\ A_n \cdot \eta}{l^2}+\frac{(l-K_{A_{1,n-1}})\cdot A_n \ A_1\cdot \eta}{(l+\ke)^2}).
\label{finalloop3}
\eea
This expression is well defined when $n\ge 2$. Especially when $n=2$, one should multiply the pre-factor with each term in the bracket to see that it is well defined. When $n=1$, the last two terms in the bracket vanish.

\eref{finalloop1}, \eref{finalloop2} and \eref{finalloop3} constitute our expressions for recursively calculating one loop off shell amplitudes. In each expression, eg. in the above one \eref{finalloop3}, we should sum over all the allowed different allocations of the on shell external legs into $A_1, \cdots, A_n$. This summation is not written explicitly in the expressions. As in tree level, summing over \eref{finalloop1}, \eref{finalloop2} and \eref{finalloop3}, we get $A^l_\mu \eta^\mu$ for three choices of $\eta$, ie. $\epsilon_s^+$, $\epsilon_s^-$ and $k_s$, satisfying $\eta^2=0$ and $k_s\cdot\eta=0$. In four dimensional spacetime, the last component of $A_\mu^l$ can then be determined by the Ward identity $\ke^\mu A_\mu^l=0$.

%Similar to our statement in the tree level counterpart, summing over \ref{finalloop1}, \ref{finalloop2} and \ref{finalloop3}, we get a form $A^l_\mu \eta^\mu$, with the $A^l_\mu$ our wanted one loop off shell amplitude.

\subsection{Examples of 1-loop Off Shell Amplitudes}\label{example}
As an application and verification of our method, we have computed three and four point one loop amplitudes with one off shell leg. ie. $A^l_\mu(k_1,k_2)$ and $A^l_\mu(k_1,k_2,k_3)$, by summing up the contributions from \eref{finalloop1}, \eref{finalloop2} and \eref{finalloop3}. We use the integral reduction method in \cite{Ellis} to reduce the integrals to scalar integrals. We use the following notations for the scalar integrations:
\beas
&&B0[1,3]=\int \frac{d^D l}{(2\pi)^D}\frac{1}{l^2(l-k_1-k_2)^2},\ \ B0[1,4]=\int \frac{d^D l}{(2\pi)^D}\frac{1}{l^2(l-k_1-k_2-k_3)^2},\\
&&B0[2,4]=\int \frac{d^D l}{(2\pi)^D}\frac{1}{l^2(l-k_2-k_3)^2},\ \ C0[1,2,3]=\int \frac{d^D l}{(2\pi)^D}\frac{1}{l^2(l-k_1)^2(l-k_1-k_2)^2},\\
&&C0[1,2,4]=\int \frac{d^D l}{(2\pi)^D}\frac{1}{l^2(l-k_1)^2(l-k_1-k_2-k_3)^2},\\
&&C0[1,3,4]=\int \frac{d^D l}{(2\pi)^D}\frac{1}{l^2(l-k_1-k_2)^2(l-k_1-k_2-k_3)^2},\\
&&C0[2,3,4]=\int \frac{d^D l}{(2\pi)^D}\frac{1}{l^2(l-k_2)^2(l-k_2-k_3)^2},\\
&&D0[1,2,3,4]=\int \frac{d^D l}{(2\pi)^D}\frac{1}{l^2(l-k_1)^2(l-k_1-k_2)^2(l-k_1-k_2-k_3)^2}.
\eeas
Other scalar integrations are not needed in this chapter. The evaluation of the scalar integrals see \cite{BernD2}.

We start from the two point loop amplitude with both indices not contracted:
\begin{equation}
A^l_{\mu\nu}(k)=\frac{2-3D}{2(1-D)}(k^2 g_{\mu\nu}-k_\mu k_\nu) \int \frac{d^D l}{(2\pi)^D}\frac{1}{l^2(l-k)^2}.
\label{twopoint1loop}
\end{equation}

Then we can calculate three point one loop off shell amplitude using our method:
\bea
&&A^l_\mu(k_1,k_2)=\frac{1}{2\sqrt{2}}\left[k_1\cdot \epsilon_2 \ \epsilon_{1\ \mu}-k_2\cdot \epsilon_1\  \epsilon_{2\ \mu}-\frac{2D-5}{D-1}\epsilon_1\cdot \epsilon_2\  (k_1-k_2)_\mu\right.\nonumber\\
&&\ \ \ \ \ \ \ \ \ \ \ \ \ \ \ \ \ \ \ \ \ \ \ \ \ \left.+\frac{D-4}{D-1}\frac{k_1\cdot \epsilon_2\  k_2 \cdot \epsilon_1}{k_1\cdot k_2}(k_1-k_2)_\mu\right] B0[1,3]\label{threepoint1loop}\\
&&\ \ \ \ \ \ \ \ \ \ \ \ \ \ \ +\frac{1}{2\sqrt{2}}k_1\cdot k_2[-3 \epsilon_1\cdot \epsilon_2 \ (k_1-k_2)_\mu+4 k_1\cdot \epsilon_2 \ \epsilon_{1\ \mu}-4 k_2 \cdot \epsilon_1 \ \epsilon_{2\ \mu}]C0[1,2,3].\nonumber
\eea

At four point, the length of the expressions grow very quickly, and we will only give $A^l_\mu(1^+,2^+,3^+)$. Instead of giving this expression directly, we will give $A^l_\mu(1^+,2^+,3^+) \epsilon_1^\mu$, $A^l_\mu(1^+,2^+,3^+) \epsilon_3^\mu$ and $A^l_\mu(1^+,$ $2^+,$ $3^+)$ $k_1^\mu$. Together with $A^l_\mu(1^+,2^+,3^+) (k_1+k_2+k_3)^\mu=0$, the expressions are enough to determine all the 4 components of $A^l_\mu(1^+,2^+,3^+)$. We choose the spinor representations for $k_{1,2,3}$ and $\epsilon_{1,2,3}$ to be:\footnote{A factor $\frac{1}{\sqrt{2}}$ for the polarization vectors were not considered at the time this calculation was done for convenience.}
\begin{equation}
k_1=\lambda_1 \tilde \lambda_1,\ k_2=\lambda_2 \tilde \lambda_2,\ k_3=\lambda_3 \tilde \lambda_3,\ \epsilon_1=\frac{\lambda_\nu \tilde \lambda_1}{\langle\lambda_\nu \lambda_1\rangle},\ \epsilon_2=\frac{\lambda_\nu \tilde \lambda_2}{\langle\lambda_\nu \lambda_2\rangle},\ \epsilon_3=\frac{\lambda_\nu \tilde \lambda_3}{\langle\lambda_\nu \lambda_3\rangle},
\end{equation}
with $\lambda_\nu$ an arbitrary but fixed reference spinor. We will use $\langle\nu 1\rangle$ to stand for $\langle\lambda_\nu \lambda_1\rangle$ and similarly others. We use $(A^l_\mu(1^+,2^+,3^+) \epsilon_1^\mu)_{D0[1,2,3,4]}$ to denote for the coefficient of $D0[1,2,3,4]$ in $A^l_\mu(1^+,2^+,3^+) \epsilon_1^\mu$, and similarly for others. We give the coefficients at $D=4$. The off shell leg makes the expressions much more complicated than those with all on shell legs. On one hand, when all legs are on shell, since the amplitudes are gauge invariant, we can choose some specific reference spinor, while in the off shell case we should keep the reference spinor $\lambda_\nu$ arbitrary. On the other hand, there are many terms in the following expressions which are 0 when all legs are on shell. For example, the first coefficient below would be 0 due to $(\langle 12 \rangle [12]+\langle 13 \rangle [13]+\langle 23 \rangle [23])=0$ when all legs were on shell.

%To simplify the expressions, we use spinors and vectors in a mixed way. $k_4=-k_1-k_2-k_3$.

Then for $A^l_\mu(1^+,2^+,3^+) \epsilon_1^\mu$:

\begin{scriptsize}
\beas
&&(A^l_\mu(1^+,2^+,3^+) \epsilon_1^\mu)_{D0[1,2,3,4]}\\
&&=-\frac{(\langle 13 \rangle \langle 2\nu \rangle-2 \langle 12 \rangle \langle 3\nu \rangle) [12] (\langle 2\nu \rangle [12]+\langle 3\nu \rangle [13]) [23] (\langle 12 \rangle [12]+\langle 13 \rangle [13]+\langle 23 \rangle [23])}{64 \langle 13 \rangle^2 \langle 1\nu \rangle \langle 2\nu \rangle},\\
&&\\
&&(A^l_\mu(1^+,2^+,3^+) \epsilon_1^\mu)_{C0[1,2,4]}\\
&&=-\frac{(\langle 2\nu \rangle [12]+\langle 3\nu \rangle [13])}{64 \langle 12 \rangle \langle 13 \rangle^2 \langle 1\nu \rangle^2 \langle 23 \rangle \langle 2\nu \rangle \langle 3\nu \rangle (\langle 12 \rangle [12]+\langle 13 \rangle [13])} (\langle 1\nu \rangle^2 \langle 23 \rangle^2 (\langle 12 \rangle \langle 13 \rangle \langle 2\nu \rangle [12]^2+2 \langle 12 \rangle \langle 12 \rangle \langle 3\nu \rangle [12]^2\\
&&+4 \langle 12 \rangle \langle 13 \rangle \langle 3\nu \rangle [13] [12]+\langle 13 \rangle^2 \langle 3\nu \rangle [13]^2) [23]-\langle 1\nu \rangle (\langle 12 \rangle [12]+\langle 13 \rangle [13]) (2 \langle 3\nu \rangle^2 [12]^2 \langle 12 \rangle^3+\langle 13 \rangle \langle 3\nu \rangle [12] (4 \langle 3\nu \rangle [13]\\
&&-\langle 2\nu \rangle [12]) \langle 12 \rangle^2-2 \langle 13 \rangle^2 \langle 2\nu \rangle [12] (\langle 2\nu \rangle [12]+3 \langle 3\nu \rangle [13]) \langle 12 \rangle-\langle 13 \rangle^3 \langle 2\nu \rangle \langle 3\nu \rangle [13]^2)),\\
&&\\
&&(A^l_\mu(1^+,2^+,3^+) \epsilon_1^\mu)_{C0[2,3,4]}\\
&&=-\frac{(\langle 2\nu \rangle [12]+\langle 3\nu \rangle [13]) [23] }{64 \langle 12 \rangle \langle 13 \rangle^2 \langle 1\nu \rangle^2 \langle 2\nu \rangle}(2 \langle 3\nu \rangle (\langle 1\nu \rangle [12]-\langle 3\nu \rangle [23]) \langle 12 \rangle^2+\langle 13 \rangle (-\langle 1\nu \rangle \langle 2\nu \rangle [12]+2 \langle 1\nu \rangle \langle 3\nu \rangle [13]\\
&&+3 \langle 2\nu \rangle \langle 3\nu \rangle [23]) \langle 12 \rangle+\langle 13 \rangle^2 \langle 2\nu \rangle (\langle 1\nu \rangle [13]+\langle 2\nu \rangle [23])),\\
&&\\
&&(A^l_\mu(1^+,2^+,3^+) \epsilon_1^\mu)_{C0[1,3,4]}\\
&&=\frac{(\langle 2\nu \rangle [12]+\langle 3\nu \rangle [13])}{64 \langle 12 \rangle \langle 13 \rangle^2 \langle 1\nu \rangle^2 \langle 23 \rangle \langle 2\nu \rangle (\langle 13 \rangle [13]+\langle 23 \rangle [23])} (-2 \langle 3\nu \rangle^2 [12] [23] (2 \langle 13 \rangle [13]+\langle 23 \rangle [23]) \langle 12 \rangle^3+\langle 3\nu \rangle (-2 \langle 23 \rangle^2 \langle 3\nu \rangle [23]^3\\
&&+3 \langle 13 \rangle \langle 23 \rangle (\langle 2\nu \rangle [12]-2 \langle 3\nu \rangle [13]) [23]^2+\langle 13 \rangle^2 [13] (\langle 1\nu \rangle [12] [13]-4 \langle 3\nu \rangle [23] [13]+4 \langle 2\nu \rangle [12] [23])) \langle 12 \rangle^2+\\
&&\langle 13 \rangle \langle 2\nu \rangle \langle 3\nu \rangle [23] (\langle 13 \rangle^2 [13]^2+5 \langle 13 \rangle \langle 23 \rangle [23] [13]+3 \langle 23 \rangle^2 [23]^2) \langle 12 \rangle+\langle 13 \rangle^2 \langle 2\nu \rangle (\langle 23 \rangle^2 \langle 2\nu \rangle [23]^3\\
&&+3 \langle 13 \rangle \langle 23 \rangle \langle 2\nu \rangle [13] [23]^2+\langle 13 \rangle^2 [13]^2 (\langle 1\nu \rangle [13]+3 \langle 2\nu \rangle [23]))),\\
&&\\
&&(A^l_\mu(1^+,2^+,3^+) \epsilon_1^\mu)_{C0[1,2,3]}\\
&&=\frac{[12] (\langle 2\nu \rangle [12]+\langle 3\nu \rangle [13])}{64 \langle 13 \rangle^2 \langle 1\nu \rangle^2 \langle 23 \rangle \langle 2\nu \rangle \langle 3\nu \rangle} (\langle 2\nu \rangle (2 \langle 1\nu \rangle \langle 2\nu \rangle [12]+3 \langle 1\nu \rangle \langle 3\nu \rangle [13]+\langle 2\nu \rangle \langle 3\nu \rangle [23]) \langle 13 \rangle^2\\
&&+\langle 12 \rangle \langle 3\nu \rangle (\langle 1\nu \rangle \langle 2\nu \rangle [12]-2 \langle 1\nu \rangle \langle 3\nu \rangle [13]-3 \langle 2\nu \rangle \langle 3\nu \rangle [23]) \langle 13 \rangle+2 \langle 12 \rangle^2 \langle 3\nu \rangle^2 (\langle 3\nu \rangle [23]-\langle 1\nu \rangle [12])),\\
&&\\
&&(A^l_\mu(1^+,2^+,3^+) \epsilon_1^\mu)_{B0[2,4]}\\
&&=-\frac{(\langle 2\nu \rangle [12]+\langle 3\nu \rangle [13])}{64 \langle 12 \rangle \langle 13 \rangle \langle 1\nu \rangle^2 \langle 23 \rangle \langle 2\nu \rangle \langle 3\nu \rangle (\langle 12 \rangle [12]+\langle 13 \rangle [13])^2} (4 \langle 3\nu \rangle^2 [12] [13] (\langle 1\nu \rangle [12]-\langle 3\nu \rangle [23]) \langle 12 \rangle^3\\
&&+\langle 13 \rangle (\langle 1\nu \rangle [12] (2 \langle 2\nu \rangle^2 [12]^2-6 \langle 2\nu \rangle \langle 3\nu \rangle [13] [12]+5 \langle 3\nu \rangle^2 [13]^2)+\langle 3\nu \rangle (-4 \langle 2\nu \rangle^2 [12]^2+7 \langle 2\nu \rangle \langle 3\nu \rangle [13] [12]\\
&&-2 \langle 3\nu \rangle^2 [13]^2) [23]) \langle 12 \rangle^2+\langle 13 \rangle^2 (\langle 1\nu \rangle [13] (7 \langle 2\nu \rangle^2 [12]^2-6 \langle 2\nu \rangle \langle 3\nu \rangle [13] [12]+\langle 3\nu \rangle^2 [13]^2)+\langle 2\nu \rangle (3 \langle 2\nu \rangle^2 [12]^2\\
&&-10 \langle 2\nu \rangle \langle 3\nu \rangle [13] [12]+\langle 3\nu \rangle^2 [13]^2) [23]) \langle 12 \rangle+5 \langle 13 \rangle^3 \langle 2\nu \rangle^2 [12] [13] (\langle 1\nu \rangle [13]+\langle 2\nu \rangle [23])),\\
&&\\
&&(A^l_\mu(1^+,2^+,3^+) \epsilon_1^\mu)_{B0[1,4]}\\
&&=\frac{(\langle 2\nu \rangle [12]+\langle 3\nu \rangle [13])}{64 \langle 12 \rangle \langle 13 \rangle \langle 1\nu \rangle^2 \langle 23 \rangle \langle 2\nu \rangle \langle 3\nu \rangle (\langle 12 \rangle [12]+\langle 13 \rangle [13])^2 (\langle 13 \rangle [13]+\langle 23 \rangle [23])^3} (-2 \langle 3\nu \rangle^3 [12]^3 [13] [23] (3 \langle 13 \rangle [13]\\
&&+2 \langle 23 \rangle [23]) \langle 12 \rangle^5+\langle 3\nu \rangle^2 [12]^2 [13] (-8 \langle 23 \rangle^2 \langle 3\nu \rangle [23]^3+\langle 13 \rangle \langle 23 \rangle (5 \langle 2\nu \rangle [12]-28 \langle 3\nu \rangle [13]) [23]^2+2 \langle 13 \rangle^2 [13] (\langle 1\nu \rangle [12] [13]\\
&&-14 \langle 3\nu \rangle [23] [13]+5 \langle 2\nu \rangle [12] [23])) \langle 12 \rangle^4-\langle 3\nu \rangle [12] (4 \langle 23 \rangle^3 \langle 3\nu \rangle^2 [13] [23]^4+4 \langle 13 \rangle \langle 23 \rangle^2 (\langle 2\nu \rangle^2 [12]^2-3 \langle 2\nu \rangle \langle 3\nu \rangle [13] [12]\\
&&+6 \langle 3\nu \rangle^2 [13]^2) [23]^3+2 \langle 13 \rangle^2 \langle 23 \rangle [13] (6 \langle 2\nu \rangle [12]-11 \langle 3\nu \rangle [13]) (\langle 2\nu \rangle [12]-2 \langle 3\nu \rangle [13]) [23]^2+\langle 13 \rangle^3 [13]^2 (\langle 1\nu \rangle [12] [13] (3 \langle 2\nu \rangle [12]\\
&&-8 \langle 3\nu \rangle [13])+2 (6 \langle 2\nu \rangle^2 [12]^2-28 \langle 2\nu \rangle \langle 3\nu \rangle [13] [12]+15 \langle 3\nu \rangle^2 [13]^2) [23])) \langle 12 \rangle^3+\langle 13 \rangle (\langle 23 \rangle^3 \langle 3\nu \rangle (-4 \langle 2\nu \rangle^2 [12]^2\\
&&+7 \langle 2\nu \rangle \langle 3\nu \rangle [13] [12]-2 \langle 3\nu \rangle^2 [13]^2) [23]^4+\langle 13 \rangle \langle 23 \rangle^2 (\langle 2\nu \rangle^3 [12]^3-32 \langle 2\nu \rangle^2 \langle 3\nu \rangle [13] [12]^2+33 \langle 2\nu \rangle \langle 3\nu \rangle^2 [13]^2 [12]\\
&&-10 \langle 3\nu \rangle^3 [13]^3) [23]^3+\langle 13 \rangle^2 \langle 23 \rangle [13] (3 \langle 2\nu \rangle^3 [12]^3-72 \langle 2\nu \rangle^2 \langle 3\nu \rangle [13] [12]^2+62 \langle 2\nu \rangle \langle 3\nu \rangle^2 [13]^2 [12]-14 \langle 3\nu \rangle^3 [13]^3) [23]^2\\
&&+\langle 13 \rangle^3 [13]^2 (\langle 1\nu \rangle [12] [13] (\langle 2\nu \rangle^2 [12]^2-14 \langle 2\nu \rangle \langle 3\nu \rangle [13] [12]+6 \langle 3\nu \rangle^2 [13]^2)+(3 \langle 2\nu \rangle^3 [12]^3-64 \langle 2\nu \rangle^2 \langle 3\nu \rangle [13] [12]^2\\
&&+53 \langle 2\nu \rangle \langle 3\nu \rangle^2 [13]^2 [12]-6 \langle 3\nu \rangle^3 [13]^3) [23])) \langle 12 \rangle^2+\langle 13 \rangle^2 \langle 2\nu \rangle (\langle 23 \rangle^3 (3 \langle 2\nu \rangle^2 [12]^2-10 \langle 2\nu \rangle \langle 3\nu \rangle [13] [12]+\langle 3\nu \rangle^2 [13]^2) [23]^4\\
&&+\langle 13 \rangle \langle 23 \rangle^2 [13] (\langle 3\nu \rangle [13]-15 \langle 2\nu \rangle [12]) (3 \langle 3\nu \rangle [13]-\langle 2\nu \rangle [12]) [23]^3+3 \langle 13 \rangle^2 \langle 23 \rangle [13]^2 (9 \langle 2\nu \rangle^2 [12]^2-26 \langle 2\nu \rangle \langle 3\nu \rangle [13] [12]\\
&&+\langle 3\nu \rangle^2 [13]^2) [23]^2+\langle 13 \rangle^3 [13]^3 (\langle 1\nu \rangle [12] [13] (6 \langle 2\nu \rangle [12]-11 \langle 3\nu \rangle [13])+(21 \langle 2\nu \rangle^2 [12]^2-58 \langle 2\nu \rangle \langle 3\nu \rangle [13] [12]\\
&&+\langle 3\nu \rangle^2 [13]^2) [23])) \langle 12 \rangle+5 \langle 13 \rangle^3 \langle 2\nu \rangle^2 [12] [13] (\langle 23 \rangle^3 \langle 2\nu \rangle [23]^4+4 \langle 13 \rangle \langle 23 \rangle^2 \langle 2\nu \rangle [13] [23]^3+6 \langle 13 \rangle^2 \langle 23 \rangle \langle 2\nu \rangle [13]^2 [23]^2\\
&&+\langle 13 \rangle^3 [13]^3 (\langle 1\nu \rangle [13]+4 \langle 2\nu \rangle [23]))),\\
&&\\
&&(A^l_\mu(1^+,2^+,3^+) \epsilon_1^\mu)_{B0[1,3]}\\
&&=-\frac{(\langle 2\nu \rangle [12]+\langle 3\nu \rangle [13])}{64 \langle 12 \rangle \langle 13 \rangle \langle 1\nu \rangle^2 \langle 23 \rangle \langle 2\nu \rangle \langle 3\nu \rangle (\langle 13 \rangle [13]+\langle 23 \rangle [23])^3} (\langle 2\nu \rangle [13]^2 (\langle 1\nu \rangle [13] (\langle 2\nu \rangle [12]-\langle 3\nu \rangle [13])+\langle 2\nu \rangle (3 \langle 2\nu \rangle [12]\\
&&-5 \langle 3\nu \rangle [13]) [23]) \langle 13 \rangle^4+[13] (3 \langle 23 \rangle \langle 2\nu \rangle^2 (\langle 2\nu \rangle [12]-3 \langle 3\nu \rangle [13]) [23]^2+\langle 12 \rangle \langle 3\nu \rangle [13] (\langle 1\nu \rangle [13] (\langle 3\nu \rangle [13]-3 \langle 2\nu \rangle [12])\\
&&+6 \langle 2\nu \rangle (\langle 3\nu \rangle [13]-2 \langle 2\nu \rangle [12]) [23])) \langle 13 \rangle^3+(\langle 23 \rangle^2 \langle 2\nu \rangle^2 (\langle 2\nu \rangle [12]-7 \langle 3\nu \rangle [13]) [23]^3+6 \langle 12 \rangle \langle 23 \rangle \langle 2\nu \rangle \langle 3\nu \rangle [13] (\langle 3\nu \rangle [13]\\
&&-2 \langle 2\nu \rangle [12]) [23]^2+2 \langle 12 \rangle^2 \langle 3\nu \rangle^2 [13]^2 (\langle 1\nu \rangle [12] [13]-3 \langle 3\nu \rangle [23] [13]+5 \langle 2\nu \rangle [12] [23])) \langle 13 \rangle^2\\
&&-\langle 3\nu \rangle [23] (6 \langle 3\nu \rangle^2 [12] [13]^2 \langle 12 \rangle^3+\langle 23 \rangle \langle 3\nu \rangle [13] (9 \langle 3\nu \rangle [13]-5 \langle 2\nu \rangle [12]) [23] \langle 12 \rangle^2+2 \langle 23 \rangle^2 \langle 2\nu \rangle (2 \langle 2\nu \rangle [12]\\
&&-\langle 3\nu \rangle [13]) [23]^2 \langle 12 \rangle+2 \langle 23 \rangle^3 \langle 2\nu \rangle^2 [23]^3) \langle 13 \rangle-4 \langle 12 \rangle^2 \langle 23 \rangle \langle 3\nu \rangle^3 [13] [23]^2 (\langle 12 \rangle [12]+\langle 23 \rangle [23])).
\eeas
\end{scriptsize}

For $A^l_\mu(1^+,2^+,3^+) \epsilon_3^\mu$:

\begin{scriptsize}
\beas
&&(A^l_\mu(1^+,2^+,3^+) \epsilon_3^\mu)_{D0[1,2,3,4]}\\
&&=\frac{(\langle 13 \rangle \langle 2\nu \rangle-2 \langle 12 \rangle \langle 3\nu \rangle) [12] [23] (\langle 1\nu \rangle [13]+\langle 2\nu \rangle [23]) (\langle 12 \rangle [12]+\langle 13 \rangle [13]+\langle 23 \rangle [23])}{64 \langle 13 \rangle^2 \langle 2\nu \rangle \langle 3\nu \rangle},\\
&&\\
&&(A^l_\mu(1^+,2^+,3^+) \epsilon_3^\mu)_{C0[1,2,4]}\\
&&=\frac{(\langle 1\nu \rangle [13]+\langle 2\nu \rangle [23])}{64 \langle 12 \rangle \langle 13 \rangle^2 \langle 1\nu \rangle \langle 23 \rangle \langle 2\nu \rangle \langle 3\nu \rangle^2 (\langle 12 \rangle [12]+\langle 13 \rangle [13])} (\langle 1\nu \rangle^2 \langle 23 \rangle^2 (\langle 12 \rangle (\langle 13 \rangle \langle 2\nu \rangle+2 \langle 12 \rangle \langle 3\nu \rangle) [12]^2\\
&&+4 \langle 12 \rangle \langle 13 \rangle \langle 3\nu \rangle [13] [12]+\langle 13 \rangle^2 \langle 3\nu \rangle [13]^2) [23]-\langle 1\nu \rangle (\langle 12 \rangle [12]+\langle 13 \rangle [13]) (2 \langle 3\nu \rangle^2 [12]^2 \langle 12 \rangle^3\\
&&+\langle 13 \rangle \langle 3\nu \rangle [12] (4 \langle 3\nu \rangle [13]-\langle 2\nu \rangle [12]) \langle 12 \rangle^2-2 \langle 13 \rangle^2 \langle 2\nu \rangle [12] (\langle 2\nu \rangle [12]+3 \langle 3\nu \rangle [13]) \langle 12 \rangle-\langle 13 \rangle^3 \langle 2\nu \rangle \langle 3\nu \rangle [13]^2)),\\
&&\\
&&(A^l_\mu(1^+,2^+,3^+) \epsilon_3^\mu)_{C0[2,3,4]}\\
&&=\frac{[23] (\langle 1\nu \rangle [13]+\langle 2\nu \rangle [23])}{64 \langle 12 \rangle \langle 13 \rangle^2 \langle 1\nu \rangle \langle 2\nu \rangle \langle 3\nu \rangle} (2 \langle 3\nu \rangle (\langle 1\nu \rangle [12]-\langle 3\nu \rangle [23]) \langle 12 \rangle^2+\langle 13 \rangle (-\langle 1\nu \rangle \langle 2\nu \rangle [12]+2 \langle 1\nu \rangle \langle 3\nu \rangle [13]\\
&&+3 \langle 2\nu \rangle \langle 3\nu \rangle [23]) \langle 12 \rangle+\langle 13 \rangle^2 \langle 2\nu \rangle (\langle 1\nu \rangle [13]+\langle 2\nu \rangle [23])) ,\\
&&\\
&&(A^l_\mu(1^+,2^+,3^+) \epsilon_3^\mu)_{C0[1,3,4]}\\
&&=-\frac{1}{64 \langle 12 \rangle \langle 13 \rangle^2 \langle 1\nu \rangle \langle 23 \rangle \langle 2\nu \rangle \langle 3\nu \rangle (\langle 13 \rangle [13]+\langle 23 \rangle [23])}(2 \langle 3\nu \rangle^3 [12] [13] [23]^2 \langle 12 \rangle^4+\langle 3\nu \rangle^2 [23] (2 \langle 23 \rangle (\langle 3\nu \rangle [13]\\
&&-\langle 2\nu \rangle [12]) [23]^2+\langle 13 \rangle [13] (-4 \langle 1\nu \rangle [12] [13]+6 \langle 3\nu \rangle [23] [13]-9 \langle 2\nu \rangle [12] [23])) \langle 12 \rangle^3+\langle 3\nu \rangle (-2 \langle 23 \rangle^2 \langle 2\nu \rangle \langle 3\nu \rangle [23]^4\\
&&+\langle 13 \rangle \langle 23 \rangle \langle 2\nu \rangle (3 \langle 2\nu \rangle [12]-11 \langle 3\nu \rangle [13]) [23]^3+\langle 13 \rangle^2 [13] (\langle 1\nu \rangle^2 [12] [13]^2+\langle 1\nu \rangle (5 \langle 2\nu \rangle [12]-4 \langle 3\nu \rangle [13]) [23] [13]\\
&&+\langle 2\nu \rangle (7 \langle 2\nu \rangle [12]-15 \langle 3\nu \rangle [13]) [23]^2)) \langle 12 \rangle^2+\langle 13 \rangle \langle 2\nu \rangle \langle 3\nu \rangle [23] (3 \langle 23 \rangle^2 \langle 2\nu \rangle [23]^3+7 \langle 13 \rangle \langle 23 \rangle \langle 2\nu \rangle [13] [23]^2\\
&&+\langle 13 \rangle^2 [13]^2 (\langle 1\nu \rangle [13]+3 \langle 2\nu \rangle [23])) \langle 12 \rangle+\langle 13 \rangle^2 \langle 2\nu \rangle (\langle 23 \rangle^2 \langle 2\nu \rangle^2 [23]^4+4 \langle 13 \rangle \langle 23 \rangle \langle 2\nu \rangle^2 [13] [23]^3\\
&&+\langle 13 \rangle^2 [13]^2 (\langle 1\nu \rangle^2 [13]^2+4 \langle 1\nu \rangle \langle 2\nu \rangle [23] [13]+6 \langle 2\nu \rangle^2 [23]^2))) ,\\
&&\\
&&(A^l_\mu(1^+,2^+,3^+) \epsilon_3^\mu)_{C0[1,2,3]}\\
&&=-\frac{[12] (\langle 1\nu \rangle [13]+\langle 2\nu \rangle [23])}{64 \langle 13 \rangle^2 \langle 1\nu \rangle \langle 23 \rangle \langle 2\nu \rangle \langle 3\nu \rangle^2}(\langle 2\nu \rangle (2 \langle 1\nu \rangle \langle 2\nu \rangle [12]+3 \langle 1\nu \rangle \langle 3\nu \rangle [13]+\langle 2\nu \rangle \langle 3\nu \rangle [23]) \langle 13 \rangle^2+\langle 12 \rangle \langle 3\nu \rangle (\langle 1\nu \rangle \langle 2\nu \rangle [12]\\
&&-2 \langle 1\nu \rangle \langle 3\nu \rangle [13]-3 \langle 2\nu \rangle \langle 3\nu \rangle [23]) \langle 13 \rangle+2 \langle 12 \rangle^2 \langle 3\nu \rangle^2 (\langle 3\nu \rangle [23]-\langle 1\nu \rangle [12])) ,\\
&&\\
&&(A^l_\mu(1^+,2^+,3^+) \epsilon_3^\mu)_{B0[2,4]}\\
&&=\frac{(\langle 1\nu \rangle [13]+\langle 2\nu \rangle [23])}{64 \langle 12 \rangle \langle 13 \rangle \langle 1\nu \rangle \langle 23 \rangle \langle 2\nu \rangle \langle 3\nu \rangle^2 (\langle 12 \rangle [12]+\langle 13 \rangle [13])^2} (4 \langle 3\nu \rangle^2 [12] [13] (\langle 1\nu \rangle [12]-\langle 3\nu \rangle [23]) \langle 12 \rangle^3\\
&&+\langle 13 \rangle (\langle 1\nu \rangle [12] (2 \langle 2\nu \rangle^2 [12]^2-6 \langle 2\nu \rangle \langle 3\nu \rangle [13] [12]+5 \langle 3\nu \rangle^2 [13]^2)+\langle 3\nu \rangle (-4 \langle 2\nu \rangle^2 [12]^2+7 \langle 2\nu \rangle \langle 3\nu \rangle [13] [12]\\
&&-2 \langle 3\nu \rangle^2 [13]^2) [23]) \langle 12 \rangle^2+\langle 13 \rangle^2 (\langle 1\nu \rangle [13] (7 \langle 2\nu \rangle^2 [12]^2-6 \langle 2\nu \rangle \langle 3\nu \rangle [13] [12]+\langle 3\nu \rangle^2 [13]^2)+\langle 2\nu \rangle (3 \langle 2\nu \rangle^2 [12]^2\\
&&-10 \langle 2\nu \rangle \langle 3\nu \rangle [13] [12]+\langle 3\nu \rangle^2 [13]^2) [23]) \langle 12 \rangle+5 \langle 13 \rangle^3 \langle 2\nu \rangle^2 [12] [13] (\langle 1\nu \rangle [13]+\langle 2\nu \rangle [23])) ,\\
&&\\
&&(A^l_\mu(1^+,2^+,3^+) \epsilon_3^\mu)_{B0[1,4]}\\
&&=-\frac{1}{64 \langle 12 \rangle \langle 13 \rangle \langle 1\nu \rangle \langle 23 \rangle \langle 2\nu \rangle \langle 3\nu \rangle^2 (\langle 12 \rangle [12]+\langle 13 \rangle [13])^2 (\langle 13 \rangle [13]+\langle 23 \rangle [23])^2}(-4 \langle 3\nu \rangle^3 [12]^2 [13] [23] (\langle 1\nu \rangle [12] [13]\\
&&-2 \langle 3\nu \rangle [23] [13]+\langle 2\nu \rangle [12] [23]) \langle 12 \rangle^5+\langle 3\nu \rangle^2 [12] [13] (4 \langle 23 \rangle \langle 3\nu \rangle (\langle 3\nu \rangle [13]-2 \langle 2\nu \rangle [12]) [23]^3+\langle 13 \rangle (2 \langle 1\nu \rangle^2 [12]^2 [13]^2\\
&&+\langle 1\nu \rangle [12] (7 \langle 2\nu \rangle [12]-20 \langle 3\nu \rangle [13]) [23] [13]+(9 \langle 2\nu \rangle^2 [12]^2-40 \langle 2\nu \rangle \langle 3\nu \rangle [13] [12]+20 \langle 3\nu \rangle^2 [13]^2) [23]^2)) \langle 12 \rangle^4\\
&&+\langle 3\nu \rangle (-4 \langle 23 \rangle^2 \langle 2\nu \rangle \langle 3\nu \rangle^2 [12] [13] [23]^4+\langle 13 \rangle \langle 23 \rangle (-4 \langle 2\nu \rangle^3 [12]^3+16 \langle 2\nu \rangle^2 \langle 3\nu \rangle [13] [12]^2-31 \langle 2\nu \rangle \langle 3\nu \rangle^2 [13]^2 [12]\\
&&+2 \langle 3\nu \rangle^3 [13]^3) [23]^3+\langle 13 \rangle^2 [13] (\langle 1\nu \rangle^2 [12]^2 (8 \langle 3\nu \rangle [13]-3 \langle 2\nu \rangle [12]) [13]^2+\langle 1\nu \rangle [12] (-11 \langle 2\nu \rangle^2 [12]^2+42 \langle 2\nu \rangle \langle 3\nu \rangle [13] [12]\\
&&-24 \langle 3\nu \rangle^2 [13]^2) [23] [13]+(-13 \langle 2\nu \rangle^3 [12]^3+74 \langle 2\nu \rangle^2 \langle 3\nu \rangle [13] [12]^2-70 \langle 2\nu \rangle \langle 3\nu \rangle^2 [13]^2 [12]+8 \langle 3\nu \rangle^3 [13]^3) [23]^2)) \langle 12 \rangle^3\\
&&+\langle 13 \rangle (\langle 23 \rangle^2 \langle 2\nu \rangle \langle 3\nu \rangle (-4 \langle 2\nu \rangle^2 [12]^2+7 \langle 2\nu \rangle \langle 3\nu \rangle [13] [12]-2 \langle 3\nu \rangle^2 [13]^2) [23]^4+\langle 13 \rangle \langle 23 \rangle \langle 2\nu \rangle (\langle 2\nu \rangle^3 [12]^3\\
&&-35 \langle 2\nu \rangle^2 \langle 3\nu \rangle [13] [12]^2+43 \langle 2\nu \rangle \langle 3\nu \rangle^2 [13]^2 [12]-11 \langle 3\nu \rangle^3 [13]^3) [23]^3+\langle 13 \rangle^2 [13] (\langle 1\nu \rangle^2 [12] (\langle 2\nu \rangle^2 [12]^2\\
&&-14 \langle 2\nu \rangle \langle 3\nu \rangle [13] [12]+6 \langle 3\nu \rangle^2 [13]^2) [13]^2+\langle 1\nu \rangle (3 \langle 2\nu \rangle^3 [12]^3-58 \langle 2\nu \rangle^2 \langle 3\nu \rangle [13] [12]^2+42 \langle 2\nu \rangle \langle 3\nu \rangle^2 [13]^2 [12]\\
&&-6 \langle 3\nu \rangle^3 [13]^3) [23] [13]+\langle 2\nu \rangle (3 \langle 2\nu \rangle^3 [12]^3-84 \langle 2\nu \rangle^2 \langle 3\nu \rangle [13] [12]^2+98 \langle 2\nu \rangle \langle 3\nu \rangle^2 [13]^2 [12]-16 \langle 3\nu \rangle^3 [13]^3) [23]^2)) \langle 12 \rangle^2\\
&&+\langle 13 \rangle^2 \langle 2\nu \rangle (\langle 23 \rangle^2 \langle 2\nu \rangle (3 \langle 2\nu \rangle^2 [12]^2-10 \langle 2\nu \rangle \langle 3\nu \rangle [13] [12]+\langle 3\nu \rangle^2 [13]^2) [23]^4+3 \langle 13 \rangle \langle 23 \rangle \langle 2\nu \rangle [13] (5 \langle 2\nu \rangle^2 [12]^2\\
&&-17 \langle 2\nu \rangle \langle 3\nu \rangle [13] [12]+\langle 3\nu \rangle^2 [13]^2) [23]^3+\langle 13 \rangle^2 [13]^2 (\langle 1\nu \rangle^2 [12] (6 \langle 2\nu \rangle [12]-11 \langle 3\nu \rangle [13]) [13]^2+\langle 1\nu \rangle (21 \langle 2\nu \rangle^2 [12]^2\\
&&-53 \langle 2\nu \rangle \langle 3\nu \rangle [13] [12]+\langle 3\nu \rangle^2 [13]^2) [23] [13]+3 \langle 2\nu \rangle (9 \langle 2\nu \rangle^2 [12]^2-31 \langle 2\nu \rangle \langle 3\nu \rangle [13] [12]+\langle 3\nu \rangle^2 [13]^2) [23]^2)) \langle 12 \rangle\\
&&+5 \langle 13 \rangle^3 \langle 2\nu \rangle^2 [12] [13] (\langle 23 \rangle^2 \langle 2\nu \rangle^2 [23]^4+4 \langle 13 \rangle \langle 23 \rangle \langle 2\nu \rangle^2 [13] [23]^3+\langle 13 \rangle^2 [13]^2 (\langle 1\nu \rangle^2 [13]^2\\
&&+4 \langle 1\nu \rangle \langle 2\nu \rangle [23] [13]+6 \langle 2\nu \rangle^2 [23]^2))) ,\\
&&\\
&&(A^l_\mu(1^+,2^+,3^+) \epsilon_3^\mu)_{B0[1,3]}\\
&&=\frac{1}{64 \langle 12 \rangle \langle 13 \rangle \langle 1\nu \rangle \langle 23 \rangle \langle 2\nu \rangle \langle 3\nu \rangle^2 (\langle 13 \rangle [13]+\langle 23 \rangle [23])^2}(\langle 2\nu \rangle [13] (\langle 1\nu \rangle^2 (\langle 2\nu \rangle [12]-\langle 3\nu \rangle [13]) [13]^2+\langle 1\nu \rangle \langle 2\nu \rangle (3 \langle 2\nu \rangle [12]\\
&&-5 \langle 3\nu \rangle [13]) [23] [13]+3 \langle 2\nu \rangle^2 (\langle 2\nu \rangle [12]-3 \langle 3\nu \rangle [13]) [23]^2) \langle 13 \rangle^3+(\langle 23 \rangle \langle 2\nu \rangle^3 (\langle 2\nu \rangle [12]-7 \langle 3\nu \rangle [13]) [23]^3\\
&&+\langle 12 \rangle \langle 3\nu \rangle [13] (\langle 1\nu \rangle^2 (\langle 3\nu \rangle [13]-3 \langle 2\nu \rangle [12]) [13]^2+\langle 1\nu \rangle \langle 2\nu \rangle (5 \langle 3\nu \rangle [13]-11 \langle 2\nu \rangle [12]) [23] [13]+\langle 2\nu \rangle^2 (11 \langle 3\nu \rangle [13]\\
&&-13 \langle 2\nu \rangle [12]) [23]^2)) \langle 13 \rangle^2+\langle 3\nu \rangle (-2 \langle 23 \rangle^2 \langle 2\nu \rangle^3 [23]^4+4 \langle 12 \rangle \langle 23 \rangle \langle 2\nu \rangle^2 (\langle 3\nu \rangle [13]-\langle 2\nu \rangle [12]) [23]^3\\
&&+\langle 12 \rangle^2 \langle 3\nu \rangle [13] (2 \langle 1\nu \rangle^2 [12] [13]^2+\langle 1\nu \rangle (7 \langle 2\nu \rangle [12]-5 \langle 3\nu \rangle [13]) [23] [13]+\langle 2\nu \rangle (9 \langle 2\nu \rangle [12]-11 \langle 3\nu \rangle [13]) [23]^2)) \langle 13 \rangle\\
&&-4 \langle 12 \rangle^2 \langle 3\nu \rangle^3 [13] [23] (\langle 23 \rangle \langle 2\nu \rangle [23]^2+\langle 12 \rangle (\langle 1\nu \rangle [12] [13]-\langle 3\nu \rangle [23] [13]+\langle 2\nu \rangle [12] [23]))) .
\eeas
\end{scriptsize}

For $A^l_\mu(1^+,2^+,3^+) k_1^\mu$:

\begin{scriptsize}
\beas
&&(A^l_\mu(1^+,2^+,3^+) k_1^\mu)_{D0[1,2,3,4]}\\
&&=\frac{[12]}{64 \langle 13 \rangle^2 \langle 1\nu \rangle \langle 2\nu \rangle \langle 3\nu \rangle} (\langle 2\nu \rangle (4 \langle 2\nu \rangle [12]+5 \langle 3\nu \rangle [13]) \langle 13 \rangle^2+3 \langle 12 \rangle \langle 2\nu \rangle \langle 3\nu \rangle [12] \langle 13 \rangle\\
&&-2 \langle 12 \rangle^2 \langle 3\nu \rangle^2 [12]) [23] (\langle 12 \rangle \langle 1\nu \rangle [12]+\langle 13 \rangle \langle 1\nu \rangle [13]+\langle 13 \rangle \langle 2\nu \rangle [23]-\langle 12 \rangle \langle 3\nu \rangle [23]),\\
&&\\
&&(A^l_\mu(1^+,2^+,3^+) k_1^\mu)_{C0[1,3,4]}\\
&&=\frac{1}{64 \langle 12 \rangle \langle 13 \rangle^2 \langle 1\nu \rangle \langle 23 \rangle \langle 2\nu \rangle \langle 3\nu \rangle}(2 \langle 3\nu \rangle^2 [12]^2 (\langle 3\nu \rangle [23]-\langle 1\nu \rangle [12]) \langle 12 \rangle^4+\langle 13 \rangle \langle 3\nu \rangle [12] (\langle 1\nu \rangle [12] (3 \langle 2\nu \rangle [12]-4 \langle 3\nu \rangle [13])\\
&&+\langle 3\nu \rangle (2 \langle 3\nu \rangle [13]-5 \langle 2\nu \rangle [12]) [23]) \langle 12 \rangle^3+\langle 13 \rangle^2 ([13] (\langle 3\nu \rangle [13]-8 \langle 2\nu \rangle [12]) [23] \langle 3\nu \rangle^2+\langle 1\nu \rangle [12] (2 \langle 2\nu \rangle [12]\\
&&-\langle 3\nu \rangle [13]) (\langle 2\nu \rangle [12]+4 \langle 3\nu \rangle [13])) \langle 12 \rangle^2+\langle 13 \rangle^3 (\langle 1\nu \rangle [13] (6 \langle 2\nu \rangle^2 [12]^2+9 \langle 2\nu \rangle \langle 3\nu \rangle [13] [12]-2 \langle 3\nu \rangle^2 [13]^2)+\langle 2\nu \rangle (3 \langle 2\nu \rangle^2 [12]^2\\
&&+2 \langle 2\nu \rangle \langle 3\nu \rangle [13] [12]-6 \langle 3\nu \rangle^2 [13]^2) [23]) \langle 12 \rangle+\langle 13 \rangle^4 \langle 2\nu \rangle [13] (4 \langle 2\nu \rangle [12]+5 \langle 3\nu \rangle [13]) (\langle 1\nu \rangle [13]+\langle 2\nu \rangle [23])),\\
&&\\
&&(A^l_\mu(1^+,2^+,3^+) k_1^\mu)_{C0[2,3,4]}\\
&&=\frac{[23]}{64 \langle 12 \rangle \langle 13 \rangle^2 \langle 1\nu \rangle \langle 2\nu \rangle \langle 3\nu \rangle}
 (2 \langle 3\nu \rangle^2 [12] (\langle 1\nu \rangle [12]-\langle 3\nu \rangle [23]) \langle 12 \rangle^3+\langle 13 \rangle \langle 3\nu \rangle [12] (-3 \langle 1\nu \rangle \langle 2\nu \rangle [12]+2 \langle 1\nu \rangle \langle 3\nu \rangle [13]\\
&&+5 \langle 2\nu \rangle \langle 3\nu \rangle [23]) \langle 12 \rangle^2+\langle 13 \rangle^2 \langle 2\nu \rangle (2 \langle 1\nu \rangle [12] (2 \langle 2\nu \rangle [12]+\langle 3\nu \rangle [13])-\langle 3\nu \rangle (9 \langle 2\nu \rangle [12]+7 \langle 3\nu \rangle [13]) [23]) \langle 12 \rangle\\
&&+\langle 13 \rangle^3 \langle 2\nu \rangle (4 \langle 2\nu \rangle [12]+5 \langle 3\nu \rangle [13]) (\langle 1\nu \rangle [13]+\langle 2\nu \rangle [23])),\\
&&\\
&&(A^l_\mu(1^+,2^+,3^+) k_1^\mu)_{C0[1,3,4]}\\
&&=\frac{1}{64 \langle 12 \rangle \langle 13 \rangle^2 \langle 1\nu \rangle \langle 23 \rangle \langle 2\nu \rangle \langle 3\nu \rangle (\langle 13 \rangle [13]+\langle 23 \rangle [23])}(2 \langle 3\nu \rangle^3 [12]^2 [23] (2 \langle 13 \rangle [13]+\langle 23 \rangle [23]) \langle 12 \rangle^4\\
&&+\langle 3\nu \rangle^2 [12] (2 \langle 23 \rangle^2 \langle 3\nu \rangle [23]^3+\langle 13 \rangle \langle 23 \rangle (6 \langle 3\nu \rangle [13]-5 \langle 2\nu \rangle [12]) [23]^2-\langle 13 \rangle^2 [13] (\langle 1\nu \rangle [12] [13]-2 \langle 3\nu \rangle [23] [13]\\
&&+10 \langle 2\nu \rangle [12] [23])) \langle 12 \rangle^3-\langle 13 \rangle \langle 3\nu \rangle (5 \langle 23 \rangle^2 \langle 2\nu \rangle \langle 3\nu \rangle [12] [23]^3+2 \langle 13 \rangle \langle 23 \rangle \langle 3\nu \rangle [13] (9 \langle 2\nu \rangle [12]+2 \langle 3\nu \rangle [13]) [23]^2\\
&&+\langle 13 \rangle^2 [13] ((2 \langle 2\nu \rangle^2 [12]^2+17 \langle 2\nu \rangle \langle 3\nu \rangle [13] [12]+6 \langle 3\nu \rangle^2 [13]^2) [23]-\langle 1\nu \rangle [12] [13] (2 \langle 2\nu \rangle [12]\\
&&+\langle 3\nu \rangle [13]))) \langle 12 \rangle^2+\langle 13 \rangle^2 (\langle 23 \rangle^2 \langle 2\nu \rangle \langle 3\nu \rangle (\langle 2\nu \rangle [12]-\langle 3\nu \rangle [13]) [23]^3+\langle 13 \rangle \langle 23 \rangle \langle 2\nu \rangle (4 \langle 2\nu \rangle^2 [12]^2+3 \langle 2\nu \rangle \langle 3\nu \rangle [13] [12]\\
&&-3 \langle 3\nu \rangle^2 [13]^2) [23]^2+\langle 13 \rangle^2 [13] (\langle 1\nu \rangle [13] (4 \langle 2\nu \rangle^2 [12]^2+5 \langle 2\nu \rangle \langle 3\nu \rangle [13] [12]+2 \langle 3\nu \rangle^2 [13]^2)+\langle 2\nu \rangle (8 \langle 2\nu \rangle^2 [12]^2\\
&&+3 \langle 2\nu \rangle \langle 3\nu \rangle [13] [12]-3 \langle 3\nu \rangle^2 [13]^2) [23])) \langle 12 \rangle+\langle 13 \rangle^3 \langle 2\nu \rangle (4 \langle 2\nu \rangle [12]+3 \langle 3\nu \rangle [13]) (\langle 23 \rangle^2 \langle 2\nu \rangle [23]^3\\
&&+3 \langle 13 \rangle \langle 23 \rangle \langle 2\nu \rangle [13] [23]^2+\langle 13 \rangle^2 [13]^2 (\langle 1\nu \rangle [13]+3 \langle 2\nu \rangle [23]))),\\
&&\\
&&(A^l_\mu(1^+,2^+,3^+) k_1^\mu)_{C0[1,2,3]}\\
&&=\frac{[12]}{64 \langle 13 \rangle^2 \langle 1\nu \rangle \langle 23 \rangle \langle 2\nu \rangle \langle 3\nu \rangle} (\langle 1\nu \rangle (\langle 12 \rangle [12]+\langle 13 \rangle [13]) (\langle 2\nu \rangle (6 \langle 2\nu \rangle [12]+5 \langle 3\nu \rangle [13]) \langle 13 \rangle^2-3 \langle 12 \rangle \langle 2\nu \rangle \langle 3\nu \rangle [12] \langle 13 \rangle\\
&&+2 \langle 12 \rangle^2 \langle 3\nu \rangle^2 [12])+(\langle 13 \rangle \langle 2\nu \rangle-\langle 12 \rangle \langle 3\nu \rangle) (\langle 2\nu \rangle (4 \langle 2\nu \rangle [12]+3 \langle 3\nu \rangle [13]) \langle 13 \rangle^2-3 \langle 12 \rangle \langle 2\nu \rangle \langle 3\nu \rangle [12] \langle 13 \rangle+2 \langle 12 \rangle^2 \langle 3\nu \rangle^2 [12]) [23]),\\
&&\\
&&(A^l_\mu(1^+,2^+,3^+) k_1^\mu)_{B0[2,4]}\\
&&=\frac{1}{64 \langle 12 \rangle \langle 13 \rangle \langle 1\nu \rangle \langle 23 \rangle \langle 2\nu \rangle \langle 3\nu \rangle (\langle 12 \rangle [12]+\langle 13 \rangle [13])}(\langle 1\nu \rangle (\langle 12 \rangle [12]+\langle 13 \rangle [13]) (5 \langle 13 \rangle^2 [12] [13] \langle 2\nu \rangle^2+4 \langle 12 \rangle^2 \langle 3\nu \rangle^2 [12] [13]\\
&&+\langle 12 \rangle \langle 13 \rangle (2 \langle 2\nu \rangle^2 [12]^2-6 \langle 2\nu \rangle \langle 3\nu \rangle [13] [12]+\langle 3\nu \rangle^2 [13]^2))+(5 \langle 13 \rangle^3 [12] [13] \langle 2\nu \rangle^3+13 \langle 12 \rangle^2 \langle 13 \rangle \langle 3\nu \rangle^2 [12] [13] \langle 2\nu \rangle\\
&&+\langle 12 \rangle \langle 13 \rangle^2 (\langle 2\nu \rangle^2 [12]^2-12 \langle 2\nu \rangle \langle 3\nu \rangle [13] [12]+\langle 3\nu \rangle^2 [13]^2) \langle 2\nu \rangle-4 \langle 12 \rangle^3 \langle 3\nu \rangle^3 [12] [13]) [23]),\\
&&\\
&&(A^l_\mu(1^+,2^+,3^+) k_1^\mu)_{B0[1,4]}\\
&&=\frac{1}{64 \langle 12 \rangle \langle 13 \rangle \langle 1\nu \rangle \langle 23 \rangle \langle 2\nu \rangle \langle 3\nu \rangle (\langle 12 \rangle [12]+\langle 13 \rangle [13]) (\langle 13 \rangle [13]+\langle 23 \rangle [23])^3}(2 \langle 3\nu \rangle^3 [12]^3 [13] [23] (3 \langle 13 \rangle [13]+2 \langle 23 \rangle [23]) \langle 12 \rangle^5\\
&&+\langle 3\nu \rangle^2 [12]^2 [13] (8 \langle 23 \rangle^2 \langle 3\nu \rangle [23]^3-11 \langle 13 \rangle \langle 23 \rangle (\langle 2\nu \rangle [12]-2 \langle 3\nu \rangle [13]) [23]^2-2 \langle 13 \rangle^2 [13] (\langle 1\nu \rangle [12] [13]-10 \langle 3\nu \rangle [23] [13]\\
&&+9 \langle 2\nu \rangle [12] [23])) \langle 12 \rangle^4+\langle 3\nu \rangle [12] (4 \langle 23 \rangle^3 \langle 3\nu \rangle^2 [13] [23]^4-4 \langle 13 \rangle \langle 23 \rangle^2 (\langle 2\nu \rangle^2 [12]^2+7 \langle 2\nu \rangle \langle 3\nu \rangle [13] [12]-4 \langle 3\nu \rangle^2 [13]^2) [23]^3\\
&&+2 \langle 13 \rangle^2 \langle 23 \rangle [13] (-2 \langle 2\nu \rangle^2 [12]^2-42 \langle 2\nu \rangle \langle 3\nu \rangle [13] [12]+11 \langle 3\nu \rangle^2 [13]^2) [23]^2+\langle 13 \rangle^3 [13]^2 (\langle 1\nu \rangle [12] [13] (5 \langle 2\nu \rangle [12]\\
&&-6 \langle 3\nu \rangle [13])+2 (2 \langle 2\nu \rangle^2 [12]^2-40 \langle 2\nu \rangle \langle 3\nu \rangle [13] [12]+7 \langle 3\nu \rangle^2 [13]^2) [23])) \langle 12 \rangle^3+\langle 13 \rangle (-\langle 23 \rangle^3 \langle 2\nu \rangle \langle 3\nu \rangle [12] (4 \langle 2\nu \rangle [12]\\
&&+17 \langle 3\nu \rangle [13]) [23]^4+\langle 13 \rangle \langle 23 \rangle^2 (\langle 2\nu \rangle^3 [12]^3-2 \langle 2\nu \rangle^2 \langle 3\nu \rangle [13] [12]^2-83 \langle 2\nu \rangle \langle 3\nu \rangle^2 [13]^2 [12]-4 \langle 3\nu \rangle^3 [13]^3) [23]^3\\
&&+\langle 13 \rangle^2 \langle 23 \rangle [13] (3 \langle 2\nu \rangle^3 [12]^3+26 \langle 2\nu \rangle^2 \langle 3\nu \rangle [13] [12]^2-136 \langle 2\nu \rangle \langle 3\nu \rangle^2 [13]^2 [12]-8 \langle 3\nu \rangle^3 [13]^3) [23]^2\\
&&+\langle 13 \rangle^3 [13]^2 (\langle 1\nu \rangle [12] [13] (\langle 2\nu \rangle^2 [12]^2+18 \langle 2\nu \rangle \langle 3\nu \rangle [13] [12]-4 \langle 3\nu \rangle^2 [13]^2)+(3 \langle 2\nu \rangle^3 [12]^3+42 \langle 2\nu \rangle^2 \langle 3\nu \rangle [13] [12]^2\\
&&-91 \langle 2\nu \rangle \langle 3\nu \rangle^2 [13]^2 [12]-4 \langle 3\nu \rangle^3 [13]^3) [23])) \langle 12 \rangle^2+\langle 13 \rangle^2 \langle 2\nu \rangle (\langle 23 \rangle^3 (3 \langle 2\nu \rangle^2 [12]^2+12 \langle 2\nu \rangle \langle 3\nu \rangle [13] [12]-5 \langle 3\nu \rangle^2 [13]^2) [23]^4\\
&&+\langle 13 \rangle \langle 23 \rangle^2 [13] (3 \langle 2\nu \rangle [12]-\langle 3\nu \rangle [13]) (3 \langle 2\nu \rangle [12]+19 \langle 3\nu \rangle [13]) [23]^3+9 \langle 13 \rangle^2 \langle 23 \rangle [13]^2 (\langle 2\nu \rangle^2 [12]^2+10 \langle 2\nu \rangle \langle 3\nu \rangle [13] [12]\\
&&-3 \langle 3\nu \rangle^2 [13]^2) [23]^2+\langle 13 \rangle^3 [13]^3 (17 \langle 1\nu \rangle \langle 3\nu \rangle [12] [13]^2+(3 \langle 2\nu \rangle^2 [12]^2+66 \langle 2\nu \rangle \langle 3\nu \rangle [13] [12]-17 \langle 3\nu \rangle^2 [13]^2) [23])) \langle 12 \rangle\\
&&-\langle 13 \rangle^3 \langle 2\nu \rangle [13] (\langle 2\nu \rangle [12]-4 \langle 3\nu \rangle [13]) (\langle 23 \rangle^3 \langle 2\nu \rangle [23]^4+4 \langle 13 \rangle \langle 23 \rangle^2 \langle 2\nu \rangle [13] [23]^3+6 \langle 13 \rangle^2 \langle 23 \rangle \langle 2\nu \rangle [13]^2 [23]^2\\
&&+\langle 13 \rangle^3 [13]^3 (\langle 1\nu \rangle [13]+4 \langle 2\nu \rangle [23]))),\\
&&\\
&&(A^l_\mu(1^+,2^+,3^+) k_1^\mu)_{B0[1,3]}\\
&&=-\frac{1}{64 \langle 12 \rangle \langle 13 \rangle \langle 1\nu \rangle \langle 23 \rangle \langle 2\nu \rangle \langle 3\nu \rangle (\langle 13 \rangle [13]+\langle 23 \rangle [23])^3}(2 \langle 3\nu \rangle^3 [12]^2 [13] [23] (3 \langle 13 \rangle [13]+2 \langle 23 \rangle [23]) \langle 12 \rangle^4\\
&&+\langle 3\nu \rangle^2 [12] [13] (4 \langle 23 \rangle^2 \langle 3\nu \rangle [23]^3+\langle 13 \rangle \langle 23 \rangle (7 \langle 3\nu \rangle [13]-11 \langle 2\nu \rangle [12]) [23]^2-2 \langle 13 \rangle^2 [13] (\langle 1\nu \rangle [12] [13]-2 \langle 3\nu \rangle [23] [13]\\
&&+9 \langle 2\nu \rangle [12] [23])) \langle 12 \rangle^3\\
&&-\langle 13 \rangle \langle 3\nu \rangle [13] (2 \langle 23 \rangle^2 \langle 3\nu \rangle (7 \langle 2\nu \rangle [12]+2 \langle 3\nu \rangle [13]) [23]^3+\langle 13 \rangle \langle 23 \rangle (-8 \langle 2\nu \rangle^2 [12]^2+33 \langle 2\nu \rangle \langle 3\nu \rangle [13] [12]+9 \langle 3\nu \rangle^2 [13]^2) [23]^2\\
&&+\langle 13 \rangle^2 [13] (\langle 1\nu \rangle [12] [13] (\langle 3\nu \rangle [13]-5 \langle 2\nu \rangle [12])+2 (-8 \langle 2\nu \rangle^2 [12]^2+12 \langle 2\nu \rangle \langle 3\nu \rangle [13] [12]+3 \langle 3\nu \rangle^2 [13]^2) [23])) \langle 12 \rangle^2\\
&&+\langle 13 \rangle (2 \langle 23 \rangle^3 \langle 2\nu \rangle^2 \langle 3\nu \rangle [12] [23]^4+\langle 13 \rangle \langle 23 \rangle^2 \langle 2\nu \rangle (-3 \langle 2\nu \rangle^2 [12]^2+13 \langle 2\nu \rangle \langle 3\nu \rangle [13] [12]+2 \langle 3\nu \rangle^2 [13]^2) [23]^3\\
&&+3 \langle 13 \rangle^2 \langle 23 \rangle \langle 2\nu \rangle [13] (-3 \langle 2\nu \rangle^2 [12]^2+9 \langle 2\nu \rangle \langle 3\nu \rangle [13] [12]+2 \langle 3\nu \rangle^2 [13]^2) [23]^2+\langle 13 \rangle^3 [13]^2 (\langle 1\nu \rangle [13] (-3 \langle 2\nu \rangle^2 [12]^2\\
&&+4 \langle 2\nu \rangle \langle 3\nu \rangle [13] [12]+\langle 3\nu \rangle^2 [13]^2)+\langle 2\nu \rangle (-9 \langle 2\nu \rangle^2 [12]^2+23 \langle 2\nu \rangle \langle 3\nu \rangle [13] [12]+6 \langle 3\nu \rangle^2 [13]^2) [23])) \langle 12 \rangle\\
&&-\langle 13 \rangle^2 \langle 2\nu \rangle (2 \langle 23 \rangle^3 \langle 2\nu \rangle (2 \langle 2\nu \rangle [12]+\langle 3\nu \rangle [13]) [23]^4+\langle 13 \rangle \langle 23 \rangle^2 \langle 2\nu \rangle [13] (15 \langle 2\nu \rangle [12]+7 \langle 3\nu \rangle [13]) [23]^3\\
&&+3 \langle 13 \rangle^2 \langle 23 \rangle \langle 2\nu \rangle [13]^2 (7 \langle 2\nu \rangle [12]+3 \langle 3\nu \rangle [13]) [23]^2+\langle 13 \rangle^3 [13]^3 (\langle 1\nu \rangle [13] (3 \langle 2\nu \rangle [12]+\langle 3\nu \rangle [13])\\
&&+\langle 2\nu \rangle (13 \langle 2\nu \rangle [12]+5 \langle 3\nu \rangle [13]) [23]))).
\eeas
\end{scriptsize}

We checked our four point amplitudes using the known simple results of $A^l(1^+,2^+,3^+,4^+)$ and $A^l(1^+,2^+,3^+,4^-)$ in \cite{Bern:1991aq,Bern2,Kharel,Mahlon:1993si}.

%the reader can ask us for full four point amplitudes

%\textcolor{red}{this expression in some specific Lorentz frame?}

%Due to the length of the expression, we put the result in the Appendix. In the Appendix we only show the result with the other three on shell legs to be of positive helicity ie. $A^{l\ +++}_\mu$, although we have calculated all other helicity results. The length of the expression makes it seem formidable to calculate the analytic expressions of more point off shell amplitudes, where numerical calculations should replace the analytic ones. 

%We have checked our result of $A^l_\mu(k_1,k_2,k_3)$ by some well known and simple results like $A^l(1^+, 2^+, 3^+, 4^+)$ and $A^l(1^+, 2^+, 3^+, 4^-)$ etc.\textcolor{red}{citations} and also by Ward identity.

\section{Conclusion}
We have discussed the Ward identity in detail for off shell amplitudes in pure Yang-Mills theory. We can state that Ward identity is not only a constraint on gauge field amplitudes, but together with three point amplitudes it can be used to derive any point amplitudes. We explicitly prove that the Ward identity with two complexified external legs holds at tree and one loop levels using Feynman rules. Then we use the Ward identity to deduce recursion relations for off shell amplitudes at tree and one loop levels. In this technique, three steps are important to simplify the calculations. First, according to the complexfied Ward identity, we can convert the calculation of the amplitudes to the calculation of derivative of the amplitudes. Second, we decompose the three point vertex which contains the off shell leg into three terms, which simplifies many steps in our calculation. Thirdly, according to the cancellation details in the proof of complexified Ward identity, we find most terms from different diagrams cancel with each other. The number of remaining effective terms or diagrams are reduced. It turns out that the recursion relation we derive at tree level is equivalent to Berends-Giele recursion relation \cite{Berends:1987me}. However, our expressions at 1-loop level are new. And we present 1-loop off shell three and four point amplitudes as examples of applying our method at 1-loop level.

%\textcolor{red}{Second, we decompose the three point vertex which contains the off shell line into three terms, such that when the vertex is on the loop only two terms contribute, and otherwise only one term is needed.}

%\textcolor{red}{We give a full expression of four point one loop off shell amplitudes in the appendix. The length of the expression forbids us to calculate and give more analytic expressions for more point off shell amplitudes. Numerical calculations of one loop off shell amplitudes should be developed for calculations relevant for collider physics etc..}

Comparing with the previous work \cite{Chen1}, we find the technique in this chapter is more universal. Here we can obtain a recursion relation for the total amplitudes instead of just the boundary terms of the amplitudes, and we do not need to use BCFW recursion relation. Furthermore,  for this technique, we do not need to avoid the unphysical poles from the polarization vectors of the shifted on shell leg which can also depend on $z$. Hence this technique works well for the amplitudes with any helicity structure and the momenta shifts are more general than the ones in \cite{Chen1}. In addition, this technique can be used for calculating one loop off shell amplitudes. 

In principle, it is possible to generalize our method to higher loop levels and to other theories such as QCD.  The only obstacle is to classify all the cancellation details for the Ward identity with complexfied external momenta. We leave this to future work. Another extension is to combine our technique with other methods, such as unitary cut \cite{Anastasiou:2006jv,Bern:1994zx,Bern:1994cg,Britto:2005ha,Britto:2004nj,Britto:2004nc,Cachazo:2004dr}, BCFW \cite{Britto:2004ap, Britto:2005fq}, OPP \cite{OPP}  etc. to further simplify the calculation in pure Yang-Mills theory.  

\chapter{Conclusion and Outlook}

I have described my research on two kinds of amplitudes in high energy physics: one on proton Compton scattering amplitudes and the other on YM gauge field scattering amplitudes.

On proton Compton scattering, we used an effective field theory approach. The work was based on a modified Rarita-Schwinger theory recently developed by Konstantin G. Savvidy which allowed for consistent coupling of electromagnetic field with a spin 3/2 particle \cite{Kostas10}, resolving the long standing superluminal propagation problem of the old Rarita-Schwinger theory. Proton and $\Delta^+$ were naturally unified in this theory, with proton being the spin 1/2 component and $\Delta^+$ being the spin 3/2 component. We introduced six form factors and two bare polarizabilities into an effective Lagrangian. We derived proton and $\Delta^+$ magnetic moments and studied the approximations of the amplitudes around the $\Delta^+$ peak. We then calculated the Compton amplitudes using the effective Lagrangian and fit to current experimental data. We extracted proton's static polarizabilities and $\Delta^+$ magnetic moment from the best fit values.

Although in this work we derived a value for $\Delta^+$ magnetic moment, we should not bet on it because $\Delta^+$ magnetic moment does not directly affect proton Compton scattering. A suitable experiment to measure $\Delta^+$ magnetic moment would be the pion photoproduction process: $\gamma\,p\to \gamma\,p\,\pi^0$. Since $\Delta^+$ magnetic moment has been poorly measured, it deserves more work. We can immediately extend our approach to deal with pion photoproduction process and obtain a more reliable fit for $\Delta^+$ magnetic moment.

Another future direction would be gaining deeper understandings of the physics behind the polarizabilities. In the current work we have noticed that if the polarizabilities have a suitable energy dependence we can achieve much better fitting. Especially, the static magnetic polarizability has a preference to change sign between about 200-300 MeV. We have investigated several possibilities that would result in an energy dependence of the polarizabilities but without much success. We would like to continue to hunt for the physics of polarizabilities, and in turn to better explain the experimental data on proton Compton scattering and extract more precise values of the proton polarizabilities.

In the first part of our study of YM gauge field amplitudes, we analyzed the boundary behavior of off-shell amplitudes with a pair of legs complexified. In Feynman gauge, we introduced a set of ''reduced vertices'' to simplify the analysis of boundary behavior. We then proved that amplitudes with non-adjacent legs deformed have better boundary properties than those with adjacent legs deformed. Based on the boundary behavior, we extended BCFW recursion relations to off-shell amplitudes, and uncovered relations between off-shell amplitudes. In the other work, using Ward identity for complexified amplitudes, we obtained recursion relations for off-shell amplitudes both at tree and one loop levels.

% e.g. as speculated by the authors of \cite{Boels}?

There are several directions deserving future investigations. Firstly, one should investigate the possibility of formulating BCJ relations for off-shell amplitudes. In our current work, we have seen some hint for off-shell BCJ relations at four point level. We would like to extend the study of BCJ relations to general N-point amplitudes. Secondly, we would like to further our analysis of boundary behavior of tree level off-shell amplitudes to the integrands of loop level amplitudes. Thirdly, as we have found in our current work that a permutation of legs for a fixed color ordering can improve boundary behavior, we intend to study the boundary behavior of amplitudes with more general permutations of the legs. If we were able to obtain better boundary behavior, e.g. as speculated by the authors of \cite{Boels}, there would be a wealth of powerful amplitude relations awaiting us. These could play important roles in the calculation and analysis of amplitudes. All these exciting developments would, in turn, enable us to have a precise mastery of QCD background in the hadronic colliders, paving the road to the discoveries of beyond the Standard Model physics!

\newpage

\addcontentsline{toc}{chapter}{References}

\newpage
\addcontentsline{toc}{chapter}{Acknowledgement}
\centerline{\large{\textbf{Acknowledgement}}} \vspace{1cm}
At this time as I finish the thesis, I feel indebted to my advisors. Professor Yeuk-Kwan Edna Cheung has been a very responsible and kind advisor all through my Ph.D years. She has taught me what it means to be a physicist. And in this last year she also spent a lot of time in helping me find a job and revising this thesis. Dr. Konstantin G. Savvidy, my advisor on the proton Compton scattering work, is also very helpful. I am especially moved by his endless exploration into the depth of physics. And the last but not the least, Dr. Gang Chen, during our collaboration in the Yang-Mills amplitudes, he has provided me much hand on guidance in research.

I thank all the referees of the thesis and the whole defense committee for their valuable suggestions on improving the thesis, including Professors J. D. Vergados, George Georgiou, Andreas Brandhuber, Minxing Luo, Bin Chen, Chuanjie Zhu, Zhongzhou Ren and Hongshi Zong. Among them, Professors J. D. Vergados, George Georgiou, Andreas Brandhuber and Minxing Luo have made most of the criticisms.

Also I would like to thank all the group members, including Changhong Li, Feng Xu, Jens Fjelstad, Lingfei Wang, for the fun we have had together.

I thank my parents for their love, and many friends for spiritual support.

%I thank our world, which is complicated, delicate yet comprehensible.

This thesis was funded in parts by the Priority Academic Program Development of Jiangsu Higher Education Institutions (PAPD), NSFC grant No.~10775067, Research Links Programme of Swedish Research Council under contract No.~348-2008-6049, the Chinese Central Government's 985 Project grants for Nanjing University.

\clearpage

\end{document}